\documentclass[12pt,letterpaper]{article}
\usepackage[T1]{fontenc}
\usepackage[margin=1in]{geometry} 

\usepackage[titletoc]{appendix}
\usepackage{titling}
\usepackage[authoryear]{natbib}
\usepackage[dvipsnames]{xcolor}
\usepackage[colorlinks=true,
            linkcolor=Blue,
            citecolor=Blue,
            urlcolor=blue, 
            hyperfootnotes=false]{hyperref}
            
\usepackage[bottom]{footmisc}

\usepackage[onehalfspacing]{setspace} 

\usepackage[final,nopatch=footnote]{microtype}
\usepackage{lmodern}
\usepackage[subtle]{savetrees}
\usepackage[small,compact]{titlesec}
\titlespacing{\paragraph}{0pt}{2.25ex plus 1ex minus .2ex}{0.4em}

\usepackage{graphicx, csquotes, enumitem, booktabs, threeparttablex, longtable, pdflscape, array}
\usepackage{mathtools, amssymb, amsthm, bm, dsfont, etoolbox}
\allowdisplaybreaks

\theoremstyle{definition} 

\newcommand{\continuation}{??}

\newtheorem*{example*}{Example}
\newtheorem{remark}{Remark}
\newtheorem*{remark*}{Remark}

\makeatletter
\newcommand{\exampleqedsymbol}{$\triangle$}

\AtBeginEnvironment{example}{%
  \pushQED{\qed}%
  \let\example@oldsym\qedsymbol
  \renewcommand{\qedsymbol}{\exampleqedsymbol}%
}
\AtEndEnvironment{example}{%
  \popQED%
  \let\qedsymbol\example@oldsym
}

\AtBeginEnvironment{example*}{%
  \pushQED{\qed}%
  \let\example@oldsym\qedsymbol
  \renewcommand{\qedsymbol}{\exampleqedsymbol}%
}
\AtEndEnvironment{example*}{%
  \popQED%
  \let\qedsymbol\example@oldsym
}
\makeatother

\makeatletter
\newcommand{\remarkqedsymbol}{$\triangle$}

\AtBeginEnvironment{remark}{%
  \pushQED{\qed}%
  \let\remark@oldsym\qedsymbol
  \renewcommand{\qedsymbol}{\remarkqedsymbol}%
}
\AtEndEnvironment{remark}{%
  \popQED%
  \let\qedsymbol\remark@oldsym
}

\AtBeginEnvironment{remark*}{%
  \pushQED{\qed}%
  \let\remark@oldsym\qedsymbol
  \renewcommand{\qedsymbol}{\remarkqedsymbol}%
}
\AtEndEnvironment{remark*}{%
  \popQED%
  \let\qedsymbol\remark@oldsym
}
\makeatother

\usepackage{etoolbox}

\AtBeginEnvironment{procedure}{\vspace{\topsep}\par\kern8pt\hrule\kern2pt\hrule\relax}

\newtheorem{recipe}{Recipe}
\AtBeginEnvironment{recipe}{\vspace{\topsep}\par\kern8pt\hrule\kern2pt\hrule\relax}

\newenvironment{recipenumerate}{%
  \,
  \par\kern8pt\hrule\relax
  \begin{enumerate}%
}{%
  \end{enumerate}%
  \par\kern4pt\hrule\kern2pt\hrule\relax
  \kern8pt
}

\theoremstyle{plain} 
\newtheorem{assumption}{Assumption}

\newtheorem{proposition}{Proposition}

\newtheorem{lemma}{Lemma}

\newtheorem{assumptionU}{Assumption}

\newtheorem{propositionU}{Proposition}

\newtheorem{assumptionUK}{Assumption}

\newtheorem{propositionUK}{Proposition}

\newcommand{\R}{\mathbb{R}} 
\renewcommand{\P}[2][\theta]{\mathbb{P}_{#1}\left\{#2\right\}} 
\newcommand{\E}[2][\theta]{\mathbb{E}_{#1}\left[#2\right]} 
\newcommand{\Var}[2][\theta]{\textnormal{Var}_{#1}\left(#2\right)} 
\newcommand{\e}{\varepsilon} 
\newcommand{\1}{\bm{1}} 
\newcommand{\I}[1]{\mathds{1}\left\{#1\right\}} 
\newcommand{\ind}{\perp} 
\newcommand{\abs}[1]{\left|#1\right|} 
\newcommand{\norm}[1]{\left\|#1\right\|} 
\renewcommand{\to}[1][]{\overset{#1}{\rightarrow}} 
\newcommand{\dsqrt}[1]{\displaystyle\sqrt{#1}}

\DeclareMathOperator*{\argmin}{argmin} 
\DeclareMathOperator{\ATT}{ATT} 
\DeclareMathOperator{\ATE}{ATE} 
\DeclareMathOperator{\TWFE}{TWFE} 
\DeclareMathOperator*{\diam}{diam} 

\newcommand{\GLS}{\mathrm{GLS}} 
\newcommand{\SA}{\mathrm{SA}} 
\newcommand{\EW}{\mathrm{EW}} 

\newcommand{\curly}[1]{\left\{#1\right\}}
\renewcommand{\brack}[1]{\left[#1\right]}
\newcommand{\paren}[1]{\left(#1\right)}
\newcommand{\inner}[1]{\left\langle#1\right\rangle}
\newcommand{\sumk}{\sum_{k=1}^{K}}

\newcommand{\htheta}{\hat{\theta}}
\newcommand{\hB}{\widehat{B}}

\newcommand{\heta}{\hat{\eta}}
\newcommand{\teta}{\Tilde{\eta}}
\newcommand{\htau}{\hat{\tau}}
\newcommand{\htauw}{\hat{\tau}_{w}}
\newcommand{\htaul}{\hat{\tau}_{\lambda}}
\newcommand{\tauw}{\tau_{w}(\theta)}
\newcommand{\taul}{\tau_{\lambda}(\theta)}

\newcommand{\tsigma}{\Tilde{\sigma}}
\newcommand{\tSigma}{\Tilde{\Sigma}}

\newcommand{\hSigma}{\widehat{\Sigma}}

\renewcommand{\l}{\lambda}
\renewcommand{\L}{\Lambda}
\newcommand{\cE}{\mathcal{E}}
\newcommand{\W}{\mathcal{W}}

\newcommand{\bX}{\bm{X}}

\newcommand{\cv}[2][1-\alpha]{\mathrm{cv}_{#1}\left(#2\right)}

\newcommand{\hw}{\hat{w}}
\newcommand{\hL}{\hat{\Lambda}}
\newcommand{\hS}{\hat{S}}
\newcommand{\hbX}{\hat{\bm{X}}}

\newcommand{\tH}{\Tilde{H}}

\newcommand{\tQ}{\Tilde{Q}}

\newcommand{\tA}{\Tilde{A}}

\newcommand{\Pn}{P_{n}}
\newcommand{\cPn}{\mathcal{P}_{n}}
\newcommand{\Kn}{K_{n}}

\newcommand{\cU}{\mathcal{U}}
\newcommand{\cUn}{\mathcal{U}_{n}}

\newcommand{\Pns}[1][s]{P_{#1}}

\newcommand{\hthetan}{\hat{\theta}_{n}}
\newcommand{\thetan}{\theta_{n}}

\newcommand{\cthetan}{\Check{\theta}_{n}}
\newcommand{\hZn}{\hat{Z}_{n}}

\newcommand{\hSigman}{\widehat{\Sigma}_{n}}
\newcommand{\tSigman}{\Tilde{\Sigma}_{n}}
\newcommand{\Sigman}{\Sigma_{n}}

\newcommand{\eSigman}{\varepsilon_{\Sigma,n}}
\newcommand{\rhon}{\rho_{n}}

\newcommand{\hthetans}[1][s]{\hat{\theta}_{#1}}
\newcommand{\thetans}[1][s]{\theta_{#1}}

\newcommand{\hZns}[1][s]{\hat{Z}_{#1}}
\newcommand{\Zns}[1][s]{Z_{s}}
\newcommand{\hSigmans}[1][s]{\widehat{\Sigma}_{#1}}

\newcommand{\Sigmans}[1][s]{\Sigma_{#1}}

\newcommand{\limZ}{Z_{\infty}}
\newcommand{\limSigma}{\Sigma_{\infty}}
\newcommand{\limsigma}{\sigma_{\infty}}

\newcommand{\hwn}{\hat{w}_{n}}
\newcommand{\wn}{w_{n}}
\newcommand{\cwn}{\Check{w}_{n}}
\newcommand{\bbW}{\mathbb{W}}
\newcommand{\Wn}{\mathcal{W}_{n}}
\newcommand{\bbWn}{\mathbb{W}_{n}}

\newcommand{\hwns}[1][s]{\hat{w}_{#1}}
\newcommand{\wns}[1][s]{w_{#1}}

\newcommand{\limw}{w_{\infty}}

\newcommand{\htauwn}[1][\hat{w}_{n}]{\hat{\tau}_{#1,n}}
\newcommand{\tauwn}[1][w_{n}]{\tau_{#1,n}}

\newcommand{\tsigmawn}{\Tilde{\sigma}_{\hat{w}_{n},n}}

\newcommand{\hsigmawns}{\hat{\sigma}_{\hat{w}_{s},s}}
\newcommand{\tsigmawns}{\Tilde{\sigma}_{\hat{w}_{s},s}}

\renewcommand{\ln}{\lambda_{n}}
\newcommand{\hLn}{\hat{\Lambda}_{n}}
\newcommand{\Ln}{\Lambda_{n}}
\newcommand{\hbXn}{\hat{\bm{X}}_{n}}
\newcommand{\bXn}{\bm{X}_{n}}

\newcommand{\hSn}{\hat{S}_{n}}
\newcommand{\Sn}{S_{n}}
\newcommand{\cS}{\mathcal{S}}
\newcommand{\bbS}{\mathbb{S}}
\DeclareMathOperator{\dist}{dist}

\newcommand{\gn}{g_{n}}
\newcommand{\Lgn}{L_{g,n}}
\newcommand{\deltagn}{\delta_{g,n}}

\newcommand{\lns}[1][s]{\lambda_{#1}}
\newcommand{\hLns}[1][s]{\hat{\Lambda}_{#1}}
\newcommand{\Lns}[1][s]{\Lambda_{#1}}

\newcommand{\hSns}[1][s]{\hat{S}_{#1}}
\newcommand{\Sns}[1][s]{S_{#1}}

\newcommand{\limS}{S_{\infty}}
\newcommand{\limL}{\Lambda_{\infty}}

\newcommand{\rhwn}{\hat{w}^{*}_{n}\,}

\newcommand{\rwn}{w^{*}_{n}}

\newcommand{\rhtaun}{\hat{\tau}_{n}^{*}}

\newcommand{\rCIn}{CI^{*}_{\hat{w}_{n},n}}

\newcommand{\hBn}{\widehat{B}_{\hat{w}_{n},n}^{\beta}(\hat{\Lambda}_{n})}

\newcommand{\rCIns}[1][s]{CI^{*}_{\hat{w}_{#1},#1}}

\newcommand{\hBns}[1][s]{\widehat{B}_{\hat{w}_{#1},#1}^{\beta}(\hat{\Lambda}_{#1})}

\newcommand{\hHn}{\hat{H}_{n}}
\newcommand{\tHn}{\Tilde{H}_{n}}
\newcommand{\hQn}{\hat{Q}_{n}}

\newcommand{\hAn}{\hat{A}_{n}}

\newcommand{\Hn}{H_{n}}
\newcommand{\Qn}{Q_{n}}
\newcommand{\An}{A_{n}}
\newcommand{\hetan}{\hat{\eta}_{1-\beta,n}}
\newcommand{\tetan}{\Tilde{\eta}_{1-\beta,n}}

\newcommand{\hHns}[1][s]{\hat{H}_{#1}}

\newcommand{\hQns}[1][s]{\hat{Q}_{#1}}

\newcommand{\hAns}[1][s]{\hat{A}_{#1}}

\newcommand{\Hns}[1][s]{H_{#1}}
\newcommand{\Qns}[1][s]{Q_{#1}}
\newcommand{\Ans}[1][s]{A_{#1}}
\newcommand{\hetans}[1][s]{\hat{\eta}_{1-\beta,#1}}
\newcommand{\tetans}[1][s]{\Tilde{\eta}_{1-\beta,#1}}

\newcommand{\vns}[1][s]{v_{#1}}
\newcommand{\hvns}[1][s]{\hat{v}_{#1}}

\newcommand{\Dns}[1][s]{D_{#1}}
\newcommand{\hDns}[1][s]{\hat{D}_{#1}}

\newcommand{\Deltans}[1][s]{\Delta_{#1}}
\newcommand{\hDeltans}[1][s]{\hat{\Delta}_{#1}}

\newcommand{\limQ}{Q_{\infty}}
\newcommand{\limA}{A_{\infty}}
\newcommand{\limeta}{\eta_{\infty}}

\newcommand{\limu}{u_{\infty}}
\newcommand{\limv}{v_{\infty}}
\newcommand{\limD}{D_{\infty}}
\newcommand{\limDelta}{\Delta_{\infty}}

\newcommand{\limn}{\lim_{n \to \infty}}
\newcommand{\limsupn}{\limsup_{n \to \infty}}
\newcommand{\liminfn}{\liminf_{n \to \infty}}
\newcommand{\supPn}{\sup_{P_{n} \in \mathcal{P}_{n}}}
\newcommand{\infPn}{\inf_{P_{n} \in \mathcal{P}_{n}}}
\newcommand{\emin}{e_{\min}}
\newcommand{\emax}{e_{\max}}
\newcommand{\UB}[1]{\Bar{#1}} 
\newcommand{\LB}[1]{\underline{#1}}

\begin{document}

\title{Robust Inference for Weighted Estimands\thanks{I thank Isaiah Andrews, Anna Mikusheva, and Alberto Abadie for their guidance and support. I thank Kirill Borusyak, Patrick Kline, Ricky Li, Sarah Moon, Andreas Petrou-Zeniou, Jonathan Roth, Reca Sarfati, Liyang Sun, Whitney Zhang, and seminar participants from the MIT econometrics lunch for helpful comments and discussions. I gratefully acknowledge support from the Jerry A. Hausman Fellowship and the National Science Foundation Graduate Research Fellowship under Grant No. 1745302. I thank ChatGPT and Codex for research assistance. Refine.ink was used to check for consistency and clarity.}}
\author{Vod Vilfort\thanks{Department of Economics, Massachusetts Institute of Technology, vod@mit.edu.}}
\date{\today}
\maketitle

\begin{abstract}
\noindent Researchers often conduct inference on weighted estimands, defined as weighted averages of group-level effects. Example settings include event studies with cohort-level effects and experiments with site-level effects. Under heterogeneous effects, different weighting schemes yield estimands with distinct empirical and policy interpretations, leading to ambiguity and disagreement over the choice of weights. I establish bounds on differences between weighted estimands and confidence bounds on effect heterogeneity, which I use to construct estimators that minimize worst-case bias and confidence intervals that are uniformly valid over classes of weighted estimands. I apply these methods to an event study in Lakdawala, Nakasone, and Kho (2023), which studies the effects of school-based internet access on test scores. I find that results are robust to broad classes of weights. I then apply the methods to Tennessee's Project STAR experiment and find that results are sensitive to small departures from baseline weights.
\end{abstract}

\clearpage

\section{Introduction}\label{arxiv2:sec:introduction}
In empirical research, many estimands can be expressed as weighted averages of group-level parameters.\footnote{Under various assumptions, ordinary least squares (OLS) estimands can be expressed as weighted averages of conditional average treatment effects (CATEs), conditional mean derivatives, or other regression parameters that may vary across covariate cells and treatment intensities \citep{yitzhaki1996using, angrist1998estimating, aronow2016does, sloczynski2022interpreting, goldsmith2024contamination}; two-stage least squares (TSLS) estimands can be expressed as weighted averages of local average treatment effects (LATEs), which may vary across compliance types and covariate cells \citep{imbens1994identification, heckman2005structural, angrist2010extrapolate, kolesar2013ivheterogeneity, sloczynski2020should, huntington2020instruments, coussens2021improving, blandhol2022tsls, abadie2024instrumental}; and two-way fixed effects (TWFE) estimands can be expressed as weighted averages of average treatment effects on the treated (ATTs), which may vary across treatment cohorts and time periods \citep{de2020two, callaway2021difference, goodman2021difference, sun2021estimating, athey2022design, gardner2022two, borusyak2024revisiting, wing2024stacked}.} These \textit{weighted estimands} give concise summaries of the parameters of interest, making them frequent targets for statistical inference. However, conventional inference procedures can be sensitive to the choice of weights: conclusions reached under a researcher’s baseline weights may not hold under alternative weights entertained by readers. 

For example, in event studies with multiple treatment cohorts, many common estimands recover weighted averages of cohort-level causal effects, including conventional two-way fixed effects (TWFE) regression estimands and recently proposed alternatives.\footnote{For a review, see \citet{roth2023s}.} TWFE has a convenient and flexible structure, which can yield lower-variance estimators in practice \citep{de2022two, armstrong2025adapting}. However, TWFE can place negative weight on some of the cohort-level effects, which complicates the interpretation of the resulting estimands under heterogeneous effects \citep{de2020two, goodman2021difference, sun2021estimating, borusyak2024revisiting}. Recent alternatives allow a researcher to target weighted estimands that rule out negative weights by design, but this introduces an additional degree of freedom: the researcher must decide which positive weights to use.

This choice matters because different readers may prefer different weights, depending on the empirical or policy question that they have in mind \citep{callaway2021difference}. One reader may wish to upweight early-treated cohorts, another later-treated cohorts; one may care about short-run effects, another long-run effects; one may prefer weights that represent the characteristics of a policy-relevant target population, another may favor weights that trade off representativeness against estimation precision. These alternatives may all be reasonable, but they need not lead to the same conclusion under heterogeneous effects. Given the potential for disagreement, researchers may wish to assess the robustness of conclusions to the choice of weights. Current practice seems to reflect this desire: researchers often compare results for different weighting choices, such as default implementations of the recently proposed alternatives. In a selected survey of 60 recent American Economic Association (AEA) journal papers that cite the event-study literature on weighting and heterogeneity issues, 44 of the 47 papers that report a TWFE specification also report at least one heterogeneity-robust alternative.\footnote{For the survey results and additional details, see Section \ref{arxiv2:sec:implementation.event.studies} and Appendix \ref{arxiv2:app:event-study-survey}.} Such comparisons are informative, but they generally consider only a finite menu of weighting choices---which may not cover the range of plausible weights---and may not provide formal inferential guarantees for broader claims of robustness to the choice of weights.

In this paper, I develop robust inference procedures that directly account for ambiguity and disagreement over the choice of weights. I consider a general framework in which a researcher observes asymptotically normal estimates for a vector of parameters. The researcher reports a conventional point estimator and confidence interval (CI) for a baseline estimand defined by a vector of baseline weights. At the same time, the researcher faces a broad class of alternative weights, giving rise to alternative estimands for which the baseline estimator may be biased and the baseline CI may undercover. To address these concerns, I develop robust estimators that mitigate worst-case bias and robust CIs that ensure valid coverage.

To proceed, I first establish a sharp bound on the difference between the baseline weighted estimand and any given alternative weighted estimand. This bound is the product of (i) the \textit{heterogeneity in parameters}, defined as the square root of the residual sum of squares from a generalized least squares (GLS) regression of the parameters on a constant and (ii) the \textit{distance between weights}, defined as the standard deviation of the difference in estimators under the baseline and alternative weights. 
 
To develop my estimator, I study the problem of choosing weights to minimize the \textit{maximum distance} between weights over a given class of alternative weights. For a broad range of classes, this problem has a unique solution. I show that the corresponding weighted estimator is optimal for minimizing the maximum bias over the class of alternative estimands under any bound on the heterogeneity in parameters. I therefore refer to this estimator as the \textit{robust estimator}, and to the corresponding weights as the \textit{minimax-bias weights}. I propose that researchers report the robust estimator alongside the baseline estimator. Intuitively, the robust estimator accounts for the bias concerns of readers who may disagree with the baseline weights.

To develop my CI, I study the problem of constructing an upper confidence bound (UCB) for the heterogeneity in parameters. I show that a monotonic transformation of the heterogeneity in estimates produces an optimal UCB. This \textit{heterogeneity UCB}—together with the maximum distance between weights—yields a simple adjustment to the critical values of a baseline CI. The resulting \textit{robust CI} provides uniformly valid coverage over the class of alternative estimands. In particular, for each alternative estimand, the coverage probability is at least one minus the sum of (i) the significance level of the baseline CI and (ii) the significance level of the heterogeneity UCB. For example, suppose a 95\% baseline CI excludes zero, leading the researcher to reject the null of no average effect at the 5\% level. If zero remains excluded from a robust CI constructed using a 95\% heterogeneity UCB, then readers with alternative weights can robustly reject the null of no average effect at the 10\% level. To facilitate such inference procedures, I propose that researchers report the robust CI alongside the baseline CI.

The robust estimator and CI require the researcher to specify a class of alternative weights. My framework accommodates any compact and convex class of alternatives. I highlight three such classes that cover a broad range of empirical contexts. Below I give a high-level overview of these classes, deferring precise definitions and details to Section \ref{arxiv2:sec:setting} of the paper.

The first class is the \textit{bounded variance class}, which restricts attention to weights that yield estimators with variance no larger than a given bound, ensuring that the alternative estimands can be estimated with a reasonable degree of precision. For example, when assessing robustness of results, a researcher may wish to consider alternative estimands that can be estimated at least as precisely as the baseline. Under the bounded variance class, my proposed procedures admit a convenient reduction to the GLS regression of the group-level estimates on a constant, where (i) the GLS estimator coincides with the robust estimator, (ii) the GLS variance determines the maximum distance between weights for any choice of variance bound, and (iii) the GLS residual sum of squares determines the heterogeneity UCB. This reduction has two key implications. First, because the GLS estimator minimizes variance in the class of weighted estimators, the robust estimator is optimal for both bias and variance under the bounded variance class. Second, given the structure in (ii) and (iii), readers can construct robust CIs for their own choices of variance bounds, provided the researcher reports the GLS variance and heterogeneity UCB. These convenient properties make the bounded variance class a natural benchmark.

The simplex is the set of nonnegative weights, which yields weighted estimands that do not extrapolate beyond the range of the parameters. However, the unrestricted simplex allows for some groups to receive zero weight, which can be undesirable in practice. To ensure that groups are represented, one can impose a floor on the simplex. The \textit{truncated simplex class} is precisely the set of simplex weights that are bounded below by the given floor. Under positive baseline weights, the floor can be parameterized so that the truncated simplex yields the set of weights that are a given fraction between the baseline weights and the unrestricted simplex weights. In this form, the truncated simplex class allows one to investigate perturbations of any chosen magnitude from the baseline weights.

The simplex weights represent probability distributions over the set of groups, and can thus be viewed as different target populations that readers may be interested in. For instance, in settings where the groups index the sites of an experiment, a concern is that baseline results may not generalize to future sites of interest to policymakers \citep{allcott2015site}. If the researcher observes site-level covariates, however, the differences in covariate means under the baseline and alternative weights can inform the extent of external validity. Based on this idea, the \textit{covariate balance class} considers simplex weights whose covariate means differ from the baseline by at most a given constant. Under this class, one can assess whether results are robust to alternative populations whose covariate means are within a chosen distance of the baseline. 

I illustrate my framework in two empirical applications. The first application is an event study in \citet{lakdawala2023dynamic}, which uses the staggered rollout of internet access in Peruvian public schools to estimate the causal effects of school-based internet access on second grade test scores. The authors find a delayed achievement response: TWFE estimates are initially small but grow over time, indicating that schools require time to adapt to new internet access. The authors obtain similar patterns and magnitudes when using the \citet{sun2021estimating} estimator to account for negative weighting concerns with TWFE, concluding that results are not sensitive to the use of TWFE. My robust estimator and CI give a stronger conclusion: results are generally robust to \textit{classes} of nonnegative weights that yield comparable estimation precision to the \citet{sun2021estimating} weights.

The second application is Tennessee's Project STAR (Student/Teacher Achievement Ratio) experiment, which randomized students in seventy-nine Tennessee public elementary schools to classrooms of different sizes to estimate the causal effects of class size on test scores \citep{achilles2008tennessee, krueger1999experimental}. Project STAR has been used in many policy discussions, but the experiment's selection of schools raises concerns about whether results are representative of Tennessee or U.S. schools more broadly \citep{schanzenbach2006have}. To investigate these issues, I use an equal weights baseline to model the distribution of STAR schools and use the truncated simplex class to model departures from the STAR distribution. I show that, while the baseline CI implies medium-to-large positive effects, the robust CI includes small-to-zero effect sizes even under small departures from equal weighting. The robust CI shrinks when I intersect the truncated simplex with the covariate balance class based on school-level covariates, but results remain sensitive to small departures from the baseline weights.

While I focus on event studies and multisite experiments as the motivating examples, my framework applies broadly to settings where group-level parameters are averaged using weights that sum to one. My inference procedures are not designed for continuously indexed groups, but the bound on differences between weighted estimands has a natural analogue in that case.\footnote{For a discussion, see footnote \ref{arxiv2:footnote:bound.details} under Proposition \ref{arxiv2:prop:bound}.} Finally, the inferential guarantees of the robust estimator and CI do not depend on how the underlying parameters are defined. For example, in event studies one can define the parameters as cohort-level difference-in-differences (DiD) estimands rather than cohort-level average causal effects \citep{sun2021estimating}. This distinction affects the interpretation of the weighted estimands, but not the validity of the robust inference procedures.

A large literature highlights issues that arise when interpreting weighted estimands. In the context of my framework, these issues represent different sources of ambiguity and disagreement over the choice of weights. For example, regression-based estimands often recover weighted averages of treatment effects, but allow for negative weights \citep{de2020two, sloczynski2020should, goodman2021difference, mogstad2021causal, sun2021estimating, blandhol2022tsls, bhuller20242sls, goldsmith2024contamination}. Negative weights can flip the signs of treatment effects, complicating the interpretation of an estimand. In such contexts, there may be (i) ambiguity over how much negative weighting matters and (ii) disagreement over what alternative weighting schemes to consider.\footnote{\citet{abadie2025harvesting} and \citet{chiu2026causal} give different takeaways for issue (i) in the event study context.} Regarding the latter, a separate issue is that a positively weighted estimand can still lack empirical or policy relevance \citep{yitzhaki1996using, heckman2005structural, crump2006moving, angrist2010extrapolate, aronow2016does, li2018balancing, sloczynski2022interpreting, mogstad2024instrumental, poirier2024quantifying}. In this case, disagreement may arise from policymakers interested in the effects of treatment on their own target populations, while ambiguity may arise from researchers who must navigate such disagreement when summarizing results.

Common approaches to these issues are to report the extent of negative weighting in one's estimator or to examine the stability of results under alternative weighting schemes.\footnote{See \citet[Section 3.2.1]{roth2023s} for a review of such approaches in the event study context.} Intuitively, such approaches convey information about how far one's weights deviate from a benchmark, the degree of underlying heterogeneity, or some combination of both. My procedures build on this same intuition, but in a GLS-based geometry that yields a notion of heterogeneity amenable to inference with confidence bounds, drawing on statistical results on optimal quantile-unbiased estimation \citep{pfanzagl1994parametric}. This allows one to directly account for weighting issues by constructing robust CIs and estimators with explicit inferential guarantees, thereby facilitating robust inference for weighted estimands.

The remainder of this paper proceeds as follows. Section \ref{arxiv2:sec:setting} develops the model setting and notation. Section \ref{arxiv2:sec:bounding.differences} defines the heterogeneity and distance measures and establishes the bound on differences between weighted estimands. Section \ref{arxiv2:sec:robust.inference} develops the robust estimator and CI and establishes their properties. These results are developed under the assumption that group-level estimates are normally distributed with a known covariance matrix. Section \ref{arxiv2:sec:asymptotic.validity} shows that, with asymptotically normal estimates and a consistent covariance matrix estimator, my proposed procedures are uniformly asymptotically valid over a broad class of data generating processes. Section \ref{arxiv2:sec:implementation} discusses the practical implementation of these procedures. Section \ref{arxiv2:sec:applications} presents the empirical applications. Section \ref{arxiv2:sec:conclusion} concludes. The supplemental appendices contain proofs and additional results.

\section{Model Setting}\label{arxiv2:sec:setting}
Consider parameters $\theta_{k} \in \R$ indexed by groups $k \in \{1, \ldots, K\}$, where $K \geq 2$. Let $\theta = (\theta_{1}, \ldots, \theta_{K})'$ denote the vector of parameters. Researchers often conduct inference on estimands that can be expressed as weighted averages of the group-level parameters:
\begin{align*}
    \tauw = w'\theta = \sumk w_{k}\theta_{k}, \quad w \in \W, \quad \W = \curly{w \in \R^{K}: \1'w = 1},
\end{align*}
where $w \in \W$ is a vector of weights, $\W$ is the set of all weights, and $\1$ is the vector of ones. I refer to $\tauw$ as a \textit{weighted estimand}. I allow the weights $w$ to be negative unless stated otherwise. For the case of nonnegative (i.e., convex) weights, I define the simplex $\W_{+} = \{w \in \W: w \geq 0\}$.

\begin{example*}[Event Studies]
Given units $i$ and time periods $t \in \{0, 1, \ldots, T\}$, where $T \geq 2$, the researcher observes outcomes $Y_{it} \in \R$ and treatment indicators $D_{it} \in \{0,1\}$. A unit $i$'s cohort group $G_{i} = \min\curly{t: D_{it}=1}$ is the time period when first treated---if never treated, then $G_{i} = \infty$. There are no treated units in the base period (i.e., $G_{i} > 0$). Moreover, treatment is ``absorbing'' in the sense that a treated unit remains treated (i.e., $t \geq G_{i}$ implies $D_{it} = 1$).\footnote{For any treatment that is not absorbing, one can define an indicator for ever having received the treatment, which will be an absorbing treatment by construction; see \citet{sun2021estimating} for an example.} Let $Y_{it} = Y_{it}(G_{i})$, where $Y_{it}(g)$ denotes the potential outcome for unit $i$ when assigned to cohort $g$. Consider the group-time average treatment effect on the treated (ATT):
\begin{align}\label{arxiv2:eq:ATT.object}
    \ATT_{g,t} = E[Y_{it}(g) - Y_{it}(\infty)|G_{i} = g], \quad t \geq g.
\end{align}
This parameter gives the period $t$ causal effect of being treated in cohort $g$ versus being never-treated, averaged among the units $i$ in cohort $G_{i} = g$. Under assumptions of parallel trends and no anticipation, $\ATT_{g,t}$ is identified for different group-time pairs \citep{callaway2021difference, roth2023s}. Let $k$ index such pairs $(g_{k},t_{k})$. For parameters $\theta_{k} = \ATT_{k} = \ATT_{g_{k},t_{k}}$ and weights $w \in \W$, the corresponding weighted estimand is
\begin{align*}
    \tauw = \sumk w_{k}\ATT_{k}.
\end{align*}
Many common event-study estimands can be expressed like this, including conventional TWFE regression coefficients and recently proposed alternatives \citep{de2020two, callaway2021difference, goodman2021difference, sun2021estimating, gardner2022two, roth2023efficient, borusyak2024revisiting, wing2024stacked}. In this setting, each $w \in \W$ is a different way of summarizing treatment effect heterogeneity across cohort groups and time periods; see \citet[Table 1]{callaway2021difference} for examples.
\end{example*}

\begin{example*}[Multisite Experiments]
For units $i$, there are outcomes $Y_{i} \in \R$, treatment indicators $D_{i} \in \{0,1\}$, and covariates $X_{i} \in \R^{M}$. Let $(Y_{i}(1), Y_{i}(0))$ denote potential outcomes under treatment and control, so that $Y_{i} = D_{i}Y_{i}(1) + (1-D_{i})Y_{i}(0)$. The units $i$ come from different site populations $k$, represented by distributions $P_{k}$ over the unit-level random variables. Letting $E_{k} = E_{P_{k}}$ denote expectations under $P_{k}$, the site-level average treatment effects (ATEs) are
\begin{align*}
    \ATE_{k} = E_{k}[Y_{i}(1) - Y_{i}(0)], \quad k = 1, \ldots, K.
\end{align*}
Under assumptions of random treatment assignment within each site, $\ATE_{k}$ is identified for each site $k$ \citep{hotz2005predicting}. For parameters $\theta_{k} = \ATE_{k}$ and weights $w \in \W$, the corresponding weighted estimand is
\begin{align*}
    \tauw = \sumk w_{k}\ATE_{k}.
\end{align*}
For instance, the equal weights (EW) vector $w_{\EW} = \1/K$ represents an empirical distribution over the set of sites---this has been considered in, for example, \citet[Table VI]{allcott2015site}. More generally, each simplex vector $w \in \W_{+}$ represents a distribution over the set of sites.
\end{example*}

\subsection{Conventional Inference for Weighted Estimands}\label{arxiv2:sec:conventional.inferences}
The researcher observes a vector of estimates $\htheta$ for the parameters $\theta$. I model the relationship between $\htheta$ and $\theta$ as follows. 

\begin{assumption}\label{arxiv2:ass:normality}
$\htheta \sim N(\theta, \Sigma)$, where $\Sigma$ is a known positive definite covariance matrix.
\end{assumption}

Under the normal distribution $N(\theta, \Sigma)$, I denote probabilities $\P{\cdot}$, expectations $\E{\cdot}$, and variances $\Var{\cdot}$. Let $\Phi(z)$ denote the cumulative distribution function (CDF) of the standard normal distribution $N(0,1)$ and let $z_{\alpha}$ denote its $\alpha$-quantile for $\alpha \in (0,1)$.

Assumption \ref{arxiv2:ass:normality} is motivated by standard large-sample asymptotic results. For example, the central limit theorem (CLT) helps establish asymptotic normality of estimates, which motivates the normality condition $\htheta \sim N(\theta, \Sigma)$. Similarly, the law of large numbers (LLN) helps establish consistency of covariance matrix estimation, which motivates the condition that $\Sigma$ is known. I therefore develop my procedures under Assumption \ref{arxiv2:ass:normality}. In Section \ref{arxiv2:sec:asymptotic.validity}, I relax Assumption \ref{arxiv2:ass:normality} and establish the asymptotic validity of my procedures.

For each vector of weights $w \in \W$, Assumption \ref{arxiv2:ass:normality} implies $w'\htheta \sim N(w'\theta, \sigma_{w}^{2})$, where $\sigma_{w}^{2} = w'\Sigma w$. For a given target estimand $\tauw = w'\theta$, the conventional point estimator is defined as $\htauw = w'\htheta$ and the conventional CI at significance level $\alpha$ is defined as
\begin{align}\label{arxiv2:eq:conventional.CIs}
    CI_{w} = 
    \begin{cases}
    \hfil \brack{\htauw \pm z_{1-\alpha/2}\sigma_{w}}, & \text{two-sided}, \\[5pt]  
    \hfil \left(-\infty, \htauw + z_{1-\alpha}\sigma_{w}\right], & \text{one-sided (upper)}, \\[5pt]  
    \hfil \left[\htauw - z_{1-\alpha}\sigma_{w}, \infty\right), & \text{one-sided (lower)}.
    \end{cases}
\end{align}
The conventional estimator $\htauw$ is unbiased for $\tauw$:
\begin{align*}
    \E{\htauw} = \tauw, \quad \forall \theta.
\end{align*}
The conventional $CI_{w}$ has exact coverage for $\tauw$:
\begin{align*}
    \P{\tauw \in CI_{w}} = 1-\alpha, \quad \forall \theta.
\end{align*}
In this sense, the conventional statistics $(\htauw, CI_{w})$ provide valid inference for $\tauw$.

\begin{example*}[Event Studies, continued]
Let $\htheta_{k} = \widehat{\ATT}_{k}$ denote estimators of $\ATT_{k}$ for each $k$, such as DiD-based estimators. Under standard conditions, such estimators are asymptotically normal with a consistently estimable covariance matrix as the number of units goes to infinity \citep{callaway2021difference}. Often one must also estimate the weights $w$. For example, the \citet{sun2021estimating} estimand uses weights that depend on the population distribution of treatment $D_{it}$, so that in practice one must use the sample distribution of $D_{it}$ to estimate $w$. For ease of exposition, I develop my procedures assuming that $w$ is known and later establish asymptotic validity under estimated weights in Section \ref{arxiv2:sec:asymptotic.validity}.
\end{example*}

\begin{example*}[Multisite Experiments, continued]
Let $\htheta_{k} = \widehat{\ATE}_{k}$ denote estimators of $\ATE_{k}$ for each $k$, such as those based on sample differences in means between the treatment and control groups. Under site-level CLTs, such estimators are asymptotically normal with a consistently estimable covariance matrix as the number of units goes to infinity. In cases where the unit-level variables $(Y_{i}, D_{i}, X_{i})$ are unobserved or confidential, one may still have access to site-level ATE estimates $\htheta = (\widehat{\ATE}_{1}, \ldots, \widehat{\ATE}_{K})'$ and covariate means $\bX = (E_{1}[X_{i}], \ldots, E_{K}[X_{i}])'$, as in the case of the metadata in \citet{allcott2015site}. The application of my framework to multisite experiments requires only the site-level variables---I accommodate the case of estimated covariate means $\widehat{E}_{k}[X_{i}]$ in my asymptotic results.
\end{example*}

\subsection{Classes of Alternative Weights}\label{arxiv2:sec:alternative.weights}
The researcher reports $(\htauw, \sigma_{w})$ for some baseline choice of weights $w \in \W$. This report yields conventional statistics $(\htauw, CI_{w})$ that provide valid inference for the baseline estimand $\tauw$. However, the researcher may also consider a class $\L \subseteq \W$ of alternative weights $\l \in \L$, yielding alternative estimands $\taul$ for which $(\htauw, CI_{w})$ may be uninformative.

\begin{example*}[Event Studies, continued]
\citet{callaway2021difference} advocate choosing the weights $w$ to address well-posed empirical or policy questions.\footnote{This perspective is also prevalent in other empirical settings, such as instrumental variables estimation under heterogeneous treatment effects \citep{heckman2007econometric, mogstad2024instrumental}.} While this principle is ideal, two practical issues can arise, leading to consideration of a class of alternative weights $\L$.

\paragraph{Researcher Ambiguity.} First, even with a well-posed question, the researcher may find it difficult to articulate corresponding weights $w$. This costly introspection presumably underlies the various default weighting schemes considered in the event studies literature.\footnote{See \citet[Table 1]{callaway2021difference} and \citet[Table 2]{roth2023s} for examples.} A researcher may choose a default specification for $w$ and then assess robustness of results to other defaults---the survey in Section \ref{arxiv2:sec:implementation.event.studies} documents this pattern. But to the extent that these defaults fail to capture the scope of plausible weights, such robustness exercises may not adequately account for researcher ambiguity over the choice of $w$. To do better, it seems useful to consider a broader class of alternatives for $w$, such as the example classes $\L$ developed below. 

\paragraph{Reader Disagreement.} Second, even absent introspection costs, the researcher may have to communicate results to readers who are interested in different questions and hence disagree over the choice of weights. For example, colleagues with different priors about the empirical setting may wish to highlight different types of treatment effect heterogeneity: e.g., across cohorts versus across time \citep{callaway2021difference}. In such cases, $\L$ represents the range of questions that readers may be interested in, which can differ from the question addressed by $w$.
\end{example*}

\begin{example*}[Multisite Experiments, continued]
Given the site-level populations $P_{1}, \ldots, P_{K}$, a baseline vector of simplex weights $w \in \W_{+}$ represents the population $P_{w} = \sum_{k}w_{k}P_{k}$. The set of readers may include policymakers interested in the effect of treatment on their own target populations, leading to disagreement over the choice of weights. For concreteness, consider a policymaker who has a target population $P_{0}$ and a corresponding target parameter
\begin{align*}
    \ATE_{0} = E_{0}[Y_{i}(1) - Y_{i}(0)].
\end{align*}
This policymaker may find the baseline weights $w$ to be palatable when $P_{0}$ resembles $P_{w}$, such as when the mean $\mu_{0} = E_{0}[X_{i}]$ under $P_{0}$ is close to the mean $\mu_{w}(\bX) = w'\bX$ under $P_{w}$.\footnote{This mirrors the logic advanced in \citet[page 255]{aronow2016does} for interpreting the representativeness of weighted estimands in regression contexts. In their language, $\mu_{w}(\bX)$ is the covariate mean for the \textit{effective sample} of units spanned by the $w$-weighted sites.} In view of this, $\L \subseteq \W_{+}$ can be specified as a class of populations that a range of policymakers may find to be palatable.\footnote{If there were a single known policymaker, one could employ canonical approaches for extrapolating the site-level populations to the policymaker's target population $\ATE_{0}$, such as density reweighting under covariate shift assumptions \citep{hotz2005predicting}. However, such avenues are less tractable in the face of multiple and potentially unknown policymakers.}
\end{example*}

Whether the issue at hand is researcher ambiguity, reader disagreement, or some combination of both, my framework assumes that it can be represented by a class of alternative weights $\L$. To make this modeling assumption a practical one, I restrict attention to classes $\L$ that have the following structure.

\begin{assumption}\label{arxiv2:ass:alternative.weights}
$\L \subseteq \W$ is nonempty, compact, and convex.
\end{assumption}

This structure is useful for establishing the existence and uniqueness of solutions to upcoming optimization problems. I now provide examples of classes $\L$ that satisfy Assumption \ref{arxiv2:ass:alternative.weights} and map to various empirical contexts.

\begin{example*}[Bounded Variance] To analyze the robustness of baseline inferences to alternative weights, one may wish to account for differences in how well the corresponding estimands can be estimated. Formally, one can consider weights $\l$ for which the standard deviation of $\htaul$ is at most $r$ times that of the baseline $\htauw$, leading to the \textit{bounded variance class} 
\begin{align}\label{arxiv2:eq:class.bounded.variance}
    \L_{\sigma}(r) = \curly{\l \in \W: \sigma_{\l} \leq r\sigma_{w}}, \quad r \geq \frac{\sigma_{\min}}{\sigma_{w}}, \quad \sigma_{\min}^{2} = \min_{w \in \W} \sigma_{w}^{2} = \frac{1}{\1'\Sigma^{-1}\1}.
\end{align} 
This class reflects a preference for estimands that can be estimated with a reasonable degree of precision.\footnote{For example, \citet[pages 13-14]{mogstad2024instrumental} note that ``How interesting a target parameter is also cannot be divorced from the difficulty involved in estimating it.''} For instance, $r=1$ represents the class of estimands that can be estimated at least as precisely as the baseline estimand. When convenient, I leave the dependence of $\L_{\sigma}(r)$ on the \textit{standard deviation ratio bound} $r$ implicit, denoting $\L_{\sigma} = \L_{\sigma}(r)$.
\end{example*}

\begin{example*}[Truncated Simplex]
The simplex $\W_{+} = \{w \in \W: w \geq 0\}$ is the set of nonnegative weights, which yields estimands that do not extrapolate beyond the range of the parameters: $\min_{k}\theta_{k} \leq \tauw \leq \max_{k}\theta_{k}$. However, the simplex allows for some groups to receive zero weight, which can be undesirable in practice. To ensure that groups are represented, one can impose a floor on the simplex. In particular, given baseline simplex weights $w \in \W_{+}$, the floor $(1-\epsilon)w$ yields the \textit{truncated simplex class}
\begin{align}\label{arxiv2:eq:class.truncated.simplex}
    \L_{+}(\epsilon) = \curly{\l \in \W_{+}: \l  \geq (1-\epsilon)w} = \curly{(1-\epsilon)w + \epsilon \l: \l \in \W_{+}}, \quad \epsilon \in [0,1].
\end{align}
This class allows one to flexibly model departures from the baseline simplex weights $w$, where the \textit{truncation parameter} $\epsilon$ is the fraction of a given alternative $\l \in \L_{+}(\epsilon)$ that is allowed to deviate from $w$. Given $\epsilon < 1$, one can ensure that a group $k$ is represented in $\l \in \L_{+}(\epsilon)$ by using baseline weights with $w_{k} > 0$. Under baseline equal weights $w = w_{\EW} = \1/K$, one can interpret $\epsilon$ as the maximum possible discrepancy in weights $|\l_{k}-\l_{k'}|$ between two groups $k$ and $k'$. I give additional interpretations of $\epsilon$ when discussing practical implementations in Section \ref{arxiv2:sec:implementation}.
\end{example*}

\begin{example*}[Covariate Balance]
For $m \in \{1, \ldots, M\}$, consider group-level covariates $\bX_{m} \in \R^{K}$. A baseline simplex weight vector $w \in \W_{+}$ is a probability distribution over the set of groups, and can thus be viewed as a population with covariate means $w'\bX_{m}$. Likewise, alternative simplex weights $\l \in \W_{+}$ yield populations with covariate means $\l'\bX_{m}$. Letting $|(\l - w)'\bX_{m}|$ denote the covariate $m$ balance gap and $\text{sd}(\bX_{m})$ the standard deviation of $\bX_{m,k}$ across groups $k$, the set of populations with balance gaps no larger than $\Bar{c}$ standard deviations for all $m$ is given by the \textit{covariate balance class}
\begin{align}\label{arxiv2:eq:class.covariate.balance}
    \L_{X}(\Bar{c}) = \curly{\l \in \W_{+}: \max_{m \in \{1,\ldots,M\}}\frac{\abs{(\l - w)'\bX_{m}}}{\text{sd}(\bX_{m})} \leq \Bar{c}}, \quad \Bar{c} \geq 0, \quad \min_{m \in \{1,\ldots,M\}}\text{sd}(\bX_{m}) > 0.
\end{align}
This class allows one to consider a range of populations that have different covariate profiles from the baseline. For ease of reference, I abbreviate the balance gap measure as
\begin{align*}
    c_{\l}(\bX) = \max_{m \in \{1, \ldots, M\}}\frac{\abs{(\l - w)'\bX_{m}}}{\text{sd}(\bX_{m})}, \quad \bX = (\bX_{1}, \ldots, \bX_{M}).
\end{align*}
Thus, I can write $\L_{X}(\Bar{c}) = \curly{\l \in \W_{+}: c_{\l}(\bX) \leq \Bar{c}}$ for \textit{balance gap bound} $\Bar{c}$.
\end{example*}

\begin{remark}[Intersections]
One can also take intersections of the above classes. For example, intersecting the bounded variance class $\L_{\sigma}(r)$ with the unrestricted simplex $\L_{+}(1) = \W_{+}$ yields the \textit{bounded variance simplex class}
\begin{align}\label{arxiv2:eq:class.simplex.bounded.variance}
    \L_{\sigma}^{+}(r) = \L_{\sigma}(r) \cap \L_{+}(1) = \curly{\l \in \W_{+}: \sigma_{\l} \leq r\sigma_{w}}, \quad r \geq \frac{\min_{w \in \W_{+}} \sigma_{w}}{\sigma_{w}},
\end{align} 
which considers convex weights that yield estimators with reasonable estimation precision. As a second example, one can intersect the covariate balance class $\L_{X}(\Bar{c})$ with the truncated simplex $\L_{+}(\epsilon)$ to obtain the \textit{truncated covariate balance class}
\begin{align}\label{arxiv2:eq:class.truncated.covariate.balance}
    \L_{X}^{\epsilon}(\Bar{c}) = \L_{X}(\Bar{c}) \cap \L_{+}(\epsilon) = \curly{(1-\epsilon)w + \epsilon \l: \l \in \L_{X}(\Bar{c}/\epsilon)}, \quad \epsilon > 0,
\end{align} 
which considers $\epsilon$-departures from the baseline $w$ towards populations within a covariate mean radius $\Bar{c}/\epsilon$ from the baseline. I will consider these intersections in the empirical applications, but I focus on the benchmark classes $(\L_{\sigma}, \L_{+}, \L_{X})$ when interpreting my theoretical results.
\end{remark}

\subsection{Inference Distortions due to Alternative Weights}
I conclude Section \ref{arxiv2:sec:setting} by formally defining the inference distortions that arise when one uses the baseline statistics $(\htauw, CI_{w})$ to learn about alternative estimands $\taul$. In particular, I consider (i) the absolute bias of $\htauw$ for $\taul$, given by
\begin{align*}
    \abs{\E{\htauw} - \taul} = \abs{\taul- \tauw} \geq 0,
\end{align*}
and (ii) the noncoverage probability of $CI_{w}$ for $\taul$, given (in the two-sided case) by
\begin{align*}
    \P{\taul \notin CI_{w}} = \Phi\paren{z_{\alpha/2}-\frac{\taul- \tauw}{\sigma_{w}}} + \Phi\paren{z_{\alpha/2} + \frac{\taul- \tauw}{\sigma_{w}}} \geq \alpha.
\end{align*}
These inference distortions are governed by the (absolute) difference in estimands:
\begin{align*}
    \abs{\taul- \tauw}.
\end{align*}
The larger this difference, the higher the bias and noncoverage. In particular, when $\l \neq w$, there exist parameter values $\theta$ where $|\taul- \tauw|$ is arbitrarily large so that bias is arbitrarily large and noncoverage is arbitrarily close to one. In such cases, the baseline statistics $(\htauw, CI_{w})$ are completely uninformative for $\taul$. These observations suggest that, to obtain useful inferences for alternative estimands, one should account for potential differences in estimands across the alternative weights $\l \in \L$ and the potential parameter values $\theta \in \R^{K}$. To this end, the results in Section \ref{arxiv2:sec:bounding.differences} below provide a useful starting point.

\begin{remark}[Reporting Constraints]
Since $\L$ represents the broad range of beliefs and objectives that researchers and readers may have, it is impractical to report $(\htaul, CI_{\l})$ for all $\l \in \L$. Moreover, if there are confidentiality restrictions or communication costs, reporting the entire data $(\htheta, \Sigma)$ may not be a broadly applicable solution.\footnote{\cite{allcott2015site} is one example where confidentiality restrictions preclude the reporting of $(\htheta, \Sigma)$, but not of $(\htauw, \sigma_{w})$. Even absent such restrictions, researchers still often focus on reporting statistics of the form $(\htauw, \sigma_{w})$, as discussed in \citet{athey2023thirdnumber}. This reporting convention may be due to communication costs that induce researchers to report low-dimensional statistics for their readers.} With these issues in mind, I will develop methods that facilitate inferences across $\l \in \L$ based simply on (i) the baseline report $(\htauw, \sigma_{w})$ and (ii) a supplementary report of low-dimensional statistics, where the latter loosely refers to statistics whose dimension (e.g., the number of columns and rows occupied in a table) does not grow with the number of groups $K$, thus precluding $(\htheta, \Sigma)$. 
\end{remark}

\section{Bounding the Difference in Estimands}\label{arxiv2:sec:bounding.differences}
In this section, I bound the difference in estimands in terms of the heterogeneity in parameters and the distance between weights. I then show how to infer these quantities, which will be the basis for constructing the robust estimator and CI in Section \ref{arxiv2:sec:robust.inference}.

\paragraph{Heterogeneity in Parameters.} Consider the generalized least squares (GLS) regression of the parameters $\theta$ on a constant $\1$ under the weighting matrix $\Sigma^{-1}$. I define the \textit{heterogeneity in parameters} as the square root of the corresponding GLS residual sum of squares:
\begin{align*}
    H(\theta) = \sqrt{\min_{\gamma \in \R}(\theta - \1 \gamma)'\Sigma^{-1}(\theta - \1 \gamma)}.
\end{align*}
By construction, the heterogeneity in $\theta$ is zero if and only if $\theta_{k}$ is constant across $k$. The above regression is uniquely minimized at the GLS estimand
\begin{align}\label{arxiv2:eq:GLS.estimand}
    \arg\min_{\gamma \in \R}(\theta - \1 \gamma)'\Sigma^{-1}(\theta - \1 \gamma) = w_{\GLS}'\theta = \tau_{\GLS}(\theta), \quad w_{\GLS} = \frac{\Sigma^{-1}\1}{\1'\Sigma^{-1}\1}.
\end{align}
Evaluating $\gamma = \tau_{\GLS}(\theta)$ yields the formula
\begin{align}\label{arxiv2:eq:heterogeneity.quadratic}
    H(\theta) = \sqrt{\theta'Q\theta}, \quad Q = \Sigma^{-1/2}A\Sigma^{-1/2}, \quad A = I - \frac{\Sigma^{-1/2}\1\1'\Sigma^{-1/2}}{\1'\Sigma^{-1}\1},
\end{align}
where $A$ is the annihilator matrix for $\Sigma^{-1/2}\1$ and $I$ is the identity matrix. Thus, the squared heterogeneity can be represented as a particular quadratic form $\theta'Q\theta$ of the parameters, which will facilitate quantile-unbiased inference on $H(\theta)$ in Section \ref{arxiv2:sec:heterogeneity.UCB}.

\paragraph{Distance Between Weights.} The standard deviation $\norm{v}_{\Sigma} = \dsqrt{v'\Sigma v}$ of a given linear estimator $v'\htheta$ defines a norm on $v \in \R^{K}$. Given weights $\l$ and $w$, I define the \textit{distance between weights} as the corresponding norm of their difference:
\begin{align*}
    \norm{\l - w}_{\Sigma} = \sqrt{(\l-w)'\Sigma(\l-w)} = \sqrt{\Var{\htaul - \htauw}}.
\end{align*}
Intuitively, the distance between weights $\norm{\l - w}_{\Sigma}$ measures the disagreement between $\l$ and $w$ by taking the standard deviation of their corresponding difference in estimators. 

\begin{proposition}\label{arxiv2:prop:bound}
For any $w \in \W$ and $\l \in \W$, the difference in estimands is bounded as
\begin{align*}
    |\taul - \tauw| \leq H(\theta)\norm{\l - w}_{\Sigma}, \quad \forall \theta.
\end{align*}
This bound is sharp in the sense that, given $\l \neq w$ and any $\eta \geq 0$, there exists $\theta$ with $H(\theta) = \eta$ such that the bound holds with equality.
\end{proposition}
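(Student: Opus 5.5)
The plan is to use the fact that $\l - w$ sums to zero to recenter $\theta$, and then apply Cauchy--Schwarz in the $\Sigma$-geometry. Let $v = \l - w$, so that $\1'v = 0$ and hence $\taul - \tauw = v'\theta = v'(\theta - \1\gamma)$ for every $\gamma \in \R$. Choose $\gamma = \tau_{\GLS}(\theta)$ and set $e = \theta - \1\tau_{\GLS}(\theta)$, the GLS residual. By \eqref{arxiv2:eq:GLS.estimand} and \eqref{arxiv2:eq:heterogeneity.quadratic}, $e'\Sigma^{-1}e = H(\theta)^2$.

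Next write $v'e = (\Sigma^{1/2}v)'(\Sigma^{-1/2}e)$ and apply Cauchy--Schwarz:
\begin{align*}
|v'e| \leq \sqrt{v'\Sigma v}\,\sqrt{e'\Sigma^{-1}e} = \norm{\l - w}_{\Sigma}H(\theta).
\end{align*}
This gives the bound for all $\theta$. Alternatively, one could note that $H(\theta)^2 \ge e'\Sigma^{-1}e$ holds for any recentering, but using the exact minimizer makes sharpness immediate.

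For sharpness, the goal is a $\theta$ at which Cauchy--Schwarz is tight and whose GLS residual is exactly that $\theta$. Take $\theta = c\,\Sigma v$ with $c \ge 0$. Since $\1'v = 0$, we get $\1'\Sigma^{-1}\theta = c\,\1'v = 0$, so $\tau_{\GLS}(\theta) = 0$ and the residual is $\theta$ itself. Then
\begin{align*}
H(\theta)^2 = c^2 v'\Sigma v \quad\text{and}\quad v'\theta = c\, v'\Sigma v = H(\theta)\norm{v}_\Sigma .
\end{align*}
Because $\l \neq w$ and $\Sigma$ is positive definite, $\norm{v}_\Sigma > 0$. Setting $c = \eta/\norm{v}_\Sigma$ gives $H(\theta) = \eta$, and the bound holds with equality. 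The case $\eta = 0$ gives $\theta = 0$, where both sides are zero.

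No step is a real obstacle. The only subtle point is recognizing that the zero-sum property of $\l - w$ is what lets the constant be removed, so that the GLS residual, rather than $\theta$ itself, enters the bound. The sharpness construction must likewise be chosen $\Sigma^{-1}$-orthogonal to $\1$, so that its heterogeneity equals its full $\Sigma^{-1}$-norm.
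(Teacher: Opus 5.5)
Your proof is correct and is essentially the paper's argument: the paper encodes your recentering as the identity $\Sigma^{1/2}(\l - w) = A\Sigma^{1/2}(\l - w)$ and applies Cauchy--Schwarz to $A\Sigma^{-1/2}\theta = \Sigma^{-1/2}(\theta - \1\tau_{\GLS}(\theta))$, which is exactly your standardized GLS residual. For sharpness it uses the same point $\theta^{\dagger} = \eta\Sigma(\l - w)/\norm{\l - w}_{\Sigma}$. One small slip is in your parenthetical aside: for an arbitrary recentering the inequality runs the other way, $(\theta - \1\gamma)'\Sigma^{-1}(\theta - \1\gamma) \geq H(\theta)^{2}$, so that route would need a final minimization over $\gamma$; this does not affect your main argument.
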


\begin{proof}
See Appendix \ref{arxiv2:app:proof:bound}.
\end{proof}

Proposition \ref{arxiv2:prop:bound} shows that the difference between a baseline estimand $\tauw$ and an alternative estimand $\taul$ is bounded by the product of (i) the heterogeneity in parameters $H(\theta)$, which is unknown due to $\theta$ being unobserved, and (ii) the distance between weights $\norm{\l - w}_{\Sigma}$, which is unknown due to ambiguity or disagreement over $\l \in \L$.\footnote{Proposition \ref{arxiv2:prop:bound} is based on the Cauchy-Schwarz inequality, similar to Scheffé-style arguments for bounding data-dependent linear combinations to obtain uniformly valid inference \citep{scheffe1953method, lehmann2024testing}. However, Proposition \ref{arxiv2:prop:bound} considers nonrandom linear combinations and obtains bounds on the inference distortions themselves. Analogous bounds hold if one replaces $\Sigma$ with another positive definite matrix when defining the heterogeneity and distance measures. It also has a natural analogue for continuously indexed groups, where one may define the weighted estimands as $\tauw = \int w(k)\theta(k)d\nu(k)$ and apply the same Cauchy-Schwarz argument in an $L^{2}(\nu)$ inner product. \label{arxiv2:footnote:bound.details}} In Section \ref{arxiv2:sec:maximum.distance}, I account for the latter by taking the maximum distance over the class of alternatives $\L$. In Section \ref{arxiv2:sec:heterogeneity.UCB}, I account for the former by constructing an upper confidence bound (UCB) on the heterogeneity $H(\theta)$.

\subsection{Maximum Distance}\label{arxiv2:sec:maximum.distance}
Given baseline weights $w$ and class of alternatives $\L$, the \textit{maximum distance} between weights is
\begin{align*}
    \max_{\l \in \L}\norm{\l - w}_{\Sigma} = \sqrt{\max_{\l \in \L}(\l-w)'\Sigma (\l-w)}.
\end{align*}  
Intuitively, the maximum distance measures the worst-case disagreement between the baseline and alternative weights across $\L$. Under the structure on $\L$ from Assumption \ref{arxiv2:ass:alternative.weights}, the maximum is attained at some $\l^{*} \in \L$, which can be viewed as the weights of a reader who disagrees with $w$ the most. Below I analyze the structure of the maximum distance under the example classes from Section \ref{arxiv2:sec:alternative.weights} and illustrate how this structure can facilitate communication between researchers and readers---I defer practical recommendations and implementation choices to Section \ref{arxiv2:sec:implementation}. 

\begin{example*}[Bounded Variance, continued]
One can show that 
\begin{align}\label{arxiv2:eq:distance.bounded.variance}
    \max_{\l \in \L_{\sigma}}\norm{\l - w}_{\Sigma} = \sqrt{r^{2}\sigma_{w}^{2} - \sigma_{\min}^{2}} + \sqrt{\sigma_{w}^{2} - \sigma_{\min}^{2}}, \quad \sigma_{\min}^{2} = \min_{w \in \W} \sigma_{w}^{2} = \frac{1}{\1'\Sigma^{-1}\1},
\end{align}
which is a function of $(r, \sigma_{w}, \sigma_{\min})$. If the researcher reports the minimum variance $\sigma_{\min}^{2}$ alongside the baseline variance $\sigma_{w}^{2}$, then a reader can compute the maximum distance for their own choice of standard deviation ratio bound $r$. Conveniently, $\sigma_{\min}^{2}$ coincides with the variance of the GLS estimator
\begin{align}\label{arxiv2:eq:GLS.estimator}
    \htau_{\GLS} = w_{\GLS}'\htheta = \argmin_{\gamma \in \R}(\htheta - \1 \gamma)'\Sigma^{-1}(\htheta - \1 \gamma), \quad \sigma_{\min}^{2} = \sigma_{\GLS}^{2} = w_{\GLS}'\Sigma w_{\GLS},
\end{align}
where $w_{\GLS}$ are the weights of the GLS estimand defined in \eqref{arxiv2:eq:GLS.estimand}. The above GLS regression is an input to my inference procedures for $H(\theta)$ in Section \ref{arxiv2:sec:heterogeneity.UCB}, so the GLS variance can be obtained at essentially no further cost.
\end{example*}

\begin{example*}[Truncated Simplex, continued]
Letting $v_{1}, \ldots, v_{K}$ denote the standard unit vectors, one can show that
\begin{align}\label{arxiv2:eq:distance.truncated.simplex}
    \max_{\l \in \L_{+}}\norm{\l - w}_{\Sigma} = \epsilon \max_{\l \in \W_{+}}\norm{\l - w}_{\Sigma} = \epsilon \max_{j \in \{1,\ldots,K\}}\norm{v_{j} - w}_{\Sigma},
\end{align}
which is a function of $(\epsilon, \max_{j}\norm{v_{j} - w}_{\Sigma})$. If the researcher reports $\max_{j}\norm{v_{j} - w}_{\Sigma}$, then a reader can compute the maximum distance for their own choice of truncation parameter $\epsilon$. In the case of an equal weights baseline $w = w_{\EW} = \1/K$ and a reader with weights $\l_{0} \in \W_{+}$ in the simplex, $\epsilon_{0} = 1-K\min_{k}\l_{0,k}$ is the smallest $\epsilon$ at which $\L_{+}(\epsilon)$ contains the reader's weights $\l_{0}$.
\end{example*}

\begin{example*}[Covariate Balance, continued]
Compared to the above examples, there does not appear to be a transparent expression for $\max_{\l \in \L_{X}}\norm{\l - w}_{\Sigma}$. However, the researcher can plot it for a range of balance gap bounds $\Bar{c}$, which allows a reader to examine the maximum distance at their preferred value of $\Bar{c}$ from the reported range. To support this exercise, the researcher can additionally report covariate information $(w'\bX_{m}, \text{sd}(\bX_{m}))_{m=1}^{M}$. This allows a policymaker with covariate means $\mu_{0} \in \R^{M}$ for a target population $P_{0}$ to compute the balance gap 
\begin{align*}
    c_{0}(\bX) = \max_{m \in \{1, \ldots, M\}}\frac{|\mu_{0,m} - w'\bX_{m}|}{\text{sd}(\bX_{m})}.
\end{align*}
Given the researcher's reported range of $\Bar{c}$, the policymaker can check the maximum distance at the value $\Bar{c}_{0}$ that is closest to $c_{0}(\bX)$---ideally while being larger than $c_{0}(\bX)$, so that $\L_{X}(\Bar{c}_{0})$ will contain any $\l \in \W_{+}$ that yields gap $c_{\l}(\bX) = c_{0}(\bX)$. Intuitively, $\L_{X}(\Bar{c}_{0})$ is a set of plausible candidates for $P_{0}$ in terms of possible balance-gap similarity and the maximum distance at $\Bar{c}_{0}$ gives a corresponding measure of ambiguity. 
\end{example*}

\subsection{Heterogeneity UCB}\label{arxiv2:sec:heterogeneity.UCB}
I now show how to infer the unknown heterogeneity in parameters $H(\theta)$. In particular, I derive an \textit{upper confidence bound} (UCB) for $H(\theta)$: given a significance level $\beta \in (0,1)$, I construct an estimator $\heta_{1-\beta}$ such that
\begin{align}\label{arxiv2:eq:inference.heterogeneity}
    \P{H(\theta) \leq \heta_{1-\beta}} \geq 1 - \beta, \quad \forall \theta.
\end{align}
In words, $\heta_{1-\beta}$ upper bounds the unknown heterogeneity $H(\theta)$ with probability at least $1-\beta$, for any value of $\theta$. I refer to $\heta_{1-\beta}$ as a \textit{heterogeneity UCB} at confidence level $1-\beta$. 

\paragraph{Derivation.} To derive $\heta_{1-\beta}$, I use the representation of $H(\theta) = \dsqrt{\theta'Q\theta}$ as a quadratic form in $\theta$. In particular, since $\htheta \sim N(\theta, \Sigma)$, the definition of $Q$ from \eqref{arxiv2:eq:heterogeneity.quadratic} implies that the \textit{heterogeneity in estimates} $H(\htheta) = \sqrt{\htheta'Q\htheta}$ satisfies 
\begin{align*}
    \htheta'Q\htheta = (\Sigma^{-1/2}\htheta)'A(\Sigma^{-1/2}\htheta) \sim \chi_{K-1}^{2}(H(\theta)), \quad \forall \theta,
\end{align*}
where $\chi_{K-1}^{2}(\eta)$ denotes the noncentral chi-squared distribution with $K-1$ degrees of freedom and noncentrality parameter $\eta^{2}$. Let $F_{\chi^{2}}(x; \eta)$ denote its CDF and consider the corresponding pivot function $\eta \mapsto F_{\chi^{2}}(\htheta'Q\htheta; \eta)$. The probability integral transform yields
\begin{align*}
    F_{\chi^{2}}(\htheta'Q\htheta; H(\theta)) \sim U(0,1), \quad \forall \theta.
\end{align*}
In particular, the probability of observing $F_{\chi^{2}}(\htheta'Q\htheta; H(\theta)) \geq \beta$ is equal to $1-\beta$. For each $x > 0$, the function $\eta \mapsto F_{\chi^{2}}(x;\eta)$ is strictly decreasing \citep{sun2010monotonicity}. Therefore, when $F_{\chi^{2}}(\htheta'Q\htheta; 0) > \beta$, I define $\heta_{1-\beta}$ as the unique solution to $F_{\chi^{2}}(\htheta'Q\htheta; \heta_{1-\beta}) = \beta$. Otherwise, when $F_{\chi^{2}}(\htheta'Q\htheta; 0) \leq \beta$, I define $\heta_{1-\beta} = 0$.\footnote{This construction is consistent with the general approach in \citet[Section 5.3]{pfanzagl1994parametric}, which  constructs confidence bounds for one-parameter distributions satisfying appropriate monotonicity conditions.} In summary,
\begin{align*}
    \heta_{1-\beta} = 
    \begin{cases}
    \hfil 0, & F_{\chi^{2}}(\htheta'Q\htheta; 0) \leq \beta, \\
    \hfil F_{\chi^{2}}^{-1}(\htheta'Q\htheta; \beta), & F_{\chi^{2}}(\htheta'Q\htheta; 0) > \beta.
    \end{cases}
\end{align*}
This construction depends only on $\htheta'Q\htheta$, which is conveniently obtained in \eqref{arxiv2:eq:GLS.estimator} as the residual sum of squares from the GLS regression of the estimates $\htheta$ on a constant $\1$ under the weighting matrix $\Sigma^{-1}$. The inversion for $\heta_{1-\beta}$ is computationally efficient, since it amounts to finding the root of a strictly monotone function.

The next result shows that the above $\heta_{1-\beta}$ is a valid UCB for $H(\theta)$. Moreover, it is \textit{quantile-unbiased} under heterogeneity in the sense that its (weak) overestimation probability for $H(\theta)$ is equal to $1-\beta$ when $H(\theta) > 0$. Note that when $H(\theta) = 0$, the overestimation probability is equal to one, as would be the case for any nonnegative estimator of $H(\theta)$.

\begin{proposition}\label{arxiv2:prop:eta.validity}
The above $\heta_{1-\beta}$ satisfies \eqref{arxiv2:eq:inference.heterogeneity}. Moreover, under heterogeneous $\theta$, the confidence level $1-\beta$ is attained:
\begin{align}\label{arxiv2:eq:exact.upperCI.heterogeneity}
    \P{H(\theta) \leq \heta_{1-\beta}} = 1 - \beta, \quad \forall \theta: H(\theta) > 0.
\end{align}
Under homogeneous $\theta$, the coverage probability is equal to one.
\end{proposition}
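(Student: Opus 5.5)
The plan is to reduce the coverage event to an event about the pivot $U = F_{\chi^{2}}(\htheta'Q\htheta; H(\theta))$. By the derivation above, $U$ is exactly $U(0,1)$ for every $\theta$, since $\htheta'Q\htheta \sim \chi^{2}_{K-1}(H(\theta))$ has a continuous distribution with $K-1 \geq 1$ degrees of freedom. Write $X = \htheta'Q\htheta$ and $\eta_{0} = H(\theta)$. Note that $X > 0$ almost surely, so the strict monotonicity of $\eta \mapsto F_{\chi^{2}}(X;\eta)$ \citep{sun2010monotonicity} applies with probability one. I will also use that $\eta \mapsto F_{\chi^{2}}(x;\eta)$ is continuous, which is standard from the Poisson-mixture representation.

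\textbf{Case $\eta_{0} > 0$.} I will show that, almost surely, $\{\eta_{0} \leq \heta_{1-\beta}\} = \{U \geq \beta\}$, by splitting on the two branches of the definition of $\heta_{1-\beta}$.
\begin{itemize}
\item On the branch $F_{\chi^{2}}(X;0) \leq \beta$, we have $\heta_{1-\beta} = 0 < \eta_{0}$, so the coverage event fails. Strict monotonicity gives $U = F_{\chi^{2}}(X;\eta_{0}) < F_{\chi^{2}}(X;0) \leq \beta$, so $U \geq \beta$ fails as well.
\item On the branch $F_{\chi^{2}}(X;0) > \beta$, $\heta_{1-\beta}$ is the unique root of $F_{\chi^{2}}(X;\eta) = \beta$. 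Existence of the root needs $F_{\chi^{2}}(x;\eta) \to 0$ as $\eta \to \infty$, which I will verify, e.g.\ via the mean $K-1+\eta^{2}$ and a Chebyshev bound. Since $F_{\chi^{2}}(X;\cdot)$ is strictly decreasing, $\eta_{0} \leq \heta_{1-\beta}$ holds if and only if $F_{\chi^{2}}(X;\eta_{0}) \geq F_{\chi^{2}}(X;\heta_{1-\beta}) = \beta$, i.e.\ if and only if $U \geq \beta$.
\end{itemize}
Hence $\P{H(\theta) \leq \heta_{1-\beta}} = \P{U \geq \beta} = 1-\beta$, which is \eqref{arxiv2:eq:exact.upperCI.heterogeneity}.

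\textbf{Case $\eta_{0} = 0$.} Here $\heta_{1-\beta} \geq 0 = H(\theta)$ holds deterministically, so the coverage probability is one.

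Combining the two cases gives \eqref{arxiv2:eq:inference.heterogeneity} for all $\theta$. The main obstacle is the careful handling of the two branches together with the technical facts used in the second one: continuity of $F_{\chi^{2}}(x;\cdot)$ and its limit zero as $\eta \to \infty$, which are needed for the inverse to be well defined. The only null sets to discard are $\{X = 0\}$ and the boundary events where $U$ equals $\beta$ exactly, both of which have probability zero.
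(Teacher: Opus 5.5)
Your proposal is correct and takes essentially the same route as the paper, which also identifies $\{H(\theta) \leq \heta_{1-\beta}\}$ with the pivot event $\{F_{\chi^{2}}(\htheta'Q\htheta; H(\theta)) \geq \beta\}$ by splitting on whether $F_{\chi^{2}}(\htheta'Q\htheta; 0) > \beta$, uses strict monotonicity, and then applies the probability integral transform. Your additional details fill in points the paper leaves implicit: existence of the root via continuity and $F_{\chi^{2}}(x;\eta) \to 0$, and the trivial homogeneous case. One small correction: you do not actually need to discard the event $\{U = \beta\}$, because your equivalence already holds with equality included.
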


\begin{proof}
See Appendix \ref{arxiv2:app:proof:eta.validity}.
\end{proof}

The quantile-unbiasedness in Proposition \ref{arxiv2:prop:eta.validity} facilitates a variety of inference procedures for $H(\theta)$. First, $[0, \heta_{1-\beta}]$ is a one-sided upper confidence interval for $H(\theta)$, with exact coverage rate $1-\beta$ under heterogeneous $\theta$. Next, the one-sided lower interval $[\heta_{\beta}, \infty)$ satisfies 
\begin{align}\label{arxiv2:eq:exact.lowerCI.heterogeneity}
    \P{H(\theta) \geq \heta_{\beta}} = 1 - \beta, \quad \forall \theta: H(\theta) > 0.
\end{align}
Likewise, one can construct a two-sided interval $[\heta_{\beta/2}, \heta_{1-\beta/2}]$ that satisfies 
\begin{align}\label{arxiv2:eq:exact.twosidedCI.heterogeneity}
    \P{\heta_{\beta/2} \leq H(\theta) \leq \heta_{1-\beta/2}} = 1 - \beta, \quad \forall \theta: H(\theta) > 0.
\end{align}
Finally, one can construct a point estimator $\heta_{1/2}$ that is median-unbiased in the sense that
\begin{align}\label{arxiv2:eq:exact.median.heterogeneity}
    \P{H(\theta) \leq \heta_{1/2}} = \P{H(\theta) \geq \heta_{1/2}} = \frac{1}{2}, \quad \forall \theta: H(\theta) > 0.
\end{align}
That is, the median of $\heta_{1/2}$ is equal to $H(\theta)$ under heterogeneity.\footnote{For point estimation of heterogeneity, one can also consider mean-unbiased estimation of $\theta'Q\theta$, following a similar construction to, e.g., \citet{kline2020leave}. In particular, one can show that $\htheta'Q\htheta - (K-1)$ is a mean-unbiased estimator of $\theta'Q\theta$. However, the quantile-unbiased approach adopted here is more amenable to the construction of confidence bounds for my objects of interest.} Given my focus on upper bounding differences in weighted estimands, I focus on properties \eqref{arxiv2:eq:inference.heterogeneity} and \eqref{arxiv2:eq:exact.upperCI.heterogeneity}. Nevertheless, properties \eqref{arxiv2:eq:exact.lowerCI.heterogeneity}-\eqref{arxiv2:eq:exact.median.heterogeneity} illustrate the versatility of $\heta_{1-\beta}$ for inference on parameter heterogeneity. I conclude this section with comparative statics and an optimality result for $\heta_{1-\beta}$.

\paragraph{Comparative Statics.} For a given $\eta$, the CDF $F_{\chi^{2}}(x; \eta)$ is increasing in $x$ and decreasing in its degrees of freedom \citep{sun2010monotonicity}. Thus, $\heta_{1-\beta}$ is increasing in $\htheta'Q\htheta$ given a fixed $K$, and decreasing in $K$ given a fixed $\htheta'Q\htheta$. In this sense, $\heta_{1-\beta}$ is naturally summarized by the heterogeneity of the estimates relative to the number of groups: $H(\htheta)/\sqrt{K}$. Notice that, under homoskedasticity $\Sigma = \sigma^{2}I$, the expression $\htheta'Q\htheta/K$ reduces to the empirical variance of the estimates relative to the common sampling variance $\sigma^{2}$ of the estimates:
\begin{align*}
    \frac{\htheta'Q\htheta}{K} = \frac{\displaystyle \frac{1}{K}\sumk \htheta_{k}^{2} - \paren{\frac{1}{K}\sumk \htheta_{k}}^{2}}{\sigma^{2}}, \quad \Sigma = \sigma^{2}I.
\end{align*}
Thus, the proposed heterogeneity measure $\heta_{1-\beta}$ has intuitive comparative statics in terms of $\htheta'Q\htheta/K$. Finally, holding $\htheta'Q\htheta$ and $K$ fixed, a larger confidence level $1-\beta$ implies a larger $\heta_{1-\beta}$. This highlights a tradeoff when constructing $\heta_{1-\beta}$: to upper bound $H(\theta)$ with higher confidence, one must tolerate a larger UCB, corresponding to a lower significance level $\beta$.

\paragraph{Optimality.} Let $\Bar{\Theta}^{\eta} = \{\theta: H(\theta) = \eta\}$ denote the set of parameters for which the heterogeneity is equal to $\eta$, and consider the class of potentially randomized estimators $\Tilde{\eta}_{1-\beta}$ that are quantile-unbiased in the sense of \eqref{arxiv2:eq:exact.upperCI.heterogeneity}. The following result shows that $\hat{\eta}_{1-\beta}$ is minimax \textit{most accurate} in the sense that it minimizes the worst-case probability of overestimating $H(\theta)$, for any degree of underlying heterogeneity $\eta > 0$. 

\begin{proposition}\label{arxiv2:prop:optimality}
For any quantile-unbiased estimator $\Tilde{\eta}_{1-\beta}$ and any degree of heterogeneity $\eta > 0$, the heterogeneity measure $\heta_{1-\beta}$ yields a lower worst-case probability of overestimating $H(\theta)$:
\begin{align*}
    \sup_{\theta \in \Bar{\Theta}^{\eta}}\P{\heta_{1-\beta} \geq H(\theta) + \e} \leq \sup_{\theta \in \Bar{\Theta}^{\eta}}\P{\Tilde{\eta}_{1-\beta} \geq H(\theta) + \e}, \quad \forall (\eta,\e) > 0.
\end{align*}
\end{proposition}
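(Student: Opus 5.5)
The plan is to reduce the minimax problem over $\Bar{\Theta}^{\eta}$ to a one-parameter problem by averaging over a group of transformations that preserves $H(\theta)$, and then to apply the optimality of monotone-likelihood-ratio confidence bounds from \citet[Section 5.3]{pfanzagl1994parametric}.

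\textbf{Step 1 (coordinates).} Set $Z=\Sigma^{-1/2}\htheta\sim N(\mu, I)$ with $\mu=\Sigma^{-1/2}\theta$. Let $e=\Sigma^{-1/2}\1/\norm{\Sigma^{-1/2}\1}$, and decompose $\mu = (e'\mu)e + A\mu$. Then $H(\theta)=\norm{A\mu}$, and $A\mu$ ranges over the $(K-1)$-dimensional subspace $e^{\perp}$.

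\textbf{Step 2 (invariance reduction).} Let $G$ be the group generated by translations along $e$ and orthogonal maps of $e^{\perp}$. This group acts transitively on $\Bar{\Theta}^{\eta}$ in $\mu$-coordinates, and $H$ is invariant under it. Take any quantile-unbiased $\tilde\eta_{1-\beta}$, possibly randomized, and fix $\eta,\e>0$. Form the averaged estimator $\bar\eta$ by drawing $g$ from Haar measure on the compact orthogonal part and applying $\tilde\eta_{1-\beta}$ to $g Z$.

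For the noncompact translation part, use instead that the law of $A Z$ does not depend on $e'\mu$. Concretely, conditioning on $e'Z$ and integrating it out yields a randomized estimator depending on $AZ$ alone. I would handle this step through the sufficiency and completeness of $e'Z$ for the nuisance parameter $e'\mu$, following a Lehmann-style argument. This is the main obstacle, and I would first try replacing $e'Z$ with an independent draw from $N(0,1)$ shifted appropriately.

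The averaged estimator $\bar\eta$ has the following two properties.
\begin{itemize}
\item It remains quantile-unbiased, since the coverage probability at each $\theta$ averages coverage probabilities at points with the same $H$, each equal to $1-\beta$.
\item Its overestimation probability at any $\theta\in\Bar{\Theta}^{\eta}$ is an average of those of $\tilde\eta_{1-\beta}$ over $\Bar{\Theta}^{\eta}$, so it is at most $\sup_{\Bar{\Theta}^{\eta}}$ of the original.
\end{itemize}
Hence it suffices to prove the claim for invariant estimators. An invariant estimator is a (randomized) function of the maximal invariant $\norm{AZ}^{2}=\htheta'Q\htheta$.

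\textbf{Step 3 (one-parameter optimality).} The statistic $T=\htheta'Q\htheta$ has the $\chi^{2}_{K-1}(\eta)$ family. This family has strictly monotone likelihood ratio in $T$, since the noncentral chi-squared density is a Poisson mixture of central densities with MLR, a classical fact. By \citet[Section 5.3]{pfanzagl1994parametric}, equivalently the Neyman–Pearson argument inverted into confidence bounds, the bound obtained by inverting $F_{\chi^{2}}(T;\cdot)=\beta$ is uniformly most accurate among exactly quantile-unbiased (randomized) bounds based on $T$. This bound is $\heta_{1-\beta}$, and its validity is given by Proposition~\ref{arxiv2:prop:eta.validity}.

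Concretely, for a fixed true $\eta$, the event $\{\bar\eta\ge \eta+\e\}$ gives an acceptance region for testing $H=\eta+\e$ against $H=\eta$. That test has level $1-\beta$ in the appropriate sense, because $\bar\eta$ covers $\eta+\e$ with probability $1-\beta$. By Neyman–Pearson under MLR, the most powerful such test has rejection region $\{T \le c\}$, and this corresponds exactly to $\heta_{1-\beta}$. Therefore $P_{\eta}(\heta_{1-\beta}\ge\eta+\e)\le P_{\eta}(\bar\eta\ge\eta+\e)$.

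Because $\heta_{1-\beta}$ is itself invariant, its overestimation probability is constant on $\Bar{\Theta}^{\eta}$. Chaining this with Step 2 gives the stated inequality between suprema.
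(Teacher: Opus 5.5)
Your proposal is correct and follows essentially the same route as the paper: reduce, via the group of translations along $\1$ and $\Sigma^{-1}$-orthogonal maps fixing $\1$, to procedures based on the maximal invariant $\htheta'Q\htheta$. Then apply Neyman--Pearson under the monotone likelihood ratio of $\chi^{2}_{K-1}(\cdot)$ to the size-$\beta$ test of $H(\theta)=\eta+\e$ against $H(\theta)=\eta$ that rejects when $\Tilde{\eta}_{1-\beta} < \eta+\e$. The only difference is that the paper invokes the Hunt--Stein theorem for the invariance step, whereas you carry it out by hand. You do so by Haar averaging over the orthogonal part and replacing $e'Z$ by an independent $N(0,1)$ draw; any fixed shift works, because the law of the resulting estimator at $\mu$ is exactly that of $\Tilde{\eta}_{1-\beta}$ at $A\mu$, so the step you flagged as the main obstacle needs no sufficiency or completeness argument.
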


\begin{proof}
See Appendix \ref{arxiv2:app:prop:optimality}.
\end{proof}

The above notion of optimality is a minimax analogue of the notion of a uniformly most accurate confidence bound \citep[Section 3.5]{lehmann2024testing}. In Appendix \ref{arxiv2:app:sec:pfanzagl.optimality}, I show that $\heta_{1-\beta}$ is uniformly most accurate in the class of quantile-unbiased estimators that depend on $\htheta$ through $\htheta'Q\htheta$. In fact, $\heta_{1-\beta}$ minimizes expected loss in this class under any quasiconvex loss function that attains its minimum at $H(\theta)$, for any heterogeneous $\theta$. The latter optimality statement is based on results from \citet{pfanzagl1994parametric} on optimal quantile-unbiased estimation in models with monotone likelihood ratios.

\section{Robust Inference Procedures}\label{arxiv2:sec:robust.inference}
Given the maximum distance between weights $\max_{\l \in \L}\norm{\l - w}_{\Sigma}$ and the heterogeneity UCB $\heta_{1-\beta}$, I define the bias UCB
\begin{align*}
    \hB_{w}^{\beta}(\L) = \heta_{1-\beta}\max_{\l \in \L}\norm{\l - w}_{\Sigma}.
\end{align*}
The following result shows that $\hB_{w}^{\beta}(\L)$ is indeed a valid bias UCB.

\begin{proposition}\label{arxiv2:prop:eta.to.UCB}
For any $w \in \W$, the above $\hB_{w}^{\beta}(\L)$ satisfies
\begin{align*}
    \P{\max_{\l \in \L}|\taul - \tauw| \leq \hB_{w}^{\beta}(\L)} \geq 1-\beta, \quad \forall \theta.
\end{align*}
Moreover, when $\L \neq \{w\}$, there exists $\theta$ such that the coverage probability is equal to $1-\beta$.
\end{proposition}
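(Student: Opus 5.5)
The plan is to combine Proposition \ref{arxiv2:prop:bound} with Proposition \ref{arxiv2:prop:eta.validity} through an event inclusion. By Proposition \ref{arxiv2:prop:bound}, for every $\l \in \L$ and every $\theta$ we have $|\taul - \tauw| \leq H(\theta)\norm{\l - w}_{\Sigma}$. The function $\l \mapsto |\taul-\tauw|$ is continuous and $\L$ is compact by Assumption \ref{arxiv2:ass:alternative.weights}, so the maximum is attained. Taking the maximum over $\l \in \L$ on both sides, and using $H(\theta) \geq 0$, gives the deterministic bound
\begin{align*}
\max_{\l \in \L}|\taul - \tauw| \leq H(\theta)\max_{\l \in \L}\norm{\l - w}_{\Sigma}.
\end{align*}
On the event $\{H(\theta) \leq \heta_{1-\beta}\}$, multiplying by the nonnegative constant $\max_{\l}\norm{\l-w}_{\Sigma}$ shows that $\max_{\l \in \L}|\taul-\tauw| \leq \hB_{w}^{\beta}(\L)$. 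The probability of this event is at least $1-\beta$ for all $\theta$ by Proposition \ref{arxiv2:prop:eta.validity}, which gives the first claim.

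For the exactness claim, suppose $\L \neq \{w\}$. Then $D := \max_{\l \in \L}\norm{\l-w}_{\Sigma} > 0$, because either some $\l \in \L$ differs from $w$, or $w \notin \L$ and $\L$ is nonempty. Let $\l^{*} \in \L$ attain this maximum, so $\l^{*} \neq w$. Fix any $\eta > 0$. The sharpness part of Proposition \ref{arxiv2:prop:bound} gives a $\theta$ with $H(\theta) = \eta$ and $|\tau_{\l^{*}}(\theta) - \tauw| = \eta D$. For this $\theta$ we then have
\begin{align*}
\eta D = |\tau_{\l^{*}}(\theta)-\tauw| \leq \max_{\l}|\taul-\tauw| \leq H(\theta) D = \eta D,
\end{align*}
so $\max_{\l}|\taul - \tauw| = \eta D$.

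It follows that the event $\{\max_{\l}|\taul-\tauw| \leq \hB_{w}^{\beta}(\L)\}$ equals $\{\eta D \leq \heta_{1-\beta} D\}$, and since $D>0$ this is $\{H(\theta) \leq \heta_{1-\beta}\}$. Because $H(\theta) = \eta > 0$, equation \eqref{arxiv2:eq:exact.upperCI.heterogeneity} gives this event probability exactly $1-\beta$.

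The only delicate step is the exactness argument. There one has to check that a single $\theta$ makes the Cauchy--Schwarz bound tight at the maximizing $\l^{*}$, and that this already forces equality of the maximum over $\L$, so that no slack remains. The chain of inequalities above handles this directly. We also need $D>0$ in order to divide, which is why the case $\L = \{w\}$ is excluded.
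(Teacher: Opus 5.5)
Your proof is correct and follows the paper's argument. Propositions \ref{arxiv2:prop:bound} and \ref{arxiv2:prop:eta.validity} give the inclusion $\{H(\theta) \leq \heta_{1-\beta}\} \subseteq \{\max_{\l \in \L}|\taul - \tauw| \leq \hB_{w}^{\beta}(\L)\}$, and exactness comes from a $\theta$ aligned with $\Sigma(\l^{*} - w)$ for a maximizer $\l^{*}$. The only cosmetic difference is that the paper fixes $\theta^{\dagger} = \Sigma(\l^{*} - w)/\norm{\l^{*} - w}_{\Sigma}$ with $H(\theta^{\dagger}) = 1$, whereas you use the sharpness clause of Proposition \ref{arxiv2:prop:bound} at an arbitrary level $\eta > 0$.
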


\begin{proof}
See Appendix \ref{arxiv2:app:proof:eta.to.UCB}.
\end{proof}

In words, $\hB_{w}^{\beta}(\L)$ upper bounds the maximum bias with probability at least $1-\beta$, for any value of $\theta$. This provides an avenue for addressing the bias and undercoverage of the baseline estimator $\htauw$ and $CI_{w}$ for inference on alternative estimands $\taul$ across $\l \in \L$. To this end, I develop the robust estimator in Section \ref{arxiv2:sec:robust.estimator} and the robust CI in Section \ref{arxiv2:sec:robust.CI}.

\subsection{Robust Estimator}\label{arxiv2:sec:robust.estimator}
Since $\hB_{w}^{\beta}(\L)$ is a valid bias UCB under any $w \in \W$, it provides a criterion for optimizing $w$. In the optimization, I allow for consideration sets $\W^{*} \subseteq \W$ that are nonempty, closed, and convex. I distinguish the weights in the optimization problem from the researcher's baseline weights, denoting the former by $\Bar{w} \in \W^{*}$ and the latter by $w \in \W$. I allow for classes $\L$ that depend on the baseline $w$, but not on the optimizer variable $\Bar{w}$. 

Given a consideration set $\W^{*}$, I define the \textit{robust estimator} $\htau^{*}$ as the weighted estimator induced by the \textit{minimax-bias weights} $w^{*} \in \W^{*}$, given by 
\begin{align*}
    \htau^{*} = \htau_{w^{*}}, \quad \quad w^{*} = \arg \adjustlimits \min_{\Bar{w} \in \W^{*}} \max_{\l \in \L} \norm{\l - \Bar{w}}_{\Sigma}.
\end{align*} 
Let $\hB_{\min}^{\beta}(\L) = \heta_{1-\beta}\max_{\l \in \L}\norm{\l - w^{*}}_{\Sigma}$ denote the corresponding minimax-bias UCB. 

\begin{proposition}\label{arxiv2:prop:optimal.weights}
The minimax-bias weights $w^{*}$ exist uniquely and satisfy
\begin{align*}
    \hB_{\min}^{\beta}(\L) \leq \hB_{\Bar{w}}^{\beta}(\L), \quad \forall \Bar{w} \in \W^{*}.
\end{align*}
Moreover, for any parameter space $\Theta^{\eta} = \{\theta: H(\theta) \leq \eta\}$ with a bound $\eta > 0$ on heterogeneity, the minimax-bias weights $w^{*}$ solve
\begin{align*}
    w^{*} = \arg \adjustlimits \min_{\Bar{w} \in \W^{*}} \max_{\l \in \L} \left.\max_{\theta \in \Theta^{\eta}}\right. \abs{\E{\htau_{\Bar{w}}} - \taul}, \quad  \forall \eta > 0.
\end{align*}
In this sense, the robust estimator $\htau^{*} = \htau_{w^{*}}$ is optimal for minimizing worst-case bias under any bound on heterogeneity.
\end{proposition}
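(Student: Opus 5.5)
Let $f(\Bar{w}) = \max_{\l \in \L}\norm{\l - \Bar{w}}_{\Sigma}$ for $\Bar{w} \in \W^{*}$. The plan is to establish existence and uniqueness of a minimizer of $f$ over $\W^{*}$ first. The two displayed claims will then follow almost immediately from the definitions and from Proposition \ref{arxiv2:prop:bound}.

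\emph{Existence.} By Assumption \ref{arxiv2:ass:alternative.weights}, $\L$ is compact, so the maximum defining $f$ is attained. Each map $\Bar{w} \mapsto \norm{\l - \Bar{w}}_{\Sigma}$ is convex and $1$-Lipschitz in $\norm{\cdot}_{\Sigma}$. Hence $f$ is convex and continuous. It is also coercive, since $f(\Bar{w}) \geq \norm{\Bar{w}}_{\Sigma} - \max_{\l \in \L}\norm{\l}_{\Sigma}$. As $\W^{*}$ is nonempty and closed, a minimizer exists.

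\emph{Uniqueness.} This is the step I expect to be the main obstacle, because a maximum of norms need not be strictly convex. I would work with $g = f^{2}$ instead, which has the same minimizers since $f \geq 0$. Take $\Bar{w}_{0} \neq \Bar{w}_{1}$ and $\Bar{w}_{1/2} = (\Bar{w}_{0} + \Bar{w}_{1})/2$, and choose $\l^{\dagger} \in \L$ attaining the maximum at $\Bar{w}_{1/2}$. The parallelogram identity for the inner-product norm $\norm{\cdot}_{\Sigma}$, with $\Sigma$ positive definite, gives
\begin{align*}
\norm{\l^{\dagger} - \Bar{w}_{1/2}}_{\Sigma}^{2} = \tfrac{1}{2}\norm{\l^{\dagger} - \Bar{w}_{0}}_{\Sigma}^{2} + \tfrac{1}{2}\norm{\l^{\dagger} - \Bar{w}_{1}}_{\Sigma}^{2} - \tfrac{1}{4}\norm{\Bar{w}_{0} - \Bar{w}_{1}}_{\Sigma}^{2}.
\end{align*}
It follows that $g(\Bar{w}_{1/2}) < \tfrac{1}{2}g(\Bar{w}_{0}) + \tfrac{1}{2}g(\Bar{w}_{1})$. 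If both $\Bar{w}_{0}$ and $\Bar{w}_{1}$ were minimizers, then $\Bar{w}_{1/2} \in \W^{*}$ by convexity and would be strictly better, a contradiction. So $w^{*}$ is unique.

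\emph{First inequality.} We have $\hB_{\Bar{w}}^{\beta}(\L) = \heta_{1-\beta} f(\Bar{w})$ with $\heta_{1-\beta} \geq 0$. Since $f(w^{*}) \leq f(\Bar{w})$ for all $\Bar{w} \in \W^{*}$, the inequality holds pointwise for every realization of $\htheta$.

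\emph{Second claim.} Unbiasedness gives $\E{\htau_{\Bar{w}}} = \tau_{\Bar{w}}(\theta)$, so the inner quantity is $|\taul - \tau_{\Bar{w}}(\theta)|$. By Proposition \ref{arxiv2:prop:bound}, this is at most $\eta\norm{\l - \Bar{w}}_{\Sigma}$ on $\Theta^{\eta}$.

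For the reverse inequality when $\l \neq \Bar{w}$, I use the sharpness part of Proposition \ref{arxiv2:prop:bound}, which yields $\theta$ with $H(\theta) = \eta$ attaining equality. When $\l = \Bar{w}$ the bound is trivially $0$.

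Hence $\max_{\theta \in \Theta^{\eta}} |\E{\htau_{\Bar{w}}} - \taul| = \eta\norm{\l - \Bar{w}}_{\Sigma}$. Maximizing over $\l$ turns the objective into $\eta f(\Bar{w})$. Because $\eta > 0$, its unique minimizer over $\W^{*}$ is $w^{*}$.
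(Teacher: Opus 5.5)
Your proposal is correct and follows essentially the same route as the paper. Your midpoint parallelogram argument on $f^{2}$ is an explicit verification of the paper's observation that the maximum of the squared norms $(\l - \Bar{w})'\Sigma(\l - \Bar{w})$ is strongly convex, because they share the Hessian $2\Sigma$; the bias-optimality part uses the same Proposition~\ref{arxiv2:prop:bound} bound together with the sharpness construction $\theta^{\dagger} = \eta\Sigma(\l - \Bar{w})/\norm{\l - \Bar{w}}_{\Sigma}$.
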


\begin{proof}
See Appendix \ref{arxiv2:app:proof:optimal.weights}.
\end{proof}

Proposition \ref{arxiv2:prop:optimal.weights} establishes optimality of the minimax-bias weights $w^{*}$ from two perspectives: first, at the observed data, $w^{*}$ minimizes the bias $\hB_{\Bar{w}}^{\beta}(\L)$ that is inferred \textit{ex post}; and second, across hypothetical data realizations, $w^{*}$ minimizes the bias that is possible \textit{ex ante}. Conveniently, the maximum bias from either perspective is proportional to the maximum distance: $w^{*}$ does not depend on the estimated heterogeneity $\heta_{1-\beta}$ or on the heterogeneity bound $\eta$. In the above setup, the optimality of $w^{*}$ among weights $\Bar{w} \in \W^{*}$ is equivalent to optimality of the robust estimator $\htau^{*}$ among weighted estimators $\htau_{\Bar{w}}$. 

By default, I take the consideration set to be the class of alternative weights: $\W^{*} = \L$. In this case, $w^{*}$ can be interpreted as the weights that minimize worst-case disagreement across the class of alternative weights $\Bar{w} \in \L$. Moreover, since the minimax-bias weights depend only on the covariance geometry and the class of alternatives---rather than on estimated heterogeneity or a heterogeneity bound---they provide a natural default for researchers facing ambiguity over their initial choice of baseline weights. I now analyze the structure and interpretation of the minimax-bias weights $w^{*}$ for the benchmark classes $(\L_{\sigma}, \L_{+}, \L_{X})$ highlighted in Section \ref{arxiv2:sec:alternative.weights}.

\begin{example*}[Bounded Variance, continued]
Let $w^{*}(r)$ denote the minimax-bias weights under the bounded variance class $\L_{\sigma}(r)$. Following equation \eqref{arxiv2:eq:distance.bounded.variance}, one can show that
\begin{align}\label{arxiv2:eq:robust.estimator.GLS}
    \htau^{*}(r) = \htau_{\GLS}, \quad w^{*}(r) = w_{\GLS} = \frac{\Sigma^{-1}\1}{\1'\Sigma^{-1}\1}, \quad \min_{\Bar{w} \in \L_{\sigma}}\max_{\l \in \L_{\sigma}}\norm{\l - \Bar{w}}_{\Sigma} = \sqrt{r^{2}\sigma_{w}^{2} - \sigma_{\GLS}^{2}}.
\end{align}
The corresponding standard deviation is $\sigma^{*}(r) = \sigma_{\GLS}$. In summary, the minimax-bias weights coincide with the GLS weights so that the robust estimator and corresponding variance reduce to the GLS estimator and variance, for any choice of standard deviation ratio bound $r$.\footnote{GLS weights have appeared in discussions about how to aggregate independent CATEs. In this context, \citet{li2018balancing} call them overlap weights while \citet{goldsmith2024contamination} call them easiest-to-estimate weights. Established properties include variance-efficiency \citep{crump2006moving, li2018balancing, goldsmith2024contamination} and interpretation as the policy effect weights from a marginal increase in the log odds of treatment \citep{kennedy2019nonparametric, zhou2022marginal}. I establish a complementary property: GLS weights minimize worst-case bias when the only consensus over weighting schemes is that they should yield estimators with bounded variance.} Since the GLS weights minimize variance, $\sigma_{\GLS}^{2} = \sigma_{\min}^{2}$, it follows that the robust estimator is optimal for both bias \textit{and} variance under the bounded variance class.\footnote{This double-optimality of $\htau_{\GLS}$ parallels recent results in \citet{adusumilli2026you} and \citet{sarfati2026integrating}, which show that variance-efficient estimators are also bias-optimal under unrestricted---but bounded---forms of model misspecification. In the weighted estimand context, bounding heterogeneity in the GLS metric and the alternative weights in the standard deviation norm can be viewed as unrestricted bounds on the misspecification of a target estimand $\taul$. If one imposes further restrictions, such as intersecting $\L_{\sigma}$ with the simplex, then the robust estimator generally differs from GLS.} From Proposition \ref{arxiv2:prop:optimal.weights}, it follows that $\htau^{*} = \htau_{\GLS}$ is minimax-optimal for estimating $\taul$ over $(\l, \theta) \in \L \times \Theta^{\eta}$ under squared error in the class of weighted estimators.\footnote{Recent work studies how to optimally aggregate treatment effects from a decision-theoretic perspective \citep{armstrong2021finite, de2021trading, kwon2025estimating, lau2026aggregating}. In contrast to this work, my framework allows for ambiguity and disagreement over the target estimand.}

When $\Sigma$ is diagonal, the GLS weights are guaranteed to be strictly positive, weighting each group $k$ by the precision $1/\sigma_{k}^{2}$ of its estimate relative to the other groups:
\begin{align*}
    w_{\GLS, k} = \frac{1/\sigma_{k}^{2}}{\sum_{k=1}^{K}1/\sigma_{k}^{2}} > 0, \quad \Sigma =
    \begin{pmatrix}
    \sigma_{1}^{2} & \cdots & 0 \\
    \vdots & \ddots & \vdots \\
    0 & \cdots & \sigma_{K}^{2}
    \end{pmatrix}.
\end{align*}
For example, this occurs when $\htheta$ is a vector of estimates from independent sites $k$, or when $\htheta$ is a vector of CATE estimates computed on independent observations across covariate cells $k$. But when $\Sigma$ is non-diagonal, $w_{\GLS}$ may place negative weight on some groups $k$ in order to minimize variance. To prevent this negative weighting, one can restrict the bounded variance class to the simplex as in \eqref{arxiv2:eq:class.simplex.bounded.variance}. In this case, generally $w^{*} \neq w_{\GLS}$.
\end{example*}

\begin{example*}[Truncated Simplex, continued]
Let $w^{*}(1) = \arg\min_{\Bar{w} \in \W_{+}}\max_{\l \in \W_{+}}\norm{\l - \Bar{w}}_{\Sigma}$ denote the minimax-bias weights under $\L_{+}(1) = \W_{+}$ and $w^{*}(0) = w$ the minimax-bias weights under $\L_{+}(0) = \{w\}$. Following equation \eqref{arxiv2:eq:distance.truncated.simplex}, one can show that
\begin{align}\label{arxiv2:eq:robust.estimator.truncated.simplex}
    w^{*}(\epsilon) = (1-\epsilon)w^{*}(0) + \epsilon w^{*}(1), \quad \min_{\Bar{w} \in \L_{+}}\max_{\l \in \L_{+}}\norm{\l - \Bar{w}}_{\Sigma} = \epsilon\min_{\Bar{w} \in \W_{+}}\max_{\l \in \W_{+}}\norm{\l - \Bar{w}}_{\Sigma}.
\end{align}
Thus, the minimax-bias weights $w^{*}(\epsilon)$ are a fraction $1-\epsilon$ consistent with the baseline weights and a fraction $\epsilon$ consistent with the weights that minimize disagreement over the unrestricted simplex. The robust estimator is therefore $\htau^{*}(\epsilon) = (1-\epsilon)\htau^{*}(0) + \epsilon\htau^{*}(1)$, which is between the baseline estimator $\htau^{*}(0) = \htauw$ and the simplex-robust estimator $\htau^{*}(1)$.
\end{example*}

\begin{example*}[Covariate Balance, continued]
There does not appear to be a transparent formula for $w^{*}(\Bar{c})$ under $\L_{X}(\Bar{c})$. However, in the case of $\Bar{c} = 0$, one can interpret $\L_{X}(0)$ in \eqref{arxiv2:eq:class.covariate.balance} as the set of solutions to a synthetic control problem \citep{abadie2003economic, abadie2010synthetic}. In particular, if the baseline $w$ represents synthetic control weights derived under the max-norm and standardized covariates, the corresponding predictor means $w'\bX$ produce the same synthetic control fit as each alternative $\l \in \L_{X}(0)$, meaning that $\L_{X}(0)$ is a set of synthetic control weights.\footnote{See \citet{liu2025synthetic} for a recent identification framework predicated on such ambiguity sets.} The minimax-bias weights $w^{*}(0)$ under $\L_{X}(0)$ can then be viewed as a point of centrality among a set of synthetic control weights.
\end{example*}

\subsection{Robust CI}\label{arxiv2:sec:robust.CI}
I define the robust CI centered at $w \in \W$ as
\begin{align*}
    CI_{w}^{*} = 
    \begin{cases}
    \hfil \displaystyle \brack{\htauw \pm \cv{\hB_{w}^{\beta}(\L)/\sigma_{w}}\sigma_{w}}, & \text{two-sided}, \\[5pt]  
    \hfil \left(-\infty, \htauw + z_{1-\alpha}\sigma_{w} + \hB_{w}^{\beta}(\L)\right], & \text{one-sided (upper)}, \\[5pt]  
    \hfil \left[\htauw - z_{1-\alpha}\sigma_{w} - \hB_{w}^{\beta}(\L), \infty\right), & \text{one-sided (lower)},
    \end{cases}
\end{align*}
where the critical value function $\cv{b}$ gives the $(1-\alpha)$-quantile of the folded normal distribution $|N(b,1)|$. Thus, the robust $CI_{w}^{*}$ parallels the baseline $CI_{w}$, but uses critical values that adjust for the inferred bias $\hB_{w}^{\beta}(\L)$. This adjustment widens the baseline endpoints: $CI_{w} \subseteq CI_{w}^{*}$, with equality if and only if $\hB_{w}^{\beta}(\L) = 0$. Provided that $\L \neq \{w\}$, this equality holds if and only if the heterogeneity in estimates $H(\htheta)$ is negligible at level $\beta$ in the sense that $\heta_{1-\beta} = 0$.

\begin{proposition}\label{arxiv2:prop:robust.coverage}
$CI_{w}^{*}$ provides uniformly valid coverage at confidence level $1-(\alpha+\beta)$:
\begin{align}\label{arxiv2:eq:robust.coverage}
    \P{\taul \in CI_{w}^{*}} \geq 1-(\alpha+\beta), \quad \forall \l \in \L, \quad \forall \theta.
\end{align}
\end{proposition}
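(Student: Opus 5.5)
The plan is a union bound combining Proposition \ref{arxiv2:prop:eta.to.UCB} with the exact coverage of the conventional statistics. Fix $\l \in \L$ and $\theta$, and let $b = \taul - \tauw$. Define the events
\begin{align*}
    E_{1} = \curly{\abs{b} \leq \hB_{w}^{\beta}(\L)}, \qquad E_{2} = \curly{\abs{\htauw - \tauw} \leq z_{1-\alpha/2}\sigma_{w}}
\end{align*}
in the two-sided case. In the one-sided upper case, $E_{2}$ is replaced by $\curly{\tauw \leq \htauw + z_{1-\alpha}\sigma_{w}}$, and symmetrically in the lower case. Because $\abs{b} \leq \max_{\l \in \L}\abs{\taul - \tauw}$, Proposition \ref{arxiv2:prop:eta.to.UCB} gives $\P{E_{1}^{c}} \leq \beta$. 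Since $\htauw \sim N(\tauw,\sigma_{w}^{2})$, we have $\P{E_{2}^{c}} = \alpha$ exactly. It then suffices to show $E_{1}\cap E_{2} \subseteq \curly{\taul \in CI_{w}^{*}}$, because the union bound yields $\P{\taul \notin CI_{w}^{*}} \leq \alpha + \beta$. No independence between $\heta_{1-\beta}$ and $\htauw$ is needed, which matters because both are functions of the same $\htheta$.

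The one-sided cases are immediate. On $E_{1}\cap E_{2}$,
\begin{align*}
    \taul = \tauw + b \leq \htauw + z_{1-\alpha}\sigma_{w} + \hB_{w}^{\beta}(\L),
\end{align*}
and the lower case is symmetric.

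The two-sided case is the main obstacle, because the event $E_{2}$ is not the natural one for the folded-normal critical value. Writing $Z = (\htauw - \tauw)/\sigma_{w} \sim N(0,1)$, coverage of $\taul$ is the event $\abs{Z - b/\sigma_{w}} \leq \cv{\hB_{w}^{\beta}(\L)/\sigma_{w}}$. Two properties of $\cv{\cdot}$ will be used: it is nondecreasing in $b \geq 0$, and $\cv{b} \leq b + z_{1-\alpha/2}$.

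For the first property, $\P{\abs{N(b,1)}\leq c} = \Phi(c-b)-\Phi(-c-b)$ is decreasing in $b\geq 0$ for each fixed $c$. This follows from differentiating in $b$ and using $\phi(c-b)\geq\phi(c+b)$.

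For the second property, $\P{\abs{N(b,1)} \leq b + z_{1-\alpha/2}} = \Phi(z_{1-\alpha/2}) - \Phi(-2b - z_{1-\alpha/2}) \geq 1-\alpha$. The inequality holds because $\Phi(-2b-z_{1-\alpha/2}) \leq \alpha/2$ for $b\geq 0$.

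The crude bound $\cv{b} \leq b + z_{1-\alpha/2}$ on its own goes the wrong way, so I will not use $E_{2}$ as stated. Instead I will condition on the ordering of the bias bound. On $E_{1}$, monotonicity gives $\cv{\abs{b}/\sigma_{w}} \leq \cv{\hB_{w}^{\beta}(\L)/\sigma_{w}}$. Redefine the second event as
\begin{align*}
    E_{2}' = \curly{\abs{Z - b/\sigma_{w}} \leq \cv{\abs{b}/\sigma_{w}}}.
\end{align*}
Since $Z - b/\sigma_{w} \sim N(-b/\sigma_{w},1)$, and $\abs{N(-b/\sigma_{w},1)}$ has the same law as $\abs{N(\abs{b}/\sigma_{w},1)}$, we get $\P{E_{2}'} = 1-\alpha$ exactly by definition of $\cv{\cdot}$. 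On $E_{1}\cap E_{2}'$, $\abs{Z - b/\sigma_{w}} \leq \cv{\hB_{w}^{\beta}(\L)/\sigma_{w}}$, so $\taul \in CI_{w}^{*}$. The union bound again gives coverage at least $1-(\alpha+\beta)$. Because $b$ is a fixed nonrandom quantity for given $(\l,\theta)$, this argument is legitimate, and it holds uniformly over all $\l \in \L$ and all $\theta$.
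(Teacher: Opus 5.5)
Your proposal is correct and follows essentially the same argument as the paper. It uses a Bonferroni split with Proposition \ref{arxiv2:prop:eta.to.UCB} for the bias event and the direct shift argument for the one-sided intervals; for the two-sided interval it uses monotonicity of $\mathrm{cv}_{1-\alpha}(\cdot)$ and folded-normal symmetry to reduce to the exact event $\{|Z - b/\sigma_{w}| \le \mathrm{cv}_{1-\alpha}(|b|/\sigma_{w})\}$, which has probability $1-\alpha$. The only differences are cosmetic: the paper conditions on $\hB_{w}^{\beta}(\L)$ bounding the maximal bias over all of $\L$ rather than the bias of the single $\l$, and your bound $\mathrm{cv}_{1-\alpha}(b) \le b + z_{1-\alpha/2}$ is not needed.
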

\begin{proof}
See Appendix \ref{arxiv2:app:proof:robust.coverage}.
\end{proof}

Proposition \ref{arxiv2:prop:robust.coverage} shows that $CI_{w}^{*}$ covers $\taul$ at confidence level $1-(\alpha+\beta)$, uniformly over $\l \in \L$.\footnote{$CI_{w} \subseteq CI_{w}^{*}$ implies $\P{\tauw \in CI_{w}^{*}} \geq \P{\tauw \in CI_{w}} = 1-\alpha$ for all $\theta$. Thus, the robust $CI_{w}^{*}$ still covers the baseline estimand $\tauw$ at conventional confidence levels. The additional error level $\beta$ when considering alternative estimands $\taul$ is a Bonferroni correction to account for the first-step inference on heterogeneity; such Bonferroni corrections have been used for two-step construction of test statistics and CIs in other contexts, such as \citet{romano2014practical} and \citet{mccloskey2017bonferroni}.} This uniform coverage differs from the conventional coverage in \eqref{arxiv2:eq:conventional.CIs}. The latter provides inference guarantees for one given $w$, while the former extends such guarantees to every $\l \in \L$. Relative to conventional coverage, the price of uniform coverage is (i) a higher error level from using $\heta_{1-\beta}$ to infer heterogeneity and (ii) longer intervals to obtain coverage that is robust to alternative weights $\l \in \L$. 

\paragraph{Interpretation.} Intuitively, if $\l \in \L$ indexes the preferred weights of different readers, then the robust $CI_{w}^{*}$ provides valid coverage for each reader's estimand $\taul$. For example, suppose the baseline $CI_{w}$ excludes zero, leading the researcher to reject the $w$-null of no average effect $H_{0,w}: \tauw = 0$ at significance level $\alpha$. If zero remains excluded from the robust $CI_{w}^{*}$, then a reader with alternative weights $\l \in \L$ can reject the $\l$-null of no average effect $H_{0,\l}: \taul = 0$ at significance level $\alpha + \beta$. In this sense, $CI_{w}^{*}$ facilitates robust inference.

\paragraph{Additional Coverage Properties.} In Appendix \ref{arxiv2:app:sec:simultaneous.coverage}, I show that $CI_{w}^{*}$ also provides a form of simultaneous coverage---but at a lower confidence level in the two-sided case. I compare this to the uniform coverage in \eqref{arxiv2:eq:robust.coverage}. In comparing the two coverage notions, I draw connections to the literature on inference for partially identified parameters \citep{molinari2020microeconometrics}. In Appendix \ref{arxiv2:app:sec:coverage.upper.bound}, I derive an upper bound on the coverage rate of $CI_{w}^{*}$. The bound is strictly less than one for any $\l \in \L$ under any degree of heterogeneity. This implies that $CI_{w}^{*}$ has nontrivial coverage for $\taul$ across all values of $\l \in \L$ and $\theta$.

\paragraph{Choice of Centering Weights.} The robust $CI_{w}^{*}$ can be centered at any $w \in \W$ while maintaining uniform coverage over $\l \in \L$. By default, I center the robust CIs at the minimax-bias weights $w^{*}$ and denote $CI^{*} = CI_{w^{*}}^{*}$. In particular, given the minimax-bias UCB $\hB_{\min}^{\beta}(\L)$, robust estimator $\htau^{*}$, and standard deviation $\sigma^{*} = \sqrt{(w^{*})'\Sigma w^{*}}$, the robust CIs become
\begin{align*}
    CI^{*} = 
    \begin{cases}
    \hfil \displaystyle \brack{\htau^{*} \pm \cv{\hB_{\min}^{\beta}(\L)/\sigma^{*}}\sigma^{*}}, & \text{two-sided}, \\[5pt]  
    \hfil \left(-\infty, \htau^{*} + z_{1-\alpha}\sigma^{*} + \hB_{\min}^{\beta}(\L)\right], & \text{one-sided (upper)}, \\[5pt]  
    \hfil \left[\htau^{*} - z_{1-\alpha}\sigma^{*} - \hB_{\min}^{\beta}(\L), \infty\right), & \text{one-sided (lower)}.
    \end{cases}
\end{align*}
This choice of center yields the smallest bias UCB $\hB_{\min}^{\beta}(\L)$ across the consideration set weights. In this sense, $w^{*}$ prioritizes the component of robust CI length that comes from the ambiguity or disagreement over $\L$. However, the resulting standard deviation $\sigma^{*}$ may be larger than the baseline $\sigma_{w}$, so the overall effect on length is ambiguous ex ante.\footnote{Nevertheless, when the maximum bias is fixed and positive, the bias UCB is asymptotically non-negligible while the variance shrinks to zero with the sample size, which supports choosing $w^{*}$ as the default center.} When $\heta_{1-\beta}$ is small enough, it is even possible for $CI^{*}$ to be shorter than $CI_{w}$. This is easiest to see with the bounded variance class. Indeed, since the bounded variance class yields $w^{*}=w_{\GLS}$ and the GLS weights minimize variance, then $CI^{*}$ must be weakly shorter than $CI_{w}$ in realizations where $\heta_{1-\beta} = 0$.

\paragraph{Relationship to Bias-Aware CIs.} My robust CIs have a similar structure to the bias-aware CIs advanced in \citet{armstrong2018optimal,armstrong2020simple,armstrong2021finite,armstrong2021sensitivity}, but there are important differences in model setup and inferential objective. The bias-aware CIs are designed to provide valid inference for a fixed target estimand $\l'\theta$ subject to bounds on the parameter space for $\theta$, such as bounds on heterogeneity---see \citet{kwon2025estimating} for results of this flavor. By contrast, my robust CIs are designed to provide valid inference for a class of target estimands $\{\l'\theta: \l \in \L\}$ and impose no bound on heterogeneity, opting instead to infer the heterogeneity using $\heta_{1-\beta}$. A downside of bounding heterogeneity is the potential for undercoverage when the bound is incorrect, while a downside of inferring heterogeneity is the additional error level $\beta$ in the uniform coverage statement \eqref{arxiv2:eq:robust.coverage}.

\section{Uniform Asymptotic Validity}\label{arxiv2:sec:asymptotic.validity}
The foregoing results are developed under normally distributed estimates $\htheta \sim N(\theta, \Sigma)$, a known covariance matrix $\Sigma$, known baseline weights $w$, and a known class of alternative weights $\L$. In this section, I establish the uniform asymptotic validity of my procedures under asymptotically normal estimates, consistent covariance matrix estimators, consistent estimators for the baseline weights, and consistent estimators for the class of alternative weights. Sections \ref{arxiv2:sec:asymptotic.environment} and \ref{arxiv2:sec:asymptotics.baseline.alternative.weights} formalize the asymptotic setup and assumptions. Sections \ref{arxiv2:sec:consistent.classes}, \ref{arxiv2:sec:asymptotics.distance.measures}, and \ref{arxiv2:sec:asymptotics.robust.estimator} establish asymptotic results for the class of alternative weights, maximum distance, and robust estimator. Sections \ref{arxiv2:sec:measures.heterogeneity.asymptotic} and \ref{arxiv2:sec:asymptotics.robust.CI} establish asymptotic results for the heterogeneity UCB and robust CI. For self-contained practical implementations of my inference procedures, see Section \ref{arxiv2:sec:implementation}.

\subsection{Environment}\label{arxiv2:sec:asymptotic.environment} 
There is a sample of size $n$ drawn from some unknown distribution $\Pn \in \cPn$, where $\cPn$ is a class of distributions for samples of size $n$. Based on this data, the researcher constructs a vector of estimates $\hthetan \in \R^{K}$ and a positive definite covariance matrix estimator $\tSigman \in \R^{K \times K}$. Let $\hSigman = n\tSigman$ denote the normalized covariance matrix estimator. Under distribution $\Pn$, the sample objects $(\hthetan, \hSigman)$ have population analogues $(\theta(\Pn), \Sigma(\Pn))$, where $\Sigma(\Pn)$ is positive definite. I focus on asymptotics where the number of groups $K \geq 2$ is fixed while the sample size $n \to \infty$ grows. For extensions to growing-$K$ asymptotics, see Appendix \ref{arxiv2:app:sec:UK}.

First, I assume that the vector of normalized estimates $\hZn(\Pn) = \sqrt{n}(\hthetan - \theta(\Pn))$ is uniformly asymptotically normal in the sense that its $\Sigma(\Pn)$-norm-standardized unit-vector linear combinations converge to the standard normal distribution in the Kolmogorov metric, uniformly over $\Pn \in \cPn$. Moreover, I assume that the vector of parameters $\theta(\Pn)$ is uniformly bounded (over $\Pn \in \cPn$ and $n \geq 1$).
\begin{assumptionU}\label{arxiv2:ass:linear.CLT}
For the set of unit vectors $\cU = \{u \in \R^{K}: \norm{u} = 1\}$,
\begin{align*}
    \limn \supPn \sup_{u \in \cU} \sup_{x \in \R}
    \abs{\P[\Pn]{\frac{u'\hZn(\Pn)}{\sqrt{u'\Sigma(\Pn) u}} \leq x} - \Phi(x)} = 0, \quad \hZn(\Pn) = \sqrt{n}(\hthetan - \theta(\Pn)).
\end{align*}
Moreover, there exists a constant $\UB{C}_{\theta} > 0$ such that $\supPn \norm{\theta(\Pn)} \leq \UB{C}_{\theta}$ for all $n$.
\end{assumptionU}

Assumption \ref{arxiv2:ass:linear.CLT} is a natural way to formulate the uniform convergence in distribution of the normalized estimates $\hZn(\Pn)$ to the normal distribution $N(0, \Sigma(\Pn))$.\footnote{This formulation of uniform asymptotic normality is motivated by the Cramér-Wold device for establishing convergence in distribution to multivariate normal distributions \citep[Example 2.18]{van2000asymptotic}. One can also formulate uniform convergence in terms of bounded Lipschitz metric \citep[Section 1.12]{van1996weak}, but the current formulation more easily extends to the large-$K$ asymptotic results in Appendix \ref{arxiv2:app:sec:UK}.} If the components of $\hthetan$ are regression coefficients or sample averages computed over unit-level observations, then Assumption \ref{arxiv2:ass:linear.CLT} follows from bounds on the moments of the observations and bounds on the dependence across observations.

Next, I assume that the covariance matrix estimator $\hSigman = n\tSigman$ is uniformly consistent for the population covariance matrix $\Sigma(\Pn)$, and that the eigenvalues of $\Sigma(\Pn)$ are uniformly bounded above and away from zero. 

\begin{assumptionU}\label{arxiv2:ass:consistent.covariance} 
For each $\e > 0$, 
\begin{align*}
    \lim_{n \to \infty} \sup_{P_{n} \in \mathcal{P}_{n}} \P[\Pn]{\norm{\hSigman - \Sigma(\Pn)} > \e} = 0,
\end{align*}
where $\norm{\Sigma}$ denotes the matrix operator norm. Moreover, there exist constants $\UB{e}, \LB{e} > 0$ such that
\begin{align*}
    \LB{e} \leq \inf_{n}\infPn \emin(\Sigma(\Pn)) \leq \sup_{n}\supPn \emax(\Sigma(\Pn)) \leq \UB{e},
\end{align*}
where $\emin(\Sigma)$ and $\emax(\Sigma)$ denote the minimum and maximum eigenvalues of a matrix $\Sigma$.
\end{assumptionU}

If $\hSigman$ is appropriate for the setting at hand (e.g., sample covariance for iid data, long-run covariance for time series data, etc.), then uniform consistency follows from the same sorts of sufficient conditions as for Assumption \ref{arxiv2:ass:linear.CLT}. The uniform upper and lower eigenvalue bounds for $\Sigma(\Pn)$ ensure that the normal distributions $N(0, \Sigma(\Pn))$ are uniformly tight and nondegenerate.

\begin{remark}[Analogy to the Normal Model]\label{arxiv2:remark:analogy.environment.normal}
The normal model maintains exact normality $\htheta \sim N(\theta, \Sigma)$ and known $\Sigma$. The asymptotic environment instead has a normal approximation $\hthetan \overset{a}{\sim} N(\theta(\Pn), \Sigma(\Pn)/n)$ and estimated $\tSigman = \hSigman/n \overset{a}{\approx} \Sigma(\Pn)/n$. Thus, the objects $(\htheta, \Sigma, \theta)$ in the normal model are analogous to the objects $(\hthetan, \tSigman, \theta(\Pn))$ in the asymptotic environment. 
\end{remark}

\subsection{Baseline and Alternative Weights}\label{arxiv2:sec:asymptotics.baseline.alternative.weights}
The researcher constructs a vector of baseline weights $\hwn \in \W$ and a nonempty, compact, and convex class of alternative weights $\hLn \subseteq \W$. The population analogues are $(w(\Pn), \L(\Pn))$. The baseline and alternative estimators are $\htauwn = \hwn'\hthetan$ and $\htauwn[\l] = \l'\hthetan$, where $\l \in \hLn$. The baseline and alternative estimands are $\tau_{\wn}(\Pn) = w(\Pn)'\theta(\Pn)$ and $\tau_{\l}(\Pn) = \l'\theta(\Pn)$, where $\l \in \L(\Pn)$. If the baseline and alternative weights are known, then I define $\hwn = w(\Pn)$ and $\hLn = \L(\Pn)$.

I assume that the sample baseline weights $\hwn$ are uniformly consistent for the population baseline weights $w(\Pn)$. Moreover, I assume that $w(\Pn)$ is uniformly bounded---i.e., the baseline weighting scheme cannot place arbitrarily large negative weight on any group.

\begin{assumptionU}\label{arxiv2:ass:consistent.weights} 
For each $\e > 0$,
\begin{align*}
    \limn \supPn \P[\Pn]{\norm{\hwn - w(\Pn)} > \e} = 0.
\end{align*}
Moreover, there exists a constant $\UB{C}_{w} > 0$ such that $\supPn \norm{w(\Pn)} \leq \UB{C}_{w}$ for all $n$.
\end{assumptionU}

For example, under Assumption \ref{arxiv2:ass:consistent.covariance}, it follows that Assumption \ref{arxiv2:ass:consistent.weights} holds for GLS weights
\begin{align}\label{arxiv2:eq:GLS.weights.Sigman}
    w_{\GLS}(\Pn) = \arg\min_{w \in \W} \left. w'\Sigma(\Pn)w \right. = \frac{\Sigma(\Pn)^{-1}\1}{\1'\Sigma(\Pn)^{-1}\1}, \quad \norm{w_{\GLS}(\Pn)} \leq \frac{\UB{e}/\LB{e}}{\sqrt{K}}.
\end{align}
Assumption \ref{arxiv2:ass:consistent.weights} also holds for weights considered in event studies, such as those in \citet{callaway2021difference} under uniform versions of their asymptotic linearity assumptions, and those in \citet{sun2021estimating} under uniform versions of their moment conditions.

I suppose that the class of alternative weights $\hLn$ depends on the data through statistics $\hSn \in \cS$ taking values in a metric space $(\cS,d_{\cS})$. In particular, $\hLn = \L(\hSn)$ and $\L(\Pn) = \L(S(\Pn))$, where $S(\Pn)$ is the population analogue of $\hSn$. I assume the sample statistics $\hSn$ are uniformly consistent for the population statistics $S(\Pn)$. Moreover, $S(\Pn)$ is contained in a compact set $\bbS \subseteq \cS$ and $\hSn$ is contained in $\bbS$ with probability uniformly approaching one.

\begin{assumptionU}\label{arxiv2:ass:consistent.S} 
For each $\e > 0$,
\begin{align*}
    \limn \supPn \P[\Pn]{d_{\cS}(\hSn, S(\Pn)) > \e} = 0.
\end{align*}
Moreover, $S(\Pn) \in \bbS$ for all $\Pn \in \cPn$ and $n$, and
\begin{align*}
    \limn \supPn \P[\Pn]{\hSn \notin \bbS} = 0.
\end{align*}
\end{assumptionU}

For a bounded variance class with $\hSn = (\hSigman, \hwn)$, Assumption \ref{arxiv2:ass:consistent.S} follows from Assumptions \ref{arxiv2:ass:consistent.covariance} and \ref{arxiv2:ass:consistent.weights}, where $\bbS$ can be defined with constants $\LB{e}/2$, $2\UB{e}$, and $2\UB{C}_{w}$; see the example below for details. For a truncated simplex class with $\hSn = \hwn \in \W_{+}$, Assumption \ref{arxiv2:ass:consistent.S} follows from Assumption \ref{arxiv2:ass:consistent.weights} and $\wn \in \bbS = \W_{+}$. For a covariate balance class with $\hSn = (\hwn, \hbXn) \in \W_{+} \times \R^{K \times M}$, where $\hbXn$ is an estimated covariate mean matrix, Assumption \ref{arxiv2:ass:consistent.S} follows from Assumption \ref{arxiv2:ass:consistent.weights} with $\wn \in \W_{+}$ and uniform consistency of $\hbXn$ to a population matrix $\bX(\Pn)$ with $\norm{\bX(\Pn)} \leq \UB{C}_{X}$ uniformly for some constant $\UB{C}_{X} > 0$ and $\min_{m}\text{sd}(\bX_{m}(\Pn)) \geq \varsigma$ uniformly for some constant $\varsigma > 0$, where the corresponding component of $\bbS$ can be defined with constants $2\UB{C}_{X}$ and $\varsigma/2$; see the example below for details.

Let $\diam(A) = \sup_{a,b \in A}\norm{a - b}$ denote the diameter of a given set $A \subseteq \R^{K}$. I assume that the class of alternative weights has the following structure. 
\begin{assumptionU}\label{arxiv2:ass:Lambda.representation} 
For a compact and convex set $\bbW \subseteq \W$ with positive diameter $\diam(\bbW) = \max_{\l,w \in \bbW}\norm{\l - w} > 0$, there exists a function $g: \bbW \times \bbS \to \R$ with constants $L_{g}, \delta_{g} > 0$ such that
\begin{enumerate}[label=(\roman*)]
    \item for each $S \in \bbS$, the class of alternative weights can be represented as
    \begin{align*}
        \L(S) = \curly{\l \in \bbW: g(\l, S) \leq 0},
    \end{align*}
    where the functions $\{\l \mapsto g(\l,S): S \in \bbS\}$ are continuous and convex on $\bbW$;
    \item the functions $\{S \mapsto g(\l,S): \l \in \bbW\}$ are uniformly $L_{g}$-Lipschitz on $\bbS$ in the sense that
    \begin{align*}
        \max_{\l \in \bbW}\abs{g(\l, S_{1}) - g(\l, S_{2})} \leq L_{g}d_{\cS}(S_{1},S_{2}), \quad \forall S_{1},S_{2} \in \bbS;
    \end{align*}
    \item for each $S \in \bbS$, there exists $\l^{\circ}(S) \in \bbW$ satisfying the Slater condition
    \begin{align*}
        g(\l^{\circ}(S), S) \leq -\delta_{g}.
    \end{align*}
\end{enumerate}
\end{assumptionU}
Condition \ref{arxiv2:ass:Lambda.representation}(i) requires that, on the set $\bbS$ containing the population statistics $S(\Pn)$, the classes of interest can be represented as inequality constraints of continuous and convex functions $\l \mapsto g(\l,S)$ on a compact $\bbW$, which implies that the population classes $\L(\Pn) \subseteq \bbW$ are compact and convex. Condition \ref{arxiv2:ass:Lambda.representation}(ii) requires that the functions $S \mapsto g(\l,S)$ in the representation are uniformly Lipschitz on $\bbS$, which helps in establishing continuity of $S \mapsto \L(S)$ in the Hausdorff metric defined below in Section \ref{arxiv2:sec:consistent.classes}. Condition \ref{arxiv2:ass:Lambda.representation}(iii) requires that the inequality constraints are uniformly strictly feasible, which helps in establishing continuity of $S \mapsto \L(S)$ and implies that the population classes $\L(\Pn)$ are nonempty. Moreover, it implies that the classes are nondegenerate in the sense that the maximum distance between weights is uniformly bounded away from zero. In particular, the maintained assumptions combined with Lemma \ref{arxiv2:app:lemma:nondegenerate.disagreement} yield
\begin{align*}
    \inf_{n} \inf_{\Pn \in \cPn} \inf_{w \in \W}\max_{\l \in \L(\Pn)} \norm{\l - w}_{\Sigma(\Pn)} \geq  \frac{\sqrt{\LB{e}}}{2}\frac{\delta_{g}}{\delta_{g} + \UB{C}_{g}}\diam(\bbW) > 0,
\end{align*}
where $\UB{C}_{g} = \max_{S \in \bbS} \max_{\l \in \bbW} \max\curly{g(\l,S), 0} < \infty$. Thus, condition \ref{arxiv2:ass:Lambda.representation}(iii) rules out asymptotic regimes in which the class of alternatives collapses to a singleton, which helps in formulating general arguments for the asymptotic validity of the robust CIs. I now map the conditions of Assumption \ref{arxiv2:ass:Lambda.representation} to the example classes of interest.

\begin{example*}[Bounded Variance, continued] Let $S = (\Sigma,w)$, where $\Sigma$ is a positive definite matrix and $w \in \W$. Given $r \geq 1$, define the bounded variance class
\begin{align*}
    \L_{\sigma}(\Sigma,w) = \curly{\l \in \W: g(\l,\Sigma,w) \leq 0}, \quad g(\l,\Sigma,w) = \sqrt{\l'\Sigma\l} -r\sqrt{w'\Sigma w}.
\end{align*}
Take the eigenvalue and norm bounds in Assumptions \ref{arxiv2:ass:consistent.covariance} and \ref{arxiv2:ass:consistent.weights} as given and define
\begin{align*}
    \bbS = \curly{(\Sigma,w): \LB{e}/2 \leq \emin(\Sigma) \leq \emax(\Sigma) \leq 2\UB{e}, w \in \W, \norm{w} \leq 2\UB{C}_{w}}.
\end{align*}
Choose a finite constant $\UB{C}_{\L} > 4r(\UB{e}/\LB{e})^{1/2}\UB{C}_{w}$ and define
\begin{align*}
    \bbW = \curly{\l \in \W: \norm{\l} \leq \UB{C}_{\L}}.
\end{align*}
If $\l \in \L_{\sigma}(\Sigma,w)$ and $(\Sigma,w) \in \bbS$, then
\begin{align*}
    \norm{\l}^{2}\LB{e}/2 \leq \l'\Sigma\l \leq r^{2}w'\Sigma w \leq r^{2}(2\UB{e})(2\UB{C}_{w})^{2} = 8r^{2}\UB{e}\UB{C}_{w}^{2},
\end{align*}
which implies $\norm{\l} \leq 4r(\UB{e}/\LB{e})^{1/2}\UB{C}_{w} \leq \UB{C}_{\L}$. This means $\L_{\sigma}(\Sigma,w) \subseteq \bbW$ for each $(\Sigma,w) \in \bbS$. One can thus consider $g(\l,\Sigma,w)$ over $(\l, \Sigma, w) \in \bbW \times \bbS$, which is continuous and convex in $\l$, and Lipschitz in $(\Sigma, w)$. For the Slater condition, consider
\begin{align*}
    w_{\GLS}(\Sigma) = \frac{\Sigma^{-1}\1}{\1'\Sigma^{-1}\1}, \quad \sigma_{\min}(\Sigma) = \min_{w \in \W}\sqrt{w'\Sigma w} = \sqrt{w_{\GLS}(\Sigma)'\Sigma w_{\GLS}(\Sigma)} = \frac{1}{\sqrt{\1'\Sigma^{-1}\1}}.
\end{align*}
If $r > 1$, then $\l^{\circ}(\Sigma,w) = w_{\GLS}(\Sigma)$ is a Slater point:
\begin{align*}
    g(\l^{\circ}(\Sigma,w),\Sigma,w) \leq (1-r)\sigma_{\min}(\Sigma) \leq -\delta_{g}, \quad \delta_{g} = (r-1)\sqrt{\frac{\LB{e}}{2K}} > 0.
\end{align*}
For $r=1$, it suffices to assume that $w$ is bounded away from $\l^{\circ}(\Sigma,w) = w_{\GLS}(\Sigma)$. But since this may be difficult to verify in practice, a simple alternative is to use $r = 1+\delta$ for some small fixed $\delta > 0$.\footnote{If one wishes to allow for data-dependent $\delta$ and $r$, it suffices to embed these parameters into $\bbS$ and then investigate the conditions of Assumption \ref{arxiv2:ass:Lambda.representation} (and Assumption \ref{arxiv2:ass:consistent.S}) relative to that new class.} The same logic applies to the boundary cases in the examples below.
\end{example*}

\begin{example*}[Truncated Simplex, continued]
Let $S = w \in \W_{+}$ and $\bbS = \bbW = \W_{+}$. Given $\epsilon \in [0,1]$, define the truncated simplex class
\begin{align*}
    \L_{+}(w) = \curly{\l \in \bbW: g(\l,w) \leq 0}, \quad g(\l,w) = \max_{k \in \{1,\ldots,K\}}\paren{(1-\epsilon)w_{k}-\l_{k}}, \quad w \in \bbS.
\end{align*}
The function $g(\l,w)$ is continuous and convex in $\l$ and Lipschitz in $w$. For the Slater condition, consider the mixture $\l^{\circ}(w) = (1-\epsilon)w + \epsilon \1/K$. If $\epsilon > 0$, then
\begin{align*}
    g(\l^{\circ}(w),w) = -\delta_{g}, \quad \delta_{g} = \epsilon/K > 0.
\end{align*}
\end{example*}

\begin{example*}[Covariate Balance, continued]
Let $S = (w,\bX)$, where $\bX = (\bX_{1},\ldots,\bX_{M}) \in \R^{K \times M}$ and $w \in \W_{+}$. Given $\Bar{c} \geq 0$, define the balance gap measure
\begin{align*}
    c_{\l}(w,\bX) = \max_{m \in \{1,\ldots,M\}} \frac{\abs{(\l-w)'\bX_{m}}}{\text{sd}(\bX_{m})}.
\end{align*}
Take $\bbW = \W_{+}$ and define the covariate balance class
\begin{align*}
    \L_{X}(w,\bX) = \curly{\l \in \bbW: g(\l,w,\bX) \leq 0}, \quad g(\l,w,\bX) = c_{\l}(w,\bX) - \Bar{c}.
\end{align*}
Define the set
\begin{align*}
    \bbS = \curly{(w,\bX): w \in \W_{+}, \norm{\bX} \leq 2\UB{C}_{X}, \min_{m \in \{1,\ldots,M\}}\text{sd}(\bX_{m}) \geq \varsigma/2}.
\end{align*}
Over $(w,\bX) \in \bbS$, the standard deviation functions $\bX_{m} \mapsto \text{sd}(\bX_{m})$ are Lipschitz and bounded away from zero and the functions $\{(w,\bX_{m}) \mapsto \abs{(\l-w)'\bX_{m}}: \l \in \bbW\}$ are (uniformly) Lipschitz. Thus, the function $g(\l, w, \bX)$ is Lipschitz in $(w, \bX)$. It is also continuous and convex in $\l$. For the Slater condition, consider $\l^{\circ}(w,\bX) = w$. If $\Bar{c} > 0$, then
\begin{align*}
    g(\l^{\circ}(w, \bX),w,\bX) = -\delta_{g}, \quad \delta_{g} = \Bar{c} > 0.
\end{align*}
\end{example*}

\begin{remark}[Mapping to Intersections]
Finite intersections of the above classes can be handled by considering the maximum of their corresponding $g$-functions, yielding
\begin{align*}
    \L(S) = \bigcap_{j=1}^{J}\curly{\l \in \bbW: g_{j}(\l,S) \leq 0} = \curly{\l \in \bbW: g(\l,S) \leq 0}, \quad g(\l,S)=\max_{j \in \{1,\ldots,J\}} g_{j}(\l,S).
\end{align*}
If each function $g_{j}(\l,S)$ is continuous and convex in $\l$ and Lipschitz in $S$, then so is $g(\l,S)$. The Slater condition holds when there is a common point $\l^{\circ}(S) \in \bbW$ satisfying
\begin{align*}
    g_{j}(\l^{\circ}(S),S) \leq -\delta_{g}, \quad \forall j \in \{1,\ldots,J\}.
\end{align*}
The above observations can be used to map the conditions of Assumption \ref{arxiv2:ass:Lambda.representation} to the bounded variance simplex class and the truncated covariate balance class. For example, the bounded variance simplex $\L_{\sigma}^{+}(\Sigma,w) = \{\l \in \W_{+}: \sqrt{\l'\Sigma \l} - r\sqrt{w'\Sigma w} \leq 0\}$ can be mapped to $\bbW = \W_{+}$, $\bbS = \{(\Sigma, w): \LB{e}/2 \leq \emin(\Sigma) \leq \emax(\Sigma) \leq 2\UB{e}, w \in \W_{+}\}$, $g(\l, \Sigma, w) = \sqrt{\l'\Sigma \l} - r\sqrt{w'\Sigma w}$, and Slater point $\l^{\circ}(\Sigma,w) = \argmin_{\l \in \W_{+}}\sqrt{\l'\Sigma \l}$ for $r > 1$ given by the simplex-constrained variance minimizing weights. For $r=1$, the aforementioned caveats apply.
\end{remark}

\paragraph{Notation for Population Objects.} When convenient, I will use the shorthands
\begin{align*}
    (\thetan, \Sigman, \wn, \Ln, \Sn) = (\theta(\Pn), \Sigma(\Pn), w(\Pn), \L(\Pn), S(\Pn)),
\end{align*}
and likewise for other objects that depend on $\Pn$.

\subsection{Consistency of the Class of Alternatives}\label{arxiv2:sec:consistent.classes}
To formulate uniform consistency of the estimated class $\hLn$ for the population class $\Ln$, I use a standard notion of distance between sets. Given two nonempty sets $A \subseteq \R^{K}$ and $B \subseteq \R^{K}$, let $\dist(a,B) = \inf_{b \in B}\norm{a-b}$ and $\dist(b,A) = \inf_{a \in A}\norm{b-a}$ denote the distance from $a \in A$ to $B$ and from $b \in B$ to $A$, respectively. The Hausdorff distance between $A$ and $B$ is defined as
\begin{align*}
    d_{H}(A,B) = \max\curly{\sup_{a \in A}\left.\dist(a,B),\right. \sup_{b \in B}\left.\dist(b,A)\right.}.
\end{align*}
Under Assumption \ref{arxiv2:ass:Lambda.representation}, Lemma \ref{arxiv2:app:lemma:hausdorff.continuity} shows that $S \mapsto \L(S)$ is Lipschitz continuous in the Hausdorff metric $d_{H}$:
\begin{align*}
    d_{H}(\L(S_{1}),\L(S_{2})) \leq \frac{\diam(\bbW)}{\delta_{g}}L_{g}d_{\cS}(S_{1},S_{2}), \quad \forall S_{1},S_{2} \in \bbS.
\end{align*}
Based on this continuity, the following result shows that $\hLn = \L(\hSn)$ is a uniformly consistent estimator of $\Ln = \L(\Sn)$ in the Hausdorff metric.

\begin{propositionU}\label{arxiv2:prop:uniform.hausdorff.convergence}
Under Assumptions \ref{arxiv2:ass:consistent.S} and \ref{arxiv2:ass:Lambda.representation}, for each $\e > 0$,
\begin{align*}
    \limn \supPn \P[\Pn]{d_{H}(\hLn, \Ln) > \e} = 0.
\end{align*}
\end{propositionU}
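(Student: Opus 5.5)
The plan is to combine the Hausdorff–Lipschitz continuity of $S \mapsto \L(S)$ on $\bbS$ (Lemma \ref{arxiv2:app:lemma:hausdorff.continuity}, which the paper states holds under Assumption \ref{arxiv2:ass:Lambda.representation}) with the uniform consistency of $\hSn$ from Assumption \ref{arxiv2:ass:consistent.S}. The one subtlety is that the Lipschitz bound only applies when both arguments lie in $\bbS$, and $\hSn$ is guaranteed to lie in $\bbS$ only with probability approaching one. I will therefore split on the event $\{\hSn \in \bbS\}$.

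Fix $\e > 0$ and write $C = \diam(\bbW)L_{g}/\delta_{g}$, which is finite and does not depend on $n$ or $\Pn$. By Assumption \ref{arxiv2:ass:consistent.S}, $\Sn = S(\Pn) \in \bbS$ for every $\Pn \in \cPn$. On the event $\{\hSn \in \bbS\}$, the Lipschitz bound therefore gives
\begin{align*}
    d_{H}(\hLn,\Ln) = d_{H}(\L(\hSn),\L(\Sn)) \leq C\, d_{\cS}(\hSn,\Sn).
\end{align*}
On this event, $\hLn$ and $\Ln$ are nonempty subsets of $\bbW$: the Slater point $\l^{\circ}(\cdot)$ lies in the class, and $\bbW$ is compact. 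So $d_{H}$ is well defined and finite. This yields the inclusion of events
\begin{align*}
    \curly{d_{H}(\hLn,\Ln) > \e} \subseteq \curly{d_{\cS}(\hSn,\Sn) > \e/C} \cup \curly{\hSn \notin \bbS}.
\end{align*}

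A union bound then gives, for each $n$,
\begin{align*}
    \supPn \P[\Pn]{d_{H}(\hLn,\Ln) > \e} \leq \supPn \P[\Pn]{d_{\cS}(\hSn,\Sn) > \e/C} + \supPn \P[\Pn]{\hSn \notin \bbS}.
\end{align*}
Both terms on the right tend to zero by the two parts of Assumption \ref{arxiv2:ass:consistent.S}, the first applied with tolerance $\e/C > 0$. This proves the claim.

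The substantive work lies in the Lipschitz continuity lemma itself, which I take as given from the earlier statement. If I had to prove it, I would show $\sup_{\l\in\L(S_1)}\dist(\l,\L(S_2)) \le C d_{\cS}(S_1,S_2)$ as follows. Take $\l \in \L(S_1)$, so that $g(\l,S_2) \le L_g d$ with $d = d_{\cS}(S_1,S_2)$. Move toward the Slater point $\l^{\circ}(S_2)$ via $\l_t = (1-t)\l + t\l^{\circ}(S_2)$. Convexity gives $g(\l_t,S_2) \le (1-t)L_g d - t\delta_g$, which is nonpositive once $t = L_g d/(L_g d+\delta_g) \le L_g d/\delta_g$. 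The displacement is $\norm{\l_t-\l} \le t\,\diam(\bbW)$, which gives the stated constant. The same argument with $S_1$ and $S_2$ swapped handles the other direction of the Hausdorff distance. Apart from that lemma, the only point needing care is restricting to $\{\hSn\in\bbS\}$, which the split above handles.
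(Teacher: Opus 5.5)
Your proposal is correct and follows essentially the same route as the paper: restrict to the event $\{\hSn \in \bbS\}$, which has probability uniformly approaching one; apply the Hausdorff--Lipschitz bound of Lemma~\ref{arxiv2:app:lemma:hausdorff.continuity} there; and conclude from the uniform consistency of $\hSn$. Your sketch of the lemma, which moves toward the Slater point with $t = L_{g}d/(L_{g}d+\delta_{g})$, is a valid direct variant of the paper's argument through the intermediate bound $\dist(\l,\L(S)) \leq \frac{\diam(\bbW)}{\delta_{g}}\max\{g(\l,S),0\}$, and it gives the same constant.
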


\begin{proof}
See Appendix \ref{arxiv2:app:proof:uniform.hausdorff.convergence}.        
\end{proof}

\subsection{Consistency of the Maximum Distance}\label{arxiv2:sec:asymptotics.distance.measures}
The estimated and population maximum distances between weights are
\begin{align*}
    \max_{\l \in \hLn}\norm{\l - \hwn}_{\hSigman} = \max_{\l \in \hLn}\sqrt{(\l - \hwn)'\hSigman(\l - \hwn)}, \quad \max_{\l \in \Ln}\norm{\l - \wn}_{\Sigman} = \max_{\l \in \Ln}\sqrt{(\l - \wn)'\Sigman(\l - \wn)}.
\end{align*}
Intuitively, $\norm{\l - \wn}_{\Sigman}/\sqrt{n}$ is the standard deviation for $(\l - \wn)'\hthetan$ under the normal approximation $\hthetan \overset{a}{\sim} N(\thetan, \Sigman/n)$, while $\norm{\l - \hwn}_{\hSigman}/\sqrt{n} = \norm{\l - \hwn}_{\tSigman}$ is the plug-in standard error. The following result shows that the estimated maximum distance is uniformly consistent for the population maximum distance.

\begin{propositionU}\label{arxiv2:prop:consistent.distance}
Under Assumptions \ref{arxiv2:ass:consistent.covariance}-\ref{arxiv2:ass:Lambda.representation}, for each $\e > 0$,
\begin{align*}
    \limn \supPn \P[\Pn]{\abs{\max_{\l \in \hLn}\norm{\l - \hwn}_{\hSigman} - \max_{\l \in \Ln}\norm{\l - \wn}_{\Sigman}} > \e} = 0.
\end{align*}
\end{propositionU}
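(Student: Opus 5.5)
The plan is a deterministic perturbation bound followed by the uniform consistency results already available. Define the map
\[
D(L, v, \Sigma) = \max_{\l \in L}\norm{\l - v}_{\Sigma}
\]
for nonempty compact $L \subseteq \W$, $v \in \W$, and $\Sigma$ positive definite. I will show that $D$ is locally Lipschitz in $(L, v, \Sigma)$, measured by the Hausdorff distance, the Euclidean norm, and the operator norm respectively. The constants will be uniform over bounded sets. Then I plug in $(\hLn, \hwn, \hSigman)$ against $(\Ln, \wn, \Sigman)$ and invoke Proposition \ref{arxiv2:prop:uniform.hausdorff.convergence} together with Assumptions \ref{arxiv2:ass:consistent.covariance} and \ref{arxiv2:ass:consistent.weights}.

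\textbf{Step 1: a triangle-inequality decomposition.} I would write
\begin{align*}
|D(\hLn,\hwn,\hSigman) - D(\Ln,\wn,\Sigman)| \leq{}& |D(\hLn,\hwn,\hSigman) - D(\hLn,\hwn,\Sigman)| \\
&+ |D(\hLn,\hwn,\Sigman) - D(\hLn,\wn,\Sigman)| \\
&+ |D(\hLn,\wn,\Sigman) - D(\Ln,\wn,\Sigman)|.
\end{align*}
The terms are bounded as follows.

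\emph{Covariance term.} For any $x$ we have $|\norm{x}_{\hSigman} - \norm{x}_{\Sigman}| \leq \sqrt{|x'(\hSigman - \Sigman)x|} \leq \norm{x}\,\norm{\hSigman - \Sigman}^{1/2}$. This uses $|\sqrt a - \sqrt b| \leq \sqrt{|a-b|}$. A difference of maxima is at most the maximal difference, so this term is at most $\sup_{\l \in \hLn}\norm{\l - \hwn}\,\norm{\hSigman - \Sigman}^{1/2}$.

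\emph{Weight term.} The seminorm triangle inequality gives the bound $\norm{\hwn - \wn}_{\Sigman} \leq \UB{e}^{1/2}\norm{\hwn - \wn}$.

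\emph{Class term.} Fix $\l \in \hLn$ and choose $\l' \in \Ln$ with $\norm{\l - \l'} \leq d_H(\hLn,\Ln)$; compactness ensures such a point exists. Then $\norm{\l - \wn}_{\Sigman} \leq \norm{\l' - \wn}_{\Sigman} + \UB{e}^{1/2}d_H$. The argument is symmetric in the two classes, so this term is at most $\UB{e}^{1/2}d_H(\hLn,\Ln)$.

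\textbf{Step 2: boundedness, which I expect to be the only real care point.} The covariance term needs $\sup_{\l \in \hLn}\norm{\l - \hwn}$ to be stochastically bounded uniformly over $\Pn$. The norm of $\hwn$ is handled by Assumption \ref{arxiv2:ass:consistent.weights}: on the event $\norm{\hwn - \wn} \leq 1$ it is at most $\UB{C}_{w} + 1$. For the class, Assumption \ref{arxiv2:ass:consistent.S} gives $\hSn \in \bbS$ with probability uniformly approaching one. On that event Assumption \ref{arxiv2:ass:Lambda.representation}(i) gives $\hLn = \L(\hSn) \subseteq \bbW$, and $\bbW$ is compact. Hence $\sup_{\l \in \hLn}\norm{\l} \leq \max_{\l \in \bbW}\norm{\l} =: C_{\bbW} < \infty$.

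The same containment ensures that $\hLn$ is nonempty and compact, so the maxima in Step 1 are well defined. On $\{\hSn \in \bbS\}$ this holds by the Slater condition in Assumption \ref{arxiv2:ass:Lambda.representation}(iii) together with closedness of the constraint set. Off that event I simply bound the probability.

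\textbf{Step 3: conclusion.} Let $E_n$ be the event that $\hSn \in \bbS$, $\norm{\hwn - \wn} \leq \min\{1,\delta\}$, $\norm{\hSigman - \Sigman} \leq \delta$, and $d_H(\hLn,\Ln) \leq \delta$. On $E_n$, Step 1 yields
\[
|D(\hLn,\hwn,\hSigman) - D(\Ln,\wn,\Sigman)| \leq (C_{\bbW} + \UB{C}_{w} + 1)\delta^{1/2} + 2\UB{e}^{1/2}\delta.
\]
Given $\e$, choose $\delta$ so that this bound is below $\e$. The probability of the complement of $E_n$ is at most a sum of four terms. By Assumptions \ref{arxiv2:ass:consistent.covariance}, \ref{arxiv2:ass:consistent.weights}, and \ref{arxiv2:ass:consistent.S} and Proposition \ref{arxiv2:prop:uniform.hausdorff.convergence}, each of these vanishes uniformly over $\Pn \in \cPn$, which gives the claim.
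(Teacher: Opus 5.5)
Your proof is correct and is essentially the paper's proof. Both use a three-term telescoping into a weight perturbation, a class perturbation controlled by $d_H(\hLn,\Ln)$ via Proposition~\ref{arxiv2:prop:uniform.hausdorff.convergence}, and a covariance perturbation, each bounded uniformly over $\Pn$. The differences are only in the order of the terms and the covariance inequality: you perturb $\hSigman$ first at $(\hLn,\hwn)$ using $|\sqrt{a}-\sqrt{b}|\leq\sqrt{|a-b|}$, which needs no eigenvalue lower bound but requires bounding $\hLn$ and $\hwn$ in norm (you do this correctly via $\hSn\in\bbS$ and $\norm{\hwn-\wn}\leq 1$), whereas the paper perturbs the covariance last at the population $(\Ln,\wn)$ using the Lipschitz bound of Lemma~\ref{arxiv2:app:lemma:perturbation.bounds} on an event where the eigenvalues of $\hSigman$ lie in $[\LB{e}/2,2\UB{e}]$.
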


\begin{proof}
See Appendix \ref{arxiv2:app:proof:consistent.distance}.
\end{proof}

\subsection{Optimality of the Robust Estimator}\label{arxiv2:sec:asymptotics.robust.estimator}
The estimated and population minimax-bias weights are
\begin{align*}
    \rhwn = \arg \adjustlimits \min_{\Bar{w} \in \hLn} \max_{\l \in \hLn} \norm{\l - \Bar{w}}_{\hSigman}, \quad \rwn = \arg \adjustlimits \min_{\Bar{w} \in \Ln} \max_{\l \in \Ln} \norm{\l - \Bar{w}}_{\Sigman}.
\end{align*}
As in the normal model, I allow for classes $(\hLn, \Ln)$ that depend on the baseline weights $(\hwn, \wn)$, but not on the optimizer variable $\Bar{w}$. The following result shows that the minimax-bias weights satisfy Assumption \ref{arxiv2:ass:consistent.weights}.

\begin{propositionU}\label{arxiv2:prop:consistent.minimax.weights}
Under Assumptions \ref{arxiv2:ass:consistent.covariance}, \ref{arxiv2:ass:consistent.S}, and \ref{arxiv2:ass:Lambda.representation}, for each $\e > 0$,
\begin{align*}
    \limn \supPn \P[\Pn]{\norm{\rhwn - \rwn} > \e} = 0.
\end{align*}
Moreover, there exists a constant $\UB{C}_{w} > 0$ such that $\supPn \norm{\rwn} \leq \UB{C}_{w}$ for all $n$.
\end{propositionU}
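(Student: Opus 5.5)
The plan is to treat $\rhwn$ and $\rwn$ as minimizers of two convex programs whose objectives and feasible sets are uniformly close. I then convert this closeness into closeness of the minimizers through quadratic growth of the population objective around $\rwn$.

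\textbf{Boundedness.} This part is immediate. By Assumption \ref{arxiv2:ass:consistent.S}, $\Sn \in \bbS$. By Assumption \ref{arxiv2:ass:Lambda.representation}(i), $\rwn \in \Ln = \L(\Sn) \subseteq \bbW$. Since $\bbW$ is compact, $\UB{C}_{w} = \max_{\l \in \bbW}\norm{\l}$ works uniformly in $\Pn$ and $n$.

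\textbf{Setup on a good event.} Define the squared objectives
\begin{align*}
f_{n}(\Bar{w}) = \max_{\l \in \Ln}(\l-\Bar{w})'\Sigman(\l-\Bar{w}), \quad \hat{f}_{n}(\Bar{w}) = \max_{\l \in \hLn}(\l-\Bar{w})'\hSigman(\l-\Bar{w}).
\end{align*}
Squaring does not change the argmins. Work on the event $E_{n}$ on which all of the following hold:
\begin{itemize}
\item $\hSn \in \bbS$, so that $\hLn \subseteq \bbW$;
\item $\norm{\hSigman - \Sigman} \leq \e_{1}$;
\item $d_{H}(\hLn,\Ln) \leq \e_{2}$.
\end{itemize}
By Assumptions \ref{arxiv2:ass:consistent.covariance} and \ref{arxiv2:ass:consistent.S} and Proposition \ref{arxiv2:prop:uniform.hausdorff.convergence}, $\infPn \P[\Pn]{E_{n}} \to 1$ for any fixed $\e_{1},\e_{2}>0$.

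\textbf{Uniform closeness of the objectives.} Let $D = \diam(\bbW)$. For $\Bar{w},\l,\l' \in \bbW$, the map $(\l,\Bar{w},\Sigma) \mapsto (\l-\Bar{w})'\Sigma(\l-\Bar{w})$ satisfies:
\begin{itemize}
\item it is $2\UB{e}D$-Lipschitz in $\l$ (and in $\Bar{w}$) when $\Sigma=\Sigman$;
\item it changes by at most $D^{2}\norm{\hSigman-\Sigman}$ when $\Sigman$ is replaced by $\hSigman$.
\end{itemize}
Comparing maxima over $\hLn$ and $\Ln$ through nearest points, the Hausdorff bound gives
\begin{align*}
\sup_{\Bar{w} \in \bbW}\abs{\hat{f}_{n}(\Bar{w}) - f_{n}(\Bar{w})} \leq D^{2}\e_{1} + 2\UB{e}D\e_{2} =: \rho.
\end{align*}
The same Lipschitz constant in $\Bar{w}$ gives $\abs{f_{n}(v)-f_{n}(v')} \leq 2\UB{e}D\norm{v-v'}$ on $\bbW$.

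\textbf{Quadratic growth.} Each map $\Bar{w} \mapsto (\l-\Bar{w})'\Sigman(\l-\Bar{w})$ is $2\LB{e}$-strongly convex. Hence $f_{n}$, a pointwise maximum of such maps, is $2\LB{e}$-strongly convex on $\R^{K}$, with $\LB{e}$ uniform by Assumption \ref{arxiv2:ass:consistent.covariance}. Since $\rwn$ minimizes $f_{n}$ over the convex set $\Ln$, the first-order condition for convex programs yields
\begin{align*}
f_{n}(v) - f_{n}(\rwn) \geq \LB{e}\norm{v-\rwn}^{2}, \quad \forall v \in \Ln.
\end{align*}
This growth bound holds uniformly in $\Pn$.

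\textbf{Sandwich argument.} On $E_{n}$, let $v$ be the projection of $\rhwn$ onto $\Ln$ and $u$ the projection of $\rwn$ onto $\hLn$. Both projections move points by at most $\e_{2}$. Chaining the bounds above:
\begin{align*}
f_{n}(v) &\leq f_{n}(\rhwn) + 2\UB{e}D\e_{2} \\
&\leq \hat{f}_{n}(\rhwn) + \rho + 2\UB{e}D\e_{2} \\
&\leq \hat{f}_{n}(u) + \rho + 2\UB{e}D\e_{2} \\
&\leq f_{n}(u) + 2\rho + 2\UB{e}D\e_{2} \\
&\leq f_{n}(\rwn) + 2\rho + 4\UB{e}D\e_{2}.
\end{align*}
The third inequality uses optimality of $\rhwn$ for $\hat{f}_{n}$ over $\hLn$. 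Quadratic growth then gives
\begin{align*}
\norm{\rhwn - \rwn} \leq \e_{2} + \sqrt{(2\rho + 4\UB{e}D\e_{2})/\LB{e}}.
\end{align*}
This can be made smaller than any $\e$ by choosing $\e_{1},\e_{2}$ small. The constants do not depend on $\Pn$, so the conclusion is uniform.

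\textbf{Main obstacle.} The delicate step is the uniform objective comparison when the feasible sets themselves are random. I handle it by keeping everything inside the fixed compact $\bbW$ on the good event. Together with the projection trick, this lets Hausdorff closeness substitute for a common feasible set.

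Existence and uniqueness of both argmins follow from the same strong convexity together with compactness of $\Ln$ and of $\hLn$ on $E_{n}$.
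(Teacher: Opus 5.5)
Your proof is correct and follows essentially the same route as the paper. Boundedness comes from $\rwn \in \Ln \subseteq \bbW$, and consistency comes from projecting $\rhwn$ onto $\Ln$ and $\rwn$ onto $\hLn$, running a sandwich chain of objective perturbation bounds through the optimality of $\rhwn$, and invoking quadratic growth of the strongly convex population objective (the paper's Lemma \ref{arxiv2:app:lemma:minimax.quadratic.bounds}). The only cosmetic difference is that you work with squared objectives throughout, so you can swap $\hSigman$ for $\Sigman$ via $\abs{x'(\hSigman - \Sigman)x} \leq \norm{\hSigman - \Sigman}\norm{x}^{2}$ without eigenvalue bounds on $\hSigman$, whereas the paper perturbs the unsquared norms and converts back through that lemma.
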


\begin{proof}
See Appendix \ref{arxiv2:app:proof:consistent.minimax.weights}.
\end{proof}

To establish asymptotic optimality of the robust estimator
\begin{align*}
    \rhtaun = \htauwn[\rhwn] = \rhwn'\hthetan,
\end{align*}
I first characterize the asymptotic maximum bias of weighted estimators $\htauwn = \hwn'\hthetan$ under the following asymptotic uniform integrability (UI) condition:
\begin{align}\label{arxiv2:eq:UI.condition}
    \lim_{C \to \infty} \limsupn \supPn
    \begin{pmatrix}
    \E[\Pn]{\|\hthetan - \thetan\|^{2} \I{\|\hthetan - \thetan\|^{2} > C}} \\
    \E[\Pn]{\|\hwn - \wn\|^{2} \I{\|\hwn - \wn\|^{2} > C}}
    \end{pmatrix}
    = 0.
\end{align}
UI condition \eqref{arxiv2:eq:UI.condition} controls the tails of the squared estimation errors $\|\hthetan - \thetan\|^{2}$ and $\|\hwn - \wn\|^{2}$, ensuring that the corresponding mean squared errors uniformly converge to zero under Assumptions \ref{arxiv2:ass:linear.CLT} and \ref{arxiv2:ass:consistent.weights}. Similar to the discussions for Assumptions \ref{arxiv2:ass:linear.CLT} and \ref{arxiv2:ass:consistent.weights}, UI condition \eqref{arxiv2:eq:UI.condition} follows from moment and dependence bounds on the underlying observations used to construct $(\hthetan, \hwn)$.\footnote{By Markov's inequality, UI condition \eqref{arxiv2:eq:UI.condition} follows from bounds strong enough to yield uniformly bounded $L^{2+\delta}(\Pn)$ norms on the estimation errors for some $\delta > 0$; e.g., see \citet[Example 2.21]{van2000asymptotic}.}

For the population heterogeneity in parameters $\Hn(\thetan) = \min_{\gamma \in \R}\norm{\thetan - \1\gamma}_{\Sigman^{-1}}$, denote the largest heterogeneity over $\cPn$ as $\UB{\eta}(\cPn) = \supPn\Hn(\thetan)$, which is uniformly bounded over $n$ under the lower eigenvalue bound on $\Sigman$ and the norm bound on $\thetan$. For each $\Pn \in \cPn$ and $n$, let $P_{n}^{\dagger}$ denote a distribution where $\Sigma(P_{n}^{\dagger}) = \Sigman$, $S(P_{n}^{\dagger}) = \Sn$, $w(P_{n}^{\dagger}) = \wn$, and
\begin{align*}
    \theta(P_{n}^{\dagger}) = \I{\Ln \neq \{\wn\}}\UB{\eta}(\cPn)\Sigman(\ln^{*} - \wn)/\norm{\ln^{*} - \wn}_{\Sigman}, \quad \ln^{*} \in \arg\max_{\l \in \Ln}\norm{\l - \wn}_{\Sigman},
\end{align*}
which satisfies $\Hn(\theta(P_{n}^{\dagger})) = \UB{\eta}(\cPn)$ whenever $\Ln \neq \{\wn\}$. If $P_{n}^{\dagger} \in \cPn$ for all $\Pn \in \cPn$ and $n$, then the class of distributions will be rich enough for the bias bound below to be sharp, in analogy to the sharpness of the bias bound from Proposition \ref{arxiv2:prop:optimal.weights}. Let $\tauwn[\l] = \l'\thetan$.

\begin{propositionU}\label{arxiv2:prop:asymptotic.bias}
Under Assumptions \ref{arxiv2:ass:linear.CLT}-\ref{arxiv2:ass:Lambda.representation} and UI condition \eqref{arxiv2:eq:UI.condition},
\begin{align*}
    \limsupn \paren{\supPn \max_{\l \in \Ln} \abs{\E[\Pn]{\htauwn} - \tauwn[\l]}}
    \leq \limsupn \paren{\UB{\eta}(\cPn) \supPn \max_{\l \in \Ln} \norm{\l - \wn}_{\Sigman}},
\end{align*}
where the right-hand side is finite. Moreover, equality holds if $P_{n}^{\dagger} \in \cPn$ for all $\Pn \in \cPn$ and $n$.
\end{propositionU}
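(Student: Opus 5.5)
The plan is to split the bias of $\htauwn = \hwn'\hthetan$ for $\tauwn[\l]$ into a population piece and an estimation-error piece:
\begin{align*}
    \E[\Pn]{\htauwn} - \tauwn[\l] = (\wn - \l)'\thetan + \E[\Pn]{(\hwn - \wn)'\thetan + \wn'(\hthetan - \thetan) + (\hwn - \wn)'(\hthetan - \thetan)}.
\end{align*}
For the population piece I will apply Proposition \ref{arxiv2:prop:bound} with $\Sigma = \Sigman$. This gives $|(\l - \wn)'\thetan| \leq \Hn(\thetan)\norm{\l - \wn}_{\Sigman} \leq \UB{\eta}(\cPn)\norm{\l - \wn}_{\Sigman}$. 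Taking the maximum over $\l \in \Ln$ and the supremum over $\Pn$ then bounds it by $\UB{\eta}(\cPn)\supPn\max_{\l \in \Ln}\norm{\l-\wn}_{\Sigman}$.

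The right-hand side is finite for two reasons. First, $\UB{\eta}(\cPn) \leq \UB{C}_{\theta}/\sqrt{\LB{e}}$, by taking $\gamma = 0$ in the definition of $\Hn$. Second, $\Ln \subseteq \bbW$, which is compact, and $\norm{\wn} \leq \UB{C}_{w}$, so $\norm{\l - \wn}_{\Sigman} \leq \sqrt{\UB{e}}(\sup_{\bbW}\norm{\l} + \UB{C}_{w})$.

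The main step is to show that the remainder expectation vanishes uniformly over $\Pn$ and is independent of $\l$.
\begin{itemize}
    \item First term: $|\E[\Pn]{(\hwn-\wn)'\thetan}| \leq \UB{C}_{\theta}\,\E[\Pn]{\norm{\hwn - \wn}}$.
    \item Second term: $|\E[\Pn]{\wn'(\hthetan-\thetan)}| \leq \UB{C}_{w}\,\E[\Pn]{\norm{\hthetan - \thetan}}$.
    \item Third term: by Cauchy--Schwarz it is at most $(\E[\Pn]{\norm{\hwn-\wn}^{2}}\,\E[\Pn]{\norm{\hthetan-\thetan}^{2}})^{1/2}$.
\end{itemize}
So it suffices to show $\supPn \E[\Pn]{\norm{\hthetan-\thetan}^{2}} \to 0$ and $\supPn\E[\Pn]{\norm{\hwn-\wn}^{2}} \to 0$.

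For each of these I split at a level $C$. Fix $\delta>0$ and write
\begin{align*}
    \E{X^{2}} \leq \delta^{2} + C\,\P{X > \delta} + \E{X^{2}\I{X^{2} > C}}.
\end{align*}
The third term is made small by choosing $C$ large, using UI condition \eqref{arxiv2:eq:UI.condition}. For fixed $C$, the second term vanishes uniformly: for $\hwn$ this follows from Assumption \ref{arxiv2:ass:consistent.weights}. For $\hthetan$ it follows from Assumption \ref{arxiv2:ass:linear.CLT} applied to the coordinate vectors, together with the eigenvalue bound $\UB{e}$, which gives $\hthetan - \thetan = O_{p}(n^{-1/2})$ uniformly and hence uniform consistency. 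Letting $n\to\infty$, then $C\to\infty$, then $\delta \to 0$ gives the limsup bound. I expect this step to be the main obstacle, because the order of limits must be handled carefully and the uniform-in-$\Pn$ tightness has to be extracted from the Kolmogorov-metric CLT.

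For equality under $P_{n}^{\dagger} \in \cPn$, I will use $\theta(P_{n}^{\dagger})$. When $\Ln \neq \{\wn\}$, take $\l = \ln^{*}$. Then
\begin{align*}
    (\ln^{*}-\wn)'\theta(P_{n}^{\dagger}) = \UB{\eta}(\cPn)\norm{\ln^{*}-\wn}_{\Sigman} = \UB{\eta}(\cPn)\max_{\l\in\Ln}\norm{\l-\wn}_{\Sigman}.
\end{align*}
When $\Ln = \{\wn\}$, both sides of this identity are zero. Since $P_{n}^{\dagger}$ shares $\Sigman$, $\Sn$ and $\wn$ with $\Pn$, it has the same $\Ln$. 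Hence
\begin{align*}
    \supPn\max_{\l}|(\l - \wn)'\thetan| \geq \supPn \UB{\eta}(\cPn)\max_{\l\in\Ln}\norm{\l - \wn}_{\Sigman}.
\end{align*}
The remainder term still vanishes uniformly, so by the reverse triangle inequality the limsup of the left side is at least the right-hand limsup, which gives equality.
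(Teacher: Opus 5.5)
Your proposal is correct and follows essentially the same route as the paper's proof. Both use the same four-term decomposition, apply Proposition~\ref{arxiv2:prop:bound} to the population term, show via truncation that the UI condition plus uniform consistency makes the mean squared errors vanish uniformly, and use the reverse triangle inequality at $P_{n}^{\dagger}$ for equality; your $\delta$/$C$ split just spells out the step the paper handles through the bounded continuous function $\min(\norm{x}^{2},C)$.
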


\begin{proof}
See Appendix \ref{arxiv2:app:proof:asymptotic.bias}.
\end{proof}

The following result establishes asymptotic optimality of the robust estimator, serving as the asymptotic analogue of Proposition \ref{arxiv2:prop:optimal.weights} from the normal model.

\begin{propositionU}\label{arxiv2:prop:asymptotic.optimality}
Let Assumptions \ref{arxiv2:ass:linear.CLT}, \ref{arxiv2:ass:consistent.covariance}, \ref{arxiv2:ass:consistent.S}, and \ref{arxiv2:ass:Lambda.representation} be satisfied and suppose that $(\hthetan, \rhwn)$ satisfies UI condition \eqref{arxiv2:eq:UI.condition}. Then for any estimator $\htauwn$ with weights $\hwn$ satisfying Assumption \ref{arxiv2:ass:consistent.weights}, the weights component of UI condition \eqref{arxiv2:eq:UI.condition}, $P_{n}^{\dagger} \in \cPn$, and $\wn \in \Ln$ for all $\Pn \in \cPn$ and $n$, the robust estimator $\rhtaun = \rhwn'\hthetan$ has a lower asymptotic maximum bias in the sense that
\begin{align*}
    \limsupn \paren{\supPn \max_{\l \in \Ln} \abs{\E[\Pn]{\rhtaun} - \tauwn[\l]}} \leq \limsupn \paren{\supPn \max_{\l \in \Ln} \abs{\E[\Pn]{\htauwn} - \tauwn[\l]}},
\end{align*}
where the right-hand side is finite.
\end{propositionU}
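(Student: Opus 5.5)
The plan is to apply Proposition \ref{arxiv2:prop:asymptotic.bias} twice: once to the robust estimator $\rhtaun$, where it yields an upper bound, and once to the competing estimator $\htauwn$, where the sharpness clause turns it into an equality. The two right-hand sides are then compared using the population minimax property of $\rwn$.

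\emph{Robust estimator.} By Proposition \ref{arxiv2:prop:consistent.minimax.weights}, the weights $\rhwn$ satisfy Assumption \ref{arxiv2:ass:consistent.weights} with population analogue $\rwn$. We assume UI condition \eqref{arxiv2:eq:UI.condition} for $(\hthetan,\rhwn)$, and the remaining Assumptions \ref{arxiv2:ass:linear.CLT}-\ref{arxiv2:ass:Lambda.representation} hold by hypothesis. Hence the inequality part of Proposition \ref{arxiv2:prop:asymptotic.bias}, with baseline weights $\rwn$, gives
\begin{align*}
    \limsupn \paren{\supPn \max_{\l \in \Ln}\abs{\E[\Pn]{\rhtaun} - \tauwn[\l]}} \leq \limsupn\paren{\UB{\eta}(\cPn)\supPn \max_{\l\in\Ln}\norm{\l - \rwn}_{\Sigman}}.
\end{align*}
One subtlety is that the classes $\Ln$ are allowed to depend on the baseline weights $\wn$ but not on the optimizer. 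So when $\rwn$ is used as the centering weight, $\Ln$ should still be read as the class built from the original $\Sn$. Proposition \ref{arxiv2:prop:asymptotic.bias} only uses $\Ln=\L(\Sn)$ through Assumptions \ref{arxiv2:ass:consistent.S}-\ref{arxiv2:ass:Lambda.representation}, so this causes no problem. I would note this explicitly.

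\emph{Competing estimator.} The hypotheses on $\hwn$ are Assumption \ref{arxiv2:ass:consistent.weights}, the weights component of the UI condition, and $P_n^\dagger\in\cPn$. Together with the other assumptions, these allow the equality case of Proposition \ref{arxiv2:prop:asymptotic.bias}, so the right-hand side of the theorem equals
\[
\limsupn\paren{\UB{\eta}(\cPn)\supPn\max_{\l\in\Ln}\norm{\l-\wn}_{\Sigman}},
\]
which is finite.

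\emph{Comparison.} Since $\wn\in\Ln$ and $\rwn$ minimizes $\bar w\mapsto\max_{\l\in\Ln}\norm{\l-\bar w}_{\Sigman}$ over $\Ln$, we have, for each $\Pn$ and $n$,
\[
\max_{\l\in\Ln}\norm{\l-\rwn}_{\Sigman}\le \max_{\l\in\Ln}\norm{\l-\wn}_{\Sigman}.
\]
Taking $\supPn$ preserves this inequality. Multiplying by $\UB{\eta}(\cPn)\ge 0$ and taking $\limsupn$ preserves it as well. Chaining the three displays proves the claim.

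The main obstacle is checking that the $P_n^\dagger$ used in the sharpness clause is the one built from the competing weights $\wn$. In that construction $\theta(P_n^\dagger)$ is aligned with $\Sigman(\ln^*-\wn)$, and the hypothesis $P_n^\dagger\in\cPn$ is stated for exactly this $\wn$, so the equality applies to $\htauwn$. We never need sharpness for $\rhtaun$, which only requires the upper bound. A secondary point is confirming that $\rhwn$ inherits the norm bound on its population weights, which Proposition \ref{arxiv2:prop:consistent.minimax.weights} supplies.
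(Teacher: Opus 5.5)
Your proposal is correct and matches the paper's proof: you apply Proposition \ref{arxiv2:prop:asymptotic.bias} to $\rhwn$, using Proposition \ref{arxiv2:prop:consistent.minimax.weights} to get its consistency and norm bound, which gives the upper bound. You then compare the maximum distances via $\wn \in \Ln$ and the population minimax property of $\rwn$, and use the sharpness clause with $P_{n}^{\dagger} \in \cPn$ to turn the bound for $\htauwn$ into an equality.
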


\begin{proof}
See Appendix \ref{arxiv2:app:proof:asymptotic.optimality}.    
\end{proof}

This optimality result presumes that $\rhwn$ satisfies the weights component of UI condition \eqref{arxiv2:eq:UI.condition}. This holds, for example, when the class of alternatives $\L(S)$ is uniformly bounded over $S \in \cS$---not just $S \in \bbS$---as with any subset of the simplex. In other cases, it may be possible to argue the UI condition directly. For example, $\rhwn$ reduces to the GLS weights in formula \eqref{arxiv2:eq:GLS.weights.Sigman} under the bounded variance class, from which it suffices to impose appropriate UI conditions on $\hSigman$. Note that even without UI conditions, $\rhwn$ is still uniformly consistent for the weights $\rwn$ that minimize the population maximum distance, as established in Proposition \ref{arxiv2:prop:consistent.minimax.weights}.

\subsection{Validity of the Heterogeneity UCB}\label{arxiv2:sec:measures.heterogeneity.asymptotic}
The \textit{sample heterogeneity in parameters} is the square root of the residual sum of squares from the GLS regression of the parameters $\thetan$ on a constant $\1$ weighted by $\hSigman^{-1}$:
\begin{align*}
    \hHn(\thetan) = \min_{\gamma \in \R}\norm{\thetan - \1\gamma}_{\hSigman^{-1}} =  \sqrt{\thetan'\hQn\thetan}, \quad \hQn = \hSigman^{-1/2}\hAn \hSigman^{-1/2}, \quad \hAn = I - \frac{\hSigman^{-1/2}\1\1'\hSigman^{-1/2}}{\1'\hSigman^{-1}\1},
\end{align*}
where $\hAn$ is the annihilator matrix for $\hSigman^{-1/2}\1$. The function $\hHn$ used to measure heterogeneity is random due to $\hSigman$. A different way to measure heterogeneity is the \textit{population heterogeneity in parameters}, which replaces $\hSigman$ with $\Sigman$:
\begin{align*}
    \Hn(\thetan) = \min_{\gamma \in \R}\norm{\thetan - \1\gamma}_{\Sigman^{-1}} = \sqrt{\thetan'\Qn\thetan}, \quad \Qn = \Sigman^{-1/2}\An \Sigman^{-1/2}, \quad \An = I - \frac{\Sigman^{-1/2}\1\1'\Sigman^{-1/2}}{\1'\Sigman^{-1}\1}.
\end{align*}
Note that $\hHn(\thetan) = \norm{\hAn \hSigman^{-1/2} \thetan}$ and $\Hn(\thetan) = \norm{\An \Sigman^{-1/2} \thetan }$. The two heterogeneity measures are close to each other in the following sense. 

\begin{propositionU}\label{arxiv2:prop:consistent.heterogeneity}
Under Assumption \ref{arxiv2:ass:consistent.covariance}, for each $\e > 0$,
\begin{align*}
    \limn \supPn \P[\Pn]{\norm{\hAn \hSigman^{-1/2} \thetan - \An \Sigman^{-1/2} \thetan} > \e\norm{\An \Sigman^{-1/2} \thetan}} = 0.
\end{align*}
Consequently, for each $\e > 0$,
\begin{align*}
    \limn \supPn \P[\Pn]{\abs{\hHn(\thetan) - \Hn(\thetan)} > \e\Hn(\thetan)} = 0.
\end{align*}
\end{propositionU}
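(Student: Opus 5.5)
The key observation is that $\An\Sigman^{-1/2}$ and $\hAn\hSigman^{-1/2}$ annihilate the constant vector $\1$. We therefore never need to control $\thetan$ itself, only its component orthogonal to $\1$, and this component is comparable to $\Hn(\thetan)$. Concretely, write $\gamma_n = \tau_{\GLS}(\thetan) = w_{\GLS}(\Pn)'\thetan$ and $\rho_n = \thetan - \1\gamma_n$. Since $\An\Sigman^{-1/2}\1 = 0$ and $\hAn\hSigman^{-1/2}\1 = 0$, we have
\begin{align*}
\hAn\hSigman^{-1/2}\thetan - \An\Sigman^{-1/2}\thetan = (\hAn\hSigman^{-1/2} - \An\Sigman^{-1/2})\rho_n .
\end{align*}
Moreover $\An\Sigman^{-1/2}\rho_n = \Sigman^{-1/2}\rho_n$, because $\rho_n$ is the GLS residual and hence $\Sigman^{-1/2}\rho_n$ is orthogonal to $\Sigman^{-1/2}\1$. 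This gives
\begin{align*}
\norm{\An\Sigman^{-1/2}\thetan} = \Hn(\thetan) = \norm{\rho_n}_{\Sigman^{-1}} \geq \norm{\rho_n}/\sqrt{\UB{e}}
\end{align*}
by the eigenvalue bound in Assumption \ref{arxiv2:ass:consistent.covariance}. Hence
\begin{align*}
\norm{\hAn\hSigman^{-1/2}\thetan - \An\Sigman^{-1/2}\thetan} \leq \norm{\hAn\hSigman^{-1/2} - \An\Sigman^{-1/2}}\sqrt{\UB{e}}\,\Hn(\thetan),
\end{align*}
and the first claim reduces to showing that $\norm{\hAn\hSigman^{-1/2} - \An\Sigman^{-1/2}} \to 0$ in probability uniformly over $\cPn$. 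This bound is deterministic and also covers the case $\Hn(\thetan)=0$: there $\rho_n = 0$, both vectors vanish, and the event in the statement is empty.

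For the uniform operator-norm convergence, I would work on the event $E_n = \{\norm{\hSigman - \Sigman} \leq \LB{e}/2\}$, whose probability tends to one uniformly by Assumption \ref{arxiv2:ass:consistent.covariance}. On $E_n$ the eigenvalues of $\hSigman$ lie in $[\LB{e}/2, \UB{e}+\LB{e}/2]$. Both $\Sigman$ and $\hSigman$ then belong to the compact set $\mathcal{K}$ of symmetric matrices with spectrum in $[\LB{e}/2, 2\UB{e}]$. The map $\Sigma \mapsto A(\Sigma)\Sigma^{-1/2}$, with $A(\Sigma) = I - \Sigma^{-1/2}\1\1'\Sigma^{-1/2}/(\1'\Sigma^{-1}\1)$, is continuous on $\mathcal{K}$, since $\Sigma^{-1/2}$ is continuous there and $\1'\Sigma^{-1}\1 \geq K/(2\UB{e}) > 0$. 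It is therefore uniformly continuous on $\mathcal{K}$. Alternatively, one can get an explicit Lipschitz bound using $\norm{\hSigman^{-1/2} - \Sigman^{-1/2}} \leq C\norm{\hSigman - \Sigman}$ on $\mathcal{K}$ together with the product and quotient rules. Either way, for each $\e>0$ there is a $\delta>0$ such that $\norm{\hSigman - \Sigman} \leq \delta$ implies $\norm{\hAn\hSigman^{-1/2} - \An\Sigman^{-1/2}} \leq \e/\sqrt{\UB{e}}$. Taking $\sup_{\Pn}$ and applying Assumption \ref{arxiv2:ass:consistent.covariance} once more gives the first display.

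The second claim follows from the reverse triangle inequality, using $\hHn(\thetan) = \norm{\hAn\hSigman^{-1/2}\thetan}$ and $\Hn(\thetan) = \norm{\An\Sigman^{-1/2}\thetan}$:
\begin{align*}
\abs{\hHn(\thetan) - \Hn(\thetan)} \leq \norm{\hAn\hSigman^{-1/2}\thetan - \An\Sigman^{-1/2}\thetan}.
\end{align*}
The event in the second display is therefore contained in the event in the first.

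The main obstacle is the relative, scale-free form of the bound. The error must be proportional to $\Hn(\thetan)$ and not to $\norm{\thetan}$, which is what allows the result to hold without using the bound on $\theta$ from Assumption \ref{arxiv2:ass:linear.CLT}. The step that resolves this is the annihilation of $\1$ together with the inequality $\norm{\rho_n} \leq \sqrt{\UB{e}}\,\Hn(\thetan)$. The uniform continuity of the matrix square root on a compact spectral set is routine by comparison.
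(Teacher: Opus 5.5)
Your proposal is correct and follows essentially the same route as the paper: residualize $\thetan$ by its GLS fit (both maps annihilate $\1$), bound $\norm{\thetan - \1\gamma_{n}} \leq \UB{e}^{1/2}\Hn(\thetan)$ via the eigenvalue bound, control $\norm{\hAn\hSigman^{-1/2} - \An\Sigman^{-1/2}}$ on a compact set of well-conditioned matrices, and finish with the reverse triangle inequality. The only cosmetic difference is that the paper uses a Lipschitz bound for $\Sigma \mapsto A(\Sigma)\Sigma^{-1/2}$, whereas you argue by uniform continuity; either suffices for the stated uniform convergence in probability.
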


\begin{proof}
See Appendix \ref{arxiv2:app:proof:consistent.heterogeneity}.    
\end{proof}

The \textit{heterogeneity in estimates} is $\hHn(\hthetan)$. The \textit{heterogeneity UCB} at significance level $\beta$ is
\begin{align*}
    \hetan = \frac{\tetan}{\sqrt{n}}, \quad \tetan =  
    \begin{cases}
    \hfil 0, & F_{\chi^{2}}(n\hHn^{2}(\hthetan); 0) \leq \beta, \\
    \hfil F_{\chi^{2}}^{-1}(n\hHn^{2}(\hthetan); \beta), & F_{\chi^{2}}(n\hHn^{2}(\hthetan); 0) > \beta.
    \end{cases}
\end{align*}
The following result establishes uniform asymptotic validity of $\hetan$ for inference on the sample heterogeneity in parameters $\hHn(\thetan)$, which will turn out to be sufficient for establishing uniform asymptotic validity of the robust CI---see Remark \ref{arxiv2:remark:random.heterogeneity.measure} below for why I consider inference on the sample measure $\hHn(\thetan)$ instead of the population measure $\Hn(\thetan)$.

\begin{propositionU}\label{arxiv2:prop:asymptotic.UCB}
Under Assumptions \ref{arxiv2:ass:linear.CLT} and \ref{arxiv2:ass:consistent.covariance},
\begin{align}\label{arxiv2:eq:UCB.criteria}
    \limn \supPn \abs{\P[\Pn]{\hHn(\thetan) \leq \hetan} - (1-\beta)}\I{\Hn(\thetan) > 0} = 0.
\end{align}
\end{propositionU}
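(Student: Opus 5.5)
The plan is to reduce \eqref{arxiv2:eq:UCB.criteria} to a statement about a pivot and then use a subsequence argument. Fix $n$ with $\Hn(\thetan) > 0$. On the event $\hHn(\thetan) > 0$, which has probability approaching one uniformly (by Proposition \ref{arxiv2:prop:consistent.heterogeneity}), the strict monotonicity of $\eta \mapsto F_{\chi^{2}}(x;\eta)$ used in the normal model gives
\begin{align*}
    \curly{\hHn(\thetan) \leq \hetan} = \curly{F_{\chi^{2}}\paren{n\hHn^{2}(\hthetan); \sqrt{n}\hHn(\thetan)} \geq \beta},
\end{align*}
up to a boundary event that will have vanishing probability. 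It therefore suffices to show that the pivot $T_{n} = F_{\chi^{2}}(n\hHn^{2}(\hthetan); \sqrt{n}\hHn(\thetan))$ converges to $U(0,1)$ in distribution, uniformly over $\Pn$ with $\Hn(\thetan) > 0$.

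For uniformity I use the standard device: it suffices to show that for every sequence $\Pn \in \cPn$ with $\Hn(\thetan)>0$, every subsequence has a further subsequence along which $\P[\Pn]{T_{n} \geq \beta} \to 1-\beta$. By Assumption \ref{arxiv2:ass:consistent.covariance} and compactness, I pass to a subsequence with the following limits:
\begin{itemize}
    \item $\Sigman \to \limSigma$, which is positive definite, so that $\An \to \limA$, a projection of rank $K-1$;
    \item $h_{n} = \sqrt{n}\Hn(\thetan) \to h \in [0,\infty]$;
    \item $u_{n} = \An\Sigman^{-1/2}\thetan/\norm{\An\Sigman^{-1/2}\thetan} \to \limu$, a unit vector in the range of $\limA$.
\end{itemize}
Write $\sqrt{n}\hAn\hSigman^{-1/2}\hthetan = \xi_{n} + \hat{\mu}_{n}$, where $\xi_{n} = \hAn\hSigman^{-1/2}\hZn$ and $\hat{\mu}_{n} = \sqrt{n}\hAn\hSigman^{-1/2}\thetan$. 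Then $n\hHn^{2}(\hthetan) = \norm{\xi_{n}+\hat{\mu}_{n}}^{2}$ and $\norm{\hat{\mu}_{n}} = \sqrt{n}\hHn(\thetan)$. Assumption \ref{arxiv2:ass:linear.CLT} together with Cramér--Wold gives $\hZn \Rightarrow N(0,\limSigma)$ along the subsequence, and Slutsky then yields $\xi_{n} \Rightarrow \xi \sim N(0,\limA)$. Proposition \ref{arxiv2:prop:consistent.heterogeneity}, scaled by $\sqrt{n}$, gives $\norm{\hat{\mu}_{n} - \mu_{n}} = o_{p}(\norm{\mu_{n}})$, where $\mu_{n} = \sqrt{n}\An\Sigman^{-1/2}\thetan$ and $\norm{\mu_{n}} = h_{n}$.

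\textbf{Case $h<\infty$.} Here $\mu_{n} \to h\limu$ and $\hat{\mu}_{n} \to h\limu$ in probability. Hence $\norm{\xi_{n}+\hat{\mu}_{n}}^{2} \Rightarrow \chi^{2}_{K-1}(h)$, since $\xi$ and $h\limu$ both lie in the $(K-1)$-dimensional range of $\limA$. Joint continuity of $(x,\eta)\mapsto F_{\chi^{2}}(x;\eta)$ and the continuous mapping theorem then give $T_{n} \Rightarrow F_{\chi^{2}}(\chi^{2}_{K-1}(h); h) \sim U(0,1)$. This covers $h=0$ as well. Because the limit of $T_n$ is continuous, the boundary event $T_n=\beta$ and the event $\hHn(\thetan)=0$ are asymptotically negligible.

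\textbf{Case $h=\infty$ (main obstacle).} Here the noncentrality diverges and there is no fixed limiting distribution, so I would normalize. Write
\begin{align*}
    \frac{\norm{\xi_{n}+\hat{\mu}_{n}}^{2} - \norm{\hat{\mu}_{n}}^{2}}{2\norm{\hat{\mu}_{n}}} = \frac{\xi_{n}'\hat{\mu}_{n}}{\norm{\hat{\mu}_{n}}} + \frac{\norm{\xi_{n}}^{2}}{2\norm{\hat{\mu}_{n}}}.
\end{align*}
The second term is $o_{p}(1)$, since $\norm{\xi_{n}}$ is tight and $\norm{\hat{\mu}_{n}}\to\infty$. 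In the first term the direction $\hat{\mu}_{n}/\norm{\hat{\mu}_{n}} \to \limu$ in probability, so the term converges to $\limu'\xi \sim N(0,1)$. The remaining ingredient is a uniform normal approximation for the noncentral chi-square:
\begin{align*}
    \sup_{x}\abs{F_{\chi^{2}}(x;\eta) - \Phi\paren{\frac{x-\eta^{2}}{2\eta}}} \to 0 \quad \text{as } \eta\to\infty.
\end{align*}
I would prove this by representing $\chi^{2}_{K-1}(\eta) = (Z_{1}+\eta)^{2} + \chi^{2}_{K-2}$ and applying the same expansion, which gives $(\chi^{2}_{K-1}(\eta)-\eta^{2})/(2\eta) \Rightarrow N(0,1)$, and then upgrading to uniform convergence in $x$ via Pólya's theorem.

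Given this approximation, $T_{n} = \Phi(\text{normalized statistic}) + o_{p}(1) \Rightarrow \Phi(N(0,1)) \sim U(0,1)$, which completes the argument along every subsequence. The delicate point, and where I expect the real work, is that the approximation must be applied at the random argument $\eta = \norm{\hat{\mu}_{n}}$. This is handled by the supremum over $x$ together with $\norm{\hat{\mu}_{n}}\to\infty$ in probability.
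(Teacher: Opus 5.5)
Your proposal is correct and follows essentially the same route as the paper. Both pass to subsequences along which $\Sigma_n \to \Sigma_\infty$ and $\sqrt{n}H_n(\theta_n) \to h \in [0,\infty]$, and both use the continuous mapping theorem when $h<\infty$. When $h=\infty$, both combine the expansion $\bigl(\|\xi_n+\hat\mu_n\|^2-\|\hat\mu_n\|^2\bigr)/(2\|\hat\mu_n\|) \Rightarrow N(0,1)$ with a uniform normal approximation to the noncentral chi-squared CDF, evaluated at the random noncentrality $\|\hat\mu_n\|$.

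The differences are only technical. You work with the pivot directly rather than with the standardized quantile $R(\beta;\eta)\to z_\beta$. You also derive the normal approximation from $(Z_1+\eta)^2+\chi^2_{K-2}$ and P\'olya's theorem instead of citing \citet{seri2015tight}. Finally, $\hat{H}_n(\theta_n)>0$ holds surely whenever $H_n(\theta_n)>0$, not just with probability approaching one, because $\hat\Sigma_n$ is positive definite; so the event identity is exact and needs no boundary correction.
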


\begin{proof}
See Appendix \ref{arxiv2:app:proof:asymptotic.UCB}.    
\end{proof}

Note that when $\Hn(\thetan) = 0$, the UCB $\hetan \geq 0$ always covers $\hHn(\thetan)$, so multiplication by the indicator $\I{\Hn(\thetan) > 0}$ restricts attention to sequences where heterogeneity is nontrivial, in analogy to quantile-unbiasedness criterion \eqref{arxiv2:eq:exact.upperCI.heterogeneity} from the normal model. Thus, \eqref{arxiv2:eq:UCB.criteria} implies
\begin{align*}
    \liminf_{n \to \infty} \infPn \P[\Pn]{\hHn(\thetan) \leq \hetan} \geq 1-\beta.
\end{align*}
In this sense, $\hetan$ is a uniformly asymptotically valid UCB, in analogy to criterion \eqref{arxiv2:eq:inference.heterogeneity}.

\begin{remark}[Sample versus Population Heterogeneity]\label{arxiv2:remark:random.heterogeneity.measure}
Consider a sequence where the normalized population heterogeneity diverges: $\sqrt{n}\Hn(\thetan) \to \infty$. Given Proposition \ref{arxiv2:prop:consistent.heterogeneity}, this implies $\sqrt{n}\hHn(\thetan) \to[p] \infty$. To see what this means for validity of $\hetan$, consider noncentral chi-squared asymptotics for the heterogeneity in estimates $\hHn(\hthetan)$ relative to the sample heterogeneity in parameters $\hHn(\thetan)$:
\begin{align}\label{arxiv2:eq:sample.noncentral.asymptotics}
    \frac{\sqrt{n}(\hHn^{2}(\hthetan) -  \hHn^{2}(\thetan))}{2\hHn(\thetan)} = \frac{\hZn' \hQn \hZn}{2\sqrt{n}\hHn(\thetan)} + \frac{(\hAn \hSigman^{-1/2}\thetan)'}{\norm{\hAn \hSigman^{-1/2}\thetan}} \hSigman^{-1/2}\hZn, \quad \hZn = \sqrt{n}(\hthetan - \thetan).
\end{align}
In this asymptotic regime, the first term is $o_{p}(1)$ while the second term converges in distribution to a standard normal. The asymptotic validity of $\hetan$ follows from expression \eqref{arxiv2:eq:sample.noncentral.asymptotics} combined with bounds from \citet{seri2015tight} that imply the uniform convergence of a noncentral chi-squared CDF with diverging noncentrality parameter to a normal CDF; see Appendix \ref{arxiv2:app:proof:asymptotic.UCB} for details. By comparison, the population heterogeneity analogue of \eqref{arxiv2:eq:sample.noncentral.asymptotics} yields
\begin{align*}
    \frac{\sqrt{n}(\hHn^{2}(\hthetan) -  \Hn^{2}(\thetan))}{2\Hn(\thetan)} = \frac{\hZn' \hQn \hZn}{2\sqrt{n}\Hn(\thetan)} + \frac{(\hAn \hSigman^{-1/2}\thetan)'}{\norm{\An \Sigman^{-1/2}\thetan}} \hSigman^{-1/2}\hZn + \frac{\sqrt{n}(\hHn^{2}(\thetan) - \Hn^{2}(\thetan))}{2\Hn(\thetan)},
\end{align*}
where the first two terms behave like before, but now there is a third term where
\begin{align*}
    \abs{\frac{\sqrt{n}(\hHn^{2}(\thetan) - \Hn^{2}(\thetan))}{2\Hn(\thetan)}} \lesssim \sqrt{n}\abs{\frac{\hHn(\thetan)}{\Hn(\thetan)} - 1},
\end{align*}
which need not be $o_{p}(1)$. Proposition \ref{arxiv2:prop:asymptotic.UCB} avoids this term altogether.
\end{remark}

\begin{remark}[Analogy to Heterogeneity in the Normal Model]
As noted in Remark \ref{arxiv2:remark:analogy.environment.normal}, objects $(\htheta, \Sigma, \theta)$ in the normal model are analogous to objects $(\hthetan, \tSigman, \thetan)$ in the asymptotic environment, where $\tSigman = \hSigman/n$. The corresponding GLS weighting matrices are $\Sigma^{-1}$ and $\tSigman^{-1} = n\hSigman^{-1}$. Thus, using $\heta_{1-\beta}$ for inference on $H(\theta)$ in the normal model is analogous to using $\tetan = \sqrt{n}\hetan$ for inference on $\tHn(\thetan) = \sqrt{n}\hHn(\thetan)$ in the asymptotic environment.
\end{remark}

\subsection{Validity of the Robust CI}\label{arxiv2:sec:asymptotics.robust.CI}
I now establish the asymptotic validity of robust CIs centered at weight vectors $\hwn$ that satisfy Assumption \ref{arxiv2:ass:consistent.weights}, which include the minimax-bias weights $\rhwn$ in view of Proposition \ref{arxiv2:prop:consistent.minimax.weights}. 

For centering weights $\hwn$, the bias UCB is 
\begin{align*}
    \hBn = \hetan \max_{\l \in \hLn}\norm{\l - \hwn}_{\hSigman} = \tetan \max_{\l \in \hLn}\norm{\l - \hwn}_{\tSigman}, \quad \tSigman = \hSigman/n.
\end{align*}
For standard error $\tsigmawn = \sqrt{\hwn'\tSigman \hwn} = \sqrt{\hwn'\hSigman \hwn}/\sqrt{n} = \hat{\sigma}_{\hwn,n}/\sqrt{n}$, the robust CI is
\begin{align*}
    \rCIn = 
    \begin{cases}
    \hfil \displaystyle \brack{\htauwn \pm \cv{\hBn/\tsigmawn}\tsigmawn}, & \text{two-sided}, \\[5pt]  
    \hfil \left(-\infty, \htauwn + z_{1-\alpha}\tsigmawn + \hBn\right], & \text{one-sided (upper)}, \\[5pt]  
    \hfil \left[\htauwn - z_{1-\alpha}\tsigmawn - \hBn, \infty\right), & \text{one-sided (lower)}.
    \end{cases}
\end{align*}
The validity of $\rCIn$ depends on its coverage for weights in the population class $\Ln \subseteq \bbW$, where $\bbW \subseteq \W$ is the compact convex set with positive diameter from Assumption \ref{arxiv2:ass:Lambda.representation}. The uniform consistency in Proposition \ref{arxiv2:prop:uniform.hausdorff.convergence} shows that the estimated class $\hLn$ is close to the population class $\Ln$ with high probability as $n \to \infty$. However, this is not the same as $\hLn$ containing the alternatives from $\Ln$ with high probability. To see this, recall that Assumption \ref{arxiv2:ass:Lambda.representation} yields the inequality constraint representations 
\begin{align*}
    \hLn = \L(\hSn) = \curly{\l \in \bbW: g(\l, \hSn) \leq 0}, \quad \Ln = \L(\Sn) = \curly{\l \in \bbW: g(\l, \Sn) \leq 0},
\end{align*}
where the former holds with probability uniformly approaching one under Assumption \ref{arxiv2:ass:consistent.S}. The issue is at the boundary of the inequality constraint, where small estimation errors can lead to exclusion, $g(\l, \hSn) > 0$, of population alternatives where $g(\l, \Sn) = 0$. Since membership is discontinuous at the boundary, uniform consistency of $\hSn$ to $\Sn$ generally does not ensure that $\hLn$ contains the alternatives in $\Ln$. However, as I show in examples further below, there will typically exist a subclass $\L_{0}(S) \subseteq \L(S)$ such that, for $\L_{0,n} = \L_{0}(\Sn)$,
\begin{align}\label{arxiv2:eq:target.containment}
    \limn \infPn \inf_{\l \in \L_{0,n}} \P[\Pn]{\l \in \hLn} = 1.
\end{align}
For subclasses $\L_{0,n}$ satisfying containment condition \eqref{arxiv2:eq:target.containment}, the robust CI constructed with $\hLn$ provides uniformly asymptotically valid coverage for the target alternatives $\l \in \L_{0,n}$.

\begin{propositionU}\label{arxiv2:prop:asymptotic.robust.CI}
Let Assumptions \ref{arxiv2:ass:linear.CLT}-\ref{arxiv2:ass:Lambda.representation} be satisfied and suppose that $\L_{0,n} \subseteq \Ln$ satisfies containment condition \eqref{arxiv2:eq:target.containment}. Then
\begin{align}\label{arxiv2:eq:asymptotic.robust.coverage}
    \liminfn \infPn \inf_{\l \in \L_{0,n}}\P[\Pn]{\tauwn[\l] \in \rCIn} \geq 1-(\alpha+\beta).
\end{align}
\end{propositionU}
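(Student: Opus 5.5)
Since the bound is uniform over $\Pn \in \cPn$ and $\l \in \L_{0,n}$, I will argue by subsequences. Suppose the claim fails. Then there exist $\delta>0$, a subsequence, $P_{n}\in\cPn$ and $\l_{n}\in\L_{0,n}$ along which $\P[\Pn]{\tauwn[\l_{n}] \in \rCIn} < 1-(\alpha+\beta)-\delta$. I will show this is impossible by lower bounding the coverage along any such sequence. The tool is a union bound on two events. Let $E_{1,n}=\{\l_n\in\hLn\}$ and $E_{2,n}=\{\hHn(\thetan)\le\hetan\}$. By containment condition \eqref{arxiv2:eq:target.containment}, $\P[\Pn]{E_{1,n}^c}\to0$ uniformly. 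By Proposition \ref{arxiv2:prop:asymptotic.UCB}, $\liminf\inf \P[\Pn]{E_{2,n}}\ge 1-\beta$; when $\Hn(\thetan)=0$ the event $E_{2,n}$ holds trivially.

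On $E_{1,n}\cap E_{2,n}$ I will establish a sample analogue of Proposition \ref{arxiv2:prop:bound}. Apply the Cauchy--Schwarz argument with $\tSigman$ in place of $\Sigma$, which is allowed by the footnote to Proposition \ref{arxiv2:prop:bound} since any positive definite matrix may be used. This gives
\[
|\tauwn[\l_n]-\hwn'\thetan| = |(\l_n-\hwn)'\thetan| \le \sqrt{n}\,\hHn(\thetan)\,\norm{\l_n-\hwn}_{\tSigman}\le \tetan\max_{\l\in\hLn}\norm{\l-\hwn}_{\tSigman}=\hBn .
\]
The middle step needs care with the normalization. $\hHn$ is defined with $\hSigman^{-1}=\tSigman^{-1}/n$, so the $\tSigman^{-1}$-heterogeneity equals $\sqrt n\hHn$, and $\tetan=\sqrt n\hetan$. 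Here I note that $\hwn$ is data dependent. This causes no difficulty because the inequality is deterministic given the data. Consequently, on $E_{1,n}\cap E_{2,n}$, coverage of $\tauwn[\l_n]$ is implied by coverage of the random target $\hwn'\thetan$ by the bias-aware interval. Write $T_n=(\htauwn-\hwn'\thetan)/\tsigmawn=\hwn'\hZn/\hat\sigma_{\hwn,n}$ and $b_n=(\tauwn[\l_n]-\hwn'\thetan)/\tsigmawn$, so that $|b_n|\le \hBn/\tsigmawn$. In the two-sided case, the target is covered iff $|T_n-b_n|\le \cv{\hBn/\tsigmawn}$. Since $\cv{\cdot}$ is increasing and $|N(b,1)|$ is stochastically increasing in $|b|$, this holds whenever $|T_n - b_n|\le \cv{|b_n|}$. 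The one-sided cases are immediate: coverage holds when $T_n\le z_{1-\alpha}$ (resp. $\ge -z_{1-\alpha}$).

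It then remains to show $\liminf\inf\P[\Pn]{|T_n-b_n|\le\cv{|b_n|}}\ge1-\alpha$. The difficulty is that $b_n$ is random, since it depends on $\hwn$ and $\tSigman$. I will replace it by the nonrandom $\bar b_n=(\l_n-\wn)'\thetan\sqrt n/\sigma_{\wn,n}$. The key fact is $T_n-b_n=(\tau_{\hwn}\text{-type estimator}-\tauwn[\l_n])/\tsigmawn$. Using Assumptions \ref{arxiv2:ass:consistent.covariance} and \ref{arxiv2:ass:consistent.weights}, $\hat\sigma_{\hwn,n}/\sigma_{\wn,n}\to1$ uniformly, because eigenvalues are bounded and $\norm{\wn}\ge K^{-1/2}$. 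Also $T_n=\wn'\hZn/\sigma_{\wn,n}+o_p(1)$, since $(\hwn-\wn)'\hZn=o_p(1)$ by tightness of $\hZn$ from Assumption \ref{arxiv2:ass:linear.CLT}. However, $\sqrt n(\hwn-\wn)'\thetan$ need not vanish, so $b_n-\bar b_n$ is not obviously negligible.

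This is the main obstacle, and I would try to avoid it rather than fight it. The comparison is really between $T_n - b_n$ and the random threshold $\cv{|b_n|}$, with $b_n$ determined by $\hwn$. I would pass to a further subsequence along which $\Sigman\to\Sigma_\infty$ and $\wn\to w_\infty$. Then I use the uniform CLT of Assumption \ref{arxiv2:ass:linear.CLT} with $u=\wn/\norm{\wn}$ to get $\wn'\hZn/\sigma_{\wn,n}\Rightarrow N(0,1)$. Since $\P{|Z - b|\le \cv{|b|}}=1-\alpha$ for every fixed $b$ and this map is continuous in $b$, I would condition the argument on the value of $b_n$. If $b_n$ were asymptotically independent of $T_n$ this would finish, but in general it is not. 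My fallback is therefore a bound uniform in $b$. For the two-sided interval, $|T_n-b|\le \cv{|b|}$ holds for all $|b|\le B$ whenever $T_n\in[-\cv{B}+B,\cv{B}-B]$; this sandwich is too crude for coverage $1-\alpha$. So I would instead require, as the intended proof presumably does, that $\sqrt n(\hwn-\wn)'\thetan$ be absorbed. Specifically, I would redefine the estimand comparison around $\wn$: bound $|\tauwn[\l_n]-\tau_{\wn,n}|\le\sqrt n\Hn(\thetan)\norm{\l_n-\wn}_{\tSigman}$. I would then show that $\hBn$ dominates this asymptotically, using Propositions \ref{arxiv2:prop:consistent.distance} and \ref{arxiv2:prop:consistent.heterogeneity} together with the nondegeneracy bound on the maximum distance to control the error relative to $\hBn$. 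Finally, I would conclude via the continuous mapping theorem applied to $\P{|Z-\bar b|\le \cv{\hat B}}$ with $|\bar b|\le\hat B$ asymptotically.
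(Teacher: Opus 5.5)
Your setup is sound up to the two-sided step. The subsequence argument, the union bound over $\{\lambda_n\in\hat\Lambda_n\}$ and $\{\hat H_n(\theta_n)\le\hat\eta_{1-\beta,n}\}$, the exact sample Cauchy--Schwarz bound $\abs{b_n}\le\hBn/\tsigmawn$ on their intersection, and the one-sided case all match the paper. The two-sided case, however, has a genuine gap.

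Replacing $\cv{\hBn/\tsigmawn}$ by $\cv{\abs{b_n}}$ throws away slack that is actually needed, and the claim you reduce to is false in general. Take $\sqrt n H_n(\theta_n)\to\infty$ (e.g.\ fixed heterogeneity) and $\lambda_n=w_n\in\Lambda_{0,n}$. Take also an estimated $\hat w_n$, consistent as in Assumption~\ref{arxiv2:ass:consistent.weights}, with $\sqrt n(\hat w_n-w_n)'\theta_n/\hat\sigma_{\hat w_n,n}\approx T_n$. Then $b_n\approx -T_n$, so $\{\abs{T_n-b_n}>\cv{\abs{b_n}}\}\approx\{2\abs{T_n}>\cv{\abs{T_n}}\}$. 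Its limiting probability is close to $2\alpha$. The actual interval still covers here, because $\hBn/\tsigmawn\to[p]\infty$.

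Your fallback does not repair this, for three reasons. First, recentring at $w_n$ leaves the term $\sqrt n(\hat w_n-w_n)'\theta_n/\hat\sigma_{\hat w_n,n}$, which is not $o_p(1)$ when normalized heterogeneity diverges. Second, the continuous mapping theorem is unavailable in that regime, because $\hBn/\tsigmawn$ has no finite limit. Third, Propositions~\ref{arxiv2:prop:consistent.distance} and~\ref{arxiv2:prop:consistent.heterogeneity} only give domination of a population bias bound up to a factor $1+o_p(1)$. Near the worst case this leaves an absolute shortfall that need not vanish, which is exactly the problem in Remark~\ref{arxiv2:remark:random.heterogeneity.measure}.

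The missing ingredients, which the paper uses, are as follows. Keep the critical value at $B_n=\sqrt n\,\hat H_n(\theta_n)\hat D_n/\hat\sigma_{\hat w_n,n}$, where $\hat D_n=\max_{\lambda\in\hat\Lambda_n}\norm{\lambda-\hat w_n}_{\widehat\Sigma_n}$. This quantity is at most $\hBn/\tsigmawn$ on the UCB event, and at least $\abs{b_n}$, exactly, on the containment event. Then split on whether $\sqrt nH_n(\theta_n)$ stays bounded along the subsequence.

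The term you could not control is handled by Proposition~\ref{arxiv2:prop:bound} applied to the pair $(\hat w_n,w_n)$. Since both sum to one, $\abs{\sqrt n(\hat w_n-w_n)'\theta_n}\le\sqrt nH_n(\theta_n)\norm{\hat w_n-w_n}_{\Sigma_n}$.

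In the bounded case this term is $o_p(1)$, so $b_n$ converges in probability to a constant. Your $\cv{\abs{b_n}}$ reduction plus the continuous mapping theorem then works, which is what the paper does there.

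In the diverging case the term is negligible relative to $\sqrt nH_n(\theta_n)$. Hence $B_n\to[p]\infty$, using Lemma~\ref{arxiv2:app:lemma:nondegenerate.disagreement} to keep the maximum distance away from zero. Also $b_n/B_n\to[p]c\in[-1,1]$ with $c$ deterministic, and Lemma~\ref{arxiv2:app:lemma:diverging.folded.normal} bounds noncoverage by $\alpha$:
\begin{itemize}
\item If $\abs{c}<1$, then $\cv{B_n}-\abs{b_n}\to[p]\infty$.
\item If $c=\pm1$, the exact inequality $\abs{b_n}\le B_n$ and $\cv{b}-b\to z_{1-\alpha}$ confine noncoverage, up to negligible events, to $\{T_n<B_n-\cv{B_n}\}$ or its mirror image. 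Its probability tends to $\alpha$ by Slutsky's theorem.
\end{itemize}
In no regime is asymptotic independence between $b_n$ and $T_n$ needed, since every comparison reduces to a deterministic probability limit.
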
 

\begin{proof}
See Appendix \ref{arxiv2:app:proof:asymptotic.robust.CI}.        
\end{proof}

The asymptotic coverage in \eqref{arxiv2:eq:asymptotic.robust.coverage} is analogous to the coverage in \eqref{arxiv2:eq:robust.coverage} from the normal model, but stated for the subclass $\L_{0,n}$ rather than the population class $\Ln$. This suggests viewing the class $\L_{0,n}$ in Proposition \ref{arxiv2:prop:asymptotic.robust.CI} as a target class, $\Ln \supseteq \L_{0,n}$ as an enlargement of the target class, and $\hLn$ as an estimator of the enlargement $\Ln$. In other words, to ensure asymptotic coverage for target alternatives $\L_{0,n}$, it suffices to use an enlarged estimator $\hLn \supseteq \hL_{0,n}$ to construct the robust CI rather than using the sample analogue $\hL_{0,n}$ of the target. For example, if the class $\hL_{0,n} = \L(\hSn; \kappa_{0})$ increases in a scalar parameter $\kappa$, this suggests using $\hLn = \L(\hSn;\kappa)$ with a larger $\kappa > \kappa_{0}$. Importantly, unlike the enlarged $\Ln$, I do not require the target $\L_{0,n}$ to satisfy the conditions of Assumption \ref{arxiv2:ass:Lambda.representation}, so long as it satisfies containment condition \eqref{arxiv2:eq:target.containment} relative to $\Ln$. For example, a target bounded variance class with $r_{0}=1$ may not satisfy the Slater condition, but nevertheless satisfy containment condition \eqref{arxiv2:eq:target.containment} relative to an enlargement with $r > 1$ that does satisfy the Slater condition; see the examples below.

\begin{remark}[Caveats Regarding Class Estimation Error]
If there is no class estimation error (i.e., $\hLn = \Ln$), then containment condition \eqref{arxiv2:eq:target.containment} holds for $\L_{0,n} = \Ln$ and Proposition \ref{arxiv2:prop:asymptotic.robust.CI} applies with $\Ln = \L_{0,n}$. However, Proposition \ref{arxiv2:prop:asymptotic.robust.CI} still assumes that $\Ln$ satisfies Assumption \ref{arxiv2:ass:Lambda.representation}. Thus, even when there is no class estimation error, it can still be useful to apply Proposition \ref{arxiv2:prop:asymptotic.robust.CI} to $\Ln \supseteq \L_{0,n}$ when the target class $\L_{0,n}$ may violate the conditions of Assumption \ref{arxiv2:ass:Lambda.representation}, as with the aforementioned bounded variance class with $r=1$ and the Slater condition.
\end{remark}

To establish containment condition \eqref{arxiv2:eq:target.containment}, a practical approach is to show that the inequality constraint from $\Ln$ is uniformly slack when evaluated over $\L_{0,n}$. In particular, it suffices to show the existence of a constant $\nu > 0$ such that for all $\Pn \in \cPn$ and $n$,
\begin{align}\label{arxiv2:eq:slack.condition}
    \sup_{\l \in \L_{0,n}} g(\l, \Sn) \leq -\nu.
\end{align}
By Lemma \ref{arxiv2:app:lemma:sufficient.slack}, slack condition \eqref{arxiv2:eq:slack.condition} implies containment condition \eqref{arxiv2:eq:target.containment} under Assumptions \ref{arxiv2:ass:consistent.S} and \ref{arxiv2:ass:Lambda.representation}. I now show how to derive $\L_{0,n}$ and $\nu$ for the benchmark classes of interest.

\begin{example*}[Bounded Variance, continued] 
For $\Sn = (\Sigman, \wn)$ and $r_{0} \geq 1$, the target class is
\begin{align*}
    \L_{0,n} = \curly{\l \in \W: g(\l, \Sigman, \wn; r_{0}) \leq 0} = \curly{\l \in \W: \sqrt{\l'\Sigman \l} \leq r_{0}\sqrt{\wn'\Sigman \wn}}.
\end{align*}
For $r > r_{0}$, the enlarged class is
\begin{align*}
    \Ln = \curly{\l \in \W: g(\l, \Sigman, \wn; r) \leq 0} = \curly{\l \in \W: \sqrt{\l'\Sigman \l} \leq r\sqrt{\wn'\Sigman \wn}}.
\end{align*}
Observe that
\begin{align*}
    \sup_{\l \in \L_{0,n}}g(\l, \Sigman, \wn; r) = \sup_{\l \in \L_{0,n}}\sqrt{\l'\Sigman \l} - r\sqrt{\wn'\Sigman \wn} \leq -(r - r_{0})\sqrt{\wn'\Sigman \wn} \leq 0.
\end{align*}
Since $\emin(\Sigman) \geq \LB{e}$ and $\norm{\wn} \geq 1/\sqrt{K}$, slack condition \eqref{arxiv2:eq:slack.condition} is satisfied:
\begin{align*}
    \sup_{\l \in \L_{0,n}}g(\l, \Sigman, \wn; r) \leq -\nu, \quad \nu = (r - r_{0})\sqrt{\LB{e}/K} > 0.
\end{align*}
By Proposition \ref{arxiv2:prop:asymptotic.robust.CI}, a robust CI constructed with estimated class
\begin{align*}
    \hLn = \curly{\l \in \W: g(\l, \hSigman, \hwn; r) \leq 0} = \curly{\l \in \W: \sqrt{\l'\hSigman \l} \leq r\sqrt{\hwn'\hSigman \hwn}}
\end{align*}
provides uniformly asymptotically valid coverage for $\l \in \L_{0,n}$. These derivations also apply to the bounded variance simplex by replacing $\W$ with $\W_{+}$ and restricting to $\hwn, \wn \in \W_{+}$.
\end{example*}

\begin{example*}[Truncated Simplex, continued] 
For $\Sn = \wn \in \W_{+}$ and $\epsilon_{0} \in [0,1)$, the target class is
\begin{align*}
    \L_{0,n} = \curly{\l \in \W_{+}: g(\l, \wn; \epsilon_{0}) \leq 0} = \curly{\l \in \W_{+}: \l \geq (1-\epsilon_{0})\wn}.
\end{align*}
For $\epsilon \in (\epsilon_{0}, 1]$, the enlarged class is
\begin{align*}
    \Ln = \curly{\l \in \W_{+}: g(\l, \wn; \epsilon) \leq 0} = \curly{\l \in \W_{+}: \l \geq (1-\epsilon)\wn}.
\end{align*}
Observe that
\begin{align*}
    \sup_{\l \in \L_{0,n}}g(\l, \wn; \epsilon) = \max_{k \in \{1,\ldots,K\}}\paren{(1-\epsilon)w_{n,k}-\inf_{\l \in \L_{0,n}}\l_{k}} \leq -(\epsilon - \epsilon_{0})\min_{k \in \{1,\ldots,K\}}w_{n,k}.
\end{align*}
Suppose that $w_{n,k}$ is uniformly bounded away from zero:
\begin{align*}
    \inf_{n}\infPn w_{n,k} \geq w_{\min} > 0, \quad \forall k \in \{1,\ldots,K\}.
\end{align*}
This holds, for instance, for the equal weights vector $\wn = \1/K$. Given this uniform lower bound on $w_{n,k}$, slack condition \eqref{arxiv2:eq:slack.condition} is satisfied:
\begin{align*}
    \sup_{\l \in \L_{0,n}}g(\l, \wn; \epsilon) \leq -\nu, \quad \nu = (\epsilon - \epsilon_{0})w_{\min} > 0.
\end{align*}
By Proposition \ref{arxiv2:prop:asymptotic.robust.CI}, a robust CI constructed with estimated class
\begin{align*}
    \hLn = \curly{\l \in \W_{+}: g(\l, \hwn; \epsilon) \leq 0} = \curly{\l \in \W_{+}: \l \geq (1-\epsilon)\hwn}
\end{align*}
provides uniformly asymptotically valid coverage for $\l \in \L_{0,n}$.
\end{example*}

\begin{example*}[Covariate Balance, continued] 
For $\Sn = (\wn, \bXn) \in \W_{+} \times \R^{K \times M}$ and $\Bar{c}_{0} \geq 0$, the target class is
\begin{align*}
    \L_{0,n} = \curly{\l \in \W_{+}: g(\l, \wn, \bXn; \Bar{c}_{0}) \leq 0} = \curly{\l \in \W_{+}: c_{\l}(\wn,\bXn) \leq \Bar{c}_{0}}.
\end{align*}
For $\Bar{c} > \Bar{c}_{0}$, the enlarged class is
\begin{align*}
    \Ln = \curly{\l \in \W_{+}: g(\l, \wn, \bXn; \Bar{c}) \leq 0} = \curly{\l \in \W_{+}: c_{\l}(\wn,\bXn) \leq \Bar{c}}.
\end{align*}
Slack condition \eqref{arxiv2:eq:slack.condition} is satisfied:
\begin{align*}
    \sup_{\l \in \L_{0,n}}g(\l, \wn, \bXn; \Bar{c}) \leq -\nu, \quad \nu = \Bar{c} - \Bar{c}_{0} > 0.
\end{align*}
By Proposition \ref{arxiv2:prop:asymptotic.robust.CI}, a robust CI constructed with estimated class
\begin{align*}
    \hLn = \curly{\l \in \W_{+}: g(\l, \hwn, \hbXn; \Bar{c}) \leq 0} = \curly{\l \in \W_{+}: c_{\l}(\hwn,\hbXn) \leq \Bar{c}}
\end{align*}
provides uniformly asymptotically valid coverage for $\l \in \L_{0,n}$.
\end{example*}

\begin{remark}[Reporting Convention for Boundary Alternatives]\label{arxiv2:remark:enlarged.convention}
The takeaway from the above examples and discussions is that asymptotic coverage for boundary alternatives requires some care. For instance, a target bounded variance simplex class
\begin{align*}
    \L_{0,n} = \curly{\l \in \W_{+}: \sqrt{\l'\Sigman \l} \leq r_{0}\sqrt{\wn'\Sigman \wn}}, \quad r_{0} \geq 1,
\end{align*}
may contain alternative weights that lie on the boundary of the variance constraint: $\dsqrt{\l'\Sigman \l} = r_{0}\sqrt{\wn'\Sigman \wn}$. Proposition \ref{arxiv2:prop:asymptotic.robust.CI} says that a sufficient avenue for asymptotic coverage of the entire $\L_{0,n}$ is to use a robust CI constructed with an enlarged estimated class
\begin{align*}
    \hLn = \curly{\l \in \W_{+} : \sqrt{\l'\hSigman \l} \leq r\sqrt{\hwn'\hSigman \hwn}}, \quad r = r_{0}+\delta, \quad \delta > 0.
\end{align*}
Proposition \ref{arxiv2:prop:asymptotic.robust.CI} guarantees asymptotic coverage under any fixed $\delta > 0$. Nevertheless, for ease of exposition in the implementation and empirical applications, I simply report robust CIs under $\delta = 0$. Using $\delta = 0.0001$ leaves the reported CIs unchanged at the displayed precision.
\end{remark}

\section{Practical Implementation}\label{arxiv2:sec:implementation}
Motivated by the asymptotic results in Section \ref{arxiv2:sec:asymptotic.validity}, but suppressing the sample size index $n$ for conciseness, I consider asymptotically normal estimates $\htheta \overset{a}{\sim} N(\theta, \Sigma/n)$, consistent covariance matrix estimator $\tSigma = \hSigma/n \overset{a}{\approx} \Sigma/n$, consistent baseline weights estimator $\hw \overset{a}{\approx} w$, and classes of alternative weights $\hL = \L(\hS)$ characterized by consistently estimated statistics $\hS \overset{a}{\approx} S$: e.g., $\hS = (\hw, \hSigma)$ for bounded variance class, $\hS = \hw$ for truncated simplex class, and $\hS = (\hw, \hbX)$ for covariate balance class with estimated covariate mean matrix $\hbX$. 

Section \ref{arxiv2:sec:feasible.procedures} summarizes my robust inference procedures in terms of the above objects and discusses specification choices for $\hL$. Sections \ref{arxiv2:sec:implementation.event.studies} and \ref{arxiv2:sec:implementation.multisite.experiments} discuss practical implementations of these procedures in the context of event studies and multisite experiments, respectively. Section \ref{arxiv2:sec:implementation.event.studies} also gives a survey of current practice for the event study context.

\subsection{Feasible Inference Procedures}\label{arxiv2:sec:feasible.procedures}
The GLS regression of $\htheta$ on a constant $\1$ weighted by $\tSigma^{-1}$ yields 
\begin{align}\label{arxiv2:eq:feasible.GLS.output}
    \htau_{\GLS} = \frac{\1'\tSigma^{-1}\htheta}{\1'\tSigma^{-1}\1}, \quad \tsigma_{\GLS} = \frac{1}{\dsqrt{\1'\tSigma^{-1}\1}}, \quad \tH^{2}(\htheta) = (\htheta - \1\htau_{\GLS})'\tSigma^{-1}(\htheta - \1\htau_{\GLS}),
\end{align}
where $\htau_{\GLS}$ is the GLS estimator, $\tsigma_{\GLS}$ is the GLS standard error, $\tH^{2}(\htheta)$ is the GLS residual sum of squares, which yields heterogeneity in estimates $\tH(\htheta)$ with explicit formula
\begin{align}\label{arxiv2:eq:feasible.heterogeneity.estimates}
    \tH(\htheta) = \sqrt{\htheta'\tQ\htheta}, \quad \tQ = \tSigma^{-1/2}\tA\tSigma^{-1/2}, \quad \tA = I - \frac{\tSigma^{-1/2}\1\1'\tSigma^{-1/2}}{\1'\tSigma^{-1}\1}.
\end{align}
Following the steps of Section \ref{arxiv2:sec:heterogeneity.UCB}, the heterogeneity UCB is
\begin{align}\label{arxiv2:eq:feasible.heterogeneity.UCB}
    \teta_{1-\beta} = 
    \begin{cases}
    \hfil 0, & F_{\chi^{2}}(\tH^{2}(\htheta); 0) \leq \beta, \\
    \hfil F_{\chi^{2}}^{-1}(\tH^{2}(\htheta); \beta), & F_{\chi^{2}}(\tH^{2}(\htheta); 0) > \beta.
    \end{cases}
\end{align}
Following the steps of Section \ref{arxiv2:sec:robust.inference}, the minimax-bias weights and bias UCB are 
\begin{align}\label{arxiv2:eq:feasible.minimax.bias.objects}
    \hw^{*} = \arg\min_{\Bar{w} \in \hL}\max_{\l \in \hL}\norm{\l - \Bar{w}}_{\tSigma}, \quad \quad \hB_{\min}^{\beta}(\hL) = \teta_{1-\beta}\max_{\l \in \hL}\norm{\l - \hw^{*}}_{\tSigma}.
\end{align}
For standard error $\tsigma^{*} = \sqrt{(\hw^{*})'\tSigma \hw^{*}}$, the robust estimator and corresponding robust CI are 
\begin{align}\label{arxiv2:eq:feasible.robust.estimator.CI}
    \htau^{*} = (\hw^{*})'\htheta, \quad \quad 
    CI^{*} = 
    \begin{cases}
    \hfil \displaystyle \brack{\htau^{*} \pm \cv{\hB_{\min}^{\beta}(\hL)/\tsigma^{*}}\tsigma^{*}}, & \text{two-sided}, \\[5pt]  
    \hfil \left(-\infty, \htau^{*} + z_{1-\alpha}\tsigma^{*} + \hB_{\min}^{\beta}(\hL)\right], & \text{one-sided (upper)}, \\[5pt]  
    \hfil \left[\htau^{*} - z_{1-\alpha}\tsigma^{*} - \hB_{\min}^{\beta}(\hL), \infty\right), & \text{one-sided (lower)},
    \end{cases}
\end{align}
where $\cv{b}$ can be computed as the square root of the $(1-\alpha)$-quantile of the noncentral chi-squared distribution with one degree of freedom and noncentrality parameter $b^{2}$.

\begin{recipe}\label{arxiv2:recipe:robust.procedures} Implementation of Inference Procedures
\begin{recipenumerate}
  \item Construct $\teta_{1-\beta}$ as in \eqref{arxiv2:eq:feasible.heterogeneity.UCB}, based on \eqref{arxiv2:eq:feasible.heterogeneity.estimates} or \eqref{arxiv2:eq:feasible.GLS.output}.
  \item Construct $\hw^{*}$ and $\hB_{\min}^{\beta}(\hL)$ as in \eqref{arxiv2:eq:feasible.minimax.bias.objects}.
  \item Construct $\htau^{*}$ and $CI^{*}$ as in \eqref{arxiv2:eq:feasible.robust.estimator.CI}.
  \item Report $(\htau^{*}, CI^{*})$.
\end{recipenumerate}
\end{recipe}

Recipe \ref{arxiv2:recipe:robust.procedures} summarizes the implementation of my inference procedures for inputs given by (i) the estimates and covariance matrix estimator $(\htheta, \tSigma)$, for which I describe typical constructions in Section \ref{arxiv2:sec:implementation.event.studies} for event studies and Section \ref{arxiv2:sec:implementation.multisite.experiments} for multisite experiments; (ii) the significance levels $\alpha$ and $\beta$, which by default can be set to $\alpha = \beta = 0.05$; and (iii) the class of alternatives $\hL$. 

In some cases there may be ambiguity over which class of alternatives to use. In particular, the class $\hL=\hL(\kappa)$ may be indexed by a scalar parameter $\kappa\in\R$, where $\hL(\kappa)$ is increasing in $\kappa$ so that larger $\kappa$ permits weakly more disagreement over the choice of weights. One can execute Recipe \ref{arxiv2:recipe:robust.procedures} for a range of plausible $\kappa$, or report the \textit{breakdown value} $\kappa^{*}=\inf\{\kappa:\tau_{0}\in CI^{*}(\kappa)\}$ for a threshold value $\tau_{0}$, such as $\tau_{0}=0$. A large $\kappa^{*}$ means that substantial disagreement over weights must be allowed before the robust CI first includes $\tau_{0}$.\footnote{When inclusion of $\tau_{0}$ in $CI^{*}(\kappa)$ is monotone in $\kappa$, $\kappa^{*}$ is also the cutoff above which $\tau_{0}$ remains included.}

\begin{example*}[Bounded Variance, continued]
From the derivations in \eqref{arxiv2:eq:robust.estimator.GLS}, it follows that Recipe \ref{arxiv2:recipe:robust.procedures} is determined by the GLS output \eqref{arxiv2:eq:feasible.GLS.output}. In particular, for a lower one-sided $CI^{*}$ and baseline weights $\hw$, Recipe \ref{arxiv2:recipe:robust.procedures} under $\hL_{\sigma}(r) = \{\l \in \W: \tsigma_{\l} \leq r\tsigma_{\hw}\}$ for standard error ratio bound $r$ and standard error $\tsigma_{\hw} = \dsqrt{\hw'\tSigma\hw}$ yields
\begin{align*}
    \htau^{*} = \htau_{\GLS}, \quad \quad CI^{*}(r) = \left[\htau_{\GLS} - z_{1-\alpha}\tsigma_{\GLS} - \teta_{1-\beta}\sqrt{r^{2}\tsigma_{\hw}^{2} - \tsigma_{\GLS}^{2}}, \infty\right).
\end{align*}
Assuming $\teta_{1-\beta} > 0$, the breakdown value for a given threshold $\tau_{0}$ is
\begin{align*}
    r^{*} = \frac{\tsigma_{\GLS}}{\tsigma_{\hw}}\sqrt{\paren{\frac{\max\curly{T_{\GLS}(\tau_{0}) - z_{1-\alpha}, 0}}{\teta_{1-\beta}}}^{2} + 1}, \quad T_{\GLS}(\tau_{0}) = \frac{\htau_{\GLS} - \tau_{0}}{\tsigma_{\GLS}}.
\end{align*}
The breakdown value $r^{*}$ is large when the GLS $t$-statistic $T_{\GLS}(\tau_{0})$ is large relative to inferred heterogeneity $\teta_{1-\beta}$. For a given $r$, the above $CI^{*}(r)$ includes $\tau_{0}$ when $r \geq r^{*}$.
\end{example*}

\begin{example*}[Truncated Simplex, continued]
Given baseline weights $\hw$, one can execute Recipe \ref{arxiv2:recipe:robust.procedures} with $\hL_{+}(\epsilon) = \{\l \in \W_{+}: \l \geq (1-\epsilon)\hw\} = \{(1-\epsilon)\hw + \epsilon \l: \l \in \W_{+}\}$ for some range of truncation parameters $\epsilon$, or compute breakdown values $\epsilon^{*}$. To interpret magnitudes of $\epsilon$, consider the equal weights baseline $\hw = w_{\EW} = \1/K$. In this case, $\epsilon$ is the maximum possible discrepancy in weights $|\l_{k}-\l_{k'}|$ between two groups $k$ and $k'$. Moreover, note that the maximum amount of weight that can be reallocated across groups when moving from $w_{\EW}$ to $\l \in \L_{+}(\epsilon)$ is
\begin{align*}
    \paren{(1-\epsilon)\frac{1}{K} + \epsilon} - \frac{1}{K} = \frac{K-1}{K}\epsilon.
\end{align*}
On the other hand, the amount of weight that is reallocated when moving \textit{all} the weight from $J \in \{1,\ldots,K-1\}$ of the groups to the other $K-J$ groups (i.e., a leave-$J$-out procedure) is equal to $J/K$. Equating these two reallocation quantities yields the $\epsilon$ calibration
\begin{align}\label{arxiv2:eq:leave.out.calibration}
    \frac{K-1}{K}\epsilon = \frac{J}{K} \iff \epsilon = \frac{J}{K-1}.
\end{align}
Thus, in the EW baseline, choosing $\epsilon$ according to \eqref{arxiv2:eq:leave.out.calibration} allows for the same maximal reallocation of weight as simply dropping $J$ of the groups. In this sense, $\epsilon$ is a rescaled leave-$J$-out magnitude: $(K-1)\epsilon$ is the corresponding number of groups dropped, and $(K-1)\epsilon/K$ is the corresponding fraction of groups dropped. This calibration allows one to maintain the mathematical structure of $\L_{+}(\epsilon)$ while casting the logic of $\epsilon$-departures in terms of the potentially more familiar logic of leave-$J$-out procedures.
\end{example*}

\begin{example*}[Covariate Balance, continued]
Given covariates $\hbX$, one can execute Recipe \ref{arxiv2:recipe:robust.procedures} with $\hL_{X}(\Bar{c}) = \curly{\l \in \W_{+}: c_{\l}(\hbX) \leq \Bar{c}}$ for some range of balance gap bounds $\Bar{c}$, or compute breakdown values $\Bar{c}^{*}$. To interpret magnitudes of $\Bar{c}$, consider the group-level balance gaps
\begin{align}\label{arxiv2:eq:group.balance.gaps}
    c_{k}(\hbX) = \max_{m \in \{1,\ldots,M\}}\frac{\abs{\hbX_{m,k} - \hw'\hbX_{m}}}{\text{sd}(\hbX_{m})}.
\end{align}
Intuitively, $c_{k}(\hbX)$ is the balance gap that arises from a population with covariate means equal to those of group $k$. If one assumes that readers are interested in balance gaps within the range of $\{c_{k}(\hbX)\}_{k=1}^{K}$, then the deciles $(\Bar{c}_{d})_{d=1}^{9}$ of the group-level balance gaps provide one simple way to determine relevant magnitudes of $\Bar{c}$. For example, $\Bar{c} = \Bar{c}_{5}$ allows one to assess robustness to populations with covariate profiles that yield balance gaps no higher than the median group's gap. An analogous calibration can be formulated for truncated covariate balance class \eqref{arxiv2:eq:class.truncated.covariate.balance} using the scaled gap deciles $(\epsilon\Bar{c}_{d})_{d=1}^{9}$:
\begin{align}\label{arxiv2:eq:class.truncated.balance.scaled.deciles}
    \hL_{X}^{\epsilon}(\epsilon\Bar{c}_{d}) = \hL_{X}(\epsilon\Bar{c}_{d}) \cap \hL_{+}(\epsilon) = \curly{(1-\epsilon)\hw + \epsilon \l: \l \in \hL_{X}(\Bar{c}_{d})}, \quad \epsilon > 0.
\end{align}
For each $d$, this class considers $\epsilon$-departures from the baseline $\hw$ towards populations that have balance gaps no higher than decile $\Bar{c}_{d}$.
\end{example*}

\subsection{Current Practice and Implementation in Event Studies}\label{arxiv2:sec:implementation.event.studies}

\subsubsection{Current Practice in Event Studies}
To illustrate the empirical practice that motivates my proposed procedures, I survey a sample of recent event study papers. The sample focuses on 2024-2025 articles in the \textit{American Economic Review} (AER), \textit{American Economic Journal: Applied Economics} (AEJ: Applied), and \textit{American Economic Journal: Economic Policy} (AEJ: Policy) that cite at least one of the following canonical contributions to the event study literature on weighting and heterogeneity issues: \citet{de2020two}, \citet{callaway2021difference}, \citet{goodman2021difference}, \citet{sun2021estimating}, \citet{borusyak2024revisiting}. In particular, I use the economics literature search tool at \url{https://paulgp.com/econlit-pipeline/search.html} to locate papers whose textual references contain the titles of the five aforementioned papers. This title-citation screen is \textit{not} a representative sample of all event studies. It instead identifies papers whose authors are likely familiar with---and wish to account for---event-study weighting and heterogeneity issues. My survey examines how researchers engage with these issues.

The title-citation screen yields 66 papers. I manually classify 64 as containing an event-study or dynamic DiD design.\footnote{I coded initial classifications based on manual reviews of the published main-text and online appendix PDFs of the 66 papers. I then used Codex as a second-pass consistency check against the coding spreadsheet and the published papers' main-text PDFs. I made all final coding decisions.} Of those, I classify 60 as using a generalized staggered-timing design.\footnote{Here ``generalized'' refers to designs with staggered or otherwise varying treatment timing or exposure across groups, including designs in which the relevant variation is not temporal. For example, \citet[Online Appendix, page 23]{auer2025communication} consider a setting with spatial variation: ``While the treatments discussed in the literature so far have not covered our case of spacial variation (instead of variation over time) and switching of treatment, it is obvious from Figure B.1 that we are aggregating across different 2x2 DiDs. Thus similar problems of potentially negative implicit weights might arise.''} These 60 papers are the main survey denominator.\footnote{It is useful to benchmark this number to the number of articles published in AER, AEJ: Applied, and AEJ: Policy during 2024-2025. There are 110 articles for AER in 2024 \citep{10.1257/pandp.115.743}, 115 for AER in 2025 \citep{10.1257/pandp.116.742}, 64 for AEJ: Applied in 2024 \citep{10.1257/pandp.115.779}, 44 for AEJ: Applied in 2025 \citep{10.1257/pandp.116.780}, 64 for AEJ: Policy in 2024 \citep{10.1257/pandp.115.789}, and 61 for AEJ: Policy in 2025 \citep{10.1257/pandp.116.790}, for 458 articles in total. The 66 title-citation hits and 60 included papers therefore represent 14.4 and 13.1 percent of this article universe.} For each paper, I record whether it reports a conventional TWFE event-study specification; any of five core heterogeneity-robust (HR) methods (Sun-Abraham, Callaway-Sant'Anna, stacked DiD, Borusyak-Jaravel-Spiess imputation, de Chaisemartin-D'Haultfoeuille); any non-core alternative intended to address weighting or heterogeneity concerns; and any diagnostic method for assessing such concerns.\footnote{An example non-core alternative is the two-stage DiD method of \citet{gardner2022two}. An example diagnostic is the \citet{goodman2021difference} decomposition.} I count a method as reported when its estimates or results are given in either the main text or online appendix. Table \ref{arxiv2:tab:event-study-survey-summary} summarizes the survey. For additional details, including the list of 60 papers and their method codes, see Appendix \ref{arxiv2:app:event-study-survey}.

\begin{table}[!htbp]
\centering
\small
\caption{Reporting of TWFE and HR event-study methods}
\label{arxiv2:tab:event-study-survey-summary}
\vspace{0.5em}
\begin{threeparttable}
\begin{tabular}{p{0.48\textwidth}cc}
\toprule
Reporting outcome & All included papers & Among TWFE papers \\
 & $(N=60)$ & $(N=47)$ \\
\midrule
Reports TWFE event-study specification & 47 (78.3\%) & -- \\
Reports core HR method & 49 (81.7\%) & 37 (78.7\%) \\
Reports exactly one core HR method & 36 (60.0\%) & 27 (57.4\%) \\
Reports core HR method or other alternative & 57 (95.0\%) & 44 (93.6\%) \\
Reports weighting or heterogeneity diagnostic & 12 (20.0\%) & 12 (25.5\%) \\
\bottomrule
\end{tabular}
\vspace{0.35em}
\footnotesize
\textit{Notes}: The included-paper sample consists of the 60 manually classified papers described in the text. The core HR methods are Sun-Abraham, Callaway-Sant'Anna, stacked DiD, Borusyak-Jaravel-Spiess imputation, and de Chaisemartin-D'Haultfoeuille. ``Other alternative'' refers to a non-core alternative intended to address weighting or heterogeneity concerns.
\end{threeparttable}
\end{table}

TWFE is common: 47 of the 60 included papers report a TWFE event-study specification. Among the 47 TWFE papers, 37 report a core HR method and 44 report either a core HR method or another alternative. Across all 60 papers, 49 report at least one core HR method---with 13 reporting two or more---and 12 report a weighting or heterogeneity diagnostic. I view these patterns as evidence for the motivation given in Section \ref{arxiv2:sec:introduction}: researchers often assess robustness by comparing results across weighting choices. As I noted previously, such comparisons may have limited scope for establishing robustness to the choice of weights.

\subsubsection{Implementation in Event Studies}

\paragraph{Context.} Following the exposition in Section \ref{arxiv2:sec:setting}, an event study has units $i$ treated at different time periods $t$, allowing one to estimate the dynamic effects of treatment on outcomes $Y_{it}$. In this setting, two key sources of heterogeneity are (i) across treatment cohorts $G_{i}$, which denote the periods of initial treatment for units $i$; and (ii) across \textit{event times} $\ell \geq 0$, which denote the number of periods since initial treatment. Given the $\ATT_{g,t}$ object in \eqref{arxiv2:eq:ATT.object}, the average treatment effect $\ell$ periods after initial treatment for cohort $G_{i} = g$ is
\begin{align*}
    \ATT_{g,g+\ell} = E[Y_{i,g+\ell}(g) - Y_{i,g+\ell}(\infty)|G_{i} = g].
\end{align*}
Under event-study assumptions, \citet{sun2021estimating} show that event-time $\ell$ coefficients from conventional TWFE regression specifications recover estimands of the form
\begin{align*}
    \TWFE_{\ell}(\Tilde{w}_{\ell}) = \sum_{g, \ell'}\Tilde{w}_{\ell,(g,\ell')}\ATT_{g,g+\ell'},
\end{align*}
which allow treatment effects from different event times $\ell' \neq \ell$ and negative weights in $\Tilde{w}_{\ell}$. As surveyed in \citet{roth2023s}, the recent event studies literature proposes heterogeneity-robust (HR) methods that, for weights $w_{\ell} > 0$, recover event-time estimands of the form
\begin{align*}
    \ATT_{\ell}(w_{\ell}) = \sum_{g}w_{\ell,g}\ATT_{g,g+\ell}, 
\end{align*}
preventing negative weighting and excluding treatment effects from different event times. 

For each target event time $\ell$, a researcher may report conventional estimators and CIs for both a TWFE specification and one of the proposed HR approaches and examine the stability of results. There are two ways to map this setup to the notation of Recipe \ref{arxiv2:recipe:robust.procedures}.
\begin{itemize}
    \item Consider TWFE as the baseline and estimate $(\Tilde{w}_{\ell, (g,\ell')}, \ATT_{g,g+\ell'})$ across $(g,\ell')$.
    \item Consider an HR method as the baseline and estimate $(w_{\ell,g}, \ATT_{g,g+\ell})$ across $g$.
\end{itemize}
While it is possible to do the former (see \citet[equations (13) and (27)]{sun2021estimating}), the latter has the advantage of automatically shutting off the $\ell' \neq \ell$ channels and reducing the number of objects to be estimated. In fact, the ingredients required for implementing Recipe \ref{arxiv2:recipe:robust.procedures} are readily available in many applications of HR methods. For example, the \citet{sun2021estimating} procedure already requires estimation of $(w_{\ell,g}, \ATT_{g,g+\ell})$, and so Recipe \ref{arxiv2:recipe:robust.procedures} can be readily executed in applications that were already using the \citet{sun2021estimating} procedure, or any other procedure that first estimates ATT objects and their corresponding joint covariance matrix, such as the doubly robust procedure of \citet{callaway2021difference}, or stacked DiD procedures as in \citet[page 24]{wing2024stacked}.

\paragraph{Implementation.} I now use $k$ to index the groups $g$. For each target event time $\ell$, denote the number of available cohorts by $K_{\ell}$. Take an existing HR method for which one can estimate baseline weights $\hw_{\ell} \overset{a}{\approx} w_{\ell}$ and ATT objects
\begin{align*}
    \htheta_{\ell} = \paren{\widehat{\ATT}_{k,k+\ell}}_{k=1}^{K_{\ell}} \overset{a}{\sim} N\paren{\theta_{\ell}, \tSigma_{\ell}},
\end{align*}
where $\tSigma_{\ell}$ is a covariance matrix estimator. Let $\tsigma_{\hw,\ell} = \dsqrt{\hw_{\ell}'\tSigma_{\ell}\hw_{\ell}}$ denote the plug-in standard error for $\htau_{\hw,\ell} = \hw_{\ell}'\htheta_{\ell}$. Following equation \eqref{arxiv2:eq:class.simplex.bounded.variance}, consider the bounded variance simplex class
\begin{align*}
    \hL_{\sigma, \ell}^{+}(r) = \curly{\l \in \W_{+,\ell}: \tsigma_{\l,\ell} \leq r\tsigma_{\hw,\ell}}, \quad \W_{+,\ell} = \curly{w \in \R^{K_{\ell}}: \1'w = 1, w \geq 0}.
\end{align*} 
Executing Recipe \ref{arxiv2:recipe:robust.procedures} with $\hL_{\sigma, \ell}^{+}(r)$ produces $(\htau_{\ell}^{*}(r), CI_{\ell}^{*}(r))$. For the choice of standard error ratio bound $r$, I propose the following implementation.
\begin{enumerate}
    \item Choose $r=1$, allowing one to assess robustness to the class of nonnegative weights that yield standard errors no larger than the HR baseline $\tsigma_{\hw,\ell}$. This is the minimal $r$ for which the baseline weights are included in $\hL_{\sigma, \ell}^{+}(r)$.
    \item Report breakdown value $r_{\ell}^{*}$, which gives the smallest value of $r$ at which the robust $CI_{\ell}^{*}(r)$ includes a threshold value $\tau_{0}$ (e.g., $\tau_{0} = 0$). If $r_{\ell}^{*}$ is large, then the robust CI includes $\tau_{0}$ only after the class is expanded to allow standard errors that are large relative to $\tsigma_{\hw,\ell}$.
\end{enumerate}
I apply this implementation to \citet{lakdawala2023dynamic} in Section \ref{arxiv2:sec:application.lakdawala}.

\subsection{Implementation in Multisite Experiments}\label{arxiv2:sec:implementation.multisite.experiments}
Following the exposition in Section \ref{arxiv2:sec:setting}, a multisite experiment has sites $k$ containing units $i$ over which ATEs are estimated:
\begin{align*}
    \htheta = \paren{\widehat{\ATE}_{k}}_{k=1}^{K} \overset{a}{\sim} N\paren{\theta, \tSigma},
\end{align*}
where $\tSigma$ is a covariance matrix estimator. Given independent sites, it suffices to obtain site-level objects $(\htheta_{k}, \tsigma_{k}, \widehat{E}_{k}[X_{i}])$, where estimated variances $\tsigma_{k}^{2}$ can be collected into a diagonal matrix $\tSigma$ and estimated covariate means $\widehat{E}_{k}[X_{i}]$ can be collected into a matrix $\hbX = (\widehat{E}_{1}[X_{i}], \ldots, \widehat{E}_{K}[X_{i}])'$. Given microdata on units $i$ from the sites, one can directly estimate $\widehat{E}_{k}[X_{i}]$ with sample means and $(\htheta_{k}, \tsigma_{k})$ via regression. For example, letting $\I{i \in k}$ be an indicator for membership in site $k$, one can estimate the (no-intercept) saturated model
\begin{align*}
    Y_{i} = \sum_{k=1}^{K}\gamma_{0k}\I{i \in k} + \sum_{k=1}^{K}\theta_{k}D_{i}\I{i \in k} + u_{i}
\end{align*}
with OLS to obtain ATE estimates $\htheta_{k}$ and heteroskedasticity-robust standard errors $\tsigma_{k}$.

The baseline weights $\hw \in \W_{+}$ represent a particular population/distribution over the set of sites. To model departures from the baseline population, one can use the truncated simplex class $\hL_{+}(\epsilon)$ defined in \eqref{arxiv2:eq:class.truncated.simplex} or the covariate balance class $\hL_{X}(\Bar{c})$ defined in \eqref{arxiv2:eq:class.covariate.balance}. 
\begin{itemize}
    \item Departures in $\hL_{+}(\epsilon)$ ensure that groups $k$ represented in the baseline population remain represented in the alternative population: $\hw_{k} > 0 \implies \l_{k} > 0$ for each $\l \in \hL_{+}(\epsilon)$ when $\epsilon < 1$.
    \item Departures in $\hL_{X}(\Bar{c})$ incorporate information about group-level covariates $\hbX$.
\end{itemize}
To obtain benefits from both approaches, I propose the following implementation based on the truncated covariate balance class $\hL_{X}^{\epsilon}(\epsilon \Bar{c}) = \hL_{X}(\epsilon\Bar{c}) \cap \hL_{+}(\epsilon)$ with scaled gap $\epsilon\Bar{c}$ in view of \eqref{arxiv2:eq:class.truncated.balance.scaled.deciles}.
\begin{enumerate}
    \item Choose the equal weights baseline $\hw = w_{\EW} = \1/K$, which represents an empirical distribution over the sites, under which all groups $k$ will be represented in the alternative populations within $\hL_{X}^{\epsilon}(\epsilon\Bar{c}) = \hL_{X}(\epsilon\Bar{c}) \cap \L_{+}(\epsilon)$ given $\epsilon < 1$.
    \item Formulate a choice of truncation parameter $\epsilon$. Following \eqref{arxiv2:eq:leave.out.calibration}, one can interpret $\epsilon$ in terms of the leave-$J$-out calibration.
    \item Execute Recipe \ref{arxiv2:recipe:robust.procedures} with $\hL_{X}^{\epsilon}(\epsilon\Bar{c})$ for a range of balance gap bounds $\Bar{c}$. Following \eqref{arxiv2:eq:group.balance.gaps} and \eqref{arxiv2:eq:class.truncated.balance.scaled.deciles}, one can use the site-level balance gap deciles: $\Bar{c} \in \{\Bar{c}_{d}\}_{d=1}^{9}$.
\end{enumerate}
I apply this implementation to Project STAR in Section \ref{arxiv2:sec:application.STAR}.

\section{Empirical Applications}\label{arxiv2:sec:applications}
I now apply my procedures to the event study in \citet{lakdawala2023dynamic} and the multisite experiment in Project STAR, following the implementations described in Sections \ref{arxiv2:sec:implementation.event.studies} and \ref{arxiv2:sec:implementation.multisite.experiments}. Below I use significance levels $\alpha = \beta = 0.05$ and hence confidence levels $1-\alpha = 95\%$ for conventional CIs, $1-\beta = 95\%$ for heterogeneity UCBs, and $1-(\alpha+\beta)=90\%$ for robust CIs.

\subsection{Application to Lakdawala, Nakasone and Kho (2023)}\label{arxiv2:sec:application.lakdawala}
\citet{lakdawala2023dynamic}---henceforth LNK---study the rollout of internet access in Peruvian public primary schools to estimate the causal effects of school-based internet access on second grade test scores. The estimating sample contains schools that either gained internet access between 2007 and 2020 or remained unconnected by 2020, with second grade math and reading test scores observed from 2007--2016. LNK use a TWFE event-study specification: after absorbing school fixed effects, calendar-time controls, and time-varying school and student controls, the event-time coefficients exploit variation across schools with different installation timing. LNK's event-study figures report school-level dynamic effects for successive cohorts of second graders, with scores standardized within calendar year and effects interpreted relative to the year before internet installation.

LNK's Figure 2 shows a delayed achievement response. The TWFE event-study estimates, replicated in Figure \ref{arxiv2:fig:lakdawala-simplex-robust}, are small immediately after internet installation and grow over time. In LNK's page 234 description of the medium-run effect, ``By year 5, scores are 0.110 standard deviations higher for math'' and ``0.063 standard deviations higher for reading'' relative to the year before installation. These dynamics are central to the interpretation of the intervention because LNK argue that schools require time to adapt to new internet access. The empirical target is therefore a dynamic path of treatment effects.

\paragraph{Weighting Issues.} Following Section \ref{arxiv2:sec:implementation.event.studies}, let cohorts $k$ index the first year in which a school gains internet access and event times $\ell$ denote the time relative to first internet access. LNK's pages 241-42 discuss weighting issues: ``Recent work \ldots raises some important issues with using two-way fixed effects estimation when there is staggered timing of treatment. We show that these issues do not explain our main estimates by illustrating robustness to the estimator proposed in Sun and Abraham (2021) \ldots The patterns and magnitudes of these estimates are very similar to our main estimates.'' The \citet{sun2021estimating}---henceforth SA---estimator corresponds to weights $\hw_{\SA,\ell}$. LNK's Figure A.12, replicated in Figure \ref{arxiv2:fig:lakdawala-simplex-robust}, is therefore a robustness check that replaces the TWFE event-study aggregation with the SA aggregation in order to assess whether results are robust to alternative weights.\footnote{LNK use a bootstrap procedure to construct the SA CIs. Their replication files estimate cohort-specific event-study coefficients using the same controls and fixed effects as the main specification, drop cohorts not identified relative to the omitted period, aggregate cohort-event coefficients using the SA event-time weights, and bootstrap the aggregated coefficients using 500 replications stratified by first access year and clustered by school. The CIs reported here instead use the plug-in standard error computed from the estimated covariance matrix $\tSigma_{\ell}$ for the cohort-event coefficient vector $\htheta_{\ell}$, treating the SA weights as fixed. Thus, while I replicate the same SA estimates, my reported SA CIs are not identical to LNK's bootstrap SA CIs.} This suggests a sharper robustness question: are the substantive conclusions robust only to the SA weights, or to a broader class of plausible event-time weights? To answer this question, I compute robust estimators and CIs for classes of alternative weights that nest the SA weights.

\paragraph{Measuring Heterogeneity.} Table \ref{arxiv2:tab:lakdawala-heterogeneity-ucb} reports the estimated event-time heterogeneity UCBs $\teta_{\ell}$ used to construct the robust CIs. For event times with $\teta_{\ell}=0$, the heterogeneity adjustment in the respective robust CI drops out. The last row reports $K_{\ell}$, the total number of cohorts $k$ at each event time $\ell$. Across all but one event time, the heterogeneity UCB is lower for the math cohort-level estimates than for the reading cohort-level estimates, previewing the greater degree of robustness for the math results. 

\begin{table}[!htbp]
\centering

\caption{Heterogeneity UCB $\teta_{\ell}$ and number of event-time cohorts $K_{\ell}$}
\label{arxiv2:tab:lakdawala-heterogeneity-ucb}

\vspace{0.5em}
\begin{threeparttable}
\begin{tabular}{lccccccc}
\toprule
Quantity & $\ell=0$ & $\ell=1$ & $\ell=2$ & $\ell=3$ & $\ell=4$ & $\ell=5$ & $\ell=6$ \\
\midrule
$\teta_{\ell}$ for Math & 0.00 & 2.77 & 2.41 & 2.93 & 3.29 & 2.70 & 0.00 \\
$\teta_{\ell}$ for Reading & 2.58 & 2.69 & 4.42 & 3.54 & 5.59 & 4.53 & 2.79 \\
Number of cohorts $K_{\ell}$ & 9 & 8 & 7 & 6 & 5 & 4 & 3 \\
\bottomrule
\end{tabular}
\end{threeparttable}
\end{table}

\paragraph{Bounded Variance Simplex Class.} Consider the bounded variance simplex
\begin{align*}
    \hL^{+}_{\sigma,\ell}(r) = \curly{\l \in \W_{+,\ell}: \tsigma_{\l,\ell} \leq r\tsigma_{\SA,\ell}}, \quad \tsigma_{\SA,\ell} = \sqrt{\hw_{\SA,\ell}'\tSigma_{\ell}\hw_{\SA,\ell}},
\end{align*}
with $r=1$, which corresponds to the class of simplex-weighted estimands that can be estimated at least as precisely as the SA estimand. For each event time $\ell$, I compute the robust $CI^{*}_{\ell}(1)$ centered at the robust estimator $\htau_{\ell}^{*}(1)$. Figure \ref{arxiv2:fig:lakdawala-simplex-robust} presents three series: the TWFE CI centered at the TWFE estimate, the SA CI centered at the SA estimate, and the robust CI centered at the robust estimate. For math, the robust CIs are positive for event times 1 through 6 and only include zero at event time 0. At event time 5, the robust estimate is $0.099$ with CI $[0.055,0.143]$, close to both the TWFE and SA estimates. Thus, LNK's medium-run math conclusion is robust to alternative simplex weights that yield estimators at least as precise as the SA estimator. For reading, the robust estimates remain close to the SA path, but the conclusion that results are significantly positive is somewhat more sensitive. The robust CIs are positive for event times 1 through 4 and 6, while event time 5 barely includes zero: the robust estimate is $0.042$ with CI $[-0.0002,0.084]$. 

\begin{figure}[!htbp]
\centering

\caption{Impact of Internet Access on Test Scores}
\label{arxiv2:fig:lakdawala-simplex-robust}

\vspace{1em}

\begin{minipage}{0.82\textwidth}
\centering
\textbf{Panel A. Standardized Math Scores}\\[0.5em]

\includegraphics[width=\textwidth]{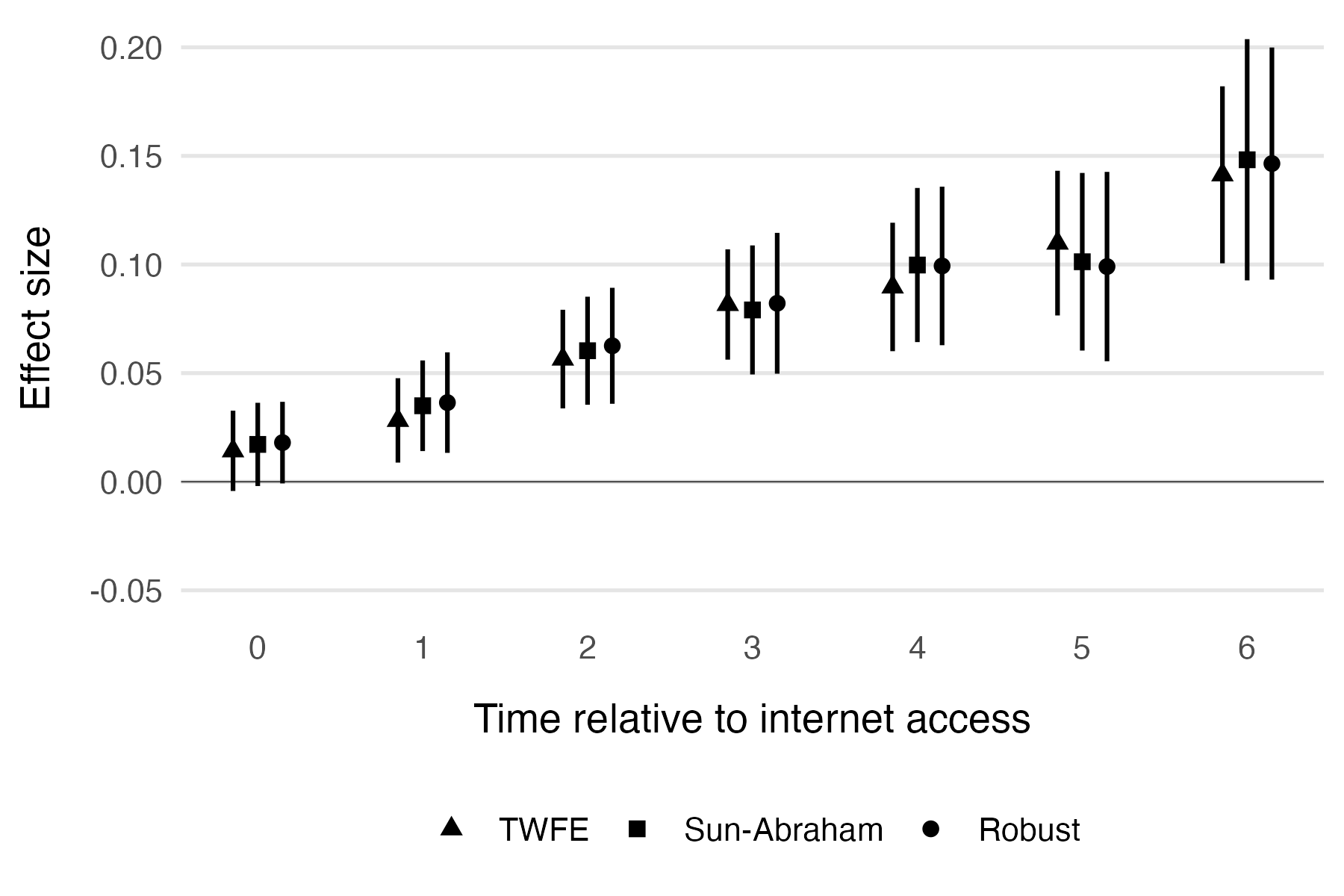}
\end{minipage}

\vspace{1.5em}

\begin{minipage}{0.82\textwidth}
\centering
\textbf{Panel B. Standardized Reading Scores}\\[0.5em]

\includegraphics[width=\textwidth]{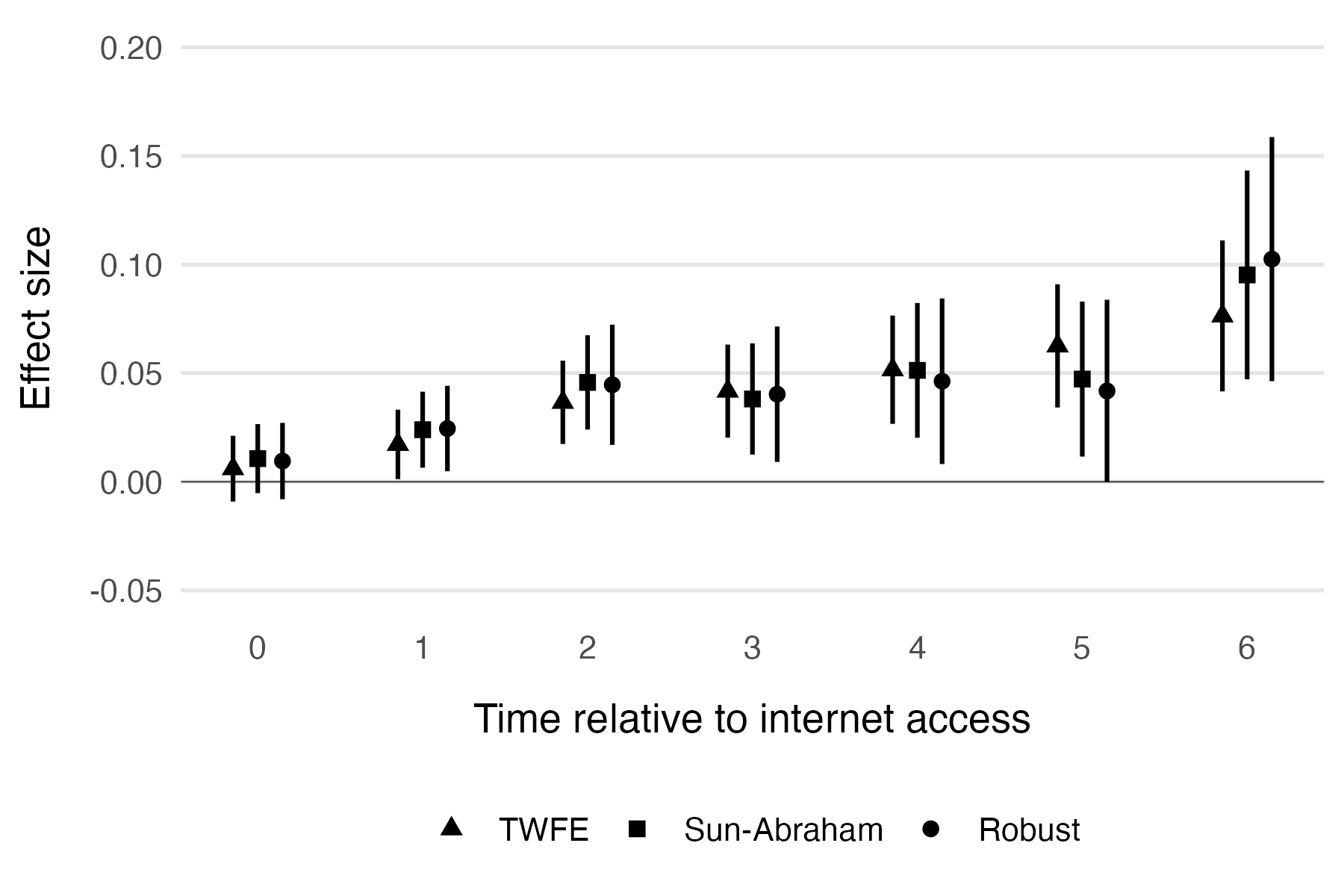}
\end{minipage}


\end{figure}

Table \ref{arxiv2:tab:lakdawala-case-i-simplex-breakdown} presents
breakdown values $r^{*}_{\ell}$, which for each $\ell$ gives the smallest value of $r$ at which
the robust $CI^{*}_{\ell}(r)$ includes zero for the class $\hL_{\sigma,\ell}^{+}(r)$. A value of $1.55$, for example, means that the robust CI begins to include zero once the allowable $\tsigma$ bound is expanded to about $1.55$ times the SA $\tsigma$. A dash means that zero is not reached even when considering the simplex without the variance constraint.\footnote{If $\teta_{\ell}=0$, I still compute breakdown values using the same definition: report the smallest feasible value of the robustness parameter at which the robust CI contains zero, and report ``--'' if zero is never reached over the full class. This convention implies, for example, that math at $\ell=0$ has $r^{*}_{0}<1$ because zero is already included at the smallest feasible bounded-variance simplex class, while math at $\ell=6$ has no finite breakdown value because zero is not reached even when the variance constraint is no longer binding over the simplex.} The breakdown values reinforce the visual evidence. For math, event times 2 through 5 tolerate substantial increases in $r$ before the robust CI includes zero, with $r^{*}_{5}=1.55$ at the headline year-5 event time. Event time 6 is even stronger in this sense: zero is not reached over the full simplex. For reading, by contrast, the breakdown values are close to one for event times 1 through 5, and event time 5 already includes zero at $r=1$. Overall, these results support the baseline math conclusions strongly, but not as strongly for the reading conclusions.

\begin{table}[!htbp]
\centering

\caption{Breakdown $r_{\ell}^{*}$ for $0 \in CI_{\ell}^{*}(r)$ under $\hL_{\sigma,\ell}^{+}(r)$}
\label{arxiv2:tab:lakdawala-case-i-simplex-breakdown}

\vspace{0.5em}
\begin{threeparttable}
\begin{tabular}{lccccccc}
\toprule
Outcome & $\ell=0$ & $\ell=1$ & $\ell=2$ & $\ell=3$ & $\ell=4$ & $\ell=5$ & $\ell=6$ \\
\midrule
$r_{\ell}^{*}$ for Math & 0.98 & 1.18 & 1.69 & 1.63 & 1.54 & 1.55 & -- \\
$r_{\ell}^{*}$ for Reading & 0.98 & 1.06 & 1.12 & 1.06 & 1.02 & 1.00 & 1.34 \\
\bottomrule
\end{tabular}
\end{threeparttable}
\end{table}

\subsection{Application to Project STAR}\label{arxiv2:sec:application.STAR}
The Tennessee Project STAR (Student/Teacher Achievement Ratio) experiment randomized students in seventy-nine Tennessee public elementary schools to classes with different numbers of students to estimate the causal effects of class size on test scores \citep{achilles2008tennessee, krueger1999experimental}. The experimental arms were (i) regular-sized classes (22-25 students), (ii) small classes (13-17 students), and (iii) regular-sized classes with a teacher aide. To define the sample for my analysis, I drop observations in the teacher-aide arm and focus on the effect of being assigned to a small class versus a regular-sized class. Following \citet{goldsmith2024contamination}, I focus on the effects of treatment in kindergarten, which mitigates attrition and other complications. This leaves $K=78$ schools and a total of $n=3783$ students. Following \citet{schanzenbach2006have}, I define the outcome as the average of a student's math and reading scores on the SAT, where the scores are demeaned and standardized relative to control-group scores. This puts the outcome into effect size units that have benchmarks in causal studies of preK-12 education interventions on student achievement: an effect size $\tau_{0}$ is small if $\tau_{0} < 0.05$, medium if $0.05 \leq \tau_{0} < 0.20$, and large if $\tau_{0} \geq 0.20$ \citep{kraft2020interpreting}. 

In my notation, $\theta_{k}$ is the ATE of being assigned to a small class on the test score outcome at school $k$, $\htheta_{k}$ is the estimated ATE at school $k$ from regressing the outcome on school-specific treatment indicators and fixed effects, $\tSigma$ is the heteroskedasticity-robust covariance matrix from this regression, and the baseline weights are equal weights (EW) $w = w_{\EW} = \1/K$ so that $\tau_{w}(\theta) = \tau_{\EW}(\theta)$ is the ATE in the empirical distribution of Project STAR schools. The baseline results establish medium-to-large positive effects of small class size on test scores. In particular, the EW estimate is $\htau_{\EW}=0.188$ with standard error $\tsigma_{\EW}=0.028$, yielding conventional EW CI $[0.133, 0.242]$. My baseline estimate is consistent with the one in \citet[Table 4]{schanzenbach2006have}. However, my standard error is about 0.01 smaller. In particular, \citet{schanzenbach2006have} clusters her standard errors at the classroom level. By contrast, I follow \citet{goldsmith2024contamination} and use heteroskedasticity-robust standard errors since the randomization of students to classrooms was at the individual level.\footnote{This is consistent with the when-to-cluster framework of \citet{abadie2023should}.}

\paragraph{Weighting Issues.} As reviewed by \citet{schanzenbach2006have}, Project STAR has been used in many policy discussions. However, \citet[page 207]{schanzenbach2006have} notes external validity concerns: 
\begin{itemize}
    \item ``A few aspects of the sample may limit the validity of generalizing the study to other settings. In order to be eligible to participate in the program, schools were required to have a minimum-size cohort of 57 students \ldots As a result, the schools that participated were about 25\% larger, on average, than other Tennessee schools'';
    \item ``Because of requirements imposed by the legislature for geographic diversity, schools in inner cities were overrepresented, and the students included were more economically disadvantaged and more likely to be African American'';
    \item ``Finally, average school spending in Tennessee was about three-fourths of the nationwide average, and teachers were less likely to have a master’s degree.''
\end{itemize}
Thus, the Project STAR results may not be representative of Tennessee (or U.S.) schools more broadly. Formally, the EW estimand averages over the empirical distribution of STAR schools, while alternative estimands of interest may place unequal weights across the STAR schools. This suggests a robustness question: are the baseline STAR results robust to alternative choices of weights/distributions? To answer this question, I compute robust estimators and CIs for classes of alternative weights that nest the EW weights, following implementations from Section \ref{arxiv2:sec:implementation.multisite.experiments}.

\paragraph{Measuring Heterogeneity.} The estimated heterogeneity UCB is $\teta = 16.214$, which is over twice as large as the baseline $t$-statistic $\htau_{\EW}/\tsigma_{\EW} = 6.714$. This suggests a meaningful degree of heterogeneity in the school-level ATE estimates, which previews the sensitivity of baseline results to the choice of weights.

\paragraph{Truncated Simplex Class.} I first consider the truncated simplex class 
\begin{align*}
    \L_{+}(\epsilon) = \curly{\l \in \W_{+}: \l \geq (1-\epsilon)/K} = \curly{(1-\epsilon)w_{\EW} + \epsilon \l: \l \in \W_{+}},
\end{align*}
which collects the set of $\epsilon$-deviations in weight distribution from the STAR empirical distribution. For each $\tau_{0} \in \{0, 0.05\}$, Table \ref{arxiv2:tab:star-truncated-simplex-breakdown} presents the breakdown value $\epsilon^{*}$, which is the smallest value of $\epsilon$ at which $\tau_{0}$ is included in the robust $CI^{*}(\epsilon)$ centered at robust estimate $\htau^{*}(\epsilon)$ with standard error $\tsigma^{*}(\epsilon)$. The breakdown value for $\tau_{0} = 0.05$ is $\epsilon^{*} = 0.0169$, while the breakdown value for $\tau_{0} = 0$ is $\epsilon^{*} = 0.0261$. These are both small departures from EW: even though the corresponding robust estimates are larger than the baseline EW estimate $\htau_{\EW} = 0.188$, the robust CIs indicate that baseline inferences are robust to only small departures from EW.

\begin{table}[!htbp]
\centering
\caption{Breakdown $\epsilon^{*}$ for $\tau_{0} \in CI^{*}(\epsilon)$ under $\L_{+}(\epsilon)$}

\label{arxiv2:tab:star-truncated-simplex-breakdown}
\vspace{0.5em}
\begin{threeparttable}
\begin{tabular}{lcccc}
\toprule
Target & $\epsilon^*$ & $\htau^{*}(\epsilon^{*})$ & $\tsigma^{*}(\epsilon^{*})$ & $CI^{*}(\epsilon^{*})$ \\
\midrule
$\tau_0=0.05$ & 0.0169 & 0.191 & 0.029 & [0.050, 0.332] \\
$\tau_0=0$ & 0.0261 & 0.193 & 0.029 & [0.000, 0.386] \\
\bottomrule
\end{tabular}
\end{threeparttable}
\end{table}

\paragraph{Truncated Covariate Balance Class.} I now intersect the truncated simplex $\L_{+}(\epsilon)$ with the covariate balance class $\hL_{X}(\epsilon\Bar{c}) = \{\l \in \W_{+}: c_{\l}(\hbX) \leq \epsilon\Bar{c}\}$, where $\epsilon > 0$, yielding
\begin{align*}
    \hL_{X}^{\epsilon}(\epsilon\Bar{c}) &= \curly{\l \in \L_{+}(\epsilon): c_{\l}(\hbX) \leq \epsilon\Bar{c}}, \quad c_{\l}(\hbX) = \max_{m \in \{1,\ldots,M\}}\frac{\abs{\l'\hbX_{m} - \mu(\hbX_{m})}}{\text{sd}(\hbX_{m})}, \quad \mu(\hbX_{m}) = w_{\EW}'\hbX_{m}, \\[5pt]
    &= \curly{(1-\epsilon)w_{\EW} + \epsilon \l: \l \in \hL_{X}(\Bar{c})}
\end{align*}
In particular, given school-level covariates $\hbX_{m} \in \R^{K}$ indexed by $m=1,\ldots,M$, the class considers $\epsilon$-deviations from the STAR baseline toward alternative populations with balance gaps no more than $\Bar{c}$. Table \ref{arxiv2:tab:star-covariate-summary} reports the $M=7$ Project STAR school-level covariates used in my analysis.

\begin{table}[!htbp]
\centering
\caption{Means and standard deviations of Project STAR covariates}
\label{arxiv2:tab:star-covariate-summary}
\vspace{0.5em}
\begin{threeparttable}
\begin{tabular}{lcc}
\toprule
Covariate & $\mu(\hbX_m)$ & $\operatorname{sd}(\hbX_m)$ \\
\midrule
Number of students & 80.65 & 25.67 \\
Fraction students female & 0.49 & 0.06 \\
Fraction students White or Asian & 0.71 & 0.39 \\
Average years of experience of students' teachers & 9.32 & 2.67 \\
Fraction students on FRPL & 0.47 & 0.27 \\
Fraction students with a White teacher & 0.87 & 0.25 \\
Fraction students with a teacher holding a master's degree or above & 0.36 & 0.27 \\
\bottomrule
\end{tabular}
\end{threeparttable}
\end{table}

I fix $\epsilon = \epsilon^{*}(0) = 0.0261$ at the truncated simplex breakdown value for $\tau_{0} = 0$ from the previous analysis and consider $\Bar{c} \in \{\Bar{c}_{d}\}_{d=1}^{9}$ equal to the deciles of the school-level balance gaps $c_{k}(\hbX)$. Intuitively, $c_{k}(\hbX)$ is the balance gap from a population that has covariate means $\hbX_{m,k}$ equal to those of school $k$. The deciles $\Bar{c}_{d}$ presented in Table \ref{arxiv2:tab:star-balance-gap-deciles} therefore represent typical values for the raw balance gaps $\Bar{c}$. For example, $\hL_{X}^{\epsilon}(\epsilon \Bar{c}_{5})$ consists of $\epsilon^{*}(0)$-deviations towards populations with covariate profiles that yield balance gaps no higher than the median school's gap $\Bar{c}_{5}$.

\begin{table}[!htbp]
\centering
\caption{Deciles of STAR school-level balance gaps}
\begin{threeparttable}
\label{arxiv2:tab:star-balance-gap-deciles}
\begin{tabular}{lccccccccc}
\toprule
Quantity & $d=1$ & $d=2$ & $d=3$ & $d=4$ & $d=5$ & $d=6$ & $d=7$ & $d=8$ & $d=9$ \\
\midrule
Gap decile $\Bar{c}_{d}$ & 0.805 & 1.040 & 1.310 & 1.419 & 1.635 & 1.792 & 1.880 & 2.222 & 2.497 \\
\bottomrule
\end{tabular}
\end{threeparttable}
\end{table}

Figure \ref{arxiv2:fig:truncated-balance-deciles} presents robust CIs centered at robust estimates under $\hL_{X}^{\epsilon}(\epsilon \Bar{c}_{d})$ for $\epsilon = \epsilon^{*}(0)$ and $d \in \{1,\ldots,9\}$. Across every balance gap decile $\Bar{c}_{d}$, the robust CIs include $\tau_{0} = 0.05$. Thus, while the covariate balance constraints allow the robust CIs to exclude zero relative to imposing only the truncated simplex constraint, this is not enough to allow one to robustly conclude that effects are at least medium-sized under $\epsilon^{*}(0)$-departures from EW. Overall, these results provide further support for the conclusion that baseline STAR inferences are robust to only small departures from EW. There is too much heterogeneity for the conclusions of a baseline CI to be (uniformly) informative for alternative weights that depart meaningfully from the baseline weights.

\begin{figure}[!htbp]
\centering
\caption{Robust CIs for $\hL_{X}^{\epsilon}(\epsilon\Bar{c}_{d})$ at $\epsilon = \epsilon^{*}(0)$}
\label{arxiv2:fig:truncated-balance-deciles}

\vspace{0.05em}

\includegraphics[width=0.85\textwidth]{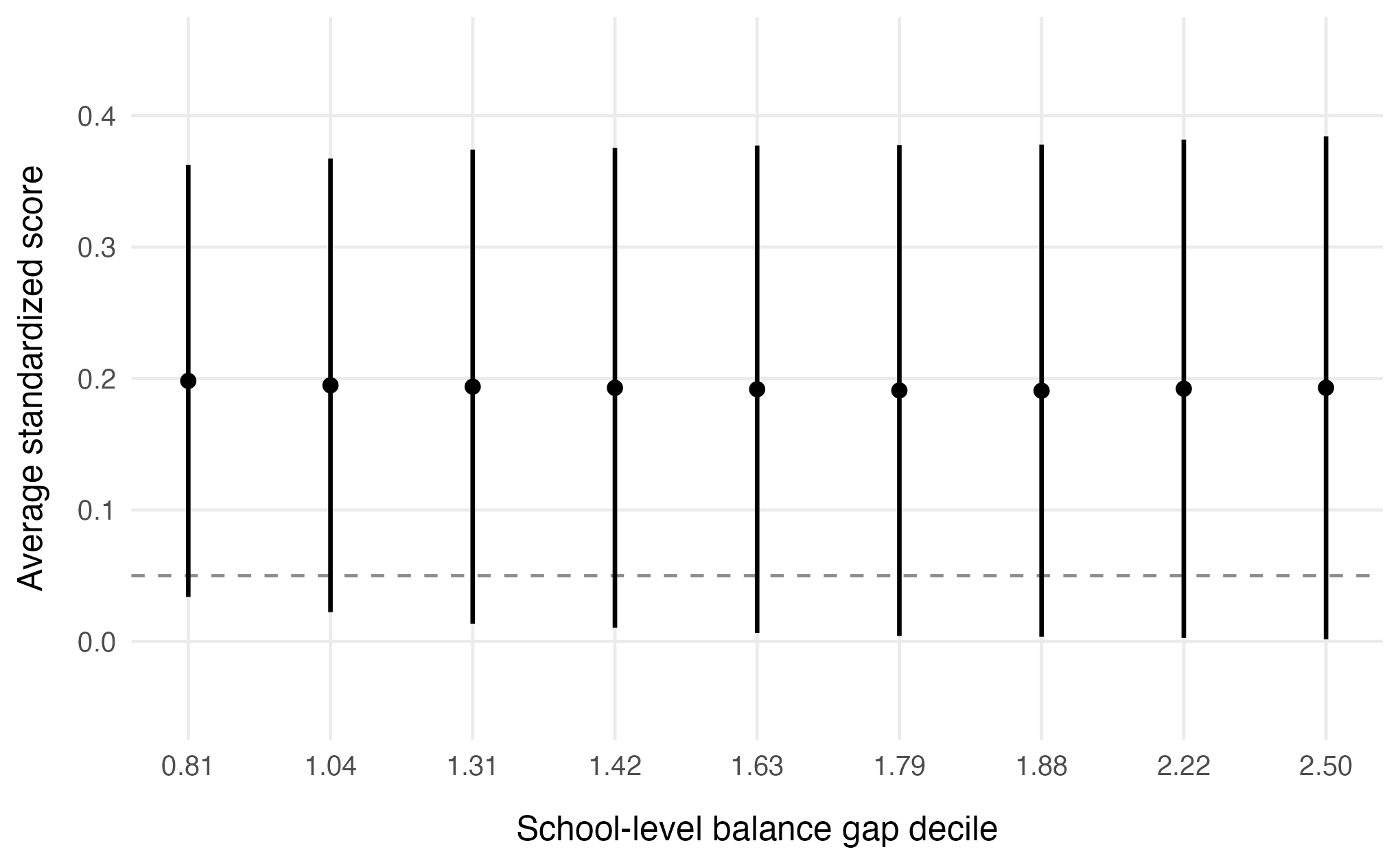}

\end{figure}

\section{Conclusion}\label{arxiv2:sec:conclusion}
Weighted estimands are ubiquitous in empirical research. The choice of weights is not merely a statistical nuisance: it shapes the empirical and policy relevance of a researcher's weighted estimand. Conventional inference procedures can therefore be limited when readers entertain alternative weights, leading to ambiguity and disagreement over the choice of weights.

This paper develops inference procedures that account for these issues. The key geometric observation is that the difference between two weighted estimands can be controlled by two objects: the heterogeneity of the underlying parameters and the distance between the corresponding weights. This decomposition leads to two practical tools. First, I derive minimax-bias weights that minimize the maximum distance to a class of alternative weights. The resulting robust estimator provides a natural default when researchers face ambiguity over the choice of weights. Second, I construct robust confidence intervals by combining a confidence bound for the heterogeneity in parameters with the maximum distance between the centering weights and the alternative weights. These intervals provide coverage guarantees for classes of weighted estimands rather than for a single baseline estimand. 

My framework accommodates many empirically relevant classes of alternatives. The bounded variance class captures concerns about comparing the baseline estimand to alternatives that can be estimated with reasonable precision. The truncated simplex class captures departures from a baseline target population and, when strictly truncated, ensures that groups represented under the baseline continue to receive positive weight. The covariate balance class captures external validity concerns by considering alternative populations whose covariate means remain close to the baseline. These classes can also be intersected, allowing researchers to combine restrictions on precision, positivity, and covariate-balance in a single robustness analysis.

The empirical applications illustrate that robustness to the choice of weights can be context-specific. For the event study in \citet{lakdawala2023dynamic}, conclusions are broadly robust to classes of weights that yield reasonable estimation precision. For the experiment in Project STAR, however, conclusions are sensitive even to small departures from the baseline weighting scheme. These findings highlight the value of reporting robust estimators and intervals alongside conventional ones: doing so makes explicit whether the conclusions reached under a researcher's baseline weights are informative for alternative weighted estimands that may be of interest to readers.

\clearpage
\bibliography{references.bib}

@article{de2020two,
  title={Two-way fixed effects estimators with heterogeneous treatment effects},
  author={de Chaisemartin, Cl{\'e}ment and D'Haultfoeuille, Xavier},
  journal={American economic review},
  volume={110},
  number={9},
  pages={2964--2996},
  year={2020},
  publisher={American Economic Association 2014 Broadway, Suite 305, Nashville, TN 37203}
}

@article{goodman2021difference,
  title={Difference-in-differences with variation in treatment timing},
  author={Goodman-Bacon, Andrew},
  journal={Journal of econometrics},
  volume={225},
  number={2},
  pages={254--277},
  year={2021},
  publisher={Elsevier}
}

@article{sun2021estimating,
  title={Estimating dynamic treatment effects in event studies with heterogeneous treatment effects},
  author={Sun, Liyang and Abraham, Sarah},
  journal={Journal of econometrics},
  volume={225},
  number={2},
  pages={175--199},
  year={2021},
  publisher={Elsevier}
}

@article{callaway2021difference,
  title={Difference-in-differences with multiple time periods},
  author={Callaway, Brantly and Sant’Anna, Pedro HC},
  journal={Journal of econometrics},
  volume={225},
  number={2},
  pages={200--230},
  year={2021},
  publisher={Elsevier}
}

@article{gardner2022two,
  title={Two-stage differences in differences},
  author={Gardner, John},
  journal={arXiv preprint arXiv:2207.05943},
  year={2022}
}

@article{athey2022design,
  title={Design-based analysis in difference-in-differences settings with staggered adoption},
  author={Athey, Susan and Imbens, Guido W},
  journal={Journal of econometrics},
  volume={226},
  number={1},
  pages={62--79},
  year={2022},
  publisher={Elsevier}
}

@article{de2022two,
  title={Two-way fixed effects and differences-in-differences with heterogeneous treatment effects: a survey},
  author={de Chaisemartin, Cl{\'e}ment and D'Haultf{\oe}uille, Xavier},
  journal={The Econometrics Journal},
  volume={26},
  number={3},
  pages={C1--C30},
  year={2022},
  publisher={Oxford University Press (OUP)}
}

@article{roth2023efficient,
  title={Efficient estimation for staggered rollout designs},
  author={Roth, Jonathan and Sant’Anna, Pedro HC},
  journal={Journal of Political Economy Microeconomics},
  volume={1},
  number={4},
  pages={669--709},
  year={2023},
  publisher={The University of Chicago Press Chicago, IL}
}

@article{roth2023s,
  title={What’s trending in difference-in-differences? A synthesis of the recent econometrics literature},
  author={Roth, Jonathan and Sant’Anna, Pedro HC and Bilinski, Alyssa and Poe, John},
  journal={Journal of Econometrics},
  volume={235},
  number={2},
  pages={2218--2244},
  year={2023},
  publisher={Elsevier}
}

@article{borusyak2024revisiting,
  title={Revisiting event-study designs: robust and efficient estimation},
  author={Borusyak, Kirill and Jaravel, Xavier and Spiess, Jann},
  journal={Review of Economic Studies},
  volume={91},
  number={6},
  pages={3253--3285},
  year={2024},
  publisher={Oxford University Press UK}
}

@techreport{wing2024stacked,
  title={Stacked difference-in-differences},
  author={Wing, Coady and Freedman, Seth M and Hollingsworth, Alex},
  year={2024},
  institution={National Bureau of Economic Research}
}

@techreport{abadie2025harvesting,
  title={Harvesting Differences-in-Differences and Event-Study Evidence},
  author={Abadie, Alberto and Angrist, Joshua and Frandsen, Brigham and Pischke, J{\"o}rn-Steffen},
  year={2025},
  institution={National Bureau of Economic Research}
}

@article{auer2025communication,
  title={Communication barriers and infant health: the intergenerational effect of randomly allocating refugees across language regions},
  author={Auer, Daniel and Kunz, Johannes S},
  journal={American Economic Journal: Economic Policy},
  volume={17},
  number={3},
  pages={71--106},
  year={2025},
  publisher={American Economic Association 2014 Broadway, Suite 305, Nashville, TN 37203-2425}
}

@article{chiu2026causal,
  title={Causal panel analysis under parallel trends: lessons from a large reanalysis study},
  author={Chiu, Albert and Lan, Xingchen and Liu, Ziyi and Xu, Yiqing},
  journal={American Political Science Review},
  volume={120},
  number={1},
  pages={245--266},
  year={2026},
  publisher={Cambridge University Press}
}

@article{lakdawala2023dynamic,
  title={Dynamic impacts of school-based internet access on student learning: Evidence from Peruvian public primary schools},
  author={Lakdawala, Leah K and Nakasone, Eduardo and Kho, Kevin},
  journal={American Economic Journal: Economic Policy},
  volume={15},
  number={4},
  pages={222--254},
  year={2023},
  publisher={American Economic Association 2014 Broadway, Suite 305, Nashville, TN 37203-2425}
}

@article{yitzhaki1996using,
  title={On using linear regressions in welfare economics},
  author={Yitzhaki, Shlomo},
  journal={Journal of Business \& Economic Statistics},
  volume={14},
  number={4},
  pages={478--486},
  year={1996},
  publisher={Taylor \& Francis}
}

@article{angrist1998estimating,
  title={Estimating the Labor Market Impact of Voluntary Military Service Using Social Security Data on Military Applicants},
  author={Angrist, Joshua},
  journal={Econometrica},
  volume={66},
  number={2},
  pages={249--288},
  year={1998}
}

@misc{crump2006moving,
  title={Moving the goalposts: Addressing limited overlap in the estimation of average treatment effects by changing the estimand},
  author={Crump, Richard K and Hotz, V Joseph and Imbens, Guido and Mitnik, Oscar},
  year={2006},
  publisher={National Bureau of Economic Research Cambridge, Mass., USA}
}

@article{achilles2008tennessee,
  title={Tennessee’s student teacher achievement ratio (STAR) project},
  author={Achilles, Charles M and Bain, Helen Pate and Bellott, Fred and Boyd-Zaharias, Jayne and Finn, Jeremy and Folger, John and Johnston, John and Word, Elizabeth},
  journal={Harvard Dataverse},
  volume={1},
  pages={2008},
  year={2008}
}

@techreport{angrist2010extrapolate,
  title={Extrapolate-ing: External validity and overidentification in the late framework},
  author={Angrist, Joshua and Fernandez-Val, Ivan},
  year={2010},
  institution={National Bureau of Economic Research}
}

@article{kennedy2019nonparametric,
  title={Nonparametric causal effects based on incremental propensity score interventions},
  author={Kennedy, Edward H},
  journal={Journal of the American Statistical Association},
  volume={114},
  number={526},
  pages={645--656},
  year={2019},
  publisher={Taylor \& Francis}
}

@article{de2021trading,
  title={Trading-off Bias and Variance in Stratified Experiments and in Matching Studies, Under a Boundedness Condition on the Magnitude of the Treatment Effect},
  author={de Chaisemartin, Cl{\'e}ment},
  journal={arXiv preprint arXiv:2105.08766},
  year={2021}
}

@article{zhou2022marginal,
  title={Marginal interventional effects},
  author={Zhou, Xiang and Opacic, Aleksei},
  journal={arXiv preprint arXiv:2206.10717},
  year={2022}
}

@article{goldsmith2024contamination,
  title={Contamination bias in linear regressions},
  author={Goldsmith-Pinkham, Paul and Hull, Peter and Koles{\'a}r, Michal},
  journal={American Economic Review},
  volume={114},
  number={12},
  pages={4015--4051},
  year={2024},
  publisher={American Economic Association 2014 Broadway, Suite 305, Nashville, TN 37203}
}

@article{kwon2025estimating,
  title={Estimating Treatment Effects Under Bounded Heterogeneity},
  author={Kwon, Soonwoo and Sun, Liyang},
  journal={arXiv preprint arXiv:2510.05454},
  year={2025}
}

@article{imbens1994identification,
  author  = {Imbens, Guido W. and Angrist, Joshua D.},
  title   = {Identification and Estimation of Local Average Treatment Effects},
  journal = {Econometrica},
  year    = {1994},
  volume  = {62},
  number  = {2},
  pages   = {467--475},
  doi     = {10.2307/2951620},
  url     = {https://www.jstor.org/stable/2951620}
}

@article{heckman2005structural,
  title={Structural equations, treatment effects, and econometric policy evaluation 1},
  author={Heckman, James J and Vytlacil, Edward},
  journal={Econometrica},
  volume={73},
  number={3},
  pages={669--738},
  year={2005},
  publisher={Wiley Online Library}
}

@techreport{kolesar2013ivheterogeneity,
  author       = {Koles{\'a}r, Michal},
  title        = {Estimation in an Instrumental Variables Model with Treatment Effect Heterogeneity},
  institution  = {Princeton University, Department of Economics},
  type         = {Working Paper},
  number       = {2013-2},
  year         = {2013},
  month        = {November},
  url          = {https://ideas.repec.org/p/pri/econom/2013-2.html}
}

@article{bhuller20242sls,
  title={2SLS with multiple treatments},
  author={Bhuller, Manudeep and Sigstad, Henrik},
  journal={Journal of Econometrics},
  volume={242},
  number={1},
  pages={105785},
  year={2024},
  publisher={Elsevier}
}

@article{abadie2024instrumental,
  title={Instrumental variable estimation with first-stage heterogeneity},
  author={Abadie, Alberto and Gu, Jiaying and Shen, Shu},
  journal={Journal of Econometrics},
  volume={240},
  number={2},
  pages={105425},
  year={2024},
  publisher={Elsevier}
}

@article{sloczynski2020should,
  title={When should we (not) interpret linear iv estimands as late?},
  author={Sloczy{\'n}ski, Tymon},
  journal={arXiv preprint arXiv:2011.06695},
  year={2020}
}

@article{huntington2020instruments,
  title={Instruments with heterogeneous effects: Bias, monotonicity, and localness},
  author={Huntington-Klein, Nick},
  journal={Journal of Causal Inference},
  volume={8},
  number={1},
  pages={182--208},
  year={2020},
  publisher={De Gruyter}
}

@article{coussens2021improving,
  title={Improving inference from simple instruments through compliance estimation},
  author={Coussens, Stephen and Spiess, Jann},
  journal={arXiv preprint arXiv:2108.03726},
  year={2021}
}

@article{mogstad2021causal,
  title={The causal interpretation of two-stage least squares with multiple instrumental variables},
  author={Mogstad, Magne and Torgovitsky, Alexander and Walters, Christopher R},
  journal={American Economic Review},
  volume={111},
  number={11},
  pages={3663--3698},
  year={2021},
  publisher={American Economic Association 2014 Broadway, Suite 305, Nashville, TN 37203}
}

@techreport{blandhol2022tsls,
  title={When is TSLS actually late?},
  author={Blandhol, Christine and Bonney, John and Mogstad, Magne and Torgovitsky, Alexander},
  year={2022},
  institution={National Bureau of Economic Research Cambridge, MA}
}

@article{sloczynski2022interpreting,
  title={Interpreting OLS estimands when treatment effects are heterogeneous: Smaller groups get larger weights},
  author={Sloczy{\'n}ski, Tymon},
  journal={Review of Economics and Statistics},
  volume={104},
  number={3},
  pages={501--509},
  year={2022},
  publisher={MIT Press One Rogers Street, Cambridge, MA 02142-1209, USA journals-info~…}
}

@article{krueger1999experimental,
  title={Experimental estimates of education production functions},
  author={Krueger, Alan B},
  journal={The quarterly journal of economics},
  volume={114},
  number={2},
  pages={497--532},
  year={1999},
  publisher={MIT Press}
}

@article{hotz2005predicting,
  title={Predicting the efficacy of future training programs using past experiences at other locations},
  author={Hotz, V Joseph and Imbens, Guido W and Mortimer, Julie H},
  journal={Journal of econometrics},
  volume={125},
  number={1-2},
  pages={241--270},
  year={2005},
  publisher={Elsevier}
}

@article{schanzenbach2006have,
  title={What have researchers learned from Project STAR?},
  author={Schanzenbach, Diane Whitmore},
  journal={Brookings papers on education policy},
  pages={205--228},
  year={2006},
  publisher={JSTOR}
}

@article{allcott2015site,
  title={Site selection bias in program evaluation},
  author={Allcott, Hunt},
  journal={The Quarterly journal of economics},
  volume={130},
  number={3},
  pages={1117--1165},
  year={2015},
  publisher={MIT Press}
}

@article{aronow2016does,
  title={Does regression produce representative estimates of causal effects?},
  author={Aronow, Peter M and Samii, Cyrus},
  journal={American Journal of Political Science},
  volume={60},
  number={1},
  pages={250--267},
  year={2016},
  publisher={Wiley Online Library}
}

@article{li2018balancing,
  title={Balancing covariates via propensity score weighting},
  author={Li, Fan and Morgan, Kari Lock and Zaslavsky, Alan M},
  journal={Journal of the American Statistical Association},
  volume={113},
  number={521},
  pages={390--400},
  year={2018},
  publisher={Taylor \& Francis}
}

@article{abadie2003economic,
  title={The economic costs of conflict: A case study of the Basque Country},
  author={Abadie, Alberto and Gardeazabal, Javier},
  journal={American economic review},
  volume={93},
  number={1},
  pages={113--132},
  year={2003},
  publisher={American Economic Association}
}

@article{abadie2010synthetic,
  title={Synthetic control methods for comparative case studies: Estimating the effect of California’s tobacco control program},
  author={Abadie, Alberto and Diamond, Alexis and Hainmueller, Jens},
  journal={Journal of the American statistical Association},
  volume={105},
  number={490},
  pages={493--505},
  year={2010},
  publisher={Taylor \& Francis}
}

@article{liu2025synthetic,
  title={Synthetic Parallel Trends},
  author={Liu, Yiqi},
  journal={arXiv preprint arXiv:2511.05870},
  year={2025}
}

@article{henry2012set,
  title={Set coverage and robust policy},
  author={Henry, Marc and Onatski, Alexei},
  journal={Economics Letters},
  volume={115},
  number={2},
  pages={256--257},
  year={2012},
  publisher={Elsevier}
}

@article{manski2021econometrics,
  title={Econometrics for decision making: Building foundations sketched by Haavelmo and Wald},
  author={Manski, Charles F},
  journal={Econometrica},
  volume={89},
  number={6},
  pages={2827--2853},
  year={2021},
  publisher={Wiley Online Library}
}

@article{armstrong2025adapting,
  title={Adapting to misspecification},
  author={Armstrong, Timothy B and Kline, Patrick and Sun, Liyang},
  journal={Econometrica},
  volume={93},
  number={6},
  pages={1981--2005},
  year={2025},
  publisher={Wiley Online Library}
}

@article{chernozhukov2025policy,
  title={Policy Learning with Confidence},
  author={Chernozhukov, Victor and Lee, Sokbae and Rosen, Adam M and Sun, Liyang},
  journal={arXiv preprint arXiv:2502.10653},
  year={2025}
}

@article{andrews2025certified,
  title={Certified Decisions},
  author={Andrews, Isaiah and Chen, Jiafeng},
  journal={arXiv preprint arXiv:2502.17830},
  year={2025}
}

@article{ben2025safe,
  title={Safe policy learning through extrapolation: Application to pre-trial risk assessment},
  author={Ben-Michael, Eli and Greiner, D James and Imai, Kosuke and Jiang, Zhichao},
  journal={Journal of the American Statistical Association},
  pages={forthcoming},
  year={2025},
  publisher={Taylor \& Francis}
}

@misc{lau2026aggregating,
  author = {Lau, Chun Pong},
  title  = {Aggregating Treatment Effects across Multiple Outcomes},
  year   = {2026},
  note   = {Job market paper},
  url    = {https://conroylau.github.io/conroy_lau_jmp.pdf}
}

@article{adusumilli2026you,
  title={You've Got to be Efficient: Ambiguity, Misspecification and Variational Preferences},
  author={Adusumilli, Karun},
  journal={arXiv preprint arXiv:2604.05327},
  year={2026}
}

@article{sarfati2026integrating,
  title={Integrating Diagnostic Checks into Estimation},
  author={Sarfati, Reca and Vilfort, Vod},
  journal={arXiv preprint arXiv:2604.16690},
  year={2026}
}

@article{scheffe1953method,
  title={A method for judging all contrasts in the analysis of variance},
  author={Scheff{\'e}, Henry},
  journal={Biometrika},
  volume={40},
  number={1-2},
  pages={87--110},
  year={1953},
  publisher={Oxford University Press}
}

@book{pfanzagl1994parametric,
  title={Parametric statistical theory},
  author={Pfanzagl, Johann},
  year={1994},
  publisher={Walter de Gruyter}
}

@article{imbens2004confidence,
  title={Confidence intervals for partially identified parameters},
  author={Imbens, Guido W and Manski, Charles F},
  journal={Econometrica},
  volume={72},
  number={6},
  pages={1845--1857},
  year={2004},
  publisher={Wiley Online Library}
}

@article{armstrong2018optimal,
  title={Optimal inference in a class of regression models},
  author={Armstrong, Timothy B and Koles{\'a}r, Michal},
  journal={Econometrica},
  volume={86},
  number={2},
  pages={655--683},
  year={2018},
  publisher={Wiley Online Library}
}

@article{armstrong2020simple,
  title={Simple and honest confidence intervals in nonparametric regression},
  author={Armstrong, Timothy B and Koles{\'a}r, Michal},
  journal={Quantitative Economics},
  volume={11},
  number={1},
  pages={1--39},
  year={2020},
  publisher={Wiley Online Library}
}

@article{kline2020leave,
  title={Leave-out estimation of variance components},
  author={Kline, Patrick and Saggio, Raffaele and S{\o}lvsten, Mikkel},
  journal={Econometrica},
  volume={88},
  number={5},
  pages={1859--1898},
  year={2020},
  publisher={Wiley Online Library}
}

@article{armstrong2021finite,
  title={Finite-Sample Optimal Estimation and Inference on Average Treatment Effects Under Unconfoundedness},
  author={Armstrong, Timothy B and Koles{\'a}r, Michal},
  journal={Econometrica},
  volume={89},
  number={3},
  pages={1141--1177},
  year={2021},
  publisher={Wiley Online Library}
}

@article{armstrong2021sensitivity,
  title={Sensitivity analysis using approximate moment condition models},
  author={Armstrong, Timothy B and Koles{\'a}r, Michal},
  journal={Quantitative Economics},
  volume={12},
  number={1},
  pages={77--108},
  year={2021},
  publisher={Wiley Online Library}
}

@incollection{van1996weak,
  title={Weak convergence},
  author={Van Der Vaart, Aad W and Wellner, Jon A},
  booktitle={Weak convergence and empirical processes: with applications to statistics},
  pages={16--28},
  year={1996},
  publisher={Springer}
}

@book{van2000asymptotic,
  title={Asymptotic statistics},
  author={Van der Vaart, Aad W},
  volume={3},
  year={2000},
  publisher={Cambridge university press}
}

@article{heckman2007econometric,
  title={Econometric evaluation of social programs, part II: Using the marginal treatment effect to organize alternative econometric estimators to evaluate social programs, and to forecast their effects in new environments},
  author={Heckman, James J and Vytlacil, Edward J},
  journal={Handbook of econometrics},
  volume={6},
  pages={4875--5143},
  year={2007},
  publisher={Elsevier}
}

@article{sun2010monotonicity,
  title={On the Monotonicity, Log-Concavity, and Tight Bounds of the Generalized Marcum and Nuttall $ Q $-Functions},
  author={Sun, Yin and Baricz, {\'A}rp{\'a}d and Zhou, Shidong},
  journal={IEEE Transactions on Information Theory},
  volume={56},
  number={3},
  pages={1166--1186},
  year={2010},
  publisher={IEEE}
}

@article{romano2014practical,
  title={A practical two-step method for testing moment inequalities},
  author={Romano, Joseph P and Shaikh, Azeem M and Wolf, Michael},
  journal={Econometrica},
  volume={82},
  number={5},
  pages={1979--2002},
  year={2014},
  publisher={Wiley Online Library}
}

@article{seri2015tight,
  title={A tight bound on the distance between a noncentral chi square and a normal distribution},
  author={Seri, Raffaello},
  journal={IEEE Communications Letters},
  volume={19},
  number={11},
  pages={1877--1880},
  year={2015},
  publisher={IEEE}
}

@article{belloni2015some,
  title={Some new asymptotic theory for least squares series: Pointwise and uniform results},
  author={Belloni, Alexandre and Chernozhukov, Victor and Chetverikov, Denis and Kato, Kengo},
  journal={Journal of Econometrics},
  volume={186},
  number={2},
  pages={345--366},
  year={2015},
  publisher={Elsevier}
}

@article{mccloskey2017bonferroni,
  title={Bonferroni-based size-correction for nonstandard testing problems},
  author={McCloskey, Adam},
  journal={Journal of Econometrics},
  volume={200},
  number={1},
  pages={17--35},
  year={2017},
  publisher={Elsevier}
}

@article{anatolyev2019many,
  title={Many instruments and/or regressors: A friendly guide},
  author={Anatolyev, Stanislav},
  journal={Journal of Economic Surveys},
  volume={33},
  number={2},
  pages={689--726},
  year={2019},
  publisher={Wiley Online Library}
}

@article{molinari2020microeconometrics,
  title={Microeconometrics with partial identification},
  author={Molinari, Francesca},
  journal={Handbook of econometrics},
  volume={7},
  pages={355--486},
  year={2020},
  publisher={Elsevier}
}

@article{kraft2020interpreting,
  title={Interpreting effect sizes of education interventions},
  author={Kraft, Matthew A},
  journal={Educational researcher},
  volume={49},
  number={4},
  pages={241--253},
  year={2020},
  publisher={Sage Publications Sage CA: Los Angeles, CA}
}

@misc{athey2023thirdnumber,
  author       = {Athey, Susan and Imbens, Guido W.},
  title        = {In Search of the Third Number: What to Report Beyond Point Estimates and Standard Errors},
  year         = {2023},
  howpublished = {\url{https://www.youtube.com/watch?v=i-K9a6UhtsE}},
  note         = {Keynote Lecture at the Berkeley Initiative for Transparency in the Social Sciences (BITSS)},
  publisher    = {YouTube},
  urldate = {2025-10-10}
}

@article{abadie2023should,
  title={When should you adjust standard errors for clustering?},
  author={Abadie, Alberto and Athey, Susan and Imbens, Guido W and Wooldridge, Jeffrey M},
  journal={The Quarterly Journal of Economics},
  volume={138},
  number={1},
  pages={1--35},
  year={2023},
  publisher={Oxford University Press}
}

@book{lehmann2024testing,
  title={Testing statistical hypotheses},
  author={Lehmann, E.L. and Romano, Joseph P.},
  year={2024},
  publisher={Springer}
}

@incollection{mogstad2024instrumental,
  title={Instrumental variables with unobserved heterogeneity in treatment effects},
  author={Mogstad, Magne and Torgovitsky, Alexander},
  booktitle={Handbook of Labor Economics},
  volume={5},
  pages={1--114},
  year={2024},
  publisher={Elsevier}
}

@article{poirier2024quantifying,
  title={Quantifying the internal validity of weighted estimands},
  author={Poirier, Alexandre and Sloczy{\'n}ski, Tymon},
  journal={arXiv preprint arXiv:2404.14603},
  year={2024}
}

@article{10.1257/pandp.115.743,
Author = {Luttmer, Erzo F.P.},
Title = {Report of the Editor, American Economic Review},
Journal = {AEA Papers and Proceedings},
Volume = {115},
Year = {2025},
Month = {May},
Pages = {743–60},
DOI = {10.1257/pandp.115.743},
URL = {https://www.aeaweb.org/articles?id=10.1257/pandp.115.743}}

@article{10.1257/pandp.116.742,
Author = {Luttmer, Erzo F.P.},
Title = {Report of the Editor, American Economic Review},
Journal = {AEA Papers and Proceedings},
Volume = {116},
Year = {2026},
Month = {May},
Pages = {742–760},
DOI = {10.1257/pandp.116.742},
URL = {https://www.aeaweb.org/articles?id=10.1257/pandp.116.742}}

@article{10.1257/pandp.115.779,
Author = {Olken, Benjamin},
Title = {Report of the Editor, American Economic Journal: Applied Economics},
Journal = {AEA Papers and Proceedings},
Volume = {115},
Year = {2025},
Month = {May},
Pages = {779–88},
DOI = {10.1257/pandp.115.779},
URL = {https://www.aeaweb.org/articles?id=10.1257/pandp.115.779}}

@article{10.1257/pandp.116.780,
Author = {Olken, Benjamin},
Title = {Report of the Editor, American Economic Journal: Applied Economics},
Journal = {AEA Papers and Proceedings},
Volume = {116},
Year = {2026},
Month = {May},
Pages = {780–789},
DOI = {10.1257/pandp.116.780},
URL = {https://www.aeaweb.org/articles?id=10.1257/pandp.116.780}}

@article{10.1257/pandp.115.789,
Author = {Davis, Lucas},
Title = {Report of the Editor, American Economic Journal: Economic Policy},
Journal = {AEA Papers and Proceedings},
Volume = {115},
Year = {2025},
Month = {May},
Pages = {789–99},
DOI = {10.1257/pandp.115.789},
URL = {https://www.aeaweb.org/articles?id=10.1257/pandp.115.789}}

@article{10.1257/pandp.116.790,
Author = {Jackson, C. Kirabo},
Title = {Report of the Editor, American Economic Journal: Economic Policy},
Journal = {AEA Papers and Proceedings},
Volume = {116},
Year = {2026},
Month = {May},
Pages = {790–800},
DOI = {10.1257/pandp.116.790},
URL = {https://www.aeaweb.org/articles?id=10.1257/pandp.116.790}}
\clearpage

\appendix


\setcounter{section}{0}
\renewcommand{\thesection}{\Alph{section}}

\begin{center}
\LARGE Supplemental Appendix to \\ ``Robust Inference for Weighted Estimands''
\end{center}

\begin{center}
\large Vod Vilfort
\end{center}

\vspace{1.5em}

\section{Event-Study Survey Construction and Coding}\label{arxiv2:app:event-study-survey}
This appendix describes the construction and coding of the survey summarized in Section \ref{arxiv2:sec:implementation.event.studies}. I use the economics literature search tool at \url{https://paulgp.com/econlit-pipeline/search.html} to search 2024-2025 records from AER, AEJ: Applied, and AEJ: Policy. On July 10, 2026, I downloaded the CSV files produced by each of the following exact-phrase search keys:
\begin{itemize}
    \item ``Two-Way Fixed Effects Estimators with Heterogeneous Treatment Effects''
    \item ``Difference-in-Differences with Multiple Time Periods''
    \item ``Difference-in-Differences with Variation in Treatment Timing''
    \item ``Estimating Dynamic Treatment Effects in Event Studies with Heterogeneous Treatment Effects''
    \item ``Revisiting Event-Study Designs: Robust and Efficient Estimation''
\end{itemize}
These are the titles of \citet{de2020two}, \citet{callaway2021difference}, \citet{goodman2021difference}, \citet{sun2021estimating}, and \citet{borusyak2024revisiting}, respectively. Thus, the title-citation screen identifies papers whose references contain at least one of these five canonical contributions to the event-study literature on weighting and heterogeneity issues. The papers defining the screen are not identical to the five core HR methods used in the classification coding. In particular, \citet{goodman2021difference} defines a title-citation category but is not one of the five core HR methods; when researchers use the \citet{goodman2021difference} decomposition, I code it as a diagnostic.

After combining the CSV files and deduplicating by DOI, the title-citation screen yields 66 papers. As described in Section \ref{arxiv2:sec:implementation.event.studies}, I manually apply two further screens: (i) whether a paper contains an event-study or dynamic DiD design, which 64 papers satisfy, and (ii) whether that design has staggered or otherwise varying treatment timing or exposure across groups, which 60 papers satisfy. The resulting 60 papers form the main survey denominator.

I performed the initial screening and coding manually, reviewing the published main-text and online appendix (OA) PDFs for each paper. As a second pass, I asked Codex to compare my coding spreadsheet with the published main-text PDFs and flag possible inconsistencies. The Codex review did not independently inspect the OAs. It could therefore corroborate main-text reporting and identify main-text references to OA analyses, but it could not independently verify OA-only coding. I reviewed the flagged cases manually and made all final inclusion and coding decisions.

For each paper, I code whether it reports a conventional TWFE event-study specification; any of five core HR methods---Sun-Abraham (SA), Callaway-Sant'Anna (CS), stacked DiD, Borusyak-Jaravel-Spiess imputation (BJS), and de Chaisemartin-D'Haultfoeuille (dCDH); any non-core alternative intended to address weighting or heterogeneity concerns; and any diagnostic method for assessing such concerns. The core-method count is the number of nonempty indicators among SA, CS, stacked DiD, BJS, and dCDH. I count a method as reported when its estimates or results appear in either the published main text or OA. 

I code the reporting location for each distinct method, giving priority to the main text: I record M if the method is reported in the main text, even if it also appears in the OA, and I record OA only when the method is reported exclusively in the OA. Thus, OA always means OA-only, while M does not rule out additional OA reporting. Table \ref{arxiv2:tab:event-study-survey-method-location} summarizes method reporting by category and location. Table \ref{arxiv2:tab:event-study-survey-paper-list} lists the 60 included papers, preserves their full titles, and displays the location code for each method category. A paper may report multiple distinct non-core alternatives or diagnostics, so the Other and Diagnostic columns retain both counts and locations. For example, M+OA denotes one distinct item in each location, while M(2)+OA denotes two distinct main-text items and one OA item. A blank cell means that the paper does not report that method or category.

\begin{table}[!htbp]
\centering
\caption{Method reporting by category and location}
\label{arxiv2:tab:event-study-survey-method-location}
\vspace{0.5em}
\begin{threeparttable}
\begin{tabular}{lccc}
\toprule
Method category & Any report & Main text & Online appendix \\
\midrule
TWFE & 47 (78.3\%) & 40 (66.7\%) & 7 (11.7\%) \\
Sun-Abraham & 14 (23.3\%) & 7 (11.7\%) & 7 (11.7\%) \\
Callaway-Sant'Anna & 15 (25.0\%) & 5 (8.3\%) & 10 (16.7\%) \\
Stacked DiD & 11 (18.3\%) & 10 (16.7\%) & 1 (1.7\%) \\
BJS imputation & 13 (21.7\%) & 6 (10.0\%) & 7 (11.7\%) \\
dCDH & 12 (20.0\%) & 6 (10.0\%) & 6 (10.0\%) \\
Other alternative & 16 (26.7\%) & 10 (16.7\%) & 6 (10.0\%) \\
Diagnostic & 12 (20.0\%) & 4 (6.7\%) & 9 (15.0\%) \\
\bottomrule
\end{tabular}
\vspace{0.35em}
\footnotesize
\textit{Notes}: Counts use the 60-paper included sample. The ``Any report'' column counts papers with at least one report in either location. The ``Main text'' and ``Online appendix'' columns count papers with at least one coded item in that location. The Other and Diagnostic categories can contain multiple distinct items and their location columns can therefore overlap in papers. ``Other alternative'' refers to non-core alternatives intended to address weighting or heterogeneity concerns.
\end{threeparttable}
\end{table}

\clearpage
\begin{landscape}
\begingroup
\scriptsize
\setlength{\tabcolsep}{2pt}
\renewcommand{\arraystretch}{1.08}
\setcounter{LTchunksize}{100}
\begin{longtable}{@{}>{\raggedright\arraybackslash}p{0.052\linewidth}>{\raggedright\arraybackslash}p{0.145\linewidth}>{\raggedright\arraybackslash}p{0.38\linewidth}cccccccc@{}}
\caption{Included papers and method-reporting locations}\label{arxiv2:tab:event-study-survey-paper-list}\\
\toprule
J./yr. & DOI & Paper & TWFE & SA & CS & Stk. & BJS & dCDH & Other & Diag. \\
\midrule
\endfirsthead
\multicolumn{11}{c}{\tablename~\thetable: Included papers and method-reporting locations (continued)}\\[0.35em]
\toprule
J./yr. & DOI & Paper & TWFE & SA & CS & Stk. & BJS & dCDH & Other & Diag. \\
\midrule
\endhead
\midrule
\multicolumn{11}{r}{Continued on next page}\\
\endfoot
\bottomrule
\endlastfoot
\rule{0pt}{2.4em}Pol-25 & \href{https://doi.org/10.1257/pol.20210618}{\nolinkurl{10.1257/pol.20210618}} & A Machine Learning Approach to Analyze and Support Anticorruption Policy & M &  & OA &  &  &  &  &  \\
\rule{0pt}{2.4em}Pol-25 & \href{https://doi.org/10.1257/pol.20230220}{\nolinkurl{10.1257/pol.20230220}} & Communication Barriers and Infant Health: The Intergenerational Effect of Randomly Allocating Refugees across Language Regions & M &  &  &  &  &  &  & OA \\
\rule{0pt}{2.4em}Pol-25 & \href{https://doi.org/10.1257/pol.20230613}{\nolinkurl{10.1257/pol.20230613}} & Do Consumers Distinguish Fixed Cost from Variable Cost? ``Schmeduling'' in Two-Part Tariffs in Energy & OA &  &  &  &  & M &  & OA \\
\rule{0pt}{2.4em}Pol-25 & \href{https://doi.org/10.1257/pol.20220009}{\nolinkurl{10.1257/pol.20220009}} & Does the ``Boost for Mathematics'' Boost Mathematics? A Large-Scale Evaluation of the ``Lesson Study'' Methodology on Student Performance &  & M &  &  &  &  &  &  \\
\rule{0pt}{2.4em}Pol-25 & \href{https://doi.org/10.1257/pol.20230036}{\nolinkurl{10.1257/pol.20230036}} & Effective Health Aid: Evidence from Gavi's Vaccine Program & M &  &  &  & M &  &  &  \\
\rule{0pt}{2.4em}Pol-25 & \href{https://doi.org/10.1257/pol.20230460}{\nolinkurl{10.1257/pol.20230460}} & Eliminating Fares to Expand Opportunities: Experimental Evidence on the Impacts of Free Public Transportation on Economic and Social Disparities & M &  &  &  & M &  & M &  \\
\rule{0pt}{2.4em}Pol-25 & \href{https://doi.org/10.1257/pol.20220760}{\nolinkurl{10.1257/pol.20220760}} & Increasing Organ Donor Registration as a Means to Increase Transplantation: An Experiment with Actual Organ Donor Registrations &  &  &  & M &  &  &  &  \\
\rule{0pt}{2.4em}Pol-25 & \href{https://doi.org/10.1257/pol.20220227}{\nolinkurl{10.1257/pol.20220227}} & Mental Health Consequences of Correctional Sentencing & M &  & OA &  & M &  &  &  \\
\rule{0pt}{2.4em}Pol-25 & \href{https://doi.org/10.1257/pol.20220555}{\nolinkurl{10.1257/pol.20220555}} & Physician Group Influences on Treatment Intensity and Health: Evidence from Physician Switchers & M &  &  &  &  &  & OA &  \\
\rule{0pt}{2.4em}Pol-25 & \href{https://doi.org/10.1257/pol.20220580}{\nolinkurl{10.1257/pol.20220580}} & Profit Taxation, R\&D Spending, and Innovation & M &  &  &  &  &  & OA &  \\
\rule{0pt}{2.4em}Pol-25 & \href{https://doi.org/10.1257/pol.20230125}{\nolinkurl{10.1257/pol.20230125}} & Reconstruction-Era Education and Long-Run Black-White Inequality & M &  &  &  &  &  & OA &  \\
\rule{0pt}{2.4em}Pol-25 & \href{https://doi.org/10.1257/pol.20220667}{\nolinkurl{10.1257/pol.20220667}} & Spillover, Efficiency, and Equity Effects of Regional Firm Subsidies & M & OA &  &  &  &  & OA &  \\
\rule{0pt}{2.4em}Pol-25 & \href{https://doi.org/10.1257/pol.20230615}{\nolinkurl{10.1257/pol.20230615}} & Terrorism and Voting: The Rise of Right-Wing Populism in Germany &  & M &  &  & OA &  &  &  \\
\addlinespace[0.35em]
\rule{0pt}{2.4em}Pol-24 & \href{https://doi.org/10.1257/pol.20220810}{\nolinkurl{10.1257/pol.20220810}} & Achieving Air Pollution Control Targets with Technology-Aided Monitoring: Better Enforcement or Localized Efforts? & M &  & OA &  &  &  &  & OA \\
\rule{0pt}{2.4em}Pol-24 & \href{https://doi.org/10.1257/pol.20220701}{\nolinkurl{10.1257/pol.20220701}} & Administrative Burden and Procedural Denials: Experimental Evidence from SNAP & M &  & OA &  &  &  &  &  \\
\rule{0pt}{2.4em}Pol-24 & \href{https://doi.org/10.1257/pol.20200843}{\nolinkurl{10.1257/pol.20200843}} & Attendance Boundary Policies and the Limits to Combating School Segregation & M &  &  &  &  &  & M &  \\
\rule{0pt}{2.4em}Pol-24 & \href{https://doi.org/10.1257/pol.20220181}{\nolinkurl{10.1257/pol.20220181}} & Did the Paycheck Protection Program Help Small Businesses? Evidence from Commercial Mortgage-Backed Securities & M & M &  &  &  &  &  &  \\
\rule{0pt}{2.4em}Pol-24 & \href{https://doi.org/10.1257/pol.20200734}{\nolinkurl{10.1257/pol.20200734}} & Does Health-Care Consolidation Harm Patients? Evidence from Maternity Ward Closures & M &  &  &  &  & OA &  & M \\
\rule{0pt}{2.4em}Pol-24 & \href{https://doi.org/10.1257/pol.20220530}{\nolinkurl{10.1257/pol.20220530}} & Downward Revision of Investment Decisions after Corporate Tax Hikes & M & OA &  &  & OA &  &  &  \\
\rule{0pt}{2.4em}Pol-24 & \href{https://doi.org/10.1257/pol.20200587}{\nolinkurl{10.1257/pol.20200587}} & Fighting Crime in Lawless Areas: Evidence from Slums in Rio de Janeiro & M &  &  &  &  & OA &  &  \\
\rule{0pt}{2.4em}Pol-24 & \href{https://doi.org/10.1257/pol.20200579}{\nolinkurl{10.1257/pol.20200579}} & Hassle Costs versus Information: How Do Prescription Drug Monitoring Programs Reduce Opioid Prescribing? & OA &  &  &  &  &  & M &  \\
\rule{0pt}{2.4em}Pol-24 & \href{https://doi.org/10.1257/pol.20220355}{\nolinkurl{10.1257/pol.20220355}} & How Do Copayment Coupons Affect Branded Drug Prices and Quantities Purchased? &  &  &  & M &  &  &  &  \\
\rule{0pt}{2.4em}Pol-24 & \href{https://doi.org/10.1257/pol.20220357}{\nolinkurl{10.1257/pol.20220357}} & Indirect Savings from Public Procurement Centralization & M &  &  &  &  & OA &  &  \\
\rule{0pt}{2.4em}Pol-24 & \href{https://doi.org/10.1257/pol.20210729}{\nolinkurl{10.1257/pol.20210729}} & Killing Prescriptions Softly: Low Emission Zones and Child Health from Birth to School & OA &  &  & M &  &  &  &  \\
\rule{0pt}{2.4em}Pol-24 & \href{https://doi.org/10.1257/pol.20220732}{\nolinkurl{10.1257/pol.20220732}} & Managers' Productivity and Recruitment in the Public Sector & M &  & M &  &  & M &  & M+OA \\
\rule{0pt}{2.4em}Pol-24 & \href{https://doi.org/10.1257/pol.20210667}{\nolinkurl{10.1257/pol.20210667}} & Mandatory Retirement for Judges Improved the Performance of US State Supreme Courts & M &  &  &  &  &  & OA & M \\
\rule{0pt}{2.4em}Pol-24 & \href{https://doi.org/10.1257/pol.20220246}{\nolinkurl{10.1257/pol.20220246}} & The Effects of a Large-Scale Mental Health Reform: Evidence from Brazil &  &  &  &  &  & M &  &  \\
\rule{0pt}{2.4em}Pol-24 & \href{https://doi.org/10.1257/pol.20220208}{\nolinkurl{10.1257/pol.20220208}} & The Impact of Legal Abortion on Maternal Mortality & M & M &  &  &  &  &  &  \\
\rule{0pt}{2.4em}Pol-24 & \href{https://doi.org/10.1257/pol.20210388}{\nolinkurl{10.1257/pol.20210388}} & The Introduction of the Income Tax, Fiscal Capacity, and Migration: Evidence from US States &  & OA &  & M &  &  &  &  \\
\rule{0pt}{2.4em}Pol-24 & \href{https://doi.org/10.1257/pol.20210300}{\nolinkurl{10.1257/pol.20210300}} & The Returns to Public Library Investment & OA &  & OA &  &  &  & M &  \\
\rule{0pt}{2.4em}Pol-24 & \href{https://doi.org/10.1257/pol.20200694}{\nolinkurl{10.1257/pol.20200694}} & The Returns to STEM Programs for Less-Prepared Students &  &  &  & M &  &  &  &  \\
\addlinespace[0.35em]
\rule{0pt}{2.4em}App-25 & \href{https://doi.org/10.1257/app.20230468}{\nolinkurl{10.1257/app.20230468}} & Down to the Wire: Leveraging Technology to Improve Electric Utility Cost Recovery & M & OA & OA &  &  &  &  &  \\
\rule{0pt}{2.4em}App-25 & \href{https://doi.org/10.1257/app.20230377}{\nolinkurl{10.1257/app.20230377}} & Labor Supply and Entertainment Innovations: Evidence from the US TV Rollout & M &  &  &  &  &  &  &  \\
\rule{0pt}{2.4em}App-25 & \href{https://doi.org/10.1257/app.20230195}{\nolinkurl{10.1257/app.20230195}} & Skill Depreciation during Unemployment: Evidence from Panel Data & M &  &  &  & OA &  &  &  \\
\addlinespace[0.35em]
\rule{0pt}{2.4em}App-24 & \href{https://doi.org/10.1257/app.20230344}{\nolinkurl{10.1257/app.20230344}} & Better Alone? Evidence on the Costs of Intermunicipal Cooperation & M &  &  &  &  & M &  &  \\
\rule{0pt}{2.4em}App-24 & \href{https://doi.org/10.1257/app.20210117}{\nolinkurl{10.1257/app.20210117}} & Changes in Family Structure and Welfare Participation since the 1960s: The Role of Legal Services &  &  & M &  &  &  & M &  \\
\rule{0pt}{2.4em}App-24 & \href{https://doi.org/10.1257/app.20220572}{\nolinkurl{10.1257/app.20220572}} & Customer Capital Spillovers: Evidence from Sales Managers in International Markets & OA &  &  &  & M &  &  &  \\
\rule{0pt}{2.4em}App-24 & \href{https://doi.org/10.1257/app.20220341}{\nolinkurl{10.1257/app.20220341}} & Health Care Centralization: The Health Impacts of Obstetric Unit Closures in the United States & M &  &  &  & OA & M &  & OA \\
\rule{0pt}{2.4em}App-24 & \href{https://doi.org/10.1257/app.20210264}{\nolinkurl{10.1257/app.20210264}} & Patient versus Provider Incentives in Long-Term Care & M &  &  &  &  &  & M &  \\
\rule{0pt}{2.4em}App-24 & \href{https://doi.org/10.1257/app.20230089}{\nolinkurl{10.1257/app.20230089}} & Staggered Difference-in-Differences in Gravity Settings: Revisiting the Effects of Trade Agreements & M &  &  &  & OA &  & M &  \\
\rule{0pt}{2.4em}App-24 & \href{https://doi.org/10.1257/app.20180487}{\nolinkurl{10.1257/app.20180487}} & The Effects of Roads on Trade and Migration: Evidence from a Planned Capital City & M &  &  &  & OA &  &  &  \\
\rule{0pt}{2.4em}App-24 & \href{https://doi.org/10.1257/app.20230046}{\nolinkurl{10.1257/app.20230046}} & The Paradox of Innovation Nondisclosure: Evidence from Licensing Contracts & M & M & M &  & M &  & M(2) &  \\
\rule{0pt}{2.4em}App-24 & \href{https://doi.org/10.1257/app.20230376}{\nolinkurl{10.1257/app.20230376}} & Working Remotely? Selection, Treatment, and the Market for Remote Work &  &  &  & M &  &  &  &  \\
\addlinespace[0.35em]
\rule{0pt}{2.4em}AER-25 & \href{https://doi.org/10.1257/aer.20220528}{\nolinkurl{10.1257/aer.20220528}} & Corruption as a Local Advantage: Evidence from the Indigenization of Nigerian Oil & M &  & OA & M &  &  &  & OA \\
\rule{0pt}{2.4em}AER-25 & \href{https://doi.org/10.1257/aer.20220276}{\nolinkurl{10.1257/aer.20220276}} & Does Entry Remedy Collusion? Evidence from the Generic Prescription Drug Cartel & M & OA &  &  &  &  &  &  \\
\rule{0pt}{2.4em}AER-25 & \href{https://doi.org/10.1257/aer.20230971}{\nolinkurl{10.1257/aer.20230971}} & From Retributive to Restorative: An Alternative Approach to Justice in Schools & OA &  &  &  &  & M &  &  \\
\rule{0pt}{2.4em}AER-25 & \href{https://doi.org/10.1257/aer.20220540}{\nolinkurl{10.1257/aer.20220540}} & Moonshot: Public R\&D and Growth & M &  &  &  &  & OA &  &  \\
\rule{0pt}{2.4em}AER-25 & \href{https://doi.org/10.1257/aer.115.5.1369}{\nolinkurl{10.1257/aer.115.5.1369}} & Presidential Address: Investing in Children to Address the Child Mental Health Crisis &  &  & M &  &  &  &  &  \\
\rule{0pt}{2.4em}AER-25 & \href{https://doi.org/10.1257/aer.20230075}{\nolinkurl{10.1257/aer.20230075}} & The Benefits of Revealing Race: Evidence from Minority-Owned Local Businesses & M &  & OA & OA &  &  & OA & M \\
\rule{0pt}{2.4em}AER-25 & \href{https://doi.org/10.1257/aer.20221561}{\nolinkurl{10.1257/aer.20221561}} & Undergraduate Gender Diversity and the Direction of Scientific Research & M & M &  &  & OA &  &  & OA \\
\addlinespace[0.35em]
\rule{0pt}{2.4em}AER-24 & \href{https://doi.org/10.1257/aer.20230328}{\nolinkurl{10.1257/aer.20230328}} & Dying or Lying? For-Profit Hospices and End-of-Life Care & M & OA &  &  &  &  &  &  \\
\rule{0pt}{2.4em}AER-24 & \href{https://doi.org/10.1257/aer.20230585}{\nolinkurl{10.1257/aer.20230585}} & Financial Access and Labor Market Outcomes: Evidence from Credit Lotteries & M & M &  &  &  &  &  &  \\
\rule{0pt}{2.4em}AER-24 & \href{https://doi.org/10.1257/aer.20201952}{\nolinkurl{10.1257/aer.20201952}} & Financial Technology Adoption: Network Externalities of Cashless Payments in Mexico & M &  &  &  &  & OA &  &  \\
\rule{0pt}{2.4em}AER-24 & \href{https://doi.org/10.1257/aer.20200956}{\nolinkurl{10.1257/aer.20200956}} & From Fog to Smog: The Value of Pollution Information & M &  &  &  &  &  &  & OA \\
\rule{0pt}{2.4em}AER-24 & \href{https://doi.org/10.1257/aer.20220822}{\nolinkurl{10.1257/aer.20220822}} & In-Kind Transfers as Insurance & OA &  &  &  & M &  &  & OA \\
\rule{0pt}{2.4em}AER-24 & \href{https://doi.org/10.1257/aer.20221705}{\nolinkurl{10.1257/aer.20221705}} & Merchants of Death: The Effect of Credit Supply Shocks on Hospital Outcomes & M &  & OA &  &  &  &  &  \\
\rule{0pt}{2.4em}AER-24 & \href{https://doi.org/10.1257/aer.20221687}{\nolinkurl{10.1257/aer.20221687}} & Organized Crime and Economic Growth: Evidence from Municipalities Infiltrated by the Mafia &  &  &  &  &  &  & M &  \\
\rule{0pt}{2.4em}AER-24 & \href{https://doi.org/10.1257/aer.20230019}{\nolinkurl{10.1257/aer.20230019}} & Strengthening State Capacity: Civil Service Reform and Public Sector Performance during the Gilded Age &  & OA & OA & M &  & OA &  &  \\
\rule{0pt}{2.4em}AER-24 & \href{https://doi.org/10.1257/aer.20211547}{\nolinkurl{10.1257/aer.20211547}} & The Economics of the Public Option: Evidence from Local Pharmaceutical Markets &  &  &  & M &  &  &  &  \\
\rule{0pt}{2.4em}AER-24 & \href{https://doi.org/10.1257/aer.20230008}{\nolinkurl{10.1257/aer.20230008}} & The Gift of a Lifetime: The Hospital, Modern Medicine, and Mortality & M &  & M & M &  &  & M &  \\
\end{longtable}
\vspace{-0.6em}
\noindent\begin{minipage}{\linewidth}
\scriptsize
\textit{Notes}: J./yr. abbreviates journal and publication year: Pol is AEJ: Policy and App is AEJ: Applied. M means that a method is reported in the published main text; it may also appear in the OA. OA means online-appendix-only reporting. A blank cell means that the method or category is not reported. For Other and Diag., a parenthetical number counts distinct items at that location: for example, M(2)+OA denotes two distinct main-text items and one OA item. Stk. denotes stacked DiD.
\end{minipage}
\endgroup
\end{landscape}

\section{Optimal Quantile-Unbiased Estimation}\label{arxiv2:app:sec:pfanzagl.optimality}
If one restricts attention to the class of quantile-unbiased estimators that depend on $\htheta$ through $\htheta'Q\htheta$, then the optimality statement for $\heta_{1-\beta}$ in Proposition \ref{arxiv2:prop:optimality} can be strengthened. To see this, first recall that since $\htheta \sim N(\theta, \Sigma)$,
\begin{align*}
    \htheta'Q\htheta = (\Sigma^{-1/2}\htheta)'A(\Sigma^{-1/2}\htheta) \sim \chi_{K-1}^{2}\paren{H(\theta)}, \quad \forall \theta.
\end{align*}
The family $\{\chi_{K-1}^{2}(\eta): \eta \geq 0\}$ of noncentral chi-squared distributions has monotone likelihood ratios in $\htheta'Q\htheta$. These distributions are continuous, so their CDFs satisfy (5.3.10) from \citet{pfanzagl1994parametric}. Consequently, the estimator $\heta_{1-\beta}$ satisfies properties 5.4.1(i)-(ii) of \citet{pfanzagl1994parametric} over the parameter space $\eta \in H = (0, \infty)$. By \citet[Theorem 5.4.3]{pfanzagl1994parametric}, $\heta_{1-\beta}$ is maximally concentrated around $H(\theta)$ when $\theta$ is heterogeneous. That is, for any other potentially randomized estimator $\Tilde{\eta}_{1-\beta}$ that satisfies \eqref{arxiv2:eq:exact.upperCI.heterogeneity} and depends on $\htheta$ through $\htheta'Q\htheta$,
\begin{align*}
    \P{\underline{\eta} \leq \heta_{1-\beta} \leq \Bar{\eta}} \geq \P{\underline{\eta} \leq \Tilde{\eta}_{1-\beta} \leq \Bar{\eta}}, \quad \forall (\underline{\eta}, \Bar{\eta}, \theta):  0 \leq \underline{\eta} \leq H(\theta) \leq \Bar{\eta} \leq \infty, H(\theta) > 0.
\end{align*}
Considering $\underline{\eta} = 0$ and $\Bar{\eta} = H(\theta) + \e$ yields the statement that $\heta_{1-\beta}$ is uniformly most accurate in the class of quantile-unbiased estimators that depend on $\htheta$ through $\htheta'Q\htheta$.

By \citet[Proposition 2.5.3]{pfanzagl1994parametric}, the above maximal concentration is equivalent to minimum-risk optimality for every loss function $L(a, \theta)$ that attains its minimum at $a = H(\theta)$ and that is quasiconvex in $a$ for all heterogeneous $\theta$:
\begin{align*}
    \E{L(\heta_{1-\beta}, \theta)} \leq \E{L(\Tilde{\eta}_{1-\beta}, \theta)}, \quad \forall \theta: H(\theta) > 0.
\end{align*}
The above loss functions formalize the notion that losses get larger as $a$ moves away from $H(\theta)$. For example, taking $L(a,\theta) = (H(\theta) - a)^{2}$ to be squared error loss, the above result shows that when $\theta$ is heterogeneous, $\heta_{1-\beta}$ is optimal under mean squared error for estimating $H(\theta)$ in the class of quantile-unbiased estimators that depend on $\htheta$ through $\htheta'Q\htheta$.

\section{Additional Coverage Properties}\label{arxiv2:app:sec:additional.coverage.properties}

\subsection{Simultaneous Coverage}\label{arxiv2:app:sec:simultaneous.coverage}
Proposition \ref{arxiv2:prop:robust.coverage} shows that $CI_{w}^{*}$ is robust in the sense of uniform coverage: for each $\l$-weighted estimand, $CI_{w}^{*}$ provides valid coverage. A more general notion of robust coverage is \textit{simultaneous coverage}, which entails coverage of the entire set of $\l$-weighted estimands: $\L(\theta) = \{\l'\theta: \l \in \L\}$. The following result shows that $CI_{w}^{*}$ also provides such coverage, but at a different confidence level for the two-sided case.

\begin{proposition}\label{arxiv2:app:prop:simultaneous.coverage}
$CI_{w}^{*}$ provides simultaneous coverage for $\L(\theta) = \{\l'\theta: \l \in \L\}$ at confidence level (i) $1-(2\alpha+\beta)$ in the two-sided case and (ii) $1-(\alpha+\beta)$ in the one-sided case. That is, 
\begin{align}\label{arxiv2:app:eq:simultaneous.coverage}
    \P{\L(\theta) \subseteq CI_{w}^{*}} = \P{\l'\theta \in CI_{w}^{*}, \forall \l \in \L} \geq 
    \begin{cases}
    \hfil 1-(2\alpha+\beta), & \textnormal{two-sided}, \\ 
    \hfil 1-(\alpha+\beta), & \textnormal{one-sided},
    \end{cases} \quad \forall \theta.
\end{align}
\end{proposition}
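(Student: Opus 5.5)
We work on the intersection of two events. The first is the heterogeneity event $E_{\beta} = \{H(\theta) \leq \heta_{1-\beta}\}$, which has probability at least $1-\beta$ by Proposition \ref{arxiv2:prop:eta.validity}. The second is an event $E_{\alpha}$ concerning only the Gaussian noise $Z = (\htauw - \tauw)/\sigma_{w} \sim N(0,1)$. By a union bound, $\P{E_{\alpha}\cap E_{\beta}} \geq 1 - \P{E_{\alpha}^{c}} - \beta$. So it suffices to show two things: that on $E_{\alpha}\cap E_{\beta}$ every $\l'\theta$ with $\l \in \L$ lies in $CI_{w}^{*}$, and that $\P{E_{\alpha}^{c}}$ equals $\alpha$ in the one-sided case and $2\alpha$ in the two-sided case.

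On $E_{\beta}$, Proposition \ref{arxiv2:prop:bound} combined with the definition of the maximum distance gives, simultaneously for all $\l \in \L$,
\begin{align*}
|\taul - \tauw| \leq H(\theta)\max_{\l' \in \L}\norm{\l'-w}_{\Sigma} \leq \hB_{w}^{\beta}(\L).
\end{align*}
This is deterministic given the event, so it holds uniformly in $\l$. That uniformity is what makes simultaneous coverage possible.

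\emph{One-sided case (upper).} Take $E_{\alpha} = \{Z \geq -z_{1-\alpha}\}$, so $\P{E_{\alpha}^{c}} = \alpha$. On $E_{\alpha}\cap E_{\beta}$,
\begin{align*}
\taul \leq \tauw + \hB_{w}^{\beta}(\L) \leq \htauw + z_{1-\alpha}\sigma_{w} + \hB_{w}^{\beta}(\L),
\end{align*}
so every $\taul$ lies in $CI_{w}^{*}$. The lower one-sided case is symmetric.

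\emph{Two-sided case.} Take $E_{\alpha} = \{|Z| \leq z_{1-\alpha}\}$, so $\P{E_{\alpha}^{c}} = 2\alpha$. On $E_{\alpha}\cap E_{\beta}$,
\begin{align*}
|\taul - \htauw| \leq |Z|\sigma_{w} + \hB_{w}^{\beta}(\L) \leq \paren{z_{1-\alpha} + b}\sigma_{w}, \qquad b = \hB_{w}^{\beta}(\L)/\sigma_{w}.
\end{align*}
It remains to show the key inequality $\cv{b} \geq z_{1-\alpha} + b$ for all $b \geq 0$; this is the only nontrivial step. Since $\cv{b}$ is the $(1-\alpha)$-quantile of $|N(b,1)|$, it suffices to show $\P{|b+Z'| \leq z_{1-\alpha}+b} \leq 1-\alpha$ for $Z' \sim N(0,1)$. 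This follows because
\begin{align*}
\{|b+Z'| \leq z_{1-\alpha}+b\} \subseteq \{Z' \leq z_{1-\alpha}\},
\end{align*}
and the right-hand event has probability exactly $1-\alpha$. By continuity of the folded normal CDF, the quantile is then at least $z_{1-\alpha}+b$.

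Hence every $\taul$ lies within $\cv{b}\sigma_{w}$ of $\htauw$, giving coverage at least $1-(2\alpha+\beta)$.

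The loss of $\alpha$ in the two-sided case arises because $\cv{b}$ allocates essentially all of its error mass to one tail when $b$ is large, whereas simultaneous coverage needs control of both tails at once. This is why we use the $z_{1-\alpha}$ rather than the $z_{1-\alpha/2}$ event.
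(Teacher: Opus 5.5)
Your proof is correct. The one-sided case is the paper's argument: after evaluating at $\Bar{\l}_{\theta}\in\arg\max_{\l\in\L}\l'\theta$, the paper reduces noncoverage on the good event to $\{w'(\theta-\htheta)/\sigma_{w}>z_{1-\alpha}\}$, which is exactly your $E_{\alpha}^{c}$.

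For the two-sided case you take a different route. The paper treats the two endpoints separately. On $\cE_{\theta}=\{\max_{\l\in\L}|(\l-w)'\theta|\le\hB_{w}^{\beta}(\L)\}$ it bounds the upper-endpoint failure by $\P{|(\Bar{\l}_{\theta}'\theta-\htauw)/\sigma_{w}|>\cv{(\Bar{\l}_{\theta}-w)'\theta/\sigma_{w}}}=\alpha$. This uses monotonicity and evenness of $b\mapsto\cv{b}$ and the exact folded-normal tail at the true normalized bias of the extremal alternative. It then repeats this at $\underline{\l}_{\theta}\in\arg\min_{\l\in\L}\l'\theta$ and adds the two bounds.

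You instead prove the deterministic bound $\cv{b}\ge z_{1-\alpha}+b$. This shows that the two-sided $CI_{w}^{*}$ contains $[\htauw-z_{1-\alpha}\sigma_{w}-\hB_{w}^{\beta}(\L),\,\htauw+z_{1-\alpha}\sigma_{w}+\hB_{w}^{\beta}(\L)]$, the intersection of the upper and lower one-sided robust CIs. Because both are controlled on the same heterogeneity event, a single $\theta$-free noise event $\{|Z|\le z_{1-\alpha}\}$ suffices.

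Your version is more elementary: it needs no extremal weights and no folded-normal tail evaluated at an unknown bias. It also makes the source of the extra $\alpha$ transparent, since $\cv{b}-b\to z_{1-\alpha}$ as $b\to\infty$ (the limit used in Lemma \ref{arxiv2:app:lemma:diverging.folded.normal}). The paper's version buys uniformity of presentation: it replays its uniform-coverage proof of Proposition \ref{arxiv2:prop:robust.coverage} line by line, applied at $\Bar{\l}_{\theta}$ and $\underline{\l}_{\theta}$.

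Two small technical points. First, $\P{E_{\alpha}^{c}}=2\alpha$ requires $\alpha\le 1/2$; otherwise the claim is vacuous anyway. Second, to pass from $\mathbb{P}\{|b+Z'|\le z_{1-\alpha}+b\}\le 1-\alpha$ to the quantile bound, continuity alone is not quite the right justification. Either use strict monotonicity of the folded-normal CDF on $[0,\infty)$, or note that this probability equals $1-\alpha-\Phi(-z_{1-\alpha}-2b)<1-\alpha$.
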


\begin{proof}
See Appendix \ref{arxiv2:app:proof:simultaneous.coverage}.
\end{proof}

Thus, $CI_{w}^{*}$ is robust in the sense of simultaneous coverage, with the caveat that inferences in the two-sided case come at a lower confidence level relative to uniform coverage. In settings where simultaneous inferences are more relevant than uniform ones, this additional cost may be worthwhile. To see how, I draw connections to the literature on partial identification by viewing $\L(\theta)$ as the identified set for an estimand with unknown weights in $\L$. From this perspective, uniform coverage \eqref{arxiv2:eq:robust.coverage} corresponds to coverage of points in the identified set, while simultaneous coverage \eqref{arxiv2:app:eq:simultaneous.coverage} corresponds to coverage of the identified set---for precise definitions in a general partial identification setup, see \citet[Section 4.3]{molinari2020microeconometrics}.

Following \citet{imbens2004confidence}, coverage of points in the identified set is relevant when there is a true estimand $\l_{0}'\theta$. For example, suppose that $\l_{0}'\theta$ represents the ATE for a target population that one has partial information about and $\L$ represents the class of populations consistent with that information. In such cases, the relevant error is the exclusion of $\l_{0}'\theta$ from $CI_{w}^{*}$, which occurs with probability $\P{\l_{0}'\theta \notin CI_{w}^{*}}$. Since $CI_{w}^{*}$ provides the uniform coverage in \eqref{arxiv2:eq:robust.coverage}, this error rate is uniformly bounded over the potential values of $\l_{0} \in \L$.

Following \citet{henry2012set}, coverage of the identified set is relevant when there is concern for robust decision-making. For instance, suppose $\L$ represents readers $\l$ who will use $CI_{w}^{*}$ to make downstream decisions and the researcher is concerned about the worst-case loss from those decisions across readers.\footnote{The role of confidence intervals for decision-making has been considered more recently in \citet{manski2021econometrics, andrews2025certified, ben2025safe, chernozhukov2025policy}.} In such cases, \citet{henry2012set} show that the relevant error is the exclusion of \textit{any} reader's estimand $\l'\theta$ from $CI_{w}^{*}$, which occurs with probability $\P{\L(\theta) \not \subseteq CI_{w}^{*}}$. The simultaneous coverage in \eqref{arxiv2:app:eq:simultaneous.coverage} bounds this error rate.

In summary, the relevant notion of coverage depends on the type of error that one wishes to mitigate. In my setting, uniform coverage appears most appropriate. For example, a setup with a partially known target population $\l_{0}$ from the researcher's perspective is consistent with the mapping of my framework to multisite experiments discussed in Section \ref{arxiv2:sec:alternative.weights}. Moreover, even in a decision-making setup where $\L$ is taken to represent readers, it is plausible that each reader is only concerned about the exclusion of their own estimand, in which case simultaneous coverage may be too strong. With these examples in mind, I therefore focus on uniform coverage as the relevant notion of robust coverage.

\subsection{Upper Bounding the Coverage Rate}\label{arxiv2:app:sec:coverage.upper.bound}
Proposition \ref{arxiv2:prop:robust.coverage} shows that the coverage rate of $CI_{w}^{*}$ for $\taul$ is bounded below:
\begin{align*}
    \P{\taul \in CI_{w}^{*}} \geq 1-(\alpha+\beta), \quad \forall \l \in \L, \quad \forall \theta.
\end{align*}
A follow-up question is whether the coverage rate of $CI_{w}^{*}$ is nontrivial in the sense that there exists $(\l, \theta) \in \L \times \R^{K}$ where the left-hand side is strictly less than one. Below I show that this is indeed the case. To this end, define the function
\begin{align*}
    \delta_{w,\L}^{\eta}(\alpha,\beta) = \Phi\paren{-\frac{z_{1-\alpha}\sigma_{w} + \eta\paren{\displaystyle\max_{\l \in \L}\norm{\l - w_{\GLS}}_{\Sigma} + \max_{\l \in \L}\norm{\l - w}_{\Sigma}} + \sqrt{F_{\chi^{2}}^{-1}(\beta;\eta)}\norm{w - w_{\GLS}}_{\Sigma}}{\sigma_{\GLS}}}\beta.
\end{align*}
Let $\Bar{\Theta}^{\eta} = \{\theta: H(\theta) = \eta\}$ denote the set of parameters where the heterogeneity is equal to $\eta$. 
\begin{proposition}\label{arxiv2:app:prop:nontrivial.coverage}
For each $\eta$, the coverage rate of $CI_{w}^{*}$ for $\taul$ is bounded above as
\begin{align*}
    \P{\taul \in CI_{w}^{*}} 
    \leq \begin{cases}
    \hfil 1-2\delta_{w,\L}^{\eta}(\alpha/2,\beta), & \textnormal{two-sided}, \\ 
    \hfil 1-\delta_{w,\L}^{\eta}(\alpha,\beta), & \textnormal{one-sided},
    \end{cases} \quad \forall \l \in \L, \quad \forall \theta \in \Bar{\Theta}^{\eta}.
\end{align*}
\end{proposition}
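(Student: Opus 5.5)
The plan is to exhibit, for each $\l \in \L$ and $\theta \in \Bar{\Theta}^{\eta}$, an explicit event of probability at least $\delta_{w,\L}^{\eta}(\alpha,\beta)$ on which $\taul \notin CI_{w}^{*}$. The event is the intersection of two pieces. The first is a ``low heterogeneity'' event $E_{1} = \{\htheta'Q\htheta \leq F_{\chi^{2}}^{-1}(\beta;\eta)\}$, where $F_{\chi^{2}}^{-1}(\beta;\eta)$ denotes the $\beta$-quantile of $\chi_{K-1}^{2}(\eta)$. The second is a ``low GLS estimate'' event $E_{2}$ for $\htau_{\GLS}$.

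Since $\htheta'Q\htheta \sim \chi_{K-1}^{2}(\eta)$, we have $\P{E_{1}} = \beta$. On $E_{1}$, the monotonicity used to define $\heta_{1-\beta}$ (as in Proposition \ref{arxiv2:prop:eta.validity}) gives $\heta_{1-\beta} \leq \eta$. On the same event, $H(\htheta) \leq \sqrt{F_{\chi^{2}}^{-1}(\beta;\eta)}$.

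The key structural fact is that $\htau_{\GLS} = w_{\GLS}'\htheta$ is independent of $\htheta'Q\htheta$. Indeed, $\Sigma^{-1/2}\htheta$ is Gaussian with identity covariance, $\htau_{\GLS}$ is a function of its projection onto $\Sigma^{-1/2}\1$, and $\htheta'Q\htheta$ is a function of the orthogonal complement $A\Sigma^{-1/2}\htheta$. Independence gives $\P{E_{1}\cap E_{2}} = \beta\,\P{E_{2}}$, which matches the product form of $\delta$.

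For the one-sided upper CI, I would use the following decompositions, valid because $(w - w_{\GLS})'\1 = (\l - w_{\GLS})'\1 = 0$:
\begin{align*}
\htauw = \htau_{\GLS} + (w - w_{\GLS})'(\htheta - \1\htau_{\GLS}), \quad \taul = \tau_{\GLS}(\theta) + (\l - w_{\GLS})'(\theta - \1\tau_{\GLS}(\theta)).
\end{align*}
The Cauchy--Schwarz argument of Proposition \ref{arxiv2:prop:bound} then gives two bounds. The first is $\htauw \leq \htau_{\GLS} + \norm{w - w_{\GLS}}_{\Sigma}H(\htheta)$. The second is $\taul \geq \tau_{\GLS}(\theta) - \eta\norm{\l - w_{\GLS}}_{\Sigma}$.

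On $E_{1}$, the upper endpoint $\htauw + z_{1-\alpha}\sigma_{w} + \heta_{1-\beta}\max_{\l}\norm{\l - w}_{\Sigma}$ is therefore at most
\begin{align*}
\htau_{\GLS} + \sqrt{F_{\chi^{2}}^{-1}(\beta;\eta)}\norm{w - w_{\GLS}}_{\Sigma} + z_{1-\alpha}\sigma_{w} + \eta\max_{\l \in \L}\norm{\l - w}_{\Sigma}.
\end{align*}
So noncoverage of $\taul$ holds on $E_{1}\cap E_{2}$, where $E_{2}$ is the event that $\htau_{\GLS} - \tau_{\GLS}(\theta)$ is smaller than minus the numerator in $\delta$. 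Here I replace $\norm{\l - w_{\GLS}}_{\Sigma}$ by its maximum over $\L$ to make the threshold uniform in $\l$. Because $\htau_{\GLS} - \tau_{\GLS}(\theta) \sim N(0,\sigma_{\GLS}^{2})$, we get $\P{E_{2}} = \Phi(-\text{numerator}/\sigma_{\GLS})$, and hence noncoverage probability at least $\delta_{w,\L}^{\eta}(\alpha,\beta)$. The lower one-sided case follows by the mirror argument, using the event that $\htau_{\GLS} - \tau_{\GLS}(\theta)$ is large.

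For the two-sided case, I would first show $\cv{b} \leq z_{1-\alpha/2} + b$. This holds because $\P{|N(b,1)| \leq z_{1-\alpha/2}+b} \geq \P{N(0,1) \leq z_{1-\alpha/2}} - \P{N(0,1) < -z_{1-\alpha/2}-2b} \geq 1-\alpha$; a cleaner route is $\P{|N(b,1)| > z+b} \leq \P{N(0,1) > z} + \P{N(0,1) < -z}$. It follows that $CI_{w}^{*}$ is contained in $[\htauw \pm (z_{1-\alpha/2}\sigma_{w} + \hB_{w}^{\beta}(\L))]$. Applying the one-sided argument to each endpoint with $\alpha/2$ produces two disjoint noncoverage events: $E_{1}\cap\{\htau_{\GLS}\text{ low}\}$ and $E_{1}\cap\{\htau_{\GLS}\text{ high}\}$, each of probability $\delta_{w,\L}^{\eta}(\alpha/2,\beta)$. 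Summing gives the factor $2$.

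I expect the main care to lie in two places: checking the independence of $\htau_{\GLS}$ and the residual quadratic form, and verifying the critical-value inequality for the folded normal. The remaining steps are the same Cauchy--Schwarz manipulations used in Proposition \ref{arxiv2:prop:bound}.
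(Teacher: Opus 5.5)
Your proposal is correct and follows the paper's proof essentially step for step. It uses the same pieces:
\begin{itemize}
\item the event $\{H^{2}(\htheta) \leq F_{\chi^{2}}^{-1}(\beta;\eta)\}$ of probability $\beta$, on which $\heta_{1-\beta} \leq \eta$ and $H(\htheta) \leq \sqrt{F_{\chi^{2}}^{-1}(\beta;\eta)}$;
\item independence of $w_{\GLS}'\htheta$ from the GLS residual quadratic form;
\item two applications of Proposition \ref{arxiv2:prop:bound}, to $(w - w_{\GLS})'\htheta$ and to $(\l - w_{\GLS})'\theta$;
\item the two-sided reduction via $\cv{b} \leq z_{1-\alpha/2} + b$, with disjoint upper and lower events.
\end{itemize}
The only differences are cosmetic: you argue on events rather than through a chain of probability inequalities, and you verify independence by whitening rather than by the explicit covariance computation.
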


\begin{proof}
See Appendix \ref{arxiv2:app:proof:nontrivial.coverage}.
\end{proof}

Proposition \ref{arxiv2:app:prop:nontrivial.coverage} implies that for any $\theta$, the coverage rate is strictly less than one---of course, this assumes that $\alpha,\beta > 0$, which I maintain throughout. The function $\delta_{w,\L}^{\eta}(\alpha,\beta)$ depends on $\eta$, so the bound is not uniform: $\delta_{w,\L}^{\eta}(\alpha,\beta) \to 0$ as $\eta \to \infty$ unless $\L = \{w\} = \{w_{\GLS}\}$. Under $w=w_{\GLS}$ and $\L = \{\l \in \W: \sigma_{\l} \leq r\sigma_{w}\}$, the function simplifies to $\delta_{w,\L}^{\eta}(\alpha,\beta) = \Phi(-z_{1-\alpha}-2\eta\sqrt{r^{2}-1})\beta$.

\section{Proofs of Results for the Normal Model}\label{arxiv2:app:sec:proofs}

\subsection{Proof of Proposition \ref{arxiv2:prop:bound}}\label{arxiv2:app:proof:bound}
Note that $\Sigma^{1/2}(\l - w) = A\Sigma^{1/2}(\l - w)$ and $A^{2} = A$. Thus, the Cauchy-Schwarz inequality yields
\begin{align*}
   |\theta'(\l - w)| &\leq \norm{A\Sigma^{-1/2}\theta}\norm{\Sigma^{1/2}(\l - w)} = \sqrt{\theta'Q\theta}\sqrt{(\l - w)'\Sigma(\l - w)} = H(\theta)\norm{\l - w}_{\Sigma},
\end{align*}
with equality if and only if $A\Sigma^{-1/2}\theta$ and $\Sigma^{1/2}(\l - w)$ are linearly dependent. For example, when $\l \neq w$, this occurs for $\theta^{\dagger} = \eta\Sigma(\l - w)/\norm{\l - w}_{\Sigma}$, which satisfies $H(\theta^{\dagger}) = \eta$. 

\subsection{Proof of Proposition \ref{arxiv2:prop:eta.validity}}\label{arxiv2:app:proof:eta.validity}
For $\theta$ such that $H(\theta) > 0$, observe that (i) the definition of $\heta_{1-\beta}$ yields 
\begin{align*}
    \curly{H(\theta) \leq \heta_{1-\beta}} &\subseteq \curly{H(\theta) \leq \heta_{1-\beta}, F_{\chi^{2}}(\htheta'Q\htheta; 0) > \beta} = \curly{F_{\chi^{2}}(\htheta'Q\htheta;H(\theta)) \geq \beta, F_{\chi^{2}}(\htheta'Q\htheta; 0) > \beta},
\end{align*}
and (ii) the monotonicity of $\eta \mapsto F_{\chi^{2}}(\htheta'Q\htheta;\eta)$ yields 
\begin{align*}
    \curly{F_{\chi^{2}}(\htheta'Q\htheta;H(\theta)) \geq \beta} \subseteq \curly{F_{\chi^{2}}(\htheta'Q\htheta;H(\theta)) \geq \beta, F_{\chi^{2}}(\htheta'Q\htheta; 0) > \beta}.
\end{align*}
But since the reverse of the set inclusions in (i) and (ii) hold for any $\theta$, then
\begin{align*}
    \P{H(\theta) \leq \heta_{1-\beta}} 
    = \P{F_{\chi^{2}}(\htheta'Q\htheta;H(\theta)) \geq \beta} = 1-\beta, \quad \forall \theta: H(\theta) > 0,
\end{align*}
where the second equality holds for any $\theta$, by the probability integral transform.

\subsection{Proof of Proposition \ref{arxiv2:prop:optimality}}\label{arxiv2:app:prop:optimality}
For a given $(\eta, \e) > 0$, consider the problem of testing null hypothesis $\theta \in \Theta_{0} = \{\theta: H(\theta) = \eta + \e\}$ against alternative hypothesis $\theta \in \Theta_{1} = \{\theta: H(\theta) = \eta\}$ in the model $\{N(\theta, \Sigma): \theta \in \Theta_{0} \cup \Theta_{1}\}$. For the class of potentially randomized tests $\Tilde{T}$ with size 
\begin{align*}
    \sup_{\theta \in \Theta_{0}}\E{\Tilde{T}} = \beta,
\end{align*}
a maximin test $\hat{T}$ is one that satisfies, for any other test $\Tilde{T}$ with size $\beta$, the inequality
\begin{align*}
    \inf_{\theta \in \Theta_{1}}\E{\hat{T}} \geq \inf_{\theta \in \Theta_{1}}\E{\Tilde{T}}.
\end{align*}
This testing structure follows the setup in \citet[Section 8]{lehmann2024testing}.

By quantile-unbiasedness, the tests $\Tilde{T} = \I{\eta + \e > \Tilde{\eta}_{1-\beta}}$ and $\hat{T} = \I{\eta + \e > \hat{\eta}_{1-\beta}}$ have size $\beta$. Note further that $\Theta_{1} = \Bar{\Theta}^{\eta}$. Thus, if one can show that $\hat{T} = \I{\eta + \e > \hat{\eta}_{1-\beta}}$ is a maximin test in the above sense for any given $(\eta, \e) > 0$, it would then follow that
\begin{align*}
    \sup_{\theta \in \Bar{\Theta}^{\eta}}\P{\hat{\eta}_{1-\beta} \geq H(\theta) + \e} \leq \sup_{\theta \in \Bar{\Theta}^{\eta}}\P{\Tilde{\eta}_{1-\beta} \geq H(\theta) + \e}, \quad \forall (\eta, \e) > 0,
\end{align*}
for any quantile-unbiased $\Tilde{\eta}_{1-\beta}$, which would be the desired conclusion.

To solve the maximin problem, I now appeal to invariance arguments. Formally, the above testing problems are invariant to the group of transformations 
\begin{align*}
    G = \curly{\theta \mapsto M\theta + c\1: c \in \R, M\1=\1, M'\Sigma^{-1}M = \Sigma^{-1}}.
\end{align*}
Indeed, $M\htheta + c\1 \sim N(M\theta + c\1, \Sigma)$ with $M^{-1} = \Sigma M' \Sigma^{-1}$ and $H(M\theta + c\1) = H(\theta)$ for the above $(c,M)$. Thus, for maximizing worst-case power in the class of size $\beta$ tests, it follows from the Hunt-Stein theorem \citep[Theorem 8.5.1]{lehmann2024testing} and \citet[Lemma 8.4.1]{lehmann2024testing} that one can restrict attention to invariant tests. In this setup, an invariant test depends on $\htheta$ through the maximal invariant $\htheta'Q\htheta$. Thus, below I restrict attention to the class of size $\beta$ tests that depend on $\htheta$ through $\htheta'Q\htheta$.

Now considering tests that depend on $\htheta$ through $\htheta'Q\htheta$, recall that
\begin{align*}
    \htheta'Q\htheta = (\Sigma^{-1/2}\htheta)'A(\Sigma^{-1/2}\htheta) \sim \chi_{K-1}^{2}\paren{H(\theta)}, \quad \forall \theta.
\end{align*}
The family $\{\chi_{K-1}^{2}(\eta): \eta \geq 0\}$ of noncentral chi-squared distributions has monotone likelihood ratios in $\htheta'Q\htheta$. Combined with the observation that $\hat{T} = \I{\eta + \e > \heta_{1-\beta}}$ rejects for small values of $\htheta'Q\htheta$, the Neyman-Pearson lemma implies that $\hat{T}$ is most powerful for testing $\chi_{K-1}^{2}(\eta + \e)$ against $\chi_{K-1}^{2}(\eta)$. In particular, for any $\Tilde{T} = \I{\eta + \e > \Tilde{\eta}_{1-\beta}}$ based on quantile-unbiased $\Tilde{\eta}_{1-\beta}$,
\begin{align*}
    \P{\eta + \e > \heta_{1-\beta}} = \E{\hat{T}} &\geq \E{\Tilde{T}} = \P{\eta + \e > \Tilde{\eta}_{1-\beta}}, \quad \forall \theta: H(\theta) = \eta.
\end{align*}
This establishes the optimality of $\hat{T} = \I{\eta + \e > \heta_{1-\beta}}$ in the maximin problem, from which the optimality result for $\heta_{1-\beta}$ follows.

\subsection{Proof of Proposition \ref{arxiv2:prop:eta.to.UCB}}\label{arxiv2:app:proof:eta.to.UCB}
Observe that
\begin{align*}
    \curly{H(\theta) \leq \heta_{1-\beta}} \subseteq \curly{H(\theta)\max_{\l \in \L}\norm{\l - w}_{\Sigma} \leq \hB_{w}^{\beta}(\L)} \subseteq \curly{\max_{\l \in \L}|(\l - w)'\theta| \leq \hB_{w}^{\beta}(\L)},
\end{align*}
where the latter set inclusion follows from Proposition \ref{arxiv2:prop:bound} and the former inclusion holds with equality when $\L \neq \{w\}$. Proposition \ref{arxiv2:prop:eta.validity} shows $\P{H(\theta) \leq \heta_{1-\beta}} \geq 1-\beta$ for all $\theta$, with equality when $H(\theta) > 0$. Together with the above expressions, this yields the desired coverage bound. Now consider $\theta$ where $H(\theta) > 0$ and suppose that $\L \neq \{w\}$. Then the above implies
\begin{align*}
    \P{H(\theta)\max_{\l \in \L}\norm{\l - w}_{\Sigma} \leq \hB_{w}^{\beta}(\L)} = 1-\beta, \quad \forall \theta: H(\theta) > 0.
\end{align*}
Moreover, taking $\l^{*} \in \arg\max_{\l \in \L}\norm{\l - w}_{\Sigma}$ and $\theta^{\dagger} = \Sigma(\l^{*} - w)/\norm{\l^{*} - w}_{\Sigma}$ yields $H(\theta^{\dagger}) = 1$ and 
\begin{align*}
    H(\theta^{\dagger})\max_{\l \in \L}\norm{\l - w}_{\Sigma} = \norm{\l^{*} - w}_{\Sigma} = |(\l^{*} - w)'\theta^{\dagger}| \leq \max_{\l \in \L}|(\l - w)'\theta^{\dagger}| \leq H(\theta^{\dagger})\max_{\l \in \L}\norm{\l - w}_{\Sigma}.
\end{align*}
Thus, the desired coverage probability is equal to $1-\beta$ under $\theta^{\dagger} = \Sigma(\l^{*} - w)/\norm{\l^{*} - w}_{\Sigma}$.

\subsection{Proof of Proposition \ref{arxiv2:prop:optimal.weights}}\label{arxiv2:app:proof:optimal.weights}
To show existence and uniqueness, it suffices to show that
\begin{align*}
    w^{*} = \arg\min_{\Bar{w} \in \W^{*}} f(\Bar{w}), \quad f(\Bar{w}) = \max_{\l \in \L}f_{\l}(\Bar{w}), \quad f_{\l}(\Bar{w}) = (\l - \Bar{w})'\Sigma (\l - \Bar{w}).
\end{align*}
Each $f_{\l}$ has Hessian $2\Sigma$, and hence is strongly convex since $\Sigma$ is positive definite. Because this Hessian does not depend on $\l$, the pointwise maximum $f$ is also strongly convex on the convex set $\W^{*}$. Moreover, $f$ is (i) continuous by compactness of $\L$ and continuity of $(\l,\Bar{w}) \mapsto f_{\l}(\Bar{w})$ and (ii) coercive since for any fixed $\l_{0} \in \L$,
\begin{align*}
    f(\Bar{w}) \geq f_{\l_{0}}(\Bar{w}) \geq \norm{\l_{0} - \Bar{w}}_{\Sigma}^{2}
    \to \infty, \quad \norm{\Bar{w}} \to \infty.
\end{align*}
Hence, $f$ attains its minimum on the nonempty closed set $\W^{*}$.
Strong convexity then implies that this minimizer $w^{*}$ is unique.

I now establish the bias-optimality of $w^{*}$. For each $\Bar{w} \in \W^{*}$, Proposition \ref{arxiv2:prop:bound} implies
\begin{align*}
    \max_{\l \in \L} \max_{\theta \in \Theta^{\eta}} \abs{\E{\htau_{\Bar{w}}} - \taul} = \max_{\l \in \L}\max_{\theta \in \Theta^{\eta}}|(\l - \Bar{w})'\theta| \leq \max_{\l \in \L}\max_{\theta \in \Theta^{\eta}}H(\theta)\norm{\l - \Bar{w}}_{\Sigma} \leq \eta \max_{\l \in \L}\norm{\l - \Bar{w}}_{\Sigma}.
\end{align*}
If the above inequalities become equalities, then bias-optimality follows from the definition of $w^{*}$. It suffices to show that the inequalities become equalities in the case of $\max_{\l \in \L}\norm{\l - \Bar{w}}_{\Sigma} > 0$. For $\l^{*} \in \arg\max_{\l \in \L}\norm{\l - \Bar{w}}_{\Sigma}$ and $\theta^{\dagger} = \eta \Sigma (\l^{*} - \Bar{w})/\norm{\l^{*} - \Bar{w}}_{\Sigma}$, observe that 
\begin{align*}
     H(\theta^{\dagger}) = \eta, \quad \abs{(\l^{*} - \Bar{w})'\theta^{\dagger}} = \eta \norm{\l^{*} - \Bar{w}}_{\Sigma} = \eta \max_{\l \in \L}\norm{\l - \Bar{w}}_{\Sigma}.
\end{align*}
Thus, since $(\l^{*}, \theta^{\dagger}) \in \L \times \Theta^{\eta}$,
\begin{align*}
    \max_{\l \in \L}\max_{\theta \in \Theta^{\eta}} |(\l - \Bar{w})'\theta| \geq |(\l^{*} - \Bar{w})'\theta^{\dagger}| = \eta \max_{\l \in \L}\norm{\l - \Bar{w}}_{\Sigma}.
\end{align*}
Together with the previous inequalities, this establishes the bias-optimality of $w^{*}$.

\subsection{Proof of Proposition \ref{arxiv2:prop:robust.coverage}}\label{arxiv2:app:proof:robust.coverage}
On $\cE_{\theta} = \curly{\max_{\l \in \L}|(\l - w)'\theta| \leq \hB_{w}^{\beta}(\L)}$, the noncoverage event for $\l \in \L$ is $\curly{\l'\theta \notin CI_{w}^{*}, \cE_{\theta}}$. Note $\P{\l'\theta \notin CI_{w}^{*}, \cE_{\theta}^{c}} \leq \P{\cE_{\theta}^{c}} \leq \beta$ for all $(\l, \theta)$, where the last inequality follows from Proposition \ref{arxiv2:prop:eta.to.UCB}. Thus, it suffices to show that $\P{\l'\theta \notin CI_{w}^{*}, \cE_{\theta}} \leq \alpha$ for all $\l \in \L$ and $\theta$, since then 
\begin{align*}
    \P{\l'\theta \notin CI_{w}^{*}} = \P{\l'\theta \notin CI_{w}^{*}, \cE_{\theta}} + \P{\l'\theta \notin CI_{w}^{*}, \cE_{\theta}^{c}} \leq \alpha+\beta, \quad \forall \l \in \L, \quad \forall \theta,
\end{align*}
which would yield the desired conclusion. 

For the two-sided $CI_{w}^{*}$, the noncoverage event for $\l \in \L$ on $\cE_{\theta}$ is
\begin{align*}
    \curly{\abs{\l'\theta - w'\htheta} > \cv{\hB_{w}^{\beta}(\L)/\sigma_{w}}\sigma_{w}, \cE_{\theta}} 
    &\subseteq \curly{\abs{\l'\theta - w'\htheta} > \cv{\max\nolimits_{\l \in \L}|(\l - w)'\theta|/\sigma_{w}}\sigma_{w}} \\
    &\subseteq \curly{\abs{\l'\theta - w'\htheta} > \cv{|(\l - w)'\theta|/\sigma_{w}}\sigma_{w}} \\
    &= \curly{\abs{\l'\theta - w'\htheta} > \cv{(\l - w)'\theta/\sigma_{w}}\sigma_{w}},
\end{align*}
where the first line follows from the definition of $\cE_{\theta}$, the second from $\cv{|b|}$ increasing in $|b|$, and the third from $\cv{|b|} = \cv{b}$. Thus, for all $\l \in \L$ and $\theta$,
\begin{align*}
    \P{\l'\theta \notin CI_{w}^{*}, \cE_{\theta}} \leq \P{\abs{\l'\theta - w'\htheta}/\sigma_{w} > \cv{(\l - w)'\theta/\sigma_{w}}} = \alpha.
\end{align*}
Now consider the upper one-sided $CI_{w}^{*}$---the proof for the lower case follows analogously. Using similar steps as above, the noncoverage probability is
\begin{align*}
    \P{\l'\theta \notin CI_{w}^{*}, \cE_{\theta}} &\leq \P{\l'\theta - w'\htheta > z_{1-\alpha}\sigma_{w} + (\l - w)'\theta} = \P{w'(\theta - \htheta)/\sigma_{w} > z_{1-\alpha}} = \alpha.
\end{align*}
In conclusion, $\P{\l'\theta \notin CI_{w}^{*}, \cE_{\theta}} \leq \alpha$ for all $\l \in \L$ and $\theta$, as desired.

\subsection{Proof of Proposition \ref{arxiv2:app:prop:simultaneous.coverage}}\label{arxiv2:app:proof:simultaneous.coverage}
For each $\theta$, denote
\begin{align*}
    \Bar{\l}_{\theta} \in \arg\max_{\l \in \L}\l'\theta, \quad \underline{\l}_{\theta} \in \arg\min_{\l \in \L}\l'\theta, \quad \cE_{\theta} = \curly{\max_{\l \in \L}|(\l - w)'\theta| \leq \hB_{w}^{\beta}(\L)},
\end{align*}
where $\Bar{\l}_{\theta}$ and $\underline{\l}_{\theta}$ exist given Assumption \ref{arxiv2:ass:alternative.weights} and the linearity of $\l \mapsto \l'\theta$. The noncoverage event on $\cE_{\theta}$ is $\curly{\L(\theta) \not \subseteq CI_{w}^{*}, \cE_{\theta}}$. Note $\P{\L(\theta) \not \subseteq CI_{w}^{*}, \cE_{\theta}^{c}} \leq \P{\cE_{\theta}^{c}} \leq \beta$ for all $\theta$. Thus, in the two-sided case, it suffices to show $\P{\L(\theta) \not \subseteq CI_{w}^{*}, \cE_{\theta}} \leq 2\alpha$ for all $\theta$, since then 
\begin{align*}
    \P{\L(\theta) \not\subseteq CI_{w}^{*}} = \P{\L(\theta) \not\subseteq CI_{w}^{*}, \cE_{\theta}} + \P{\L(\theta) \not\subseteq CI_{w}^{*}, \cE_{\theta}^{c}} \leq 2\alpha+\beta, \quad \forall \theta.
\end{align*}
Likewise, in the one-sided case, it suffices to show $\P{\L(\theta) \not \subseteq CI_{w}^{*}, \cE_{\theta}} \leq \alpha$ for all $\theta$. 

For the two-sided $CI_{w}^{*}$, the noncoverage probability on $\cE_{\theta}$ at the upper endpoint is
\begin{align*}
    \P{\max_{\l \in \L}\l'\theta > w'\htheta + \cv{\frac{\hB_{w}^{\beta}(\L)}{\sigma_{w}}}\sigma_{w}, \cE_{\theta}} &\leq \P{\frac{\Bar{\l}_{\theta}'\theta - w'\htheta}{\sigma_{w}} > \cv{\max_{\l \in \L}\frac{|(\l - w)'\theta|}{\sigma_{w}}}} \\
    &\leq \P{\frac{\Bar{\l}_{\theta}'\theta - w'\htheta }{\sigma_{w}} > \cv{\frac{|(\Bar{\l}_{\theta} - w)'\theta|}{\sigma_{w}}}} \\
    &= \P{\frac{\Bar{\l}_{\theta}'\theta - w'\htheta }{\sigma_{w}} > \cv{\frac{(\Bar{\l}_{\theta} - w)'\theta}{\sigma_{w}}}} \\
    &\leq \P{\abs{\frac{\Bar{\l}_{\theta}'\theta - w'\htheta }{\sigma_{w}}} > \cv{\frac{(\Bar{\l}_{\theta} - w)'\theta}{\sigma_{w}}}} \\ 
    &= \alpha,
\end{align*}
where the first line follows from the definition of $\cE_{\theta}$, the second from $\cv{|b|}$ increasing in $|b|$, and the third from $\cv{|b|} = \cv{b}$. Likewise, at the lower endpoint, 
\begin{align*}
    \P{\min_{\l \in \L}\l'\theta < w'\htheta - \cv{\frac{\hB_{w}^{\beta}(\L)}{\sigma_{w}}}\sigma_{w}, \cE_{\theta}} \leq \P{\abs{\frac{\underline{\l}_{\theta}'\theta - w'\htheta}{\sigma_{w}}} > \cv{\frac{(\underline{\l}_{\theta} - w)'\theta}{\sigma_{w}}}} = \alpha.
\end{align*}
Since $\curly{\L(\theta) \not \subseteq CI_{w}^{*}, \cE_{\theta}}$ is the union of the above two LHS events, a union bound yields
\begin{align*}
    \P{\L(\theta) \not \subseteq CI_{w}^{*}, \cE_{\theta}} \leq \alpha + \alpha = 2\alpha.
\end{align*}
Thus, in the two-sided case, $\P{\L(\theta) \not \subseteq CI_{w}^{*}, \cE_{\theta}} \leq 2\alpha$ for all $\theta$, as desired. 

Now consider the upper one-sided $CI_{w}^{*}$---the proof for the lower case is symmetric. Using similar steps as above, the probability of noncoverage is
\begin{align*}
    \P{\L(\theta) \not \subseteq CI_{w}^{*}, \cE_{\theta}} \leq \P{\frac{\Bar{\l}_{\theta}'\theta - w'\htheta}{\sigma_{w}} > z_{1-\alpha} + \frac{(\Bar{\l}_{\theta} - w)'\theta}{\sigma_{w}}} = \P{\frac{w'\theta - w'\htheta}{\sigma_{w}} > z_{1-\alpha}} = \alpha.
\end{align*}
Thus, in the one-sided case, $\P{\L(\theta) \not \subseteq CI_{w}^{*}, \cE_{\theta}} \leq \alpha$ for all $\theta$, as desired.

\subsection{Proof of Proposition \ref{arxiv2:app:prop:nontrivial.coverage}}\label{arxiv2:app:proof:nontrivial.coverage}
Fix $\eta$, $\theta \in \Bar{\Theta}^{\eta}$, and $\l \in \L$. Denote
\begin{align*}
    B_{w}^{\theta}(\L) = H(\theta)\max_{\l \in \L}\norm{\l - w}_{\Sigma}, \quad B_{\GLS}^{\theta}(\L) = H(\theta)\max_{\l \in \L}\norm{\l - w_{\GLS}}_{\Sigma}, \quad H(\theta) = \eta.
\end{align*}
In this proof, I make use of the following statements.
\begin{enumerate}[label=(\roman*)]
    \item $\{H(\theta) \geq \heta_{1-\beta}\} \supseteq \curly{H^{2}(\htheta) \leq F_{\chi^{2}}^{-1}(\beta;\eta)}$, which follows from
    \begin{align*}
        \curly{H(\theta) < \heta_{1-\beta}} = \curly{\eta < \heta_{1-\beta}} \subseteq \curly{F_{\chi^{2}}(H^{2}(\htheta);\eta) > \beta} = \curly{H^{2}(\htheta) > F_{\chi^{2}}^{-1}(\beta;\eta)},
    \end{align*}
    where the set inclusion follows from the definition of $\heta_{1-\beta}$.
    \item $w_{\GLS}'\htheta \ind H^{2}(\htheta)$, since $H^{2}(\htheta) = \norm{(I-\1 w_{\GLS}')\htheta}_{\Sigma^{-1}}^{2}$ and $w_{\GLS}'\htheta \ind (I-\1 w_{\GLS}')\htheta$. Indeed,
    \begin{align*}
        \begin{bmatrix}
            w_{\GLS}'\htheta \\
            (I-\1 w_{\GLS}')\htheta
        \end{bmatrix}
        \sim 
        N\paren{
        \begin{bmatrix}
        w_{\GLS}' \\
        I - \1 w_{\GLS}'
        \end{bmatrix}
        \theta,
        \begin{bmatrix}
        \sigma_{\GLS}^{2} & w_{\GLS}'\Sigma (I - w_{\GLS}\1') \\
        (I - \1 w_{\GLS}') \Sigma w_{\GLS} & (I - \1 w_{\GLS}')\Sigma(I - w_{\GLS}\1'),
        \end{bmatrix}
        },
    \end{align*}
    where $w_{\GLS}'\Sigma (I - w_{\GLS}\1') = (\1'\Sigma^{-1}\1)^{-1}\1' - \sigma_{\GLS}^{2}\1' = 0$, which yields independence.
    \item $\P{H^{2}(\htheta) \leq F_{\chi^{2}}^{-1}(\beta;\eta)} = \beta$, which follows from 
    \begin{align*}
        \P{H^{2}(\htheta) \leq F_{\chi^{2}}^{-1}(\beta;\eta)} = \P{F_{\chi^{2}}(H^{2}(\htheta);\eta) \leq \beta} = \beta, \quad H^{2}(\htheta) \sim \chi_{K-1}^{2}(\eta),
    \end{align*}
    where $F_{\chi^{2}}(H^{2}(\htheta);\eta) \sim U(0,1)$ follows from the probability integral transform.
\end{enumerate}
I begin by considering the upper one-sided $CI_{w}^{*}$. Observe that
\begin{align*}
    &\P{\l'\theta \notin CI_{w}^{*}} \\
    \geq &\P{\l'\theta > w'\htheta + z_{1-\alpha}\sigma_{w} + \hB_{w}^{\beta}(\L), H(\theta) \geq \heta_{1-\beta}} \\ 
    \geq &\P{\l'\theta > w'\htheta + z_{1-\alpha}\sigma_{w} + B_{w}^{\theta}(\L), H(\theta) \geq \heta_{1-\beta}} \\
    \geq &\P{\l'\theta > w'\htheta + z_{1-\alpha}\sigma_{w} + B_{w}^{\theta}(\L), H^{2}(\htheta) \leq F_{\chi^{2}}^{-1}(\beta;\eta)} \\
    = &\P{\l'\theta > w_{\GLS}'\htheta + z_{1-\alpha}\sigma_{w} + (w-w_{\GLS})'\htheta + B_{w}^{\theta}(\L), H^{2}(\htheta) \leq F_{\chi^{2}}^{-1}(\beta;\eta)} \\
    \geq &\P{\l'\theta > w_{\GLS}'\htheta + z_{1-\alpha}\sigma_{w} + H(\htheta)\norm{w - w_{\GLS}}_{\Sigma} + B_{w}^{\theta}(\L), H^{2}(\htheta) \leq F_{\chi^{2}}^{-1}(\beta;\eta)} \\
    \geq &\P{\l'\theta > w_{\GLS}'\htheta + z_{1-\alpha}\sigma_{w} + \sqrt{F_{\chi^{2}}^{-1}(\beta;\eta)}\norm{w - w_{\GLS}}_{\Sigma} + B_{w}^{\theta}(\L), H^{2}(\htheta) \leq F_{\chi^{2}}^{-1}(\beta;\eta)} \\
    = &\P{\l'\theta > w_{\GLS}'\htheta + z_{1-\alpha}\sigma_{w} + \sqrt{F_{\chi^{2}}^{-1}(\beta;\eta)}\norm{w - w_{\GLS}}_{\Sigma} + B_{w}^{\theta}(\L)}\P{H^{2}(\htheta) \leq F_{\chi^{2}}^{-1}(\beta;\eta)} \\
    \geq &\P{\l'\theta > w_{\GLS}'\htheta + z_{1-\alpha}\sigma_{w} + \sqrt{F_{\chi^{2}}^{-1}(\beta;\eta)}\norm{w - w_{\GLS}}_{\Sigma} + B_{w}^{\theta}(\L)}\beta \\
    = &\P{(\l-w_{\GLS})'\theta > w_{\GLS}'(\htheta-\theta) + z_{1-\alpha}\sigma_{w} + \sqrt{F_{\chi^{2}}^{-1}(\beta;\eta)}\norm{w - w_{\GLS}}_{\Sigma} + B_{w}^{\theta}(\L)}\beta \\
    \geq &\P{-B_{\GLS}^{\theta}(\L) > w_{\GLS}'(\htheta-\theta) + z_{1-\alpha}\sigma_{w} + \sqrt{F_{\chi^{2}}^{-1}(\beta;\eta)}\norm{w - w_{\GLS}}_{\Sigma} + B_{w}^{\theta}(\L)}\beta \\[10pt]
    = &\Phi\paren{-\frac{B_{\GLS}^{\theta}(\L) + z_{1-\alpha}\sigma_{w} + \sqrt{F_{\chi^{2}}^{-1}(\beta;\eta)}\norm{w - w_{\GLS}}_{\Sigma} + B_{w}^{\theta}(\L)}{\sigma_{\GLS}}}\beta
\end{align*}
where the third line follows from statement (i), the fifth from Proposition \ref{arxiv2:prop:bound}, the seventh from statement (ii), the eighth from statement (iii), and the tenth from Proposition \ref{arxiv2:prop:bound}.

A similar argument for the lower one-sided $CI_{w}^{*}$ yields
\begin{align*}
    &\P{\l'\theta \notin CI_{w}^{*}} \\
    \geq &\P{\l'\theta < w_{\GLS}'\htheta - z_{1-\alpha}\sigma_{w} - \sqrt{F_{\chi^{2}}^{-1}(\beta;\eta)}\norm{w - w_{\GLS}}_{\Sigma} - B_{w}^{\theta}(\L)}\beta \\
    = &\P{(\l-w_{\GLS})'\theta < w_{\GLS}'(\htheta-\theta) - z_{1-\alpha}\sigma_{w} - \sqrt{F_{\chi^{2}}^{-1}(\beta;\eta)}\norm{w - w_{\GLS}}_{\Sigma} - B_{w}^{\theta}(\L)}\beta \\
    \geq &\P{B_{\GLS}^{\theta}(\L) < w_{\GLS}'(\htheta-\theta) - z_{1-\alpha}\sigma_{w} - \sqrt{F_{\chi^{2}}^{-1}(\beta;\eta)}\norm{w - w_{\GLS}}_{\Sigma} - B_{w}^{\theta}(\L)}\beta \\[10pt]
    = &\Phi\paren{-\frac{B_{\GLS}^{\theta}(\L) + z_{1-\alpha}\sigma_{w} + \sqrt{F_{\chi^{2}}^{-1}(\beta;\eta)}\norm{w - w_{\GLS}}_{\Sigma} + B_{w}^{\theta}(\L)}{\sigma_{\GLS}}}\beta,
\end{align*}
which is the same final bound as the upper one-sided case.

Finally, for the case of a two-sided $CI_{w}^{*}$, observe that
\begin{align*}
    \P{\l'\theta \notin CI_{w}^{*}} &\geq \P{|\l'\theta-w'\htheta| > z_{1-\alpha/2}\sigma_{w} + \hB_{w}^{\beta}(\L)} \\
    &= \P{\l'\theta > w'\htheta + z_{1-\alpha/2}\sigma_{w} + \hB_{w}^{\beta}(\L)} + \P{\l'\theta < w'\htheta -z_{1-\alpha/2}\sigma_{w} - \hB_{w}^{\beta}(\L)} \\[10pt]
    &\geq 2\Phi\paren{-\frac{B_{\GLS}^{\theta}(\L) + z_{1-\alpha/2}\sigma_{w} + \sqrt{F_{\chi^{2}}^{-1}(\beta;\eta)}\norm{w - w_{\GLS}}_{\Sigma} + B_{w}^{\theta}(\L)}{\sigma_{\GLS}}}\beta,
\end{align*}
where the first line follows from the bound $\cv{|b|} \leq z_{1-\alpha/2} + |b|$, and the second from the above derivations in the one-sided case. Thus, the desired coverage bounds hold.

\section{Proofs of Results for the Uniform Asymptotics}

\subsection{Proof of Proposition \ref{arxiv2:prop:uniform.hausdorff.convergence}}\label{arxiv2:app:proof:uniform.hausdorff.convergence}
Under Assumptions \ref{arxiv2:ass:consistent.S} and \ref{arxiv2:ass:Lambda.representation}, there is an event whose probability uniformly approaches one as $n \to \infty$ on which $\hSn \in \bbS$ and Lemma \ref{arxiv2:app:lemma:hausdorff.continuity} implies
\begin{align*}
    d_{H}(\hLn, \Ln) \leq \frac{\diam(\bbW)}{\delta_{g}} L_{g}d_{\cS}(\hSn, \Sn).
\end{align*}
Thus, since $d_{\cS}(\hSn, \Sn)$ uniformly converges in probability to zero, so does $d_{H}(\hLn, \Ln)$.

\subsection{Proof of Proposition \ref{arxiv2:prop:consistent.distance}}\label{arxiv2:app:proof:consistent.distance}
Under Assumptions \ref{arxiv2:ass:consistent.covariance}-\ref{arxiv2:ass:Lambda.representation}, there is an event whose probability uniformly approaches one as $n \to \infty$ on which $\LB{e}/2 \leq \emin(\hSigman) \leq \emax(\hSigman) \leq 2\UB{e}$, $\hSn \in \bbS$, and Lemma \ref{arxiv2:app:lemma:perturbation.bounds} implies
\begin{align*}
    \abs{\max_{\l \in \hLn}\norm{\l - \hwn}_{\hSigman} - \max_{\l \in \Ln}\norm{\l - \wn}_{\Sigman}} &\leq \abs{\max_{\l \in \hLn}\norm{\l - \hwn}_{\hSigman} - \max_{\l \in \hLn}\norm{\l - \wn}_{\hSigman}} \\[5pt]
    &+ \abs{\max_{\l \in \hLn}\norm{\l - \wn}_{\hSigman} - \max_{\l \in \Ln}\norm{\l - \wn}_{\hSigman}} \\[5pt]
    &+ \abs{\max_{\l \in \Ln}\norm{\l - \wn}_{\hSigman} - \max_{\l \in \Ln}\norm{\l - \wn}_{\Sigman}} \\[5pt]
    &\leq \sqrt{2\UB{e}}\norm{\hwn - \wn} \\[5pt]
    &+ \sqrt{2\UB{e}}d_{H}(\hLn, \Ln) \\[5pt]
    &+ \frac{1}{\sqrt{2}\LB{e}}\norm{\hSigman - \Sigman} \max_{\l \in \Ln}\norm{\l - \wn}_{\Sigman},
\end{align*}
where $\max_{\l \in \Ln}\norm{\l - \wn}_{\Sigman} \leq \sqrt{\UB{e}}(\max_{\l \in \bbW}\norm{\l} + \UB{C}_{w}) < \infty$. Thus, since the above right-hand side (RHS) uniformly converges in probability to zero, so does the left-hand side (LHS).

\subsection{Proof of Proposition \ref{arxiv2:prop:consistent.minimax.weights}}\label{arxiv2:app:proof:consistent.minimax.weights}
The boundedness claim follows because $\rwn \in \Ln \subseteq \bbW$ is contained in the compact set $\bbW$ for all $\Pn \in \cPn$ and $n$ under the maintained assumptions. I now prove the consistency claim. Under Assumptions \ref{arxiv2:ass:consistent.covariance} and \ref{arxiv2:ass:consistent.S}, there is an event whose probability uniformly approaches one as $n \to \infty$ on which $\LB{e}/2 \leq \emin(\hSigman) \leq \emax(\hSigman) \leq 2\UB{e}$ and $\hSn \in \bbS$. The following arguments are made on this event under Assumptions \ref{arxiv2:ass:consistent.covariance}, \ref{arxiv2:ass:consistent.S}, and \ref{arxiv2:ass:Lambda.representation}. Define the Euclidean projections
\begin{align*}
    \pi_{n}(\rhwn) = \arg\min_{\l \in \Ln} \norm{\rhwn - \l}, \quad \hat{\pi}_{n}(\rwn) = \arg\min_{\l \in \hLn} \norm{\rwn - \l}.
\end{align*}
Since $\rhwn \in \hLn$ and $\rwn \in \Ln$, the definition of Hausdorff distance yields the bounds
\begin{align*}
    \norm{\rhwn - \pi_{n}(\rhwn)} = \dist(\rhwn, \Ln) \leq d_{H}(\hLn, \Ln), \quad \norm{\rwn - \hat{\pi}_{n}(\rwn)} = \dist(\rwn, \hLn) \leq d_{H}(\hLn, \Ln).
\end{align*}
By triangle inequality and Lemma \ref{arxiv2:app:lemma:minimax.quadratic.bounds},
\begin{align*}
    \norm{\rhwn - \rwn} &\leq \norm{\rhwn - \pi_{n}(\rhwn)} + \norm{\pi_{n}(\rhwn) - \rwn} \\
    &\leq d_{H}(\hLn, \Ln) + \sqrt{\frac{2\sqrt{\UB{e}}}{\LB{e}}\diam(\bbW)\paren{\max_{\l \in \Ln}\norm{\l - \pi_{n}(\rhwn)}_{\Sigman} - \max_{\l \in \Ln}\norm{\l - \rwn}_{\Sigman}}}.
\end{align*}
I now look to bound the second term. To this end, first note that Lemma \ref{arxiv2:app:lemma:perturbation.bounds} implies
\begin{align*}
    \max_{\l \in \Ln}\norm{\l - \pi_{n}(\rhwn)}_{\Sigman} \leq \max_{\l \in \Ln}\norm{\l - \rhwn}_{\Sigman} + \sqrt{\UB{e}} d_{H}(\hLn, \Ln).
\end{align*}
By Lemma \ref{arxiv2:app:lemma:perturbation.bounds} again,
\begin{align*}
    \max_{\l \in \Ln}\norm{\l - \rhwn}_{\Sigman} \leq \max_{\l \in \hLn}\norm{\l - \rhwn}_{\hSigman} + \sqrt{\UB{e}}d_{H}(\hLn, \Ln) + \frac{1}{\dsqrt{2\LB{e}}}\norm{\hSigman - \Sigman} \diam(\bbW).
\end{align*}
By optimality of $\rhwn$ and Lemma \ref{arxiv2:app:lemma:perturbation.bounds} again,
\begin{align*}
    \max_{\l \in \hLn}\norm{\l - \rhwn}_{\hSigman} &\leq \max_{\l \in \hLn}\norm{\l - \hat{\pi}_{n}(\rwn)}_{\hSigman} \\
    &\leq \max_{\l \in \Ln}\norm{\l - \hat{\pi}_{n}(\rwn)}_{\Sigman} + \sqrt{\UB{e}}d_{H}(\hLn, \Ln) + \frac{1}{\dsqrt{2\LB{e}}}\norm{\hSigman - \Sigman} \diam(\bbW).
\end{align*}
Note that Lemma \ref{arxiv2:app:lemma:perturbation.bounds} implies
\begin{align*}
    \max_{\l \in \Ln}\norm{\l - \hat{\pi}_{n}(\rwn)}_{\Sigman} \leq \max_{\l \in \Ln}\norm{\l - \rwn}_{\Sigman} + \sqrt{\UB{e}} d_{H}(\hLn, \Ln).
\end{align*}
Combining the above steps yields the inequality
\begin{align*}
    \max_{\l \in \Ln}\norm{\l - \pi_{n}(\rhwn)}_{\Sigman} - \max_{\l \in \Ln}\norm{\l - \rwn}_{\Sigman} \leq 4\sqrt{\UB{e}}d_{H}(\hLn, \Ln) + \sqrt{\frac{2}{\LB{e}}}\norm{\hSigman - \Sigman} \diam(\bbW).
\end{align*}
In summary, under Assumptions \ref{arxiv2:ass:consistent.covariance}, \ref{arxiv2:ass:consistent.S}, and \ref{arxiv2:ass:Lambda.representation}, there is an event whose probability uniformly approaches one as $n \to \infty$ on which for $\eSigman = \norm{\Sigman^{-1/2}(\hSigman - \Sigman)\Sigman^{-1/2}}$,
\begin{align*}
    \norm{\rhwn - \rwn} \leq d_{H}(\hLn, \Ln) + \sqrt{\frac{2\UB{e}}{\LB{e}}\diam(\bbW)\paren{4 d_{H}(\hLn, \Ln) + \sqrt{\frac{2\UB{e}}{\LB{e}}}\eSigman \diam(\bbW)}}.
\end{align*}
Thus, since the RHS uniformly converges in probability to zero, so does the LHS.

\subsection{Proof of Proposition \ref{arxiv2:prop:asymptotic.bias}}\label{arxiv2:app:proof:asymptotic.bias}
Let $\cthetan = \hthetan - \thetan$ and $\cwn = \hwn - \wn$. By the triangle and Cauchy-Schwarz inequalities,
\begin{align*}
    \abs{\E[\Pn]{\htauwn} - \tauwn[\l]} &\leq \abs{\E[\Pn]{\cwn'\cthetan}} + \abs{\E[\Pn]{\wn'\cthetan}} + \abs{\E[\Pn]{\cwn'\thetan}} + \abs{(\l - \wn)'\thetan} \\[5pt]
    &\lesssim \sqrt{\E[\Pn]{\norm{\cwn}^{2}}\E[\Pn]{\norm{\cthetan}^{2}}} + \sqrt{\E[\Pn]{\norm{\cthetan}^{2}}} + \sqrt{\E[\Pn]{\norm{\cwn}^{2}}} + \abs{(\l - \wn)'\thetan}.
\end{align*}
I show at the end of this proof that the UI condition \eqref{arxiv2:eq:UI.condition} implies
\begin{align}\label{arxiv2:app:eq:UI.condition.implication}
    \limn \supPn \paren{\E[\Pn]{\norm{\cthetan}^{2}}, \E[\Pn]{\norm{\cwn}^{2}}} = 0.
\end{align}
Combined with Proposition \ref{arxiv2:prop:bound}, this yields
\begin{align*}
    \limsupn \paren{\supPn \max_{\l \in \Ln} \abs{\E[\Pn]{\htauwn} - \tauwn[\l]}} \leq \limsupn \paren{\UB{\eta}(\cPn) \supPn \max_{\l \in \Ln} \norm{\l - \wn}_{\Sigman}},
\end{align*}
where the RHS is finite under Assumptions \ref{arxiv2:ass:linear.CLT}-\ref{arxiv2:ass:Lambda.representation}.

For the equality statement, observe that the reverse triangle inequality implies
\begin{align*}
    \max_{\l \in \Ln}\abs{\E[\Pn]{\htauwn} - \tauwn[\l]} &\geq \max_{\l \in \Ln}\abs{(\l - \wn)'\thetan} - \abs{\E[\Pn]{\cwn'\cthetan} + \E[\Pn]{\wn'\cthetan} + \E[\Pn]{\cwn'\thetan}} \\
    &\geq \max_{\l \in \Ln}\abs{(\l - \wn)'\thetan} - \supPn\paren{\abs{\E[\Pn]{\cwn'\cthetan}} + \abs{\E[\Pn]{\wn'\cthetan}} + \abs{\E[\Pn]{\cwn'\thetan}}},
\end{align*}
where \eqref{arxiv2:app:eq:UI.condition.implication} and the previous Cauchy-Schwarz bound implies
\begin{align*}
    \limn \supPn \paren{\abs{\E[\Pn]{\cwn'\cthetan}} + \abs{\E[\Pn]{\wn'\cthetan}} + \abs{\E[\Pn]{\cwn'\thetan}}} = 0.
\end{align*}
If $P_{n}^{\dagger} \in \cPn$ for all $\Pn \in \cPn$ and $n$, then for all $\Pn \in \cPn$ and $n$,
\begin{align*}
    \supPn \max_{\l \in \Ln}\abs{(\l - \wn)'\thetan} \geq \max_{\l \in \Ln}\abs{(\l - \wn)'\theta(P_{n}^{\dagger})} = \abs{(\ln^{*} - \wn)'\theta(P_{n}^{\dagger})} &= \UB{\eta}(\cPn)\norm{\ln^{*} - \wn}_{\Sigman} \\
    &= \UB{\eta}(\cPn)\max_{\l \in \Ln}\norm{\l - \wn}_{\Sigman}.
\end{align*}
Taking the supremum over $\Pn \in \cPn$ and combining with the previous displays yields
\begin{align*}
    \limsupn \paren{\supPn \max_{\l \in \Ln} \abs{\E[\Pn]{\htauwn} - \tauwn[\l]}} \geq \limsupn \paren{\UB{\eta}(\cPn) \supPn \max_{\l \in \Ln} \norm{\l - \wn}_{\Sigman}},
\end{align*}
from which the equality statement follows.

I now verify \eqref{arxiv2:app:eq:UI.condition.implication}. Note that $\cthetan$ and $\cwn$ are uniformly consistent for zero, given Assumptions \ref{arxiv2:ass:linear.CLT} and \ref{arxiv2:ass:consistent.weights}. Focusing on $\cwn$ for the moment, observe that for any $C > 0$,
\begin{align*}
    \E[\Pn]{\norm{\cwn}^{2}} \leq \abs{\E[\Pn]{\norm{\cwn}^{2}} - \E[\Pn]{\min\paren{\norm{\cwn}^{2}, C}}} + \abs{\E[\Pn]{\min\paren{\norm{\cwn}^{2}, C}}}.
\end{align*}
Since $x \mapsto \min(\norm{x}^{2}, C)$ is bounded and continuous, uniform convergence in probability of $\cwn$ to zero implies
\begin{align*}
    \limn \supPn \abs{\E[\Pn]{\min\paren{\norm{\cwn}^{2}, C}}} = 0.
\end{align*}
Since $\abs{\E[\Pn]{\norm{\cwn}^{2}} - \E[\Pn]{\min\paren{\norm{\cwn}^{2}, C}}} \leq \E[\Pn]{\norm{\cwn}^{2} \I{\norm{\cwn}^{2} > C}}$, then
\begin{align*}
    \limsupn \supPn \E[\Pn]{\norm{\cwn}^{2}} \leq \limsupn \supPn \E[\Pn]{\norm{\cwn}^{2} \I{\norm{\cwn}^{2} > C}}.
\end{align*}
The above arguments also apply to $\cthetan$. Taking $C \to \infty$, the UI condition \eqref{arxiv2:eq:UI.condition} implies
\begin{align*}
    \limsupn \supPn \paren{\E[\Pn]{\norm{\cwn}^{2}}, \E[\Pn]{\norm{\cthetan}^{2}}} \leq 0.
\end{align*}
In particular, \eqref{arxiv2:app:eq:UI.condition.implication} holds.

\subsection{Proof of Proposition \ref{arxiv2:prop:asymptotic.optimality}}\label{arxiv2:app:proof:asymptotic.optimality}
Under these conditions, Proposition \ref{arxiv2:prop:asymptotic.bias} can be applied to both $\rhwn$ and $\hwn$. In particular,
\begin{align*}
    \limsupn \paren{\supPn \max_{\l \in \Ln} \abs{\E[\Pn]{\rhtaun} - \tauwn[\l]}}
    &\leq \limsupn \paren{\UB{\eta}(\cPn) \supPn \max_{\l \in \Ln} \norm{\l - \rwn}_{\Sigman}} \\
    &\leq \limsupn \paren{\UB{\eta}(\cPn) \supPn \max_{\l \in \Ln} \norm{\l - \wn}_{\Sigman}} \\
    &= \limsupn \paren{\supPn \max_{\l \in \Ln} \abs{\E[\Pn]{\htauwn} - \tauwn[\l]}},
\end{align*}
where the second inequality follows from $\max_{\l \in \Ln} \norm{\l - \rwn}_{\Sigman} \leq \max_{\l \in \Ln} \norm{\l - \wn}_{\Sigman}$.

\subsection{Proof of Proposition \ref{arxiv2:prop:consistent.heterogeneity}}\label{arxiv2:app:proof:consistent.heterogeneity}
Under the uniform consistency and eigenvalue bounds in Assumption \ref{arxiv2:ass:consistent.covariance}, there exists an event whose probability uniformly approaches one as $n \to \infty$ on which $\hSigman$ and $\Sigman$ belong to a fixed compact subset of positive definite matrices. The following arguments are made on this event under Assumption \ref{arxiv2:ass:consistent.covariance}. On the compact subset of positive definite $\Sigma$, the map $\Sigma \mapsto A(\Sigma)\Sigma^{-1/2}$ is Lipschitz: there exists a constant $C > 0$ such that
\begin{align*}
    \norm{\hAn\hSigman^{-1/2} - \An\Sigman^{-1/2}} = \norm{A(\hSigman)\hSigman^{-1/2} - A(\Sigman)\Sigman^{-1/2}} \leq C\norm{\hSigman - \Sigman}.
\end{align*}
Let $\hat{v}_{n}(\thetan) = \hAn\hSigman^{-1/2}\thetan$ and $v_{n}(\thetan) = \An\Sigman^{-1/2}\thetan$. Observe that
\begin{align*}
    \hat{v}_{n}(\1) = v_{n}(\1) = 0, \quad v_{n}(\thetan) = \Sigman^{-1/2}(\thetan - \1\gamma_{n}(\thetan)), \quad \gamma_{n}(\thetan) = \frac{\1'\Sigman^{-1}\thetan}{\1'\Sigman^{-1}\1}.
\end{align*}
Therefore $\hat{v}_{n}(\thetan) - v_{n}(\thetan) = (\hAn\hSigman^{-1/2} - \An\Sigman^{-1/2})(\thetan - \1\gamma_{n}(\thetan))$, and the eigenvalue bounds give $\norm{\thetan - \1\gamma_{n}(\thetan)} \leq \UB{e}^{1/2}\norm{\Sigman^{-1/2}(\thetan - \1\gamma_{n}(\thetan))} = \UB{e}^{1/2}\norm{v_{n}(\thetan)}$. Combined with the above displays,
\begin{align*}
    \norm{\hat{v}_{n}(\thetan) - v_{n}(\thetan)} \leq \norm{\hAn\hSigman^{-1/2} - \An\Sigman^{-1/2}}\norm{\thetan - \1\gamma_{n}(\thetan)} \leq \UB{e}^{1/2}C\norm{\hSigman - \Sigman} \norm{v_{n}(\thetan)}.
\end{align*}
Thus, for each $\e > 0$, a union bound yields
\begin{align*}
    \supPn \P[\Pn]{\norm{\hat{v}_{n}(\thetan) - v_{n}(\thetan)} > \e\norm{v_{n}(\thetan)}} \leq \sup_{P_{n} \in \mathcal{P}_{n}} \P[\Pn]{\norm{\hSigman - \Sigman} > \frac{\e}{\UB{e}^{1/2}C}} + o(1).
\end{align*}
Since the RHS converges to zero under Assumption \ref{arxiv2:ass:consistent.covariance}, so does the LHS, which establishes the first claim. Consequently, since the reverse triangle inequality under $\Hn(\thetan) > 0$ gives
\begin{align*}
    \abs{\hHn(\thetan) - \Hn(\thetan)} = \abs{\norm{\hAn\hSigman^{-1/2}\thetan} - \norm{\An\Sigman^{-1/2}\thetan}} \leq \norm{\hat{v}_{n}(\thetan) - v_{n}(\thetan)},
\end{align*}
then the second claim follows from the first.

\subsection{Proof of Proposition \ref{arxiv2:prop:asymptotic.UCB}}\label{arxiv2:app:proof:asymptotic.UCB}
If \eqref{arxiv2:eq:UCB.criteria} does not hold, then there exists a $p > 0$, a subsequence $\{n_{s}\} \subseteq \{n\}$, and a corresponding sequence of distributions $P_{n_{s}} \in \mathcal{P}_{n_{s}}$ such that
\begin{align*}
    \abs{\P[P_{n_{s}}]{\hat{H}_{n_{s}}(\theta_{n_{s}}) \leq \hat{\eta}_{1-\beta,n_{s}}} - (1-\beta)}\I{H_{n_{s}}(\theta_{n_{s}}) > 0} > p, \quad \forall s.
\end{align*}
Thus, to prove Proposition \ref{arxiv2:prop:asymptotic.UCB}, it suffices to show that the above yields a contradiction under Assumptions \ref{arxiv2:ass:linear.CLT} and \ref{arxiv2:ass:consistent.covariance}. To do so, I will iteratively pass to (sub)subsequences of $\{n_{s}\}$ along which variables of interest converge under Assumptions \ref{arxiv2:ass:linear.CLT} and \ref{arxiv2:ass:consistent.covariance}. 

For conciseness, I will keep using the index $s$ when passing to subsequences, and will also use $s$ to abbreviate the subscript $n_{s}$. In this notation, the above becomes
\begin{align*}
    \abs{\P[\Pns]{\sqrt{n_{s}}\hHns(\thetans) \leq \tetans} - (1-\beta)}\I{\sqrt{n_{s}}\Hns(\thetans) > 0} > p, \quad \forall s.
\end{align*}
This implies $\Hns(\thetans) > 0$ and hence $\hHns(\thetans) > 0$ for all $s$. Thus, using similar arguments to those in Appendix \ref{arxiv2:app:proof:eta.validity}, 
\begin{align}\label{arxiv2:app:eq:UCB.criteria.contradiction}
    \abs{\P[\Pns]{F_{\chi^{2}}(n_{s}\hHns^{2}(\hthetans); \sqrt{n_{s}}\hHns(\thetans)) < \beta} - \beta} > p, \quad \forall s.
\end{align}
It suffices to derive a contradiction from \eqref{arxiv2:app:eq:UCB.criteria.contradiction}. Let $\hZns = \sqrt{n_{s}}(\hthetans - \thetans)$. Under Assumptions \ref{arxiv2:ass:linear.CLT} and \ref{arxiv2:ass:consistent.covariance}, I pass to a subsequence where
\begin{align*}
    \Sigmans \to \limSigma, \quad \norm{\limSigma} \in (0,\infty), \quad \hSigmans \to[p] \limSigma, \quad \hZns \to[d] \limZ \sim N(0, \limSigma).
\end{align*}
By continuous mapping theorem (CMT), analogous convergences hold for $(\hAns, \Ans, \hQns, \Qns)$ relative to $(\limA, \limQ)$, etc. Let $\hvns = \hAns\hSigmans^{-1/2}\sqrt{n_{s}}\thetans$ and $\vns = \Ans\Sigmans^{-1/2}\sqrt{n_{s}}\thetans$ so that $\sqrt{n_{s}}\hHns(\thetans) = \norm{\hvns}$ and $\sqrt{n_{s}}\Hns(\thetans) = \norm{\vns}$. By Proposition \ref{arxiv2:prop:consistent.heterogeneity}, 
\begin{align*}
    \frac{\norm{\hvns - \vns}}{\norm{\vns}} = \frac{\norm{\hAns \hSigmans^{-1/2} \thetans - \Ans \Sigmans^{-1/2} \thetans}}{\norm{\Ans \Sigmans^{-1/2} \thetans}} \to[p] 0, \quad \frac{\norm{\hvns}}{\norm{\vns}} = \frac{\hHns(\thetans)}{\Hns(\thetans)} \to[p] 1.
\end{align*}
By compactness, I pass to a further subsequence where
\begin{align*}
    \vns \to \limv \in [-\infty, \infty]^{K}, \quad \norm{\vns} = \sqrt{n_{s}}\Hns(\thetans) \to \limeta = \norm{\limv} \in [0, \infty], \quad \frac{\vns}{\norm{\vns}} \to \limu, \quad \norm{\limu} = 1. 
\end{align*}
To establish a contradiction of \eqref{arxiv2:app:eq:UCB.criteria.contradiction}, it suffices to establish contradictions in both the case of $\limeta < \infty$ and the case of $\limeta = \infty$.

\subsubsection{Case 1: Bounded Normalized Heterogeneity} 
For the case of $\limeta < \infty$, I appeal to CMT. First, note that $\hvns \to[p] \limv$. Indeed, 
\begin{align*}
    \norm{\hvns - \limv} \leq \frac{\norm{\hvns - \vns}}{\norm{\vns}}\norm{\vns} + \norm{\vns - \limv} = o_{p}(1)O(1) + o(1) = o_{p}(1).
\end{align*}
Given that $\hQns = \hSigmans^{-1/2}\hAns\hSigmans^{-1/2}$ with $\hAns\hAns = \hAns$ and $\hAns\hvns = \hvns$, CMT yields
\begin{align*}
    n_{s}\hHns^{2}(\hthetans) = (\hZns + \sqrt{n_{s}}\thetans)'\hQns(\hZns + \sqrt{n_{s}}\thetans) &= \norm{\hAns(\hSigmans^{-1/2}\hZns + \hvns)}^{2} \\
    &\to[d] \norm{\limA(\limSigma^{-1/2}\limZ +  \limv)}^{2} \sim \chi_{K-1}^{2}(\norm{\limv}).
\end{align*}
Thus, since $(x, \eta) \mapsto F_{\chi^{2}}(x; \eta)$ is continuous, another application of CMT implies
\begin{align*}
    F_{\chi^{2}}\paren{n_{s}\hHns^{2}(\hthetans); \sqrt{n_{s}}\hHns(\thetans)} \to[d] F_{\chi^{2}}\paren{\chi_{K-1}^{2}(\norm{\limv}); \norm{\limv}} \sim U(0,1),
\end{align*}
which contradicts \eqref{arxiv2:app:eq:UCB.criteria.contradiction}. 

\subsubsection{Case 2: Unbounded Normalized Heterogeneity} 
For the case of $\limeta = \infty$, I appeal to noncentral chi-squared asymptotics. For setup, define the standardized noncentral chi-squared quantile
\begin{align*}
    R(\beta; \eta) = \frac{F_{\chi^{2}}^{-1}(\beta; \eta) - (\eta^{2} + (K-1))}{\sqrt{4\eta^{2} + 2(K-1)}},
\end{align*}
which satisfies
\begin{align*}
    R(\beta; \eta) = G^{-1}(\beta; \eta), \quad G(x; \eta) = F_{\chi^{2}}\paren{\eta^{2} + (K-1) + \sqrt{4\eta^{2} + 2(K-1)}x; \eta},
\end{align*}
where \citet[III.2]{seri2015tight} implies
\begin{align*}
    \kappa(\eta) = \sup_{x}\abs{G(x;\eta) - \Phi(x)} = \sup_{x}\abs{F_{\chi^{2}}(x;\eta) - \Phi\paren{\frac{x - (\eta^{2} + (K-1))}{\sqrt{4\eta^{2} + 2(K-1)}}}} = O\paren{\frac{1}{\sqrt{\eta^{2} + (K-1)}}}.
\end{align*}
Moreover, $R(\beta; \eta) \to z_{\beta}$ as $\eta \to \infty$. Indeed, $\abs{G(x;\eta) - \Phi(x)} \leq \kappa(\eta)$ at $x = \Phi^{-1}(\beta \pm \kappa(\eta))$ implies $G(\Phi^{-1}(\beta - \kappa(\eta)); \eta) \leq \beta \leq G(\Phi^{-1}(\beta + \kappa(\eta));\eta)$ for $\kappa(\eta) < \min\{\beta, 1-\beta\}$, and therefore $\Phi^{-1}(\beta - \kappa(\eta)) \leq R(\beta; \eta) \leq \Phi^{-1}(\beta + \kappa(\eta))$ since $x \mapsto G(x;\eta)$ is continuous and strictly increasing. Thus, since $\kappa(\eta) \to 0$ as $\eta \to \infty$ and $\beta \mapsto \Phi^{-1}(\beta) = z_{\beta}$ is continuous at $\beta \in (0,1)$, this implies $R(\beta; \eta) \to z_{\beta}$ as $\eta \to \infty$. 

Letting $\hat{c}_{1s} = \norm{\hvns}^{2} + (K-1)$ and $\hat{c}_{2s} = \sqrt{4\norm{\hvns}^{2} + 2(K-1)}$, observe that
\begin{align*}
    \P[\Pns]{F_{\chi^{2}}(n_{s}\hHns^{2}(\hthetans); \sqrt{n_{s}}\hHns(\thetans)) < \beta} &= \P[\Pns]{n_{s}\hHns^{2}(\hthetans) < F_{\chi^{2}}^{-1}(\beta;\sqrt{n_{s}}\hHns(\thetans))} \\[5pt]
    &= \P[\Pns]{\frac{\norm{\hAns\hSigmans^{-1/2}\hZns + \hvns}^{2} - \hat{c}_{1s}}{\hat{c}_{2s}} < R(\beta; \norm{\hvns})}.
\end{align*}
Note that $\norm{\hvns} \to[p] \infty$. Indeed, for any $\e \in (0,1)$ and $C > 0$, a union bound yields
\begin{align*}
    \limsup_{s}\P[\Pns]{\norm{\hvns} \leq C} \leq \limsup_{s}\P[\Pns]{\norm{\vns} \leq \frac{C}{1-\e}} + \limsup_{s}\P[\Pns]{\frac{\norm{\hvns}}{\norm{\vns}}-1 < -\e} = 0 + 0,
\end{align*}
where the first zero follows from $\norm{\vns} \to \infty$ and the second from $\norm{\hvns}/\norm{\vns} \to[p] 1$. Since this holds for any $C > 0$, then $\norm{\hvns} \to[p] \infty$. Moreover, note that $\hvns/\norm{\hvns} \to[p] \limu$. Indeed,
\begin{align*}
    \norm{\frac{\hvns}{\norm{\hvns}} - \frac{\vns}{\norm{\vns}}} = \norm{\frac{\hvns(\norm{\vns} - \norm{\hvns}) + (\hvns - \vns)\norm{\hvns}}{\norm{\hvns}\norm{\vns}}} \leq \abs{1-\frac{\norm{\hvns}}{\norm{\vns}}} + \frac{\norm{\hvns - \vns}}{\norm{\vns}} \to[p] 0.
\end{align*}
Thus, $\hvns/\norm{\hvns} \to[p] \limu$ follows from $\vns/\norm{\vns} \to \limu$. The above convergences yield
\begin{align*} 
    \frac{\norm{\hAns\hSigmans^{-1/2}\hZns + \hvns}^{2} - \norm{\hvns}^{2}}{2\norm{\hvns}} = \frac{\norm{\hAns\hSigmans^{-1/2}\hZns}^{2}}{2\norm{\hvns}} + \inner{\frac{\hvns}{\norm{\hvns}}, \hSigmans^{-1/2}\hZns} \to[d] \inner{\limu,\limSigma^{-1/2}\limZ} \sim N(0,1),
\end{align*}
and moreover
\begin{align*}
    \abs{\frac{\norm{\hAns\hSigmans^{-1/2}\hZns + \hvns}^{2} - \norm{\hvns}^{2}}{2\norm{\hvns}} - \frac{\norm{\hAns\hSigmans^{-1/2}\hZns + \hvns}^{2} - \hat{c}_{1s}}{\hat{c}_{2s}}} \to[p] 0,
\end{align*}
which then implies
\begin{align*}
    \frac{\norm{\hAns\hSigmans^{-1/2}\hZns + \hvns}^{2} - \hat{c}_{1s}}{\hat{c}_{2s}} \to[d] N(0,1).
\end{align*}
Furthermore, $R(\beta; \norm{\hvns}) \to[p] z_{\beta}$. Indeed, since $R(\beta; \eta) \to z_{\beta}$ as $\eta \to \infty$, then for each $\e > 0$, one can choose $C_{\e} > 0$ large such that $|R(\beta; \eta) - z_{\beta}| \leq \e$ for any $\eta \geq C_{\e}$. Therefore $\norm{\hvns} \to[p] \infty$ implies $\P[\Pns]{|R(\beta; \norm{\hvns}) - z_{\beta}| > \e} \leq \P[\Pns]{\norm{\hvns} < C_{\e}} \to 0$, and hence $R(\beta; \norm{\hvns}) \to[p] z_{\beta}$. Altogether,
\begin{align*}
    \P[\Pns]{\frac{\norm{\hAns\hSigmans^{-1/2}\hZns + \hvns}^{2} - \hat{c}_{1s}}{\hat{c}_{2s}} < R(\beta; \norm{\hvns})} \to \P[]{N(0,1) < z_{\beta}} = \beta,
\end{align*}
which is a contradiction of \eqref{arxiv2:app:eq:UCB.criteria.contradiction}.

\subsection{Proof of Proposition \ref{arxiv2:prop:asymptotic.robust.CI}}\label{arxiv2:app:proof:asymptotic.robust.CI}
If \eqref{arxiv2:eq:asymptotic.robust.coverage} does not hold, then there exist $p > 0$, subsequence $\{n_{s}\} \subseteq \{n\}$, sequence of distributions $P_{n_{s}} \in \mathcal{P}_{n_{s}}$, and sequence of alternative weights $\lns[n_{s}] \in \Lns[0,n_{s}]$ such that
\begin{align*}
    \P[P_{n_{s}}]{\lns[n_{s}]'\thetans[n_{s}] \in \rCIns[n_{s}]} - (1-\alpha-\beta) < -p, \quad \forall s.
\end{align*}
Thus, to prove Proposition \ref{arxiv2:prop:asymptotic.robust.CI}, it suffices to show that the above yields a contradiction under Assumptions \ref{arxiv2:ass:linear.CLT}-\ref{arxiv2:ass:Lambda.representation}. To do so, I will iteratively pass to (sub)subsequences of $\{n_{s}\}$ along which variables of interest converge under Assumptions \ref{arxiv2:ass:linear.CLT}-\ref{arxiv2:ass:Lambda.representation}.

For conciseness, I will keep using the index $s$ when passing to subsequences, and will also use $s$ to abbreviate the subscript $n_{s}$. In this notation, the above becomes
\begin{align}\label{arxiv2:app:eq:CI.criteria.contradiction}
    \P[\Pns]{\lns'\thetans \notin \rCIns} > p + (\alpha+\beta), \quad \forall s.
\end{align}
On event $\cE_{s} = \curly{\hHns(\thetans)\max_{\l \in \hLns}\norm{\l - \hwns}_{\hSigmans} \leq \hBns}$, observe that
\begin{align*}
    \P[\Pns]{\lns'\thetans \notin \rCIns} &\leq \P[\Pns]{\lns'\thetans \notin \rCIns, \cE_{s}} + \P[\Pns]{\cE_{s}^{c}} \\
    &\leq \P[\Pns]{\lns'\thetans \notin \rCIns, \lns \in \hLns, \cE_{s}} + \P[\Pns]{\lns \notin \hLns} + \P[\Pns]{\cE_{s}^{c}},
\end{align*}
where since $\hBns = \hetans \max_{\l \in \hLns}\norm{\l - \hwns}_{\hSigmans}$,
\begin{align*}
    \P[\Pns]{\cE_{s}^{c}} &= \P[\Pns]{\hHns(\thetans)\max_{\l \in \hLns}\norm{\l - \hwns}_{\hSigmans} > \hetans \max_{\l \in \hLns}\norm{\l - \hwns}_{\hSigmans}} \leq \P[\Pns]{\hHns(\thetans) > \hetans}.
\end{align*}
Proposition \ref{arxiv2:prop:asymptotic.UCB} and containment condition \eqref{arxiv2:eq:target.containment} yield, along any subsequence,
\begin{align*}
    \limsup_{s \to \infty}\P[\Pns]{\hHns(\thetans) > \hetans} \leq \beta, \quad \limsup_{s \to \infty}\P[\Pns]{\lns \notin \hLns} = 0.
\end{align*}
Thus, to establish a contradiction of \eqref{arxiv2:app:eq:CI.criteria.contradiction}, it suffices to find a subsequence where
\begin{align}\label{arxiv2:app:eq:noncoverage.bound.criteria}
    \limsup_{s \to \infty}\P[\Pns]{\lns'\thetans \notin \rCIns, \lns \in \hLns, \cE_{s}} \leq \alpha.
\end{align}

I first consider the upper one-sided CI---the lower one-sided case follows analogously. 
\begin{align*}
    \P[\Pns]{\lns'\thetans \notin \rCIns, \lns \in \hLns, \cE_{s}} &\leq \P[\Pns]{\frac{\lns'\thetans - \hwns'\hthetans}{\tsigmawns} > z_{1-\alpha} + \max_{\l \in \hLns}\frac{\hHns(\thetans)\norm{\l - \hwns}_{\hSigmans}}{\tsigmawns}, \lns \in \hLns} \\
    &\leq \P[\Pns]{\frac{\lns'\thetans - \hwns'\hthetans}{\tsigmawns} > z_{1-\alpha} + \max_{\l \in \hLns}\frac{\abs{(\l - \hwns)'\thetans}}{\tsigmawns}, \lns \in \hLns} \\
    &\leq \P[\Pns]{\frac{\lns'\thetans - \hwns'\hthetans}{\tsigmawns} > z_{1-\alpha} + \frac{\abs{(\lns - \hwns)'\thetans}}{\tsigmawns}} \\
    &\leq \P[\Pns]{\frac{\hwns'\hZns}{\hsigmawns} < z_{\alpha}}, \quad \hZns = \sqrt{n_{s}}(\hthetans - \thetans),
\end{align*}
where the second line follows from Proposition \ref{arxiv2:prop:bound}. Under Assumptions \ref{arxiv2:ass:linear.CLT}-\ref{arxiv2:ass:consistent.S}, I now pass to a subsequence where 
\begin{align*}
    \begin{pmatrix}
    \Sigmans \\
    \wns \\
    \Sns
    \end{pmatrix}
    \to 
    \begin{pmatrix}
    \limSigma \\
    \limw \\
    \limS
    \end{pmatrix},
    \quad \norm{\limSigma},\norm{\limw} \in (0, \infty), \quad \limS \in \bbS, \quad 
    \begin{pmatrix}
    \hSigmans \\
    \hwns \\
    \hSns
    \end{pmatrix}
    \to[p]
    \begin{pmatrix}
    \limSigma \\
    \limw \\
    \limS
    \end{pmatrix},
    \quad \hZns \to[d] \limZ \sim N(0, \limSigma).
\end{align*}
CMT implies $\hsigmawns \to[p] \limsigma = \sqrt{\limw'\limSigma \limw} \in (0,\infty)$ and $\hwns'\hZns/\hsigmawns \to[d] N(0,1)$, from which
\begin{align*}
    \limsup_{s \to \infty}\P[\Pns]{\lns'\thetans \notin \rCIns, \lns \in \hLns, \cE_{s}} \leq \limsup_{s \to \infty}\P[\Pns]{\frac{\hwns'\hZns}{\hsigmawns} < z_{\alpha}} = \P[]{N(0,1) < z_{\alpha}} = \alpha,
\end{align*}
which establishes \eqref{arxiv2:app:eq:noncoverage.bound.criteria}, and hence a contradiction of \eqref{arxiv2:app:eq:CI.criteria.contradiction}.

I now consider the two-sided CI, staying within the previous subsequence to maintain the above convergences relative to $(\limSigma, \limw, \limS, \limZ)$. As in Appendix \ref{arxiv2:app:proof:asymptotic.UCB}, let $\vns = \Ans\Sigmans^{-1/2}\sqrt{n_{s}}\thetans$ and $\hvns = \hAns\hSigmans^{-1/2}\sqrt{n_{s}}\thetans$ so that $\sqrt{n_{s}}\Hns(\thetans) = \norm{\vns}$ and $\sqrt{n_{s}}\hHns(\thetans) = \norm{\hvns}$. By compactness, I pass to a further subsequence where
\begin{align*}
    \vns \to \limv \in [-\infty, \infty]^{K}, \quad \norm{\vns} = \sqrt{n_{s}}\Hns(\thetans) \to \limeta = \norm{\limv} \in [0, \infty].
\end{align*}
From here it suffices to, separately for the case of $\limeta < \infty$ and the case of $\limeta = \infty$, find a further subsequence where \eqref{arxiv2:app:eq:noncoverage.bound.criteria} holds. In what follows, denote
\begin{align*}
    \hDeltans &= (\lns - \hwns)'\sqrt{n_{s}}\thetans = (\lns - \hwns)'\Sigmans^{1/2}\vns = (\lns - \hwns)'\hSigmans^{1/2}\hvns, & \hDns &= \max_{\l \in \hLns}\norm{\l - \hwns}_{\hSigmans}, \\
    \Deltans &= (\lns - \wns)'\sqrt{n_{s}}\thetans = (\lns - \wns)'\Sigmans^{1/2}\vns, & \Dns &= \max_{\l \in \Lns}\norm{\l - \wns}_{\Sigmans}.
\end{align*}
In this notation, 
\begin{align}
    \P[\Pns]{\lns'\thetans \notin \rCIns, \lns \in \hLns, \cE_{s}} &\leq \P[\Pns]{\frac{\abs{\lns'\thetans - \hwns'\hthetans}}{\tsigmawns} > \cv{\frac{\hHns(\thetans)\hDns}{\tsigmawns}},\lns \in \hLns} \notag \\
    &= \P[\Pns]{\abs{\frac{\hDeltans}{\hsigmawns} - \frac{\hwns'\hZns}{\hsigmawns}} > \cv{\frac{\norm{\hvns}\hDns}{\hsigmawns}},\lns \in \hLns}. \label{arxiv2:app:eq:two.sided.criteria}
\end{align}

\subsubsection{Case 1: Bounded Normalized Heterogeneity} 
For the case of $\limeta < \infty$, observe that
\begin{align*}
    \abs{\hDeltans - \Deltans} = \abs{(\hwns - \wns)'\Sigmans^{1/2}\vns} \lesssim \norm{\hwns - \wns}\norm{\vns} \to[p] 0, \quad \abs{\Deltans} \lesssim \norm{\vns} = O(1).
\end{align*}
By compactness, I pass to a subsequence where $\Deltans \to \limDelta \in \R$, from which the above implies $\hDeltans \to[p] \limDelta$. Using Proposition \ref{arxiv2:prop:bound} (under $\lns \in \hLns$) to bound \eqref{arxiv2:app:eq:two.sided.criteria}, it suffices to show that
\begin{align}\label{arxiv2:app:eq:two.sided.criteria.finite.eta}
    \limsup_{s \to \infty}\P[\Pns]{\abs{\frac{\hDeltans}{\hsigmawns} - \frac{\hwns'\hZns}{\hsigmawns}} > \cv{\frac{\abs{\hDeltans}}{\hsigmawns}}} \leq \alpha.
\end{align}
And indeed, for $\limsigma = \sqrt{\limw'\limSigma \limw}$, CMT implies
\begin{align*}
    \abs{\frac{\hDeltans}{\hsigmawns} - \frac{\hwns'\hZns}{\hsigmawns}} - \cv{\frac{\abs{\hDeltans}}{\hsigmawns}} \to[d] \abs{\frac{\limDelta}{\limsigma} - N(0,1)} - \cv{\frac{\abs{\limDelta}}{\limsigma}},
\end{align*}
By definition of $\cv{|b|}$, \eqref{arxiv2:app:eq:two.sided.criteria.finite.eta} follows with equality (and with $\lim_{s \to \infty}$). This establishes \eqref{arxiv2:app:eq:noncoverage.bound.criteria}, and hence a contradiction of \eqref{arxiv2:app:eq:CI.criteria.contradiction}.

\subsubsection{Case 2: Unbounded Normalized Heterogeneity}
Suppose $\limeta = \infty$. The argument in Appendix \ref{arxiv2:app:proof:asymptotic.UCB} gives $\norm{\hvns} \to[p] \infty$. For $\limL = \L(\limS)$, compactness and Proposition \ref{arxiv2:prop:consistent.distance} give
\begin{align*}
    \Dns \to \limD, \quad
    \hDns \to[p] \limD, \quad
    \limD = \max_{\l \in \limL}\norm{\l - \limw}_{\limSigma} \in (0,\infty),
\end{align*}
where $\limD > 0$ follows from Lemma \ref{arxiv2:app:lemma:nondegenerate.disagreement}. Define
\begin{align*}
    A_{s} = \frac{\hDeltans}{\hsigmawns}, \quad
    B_{s} = \frac{\norm{\hvns}\hDns}{\hsigmawns}, \quad
    T_{s} = \frac{\hwns'\hZns}{\hsigmawns}.
\end{align*}
The preceding arguments give $B_{s} \to[p] \infty$ and $T_{s} \to[d] N(0,1)$. Moreover,
\begin{align*}
    \abs{\frac{\Deltans}{\norm{\vns}\Dns}} \leq 1
\end{align*}
because $\lns \in \Lns$ (due to $\lns \in \Lns[0,s] \subseteq \Lns$). Pass to a further subsequence on which
\begin{align*}
    \frac{\Deltans}{\norm{\vns}\Dns} \to c_{\infty} \in [-1,1].
\end{align*}
The perturbation bounds used above imply
\begin{align*}
    \abs{\frac{A_{s}}{B_{s}} - \frac{\Deltans}{\norm{\vns}\Dns}}
    = \abs{\frac{\hDeltans}{\norm{\hvns}\hDns} - \frac{\Deltans}{\norm{\vns}\Dns}} \lesssim_{p}
    \frac{\norm{\hvns - \vns}}{\norm{\vns}}
    + \frac{\norm{\hvns}}{\norm{\vns}}\frac{\abs{\hDns - \Dns}}{\Dns}
    + \frac{\norm{\hwns - \wns}}{\Dns}
    \to[p] 0.
\end{align*}
Thus $A_{s}/B_{s} \to[p] c_{\infty}$. On $\curly{\lns \in \hLns}$, Proposition \ref{arxiv2:prop:bound} gives $\abs{A_{s}} \leq B_{s}$. Lemma \ref{arxiv2:app:lemma:diverging.folded.normal} therefore yields, for every $\alpha \in (0,1)$,
\begin{align*}
    \limsup_{s \to \infty}
    \P[\Pns]{\abs{A_{s} - T_{s}} > \cv{B_{s}},\ \lns \in \hLns}
    \leq \alpha.
\end{align*}
Together with \eqref{arxiv2:app:eq:two.sided.criteria}, this establishes \eqref{arxiv2:app:eq:noncoverage.bound.criteria} in the unbounded-normalized-heterogeneity case and completes the proof.

\section{Supplementary Lemmas}

\begin{lemma}\label{arxiv2:app:lemma:hausdorff.continuity}
Under Assumption \ref{arxiv2:ass:Lambda.representation}, $S \mapsto \L(S)$ is Lipschitz in the Hausdorff metric:
\begin{align*}
    d_{H}(\L(S_{1}),\L(S_{2})) \leq \frac{\diam(\bbW)}{\delta_{g}}L_{g}d_{\cS}(S_{1},S_{2}), \quad \forall S_{1},S_{2} \in \bbS.
\end{align*}
\end{lemma}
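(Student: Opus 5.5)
By symmetry of the Hausdorff distance, it suffices to show that for every $\l_{1} \in \L(S_{1})$ there is some $\l_{2} \in \L(S_{2})$ with $\norm{\l_{1} - \l_{2}} \leq \frac{\diam(\bbW)}{\delta_{g}}L_{g}d_{\cS}(S_{1},S_{2})$; exchanging the roles of $S_{1}$ and $S_{2}$ then gives the other half of $d_{H}$. The plan is the standard Slater-point convex-combination argument (Robinson--Ursescu style), using only conditions (i)--(iii) of Assumption \ref{arxiv2:ass:Lambda.representation}. Write $d = d_{\cS}(S_{1},S_{2})$ and let $\l^{\circ} = \l^{\circ}(S_{2}) \in \bbW$ be the Slater point from condition (iii), so that $g(\l^{\circ},S_{2}) \leq -\delta_{g}$.

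Fix $\l_{1} \in \L(S_{1})$. Because $g(\l_{1},S_{1}) \leq 0$, condition (ii) yields $g(\l_{1},S_{2}) \leq L_{g}d$. If $L_{g}d \geq \delta_{g}$, the claim is immediate: take $\l_{2} = \l^{\circ}$, which lies in $\L(S_{2})$, and note $\norm{\l_{1}-\l^{\circ}} \leq \diam(\bbW) \leq \diam(\bbW)L_{g}d/\delta_{g}$. Otherwise set
\begin{align*}
    t = \frac{L_{g}d}{L_{g}d + \delta_{g}} \in [0,1), \qquad \l_{2} = (1-t)\l_{1} + t\l^{\circ}.
\end{align*}
Since $\bbW$ is convex, $\l_{2} \in \bbW$, and convexity of $\l \mapsto g(\l,S_{2})$ from condition (i) gives
\begin{align*}
    g(\l_{2},S_{2}) \leq (1-t)L_{g}d - t\delta_{g} = 0,
\end{align*}
so $\l_{2} \in \L(S_{2})$. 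The distance is then
\begin{align*}
    \norm{\l_{2}-\l_{1}} = t\norm{\l^{\circ}-\l_{1}} \leq \frac{L_{g}d}{\delta_{g}}\diam(\bbW),
\end{align*}
which holds since $t \leq L_{g}d/\delta_{g}$ and both $\l_{1}$ and $\l^{\circ}$ lie in $\bbW$. Taking the supremum over $\l_{1}$ gives $\sup_{\l_{1} \in \L(S_{1})}\dist(\l_{1},\L(S_{2})) \leq \frac{\diam(\bbW)}{\delta_{g}}L_{g}d$, and the symmetric argument, using $\l^{\circ}(S_{1})$, bounds the other term.

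The main point requiring care is choosing the mixing weight $t$ so that the violation $L_{g}d$ is exactly offset by the Slater slack $\delta_{g}$, while $t$ stays at most $L_{g}d/\delta_{g}$. Nonemptiness of $\L(S_{2})$ and attainment of the infima are not an issue: condition (iii) guarantees nonemptiness, and compactness of $\L(S_{2})$ follows from continuity in (i) on the compact $\bbW$. In any case the bound only needs the existence of a feasible point, which the construction supplies.
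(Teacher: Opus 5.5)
Your proof is correct and uses essentially the same Slater-point convex-combination argument as the paper. The paper first isolates the error bound $\dist(\l,\L(S)) \leq \frac{\diam(\bbW)}{\delta_{g}}\max\{g(\l,S),0\}$, choosing the mixing weight from $\max\{g(\l,S),0\}$ and then invoking the Lipschitz condition, whereas you build $t$ directly from the Lipschitz bound $L_{g}d_{\cS}(S_{1},S_{2})$ (your separate case $L_{g}d \geq \delta_{g}$ is harmless but unnecessary, since your $t$ always lies in $[0,1)$).
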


\begin{proof}
Under Assumption \ref{arxiv2:ass:Lambda.representation}, the distance from $\l$ to $\L(S)$ is bounded as
\begin{align*}
    \dist(\l,\L(S)) \leq \frac{\diam(\bbW)}{\delta_{g}}\max\curly{g(\l,S), 0}, \quad \forall \l \in \bbW, \quad \forall S \in \bbS.
\end{align*}
Indeed, take any $(\l,S) \in \bbW \times \bbS$ and consider
\begin{align*}
    t = \frac{\max\curly{g(\l,S), 0}}{\max\curly{g(\l,S), 0} + \delta_{g}}, \quad \l_{t} = (1-t)\l + t\l^{\circ}(S) \in \bbW.
\end{align*}
By convexity,
\begin{align*}
    g(\l_{t},S) \leq (1-t)g(\l,S) + tg(\l^{\circ}(S),S) \leq (1-t)\max\curly{g(\l,S), 0} - t\delta_{g} = 0.
\end{align*}
This means $\l_{t} \in \L(S)$, and hence
\begin{align*}
    \dist(\l, \L(S)) \leq \norm{\l - \l_{t}} \leq t\diam(\bbW) \leq \frac{\diam(\bbW)}{\delta_{g}}\max\curly{g(\l,S), 0}.
\end{align*}
Now take any $S_{1},S_{2} \in \bbS$. For $\l \in \L(S_{2})$, observe that
\begin{align*}
    g(\l,S_{1}) = g(\l,S_{1})-g(\l,S_{2})+g(\l,S_{2}) \leq g(\l,S_{1})-g(\l,S_{2}) \leq \max_{\l \in \bbW}\abs{g(\l,S_{1})-g(\l,S_{2})}.
\end{align*}
Thus, the distance bound and the Lipschitz condition yield
\begin{align*}
    \max_{\l \in \L(S_{2})}\dist(\l, \L(S_{1})) \leq \frac{\diam(\bbW)}{\delta_{g}}\max_{\l \in \L(S_{2})}\max\curly{g(\l,S_{1}),0} \leq \frac{\diam(\bbW)}{\delta_{g}}L_{g}d_{\cS}(S_{1},S_{2}).
\end{align*}
Reversing the roles of $S_{1}$ and $S_{2}$ gives the same bound on $\max_{\l \in \L(S_{1})}\dist(\l, \L(S_{2}))$, and hence the desired bound on $d_{H}(\L(S_{1}), \L(S_{2}))$. 
\end{proof}

\begin{lemma}\label{arxiv2:app:lemma:perturbation.bounds}
For positive definite $\Sigma_{1}, \Sigma_{2}$ with $\emin(\Sigma_{1}), \emin(\Sigma_{2}) \geq \LB{e} > 0$,
\begin{align*}
    \abs{\norm{x}_{\Sigma_{1}}-\norm{x}_{\Sigma_{2}}} \leq \frac{1}{2\sqrt{\LB{e}}}\norm{\Sigma_{1} - \Sigma_{2}}\norm{x}, \quad \forall x \in \R^{K}.
\end{align*}
For nonempty compact $\L_{1},\L_{2} \subseteq \W$, $w_{1},w_{2} \in \W$, and positive definite $\Sigma$ with $\emax(\Sigma) \leq \UB{e}$,
\begin{align*}
    \abs{\max_{\l \in \L_{1}}\norm{\l - w_{1}}_{\Sigma} - \max_{\l \in \L_{2}}\norm{\l - w_{1}}_{\Sigma}} &\leq \sqrt{\UB{e}} d_{H}(\L_{1}, \L_{2}), \\
    \abs{\max_{\l \in \L_{1}}\norm{\l - w_{1}}_{\Sigma} - \max_{\l \in \L_{1}}\norm{\l - w_{2}}_{\Sigma}} &\leq \sqrt{\UB{e}} \norm{w_{1}-w_{2}}.
\end{align*}
\end{lemma}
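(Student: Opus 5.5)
The first claim is a statement about the seminorm $x \mapsto \norm{x}_{\Sigma}$ under perturbations of $\Sigma$. I will write the difference of norms as a difference of square roots:
\begin{align*}
    \norm{x}_{\Sigma_{1}}-\norm{x}_{\Sigma_{2}} = \frac{x'(\Sigma_{1}-\Sigma_{2})x}{\norm{x}_{\Sigma_{1}}+\norm{x}_{\Sigma_{2}}}.
\end{align*}
The numerator is bounded by $\norm{\Sigma_{1}-\Sigma_{2}}\norm{x}^{2}$. The denominator is at least $2\sqrt{\LB{e}}\norm{x}$, because $\norm{x}_{\Sigma_{j}}^{2} \geq \emin(\Sigma_{j})\norm{x}^{2} \geq \LB{e}\norm{x}^{2}$. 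Dividing gives the first bound; the case $x = 0$ is trivial. This is the only step that uses the lower eigenvalue bound.

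For the second claim, I will use that $\norm{\cdot}_{\Sigma}$ is a norm with $\norm{v}_{\Sigma} \leq \sqrt{\UB{e}}\norm{v}$. Fix $\l_{1} \in \L_{1}$ attaining the maximum over $\L_{1}$; it exists by compactness and continuity. Choose $\l_{2} \in \L_{2}$ with $\norm{\l_{1}-\l_{2}} = \dist(\l_{1},\L_{2}) \leq d_{H}(\L_{1},\L_{2})$, which is again possible because $\L_{2}$ is compact. The triangle inequality then gives
\begin{align*}
    \max_{\l \in \L_{1}}\norm{\l - w_{1}}_{\Sigma} = \norm{\l_{1}-w_{1}}_{\Sigma} \leq \norm{\l_{2}-w_{1}}_{\Sigma} + \norm{\l_{1}-\l_{2}}_{\Sigma} \leq \max_{\l \in \L_{2}}\norm{\l - w_{1}}_{\Sigma} + \sqrt{\UB{e}}\,d_{H}(\L_{1},\L_{2}).
\end{align*}
Swapping the roles of $\L_{1}$ and $\L_{2}$ gives the reverse inequality, and the absolute-value bound follows.

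For the third claim, I will use the triangle inequality pointwise in $\l$:
\begin{align*}
    \abs{\norm{\l-w_{1}}_{\Sigma} - \norm{\l-w_{2}}_{\Sigma}} \leq \norm{w_{1}-w_{2}}_{\Sigma} \leq \sqrt{\UB{e}}\norm{w_{1}-w_{2}}.
\end{align*}
Since the maximum of functions is $1$-Lipschitz in the sup norm, taking the maximum over $\L_{1}$ on both sides yields the bound.

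None of these steps is a real obstacle. The only point needing care is the existence of the maximizer and of the nearest point in the second claim, and compactness of $\L_{1}$ and $\L_{2}$ settles both.
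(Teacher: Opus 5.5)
Your proposal is correct and follows the same route as the paper. For the first claim you use the difference-of-square-roots identity with the lower eigenvalue bound in the denominator; for the second, nearest-point matching in the compact set $\L_{2}$ (controlled by the Hausdorff distance) together with the triangle inequality and $\norm{v}_{\Sigma} \leq \sqrt{\UB{e}}\norm{v}$; and for the third, a pointwise triangle inequality followed by maximization over $\L_{1}$. The only cosmetic difference is that you fix a maximizer over $\L_{1}$ up front, whereas the paper bounds every $\l_{1} \in \L_{1}$ and then takes the maximum, which is equivalent.
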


\begin{proof}
For the first claim, $x=0$ is immediate. For $x \neq 0$, observe that
\begin{align*}
    \abs{\norm{x}_{\Sigma_{1}} - \norm{x}_{\Sigma_{2}}} = \frac{\abs{x'(\Sigma_{1} - \Sigma_{2})x}}{\norm{x}_{\Sigma_{1}} + \norm{x}_{\Sigma_{2}}} \leq \frac{\norm{\Sigma_{1} - \Sigma_{2}}\norm{x}^{2}}{\sqrt{\emin(\Sigma_{1})}\norm{x} + \sqrt{\emin(\Sigma_{2})}\norm{x}} \leq \frac{1}{2\sqrt{\LB{e}}}\norm{\Sigma_{1} - \Sigma_{2}}\norm{x}.
\end{align*}

For the first inequality of the second claim, note that for each $\l_{1} \in \L_{1}$, compactness of $\L_{2}$ gives a nearest point $\l_{2} \in \L_{2}$ such that
\begin{align*}
    \norm{\l_{1} - \l_{2}} = \dist(\l_{1}, \L_{2}) \leq d_{H}(\L_{1}, \L_{2}).
\end{align*}
The reverse triangle inequality and max-eigenvalue bound imply
\begin{align*}
    |\norm{\l_{1} - w_{1}}_{\Sigma} - \norm{\l_{2} - w_{1}}_{\Sigma}| \leq \norm{\l_{1} - \l_{2}}_{\Sigma} \leq \sqrt{\UB{e}}\norm{\l_{1} - \l_{2}}.
\end{align*}
Combining with the previous inequality yields
\begin{align*}
    \norm{\l_{1} - w_{1}}_{\Sigma} \leq \norm{\l_{2} - w_{1}}_{\Sigma} + \sqrt{\UB{e}}d_{H}(\L_{1}, \L_{2}) \leq \max_{\l \in \L_{2}}\norm{\l - w_{1}}_{\Sigma} + \sqrt{\UB{e}}d_{H}(\L_{1}, \L_{2})
\end{align*}
Taking the maximum over $\l \in \L_{1}$ then yields
\begin{align*}
    \max_{\l \in \L_{1}}\norm{\l - w_{1}}_{\Sigma} \leq \max_{\l \in \L_{2}}\norm{\l - w_{1}}_{\Sigma} + \sqrt{\UB{e}}d_{H}(\L_{1}, \L_{2})
\end{align*}
Reversing the roles of $\L_{1}$ and $\L_{2}$ then yields the first inequality of the second claim. For the second inequality of the second claim, note that for each $\l \in \L_{1}$,
\begin{align*}
    \abs{\norm{\l - w_{1}}_{\Sigma} - \norm{\l - w_{2}}_{\Sigma}} \leq \norm{w_{1} - w_{2}}_{\Sigma} \leq \sqrt{\UB{e}}\norm{w_{1} - w_{2}},
\end{align*}
and hence
\begin{align*}
    \norm{\l - w_{1}}_{\Sigma} &\leq \norm{\l - w_{2}}_{\Sigma} + \sqrt{\UB{e}}\norm{w_{1} - w_{2}} \leq \max_{\l \in \L_{1}}\norm{\l - w_{2}}_{\Sigma} + \sqrt{\UB{e}}\norm{w_{1} - w_{2}}, \\
    \norm{\l - w_{2}}_{\Sigma} &\leq \norm{\l - w_{1}}_{\Sigma} + \sqrt{\UB{e}}\norm{w_{1} - w_{2}} \leq \max_{\l \in \L_{1}}\norm{\l - w_{1}}_{\Sigma} + \sqrt{\UB{e}}\norm{w_{1} - w_{2}}.
\end{align*}
Taking the maximum over $\l \in \L_{1}$ on the LHS of both lines then yields the second inequality of the second claim.
\end{proof}

\begin{lemma}\label{arxiv2:app:lemma:minimax.quadratic.bounds}
For nonempty, compact, and convex $\L \subseteq \W$ and positive definite $\Sigma$ with $\emin(\Sigma) \geq \LB{e} > 0$ and $\emax(\Sigma) \leq \UB{e}$, the corresponding minimax-bias weights $w^{*} = \arg\min_{\Bar{w} \in \L}\max_{\l \in \L}\norm{\l - \Bar{w}}_{\Sigma}$ satisfy
\begin{align*}
    \norm{\Bar{w} - w^{*}}^{2} \leq \frac{2\sqrt{\UB{e}}}{\LB{e}}\diam(\L)\paren{\max_{\l \in \L}\norm{\l - \Bar{w}}_{\Sigma} - \max_{\l \in \L}\norm{\l - w^{*}}_{\Sigma}}, \quad \forall \Bar{w} \in \L.
\end{align*}
\end{lemma}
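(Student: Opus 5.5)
The plan is to use the strong convexity of the squared objective $f(\Bar{w}) = \max_{\l \in \L}\norm{\l - \Bar{w}}_{\Sigma}^{2}$ together with first-order optimality of $w^{*}$ over the convex set $\L$. Each $f_{\l}(\Bar{w}) = \norm{\l - \Bar{w}}_{\Sigma}^{2}$ satisfies the exact identity
\begin{align*}
    f_{\l}(\Bar{w}) = f_{\l}(w^{*}) + 2(w^{*} - \l)'\Sigma(\Bar{w} - w^{*}) + \norm{\Bar{w} - w^{*}}_{\Sigma}^{2}.
\end{align*}
Taking the maximum over $\l$ gives $f$ as a pointwise maximum of functions with the common Hessian $2\Sigma$. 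Hence $f$ is $2\LB{e}$-strongly convex in the Euclidean norm: $f - \LB{e}\norm{\cdot}^{2}$ is convex. Since $w^{*}$ minimizes $f$ over the convex set $\L$, the standard quadratic-growth consequence of strong convexity at a constrained minimizer gives
\begin{align*}
    f(\Bar{w}) - f(w^{*}) \geq \LB{e}\norm{\Bar{w} - w^{*}}^{2}, \quad \forall \Bar{w} \in \L.
\end{align*}
To prove this without subgradient calculus, I will use the midpoint argument. Let $m = (\Bar{w} + w^{*})/2 \in \L$. Strong convexity gives $f(m) \leq \tfrac{1}{2}f(\Bar{w}) + \tfrac{1}{2}f(w^{*}) - \tfrac{\LB{e}}{4}\norm{\Bar{w} - w^{*}}^{2}$, and optimality gives $f(m) \geq f(w^{*})$. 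Combining these yields $f(\Bar{w}) - f(w^{*}) \geq \tfrac{\LB{e}}{2}\norm{\Bar{w} - w^{*}}^{2}$. This is off by a factor of two from the sharp bound, which is fine as long as the target constant still absorbs it. Alternatively, iterating the midpoint step, or using the directional-derivative form, recovers the full $\LB{e}$.

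The next step converts squared distances into distances. Write $D(\Bar{w}) = \sqrt{f(\Bar{w})}$, so that
\begin{align*}
    f(\Bar{w}) - f(w^{*}) = \paren{D(\Bar{w}) - D(w^{*})}\paren{D(\Bar{w}) + D(w^{*})}.
\end{align*}
Because $\Bar{w}, w^{*}, \l$ all lie in $\L$, each of $D(\Bar{w})$ and $D(w^{*})$ is at most $\sqrt{\UB{e}}\,\diam(\L)$, so the sum is at most $2\sqrt{\UB{e}}\diam(\L)$. Combining with the growth bound then gives
\begin{align*}
    \norm{\Bar{w} - w^{*}}^{2} \leq \frac{2\sqrt{\UB{e}}\diam(\L)}{\LB{e}}\paren{D(\Bar{w}) - D(w^{*})},
\end{align*}
which is exactly the claimed inequality when the sharp constant $\LB{e}$ is used in the growth bound.

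The main obstacle is securing the sharp quadratic-growth constant $\LB{e}$ rather than $\LB{e}/2$. I will obtain it through the variational inequality. First, apply Danskin's theorem: since $\L$ is compact and $f_{\l}$ is jointly continuous, the directional derivative of $f$ at $w^{*}$ in the direction $d = \Bar{w} - w^{*}$ is $\max_{\l \in \L^{*}} 2(w^{*} - \l)'\Sigma d$, where $\L^{*}$ denotes the set of maximizers of $f_{\l}(w^{*})$. Optimality over the convex set $\L$ makes this directional derivative nonnegative, so some maximizer $\l^{\dagger} \in \L^{*}$ has $2(w^{*} - \l^{\dagger})'\Sigma d \geq 0$. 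Then the identity above gives
\begin{align*}
    f(\Bar{w}) \geq f_{\l^{\dagger}}(\Bar{w}) \geq f(w^{*}) + \norm{d}_{\Sigma}^{2} \geq f(w^{*}) + \LB{e}\norm{d}^{2}.
\end{align*}
This yields the sharp constant and completes the argument.
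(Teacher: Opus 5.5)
Your proposal is correct. Its skeleton matches the paper's proof, which proceeds in two steps:
\begin{enumerate}
    \item Establish the quadratic-growth inequality $\max_{\l \in \L}\norm{\l - \Bar{w}}_{\Sigma}^{2} - \max_{\l \in \L}\norm{\l - w^{*}}_{\Sigma}^{2} \geq \norm{\Bar{w} - w^{*}}_{\Sigma}^{2} \geq \LB{e}\norm{\Bar{w} - w^{*}}^{2}$.
    \item Factor the difference of squares and bound the sum of the two radii by $2\sqrt{\UB{e}}\diam(\L)$.
\end{enumerate}
The second step implicitly uses that the difference of radii is nonnegative, which holds by optimality of $w^{*}$ over $\L$. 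You may want to say so explicitly.

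The routes differ in how the sharp growth constant is obtained.

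The paper stays elementary. It applies the exact identity $\norm{(1-t)a + tb}_{\Sigma}^{2} = (1-t)\norm{a}_{\Sigma}^{2} + t\norm{b}_{\Sigma}^{2} - t(1-t)\norm{a-b}_{\Sigma}^{2}$ pointwise in $\l$ at $\Bar{w}_{t} = (1-t)w^{*} + t\Bar{w}$. It then compares with $f(\Bar{w}_{t}) \geq f(w^{*})$ (in your notation), divides by $t$, and lets $t \to 0$.

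This is your midpoint argument run at a general $t$ instead of $t = 1/2$. It is the precise version of your remark that ``iterating'' recovers the constant: the loss factor is $(1-t)$ rather than $1/2$, and it disappears in the limit. The midpoint bound alone would not be absorbed by the target constant. It yields $4\sqrt{\UB{e}}/\LB{e}$, twice the stated one, so the sharp step is genuinely needed.

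Your main route instead uses Danskin's theorem. It produces an active maximizer $\l^{\dagger}$ with $(w^{*} - \l^{\dagger})'\Sigma(\Bar{w} - w^{*}) \geq 0$, after which a single expansion of $\norm{\l^{\dagger} - \Bar{w}}_{\Sigma}^{2}$ gives the growth bound. This is valid here:
\begin{itemize}
    \item $\L$ is compact;
    \item each $f_{\l}$ is jointly continuous with a continuous gradient in $\Bar{w}$;
    \item the maximum of the linear term over the closed active set is attained.
\end{itemize}

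What your route buys is a specific worst-case alternative $\l^{\dagger}$ that certifies the first-order optimality of $w^{*}$. The cost is importing a directional-derivative theorem that the paper's one-line limiting argument makes unnecessary.
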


\begin{proof}
By Proposition \ref{arxiv2:prop:optimal.weights}, $w^{*}$ exists uniquely. Below I will show the inequality
\begin{align*}
    \LB{e}\norm{\Bar{w} - w^{*}}^{2} \leq \max_{\l \in \L}\norm{\l - \Bar{w}}_{\Sigma}^{2} - \max_{\l \in \L}\norm{\l - w^{*}}_{\Sigma}^{2},
\end{align*}
from which the lemma follows by bounding the RHS as
\begin{align*}
    \max_{\l \in \L}\norm{\l - \Bar{w}}_{\Sigma}^{2} - \max_{\l \in \L}\norm{\l - w^{*}}_{\Sigma}^{2} &= \paren{\max_{\l \in \L}\norm{\l - \Bar{w}}_{\Sigma} - \max_{\l \in \L}\norm{\l - w^{*}}_{\Sigma}} \\
    &\times \paren{\max_{\l \in \L}\norm{\l - \Bar{w}}_{\Sigma} + \max_{\l \in \L}\norm{\l - w^{*}}_{\Sigma}} \\
    &\leq \paren{\max_{\l \in \L}\norm{\l - \Bar{w}}_{\Sigma} - \max_{\l \in \L}\norm{\l - w^{*}}_{\Sigma}} \\
    &\times \paren{\UB{e}^{1/2}\diam(\L) + \UB{e}^{1/2}\diam(\L)}.
\end{align*}
It now remains to show the inequality. First note that
\begin{align*}
    \norm{(1-t)a + tb}_{\Sigma}^{2} = (1-t)\norm{a}_{\Sigma}^{2} + t\norm{b}_{\Sigma}^{2} - t(1-t)\norm{a-b}_{\Sigma}^{2}, \quad \forall a,b \in \R^{K}, \quad \forall t \in (0,1).
\end{align*}
For each $\Bar{w} \in \L$, define $\Bar{w}_{t} = (1-t)w^{*} + t\Bar{w} \in \L$. For each $\l \in \L$, the above implies
\begin{align*}
    \max_{\l \in \L}\norm{\l - \Bar{w}_{t}}_{\Sigma}^{2} \leq (1-t)\max_{\l \in \L}\norm{\l - w^{*}}_{\Sigma}^{2} + t\max_{\l \in \L}\norm{\l - \Bar{w}}_{\Sigma}^{2} - t(1-t)\norm{\Bar{w} - w^{*}}_{\Sigma}^{2}.
\end{align*}
The LHS is bounded below by $\max_{\l \in \L}\norm{\l - w^{*}}_{\Sigma}^{2}$. Applying this bound and manipulating terms yields
\begin{align*}
    \max_{\l \in \L}\norm{\l - \Bar{w}}_{\Sigma}^{2} - \max_{\l \in \L}\norm{\l - w^{*}}_{\Sigma}^{2} \geq (1-t)\norm{\Bar{w} - w^{*}}_{\Sigma}^{2} \geq (1-t)\LB{e}\norm{\Bar{w} - w^{*}}^{2}.
\end{align*}
Taking $t \to 0$ yields the desired inequality.
\end{proof}

\begin{lemma}\label{arxiv2:app:lemma:nondegenerate.disagreement}
Under Assumption \ref{arxiv2:ass:Lambda.representation}, the diameter of $\L(S)$ is bounded away from zero:
\begin{align*}
    \diam(\L(S)) = \max_{\l,w \in \L(S)}\norm{\l - w} \geq \frac{\delta_{g}}{\delta_{g} + \UB{C}_{g}}\diam(\bbW) > 0, \quad \forall S \in \bbS,
\end{align*}
where $\UB{C}_{g} = \max_{S \in \bbS} \max_{\l \in \bbW} \max\curly{g(\l,S), 0} < \infty$. Consequently, the maximum distance under a positive definite $\Sigma$ with $\emin(\Sigma) \geq \LB{e} > 0$ is bounded below as
\begin{align*}
   \max_{\l \in \L(S)} \norm{\l - w}_{\Sigma} \geq \frac{\sqrt{\LB{e}}}{2}\frac{\delta_{g}}{\delta_{g} + \UB{C}_{g}}\diam(\bbW) > 0, \quad \forall S \in \bbS, \quad \forall w \in \W.
\end{align*}
\end{lemma}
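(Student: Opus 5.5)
The plan is to prove the diameter bound first and then derive the maximum-distance bound from it. Fix $S \in \bbS$. Since $\bbW$ is compact, pick $a,b \in \bbW$ with $\norm{a-b} = \diam(\bbW)$. I will contract both points toward the Slater point $\l^{\circ} = \l^{\circ}(S)$ from Assumption \ref{arxiv2:ass:Lambda.representation}(iii). Set
\begin{align*}
    t = \frac{\UB{C}_{g}}{\UB{C}_{g} + \delta_{g}}, \quad a_{t} = (1-t)a + t\l^{\circ}, \quad b_{t} = (1-t)b + t\l^{\circ}.
\end{align*}
Both points lie in $\bbW$ because $\bbW$ is convex.

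By convexity of $\l \mapsto g(\l,S)$,
\begin{align*}
    g(a_{t},S) \leq (1-t)g(a,S) + t g(\l^{\circ},S) \leq (1-t)\UB{C}_{g} - t\delta_{g} = 0.
\end{align*}
The same holds for $b_{t}$. So $a_{t}, b_{t} \in \L(S)$, and
\begin{align*}
    \diam(\L(S)) \geq \norm{a_{t} - b_{t}} = (1-t)\diam(\bbW) = \frac{\delta_{g}}{\delta_{g} + \UB{C}_{g}}\diam(\bbW).
\end{align*}
This is the same convex-combination device used in the proof of Lemma \ref{arxiv2:app:lemma:hausdorff.continuity}.

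I also need $\UB{C}_{g} < \infty$. The cleanest route is to note that $g$ is continuous on the compact set $\bbW \times \bbS$: it is continuous in $\l$, and Lipschitz in $S$ uniformly over $\l$ by Assumption \ref{arxiv2:ass:Lambda.representation}(ii), which together give joint continuity. If joint continuity is awkward to state, a direct bound also works. Fix a finite $\e$-net $S_{1},\ldots,S_{J}$ of the compact $\bbS$. Each $\max_{\l}g(\l,S_{j})$ is finite by continuity on compact $\bbW$, and the Lipschitz condition extends the bound to all of $\bbS$ at an additive cost of $L_{g}\e$. I expect this finiteness check to be the only mildly delicate point.

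For the second claim, fix $w \in \W$ and let $\l_{1}, \l_{2} \in \L(S)$ attain the diameter. By the triangle inequality,
\begin{align*}
    \diam(\L(S)) = \norm{\l_{1} - \l_{2}} \leq \norm{\l_{1} - w} + \norm{\l_{2} - w} \leq 2\max_{\l \in \L(S)}\norm{\l - w}.
\end{align*}
Since $\emin(\Sigma) \geq \LB{e}$, we have $\norm{x}_{\Sigma} \geq \sqrt{\LB{e}}\norm{x}$, and therefore
\begin{align*}
    \max_{\l \in \L(S)}\norm{\l - w}_{\Sigma} \geq \frac{\sqrt{\LB{e}}}{2}\diam(\L(S)).
\end{align*}
Substituting the first claim gives the stated bound. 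Its positivity follows from $\delta_{g} > 0$ and $\diam(\bbW) > 0$.
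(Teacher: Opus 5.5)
Your proof is correct and is essentially the paper's: both contract a diameter-attaining pair of $\bbW$ toward the Slater point $\l^{\circ}(S)$, keeping weight $\delta_{g}/(\delta_{g}+\UB{C}_{g})$ on the original points, and then combine the triangle inequality with $\norm{x}_{\Sigma} \geq \sqrt{\LB{e}}\norm{x}$. The differences are cosmetic. Your $t$ is the paper's $1-t$, and you pass through $\diam(\L(S))$ rather than using the explicit pair. You also justify $\UB{C}_{g} < \infty$ via joint continuity of $g$ on the compact $\bbW \times \bbS$, which the paper only asserts.
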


\begin{proof}
The continuity, Lipschitz, and compactness properties in Assumption \ref{arxiv2:ass:Lambda.representation} yield
\begin{align*}
    \UB{C}_{g} = \max_{S \in \bbS} \max_{\l \in \bbW} \max\curly{g(\l,S), 0} < \infty.
\end{align*}
By compactness, there exist $w_{a},w_{b} \in \bbW$ such that
\begin{align*}
    \diam(\bbW) = \max_{\l,w \in \bbW}\norm{\l - w} = \norm{w_{a} - w_{b}}.
\end{align*}
Given the Slater point $\l^{\circ}(S) \in \bbW$, define 
\begin{align*}
    \l_{a}(S) = (1-t)\l^{\circ}(S) + t w_{a}, \quad \l_{b}(S) = (1-t)\l^{\circ}(S) + t w_{b}, \quad t = \frac{\delta_{g}}{\delta_{g} + \UB{C}_{g}} \in (0,1].
\end{align*}
Since $\bbW$ is convex, then $\l_{a}(S), \l_{b}(S) \in \bbW$. Moreover, convexity of $\l \mapsto g(\l,S)$ yields
\begin{align*}
    g(\l_{a}(S), S) \leq (1-t)g(\l^{\circ}(S), S) + t g(w_{a},S) \leq -(1-t)\delta_{g} + t\UB{C}_{g} = 0, 
\end{align*}
and hence $\l_{a}(S) \in \L(S)$. The analogous argument for $w_{b}$ yields $\l_{b}(S) \in \L(S)$. Thus,
\begin{align*}
    \diam(\L(S)) = \max_{\l,w \in \L(S)}\norm{\l - w} \geq \norm{\l_{a}(S) - \l_{b}(S)} = t \norm{w_{a}-w_{b}} = \frac{\delta_{g}}{\delta_{g} + \UB{C}_{g}} \diam(\bbW) > 0.
\end{align*}
Consequently, using $\norm{x}_{\Sigma} \geq \norm{x}\sqrt{\LB{e}}$; $\l_{a}(S),\l_{b}(S) \in \L(S)$; and triangle inequality,
\begin{align*}
    \frac{2\max_{\l \in \L(S)} \norm{\l - w}_{\Sigma}}{\sqrt{\LB{e}}} \geq \norm{\l_{a}(S) - w} + \norm{\l_{b}(S) - w} \geq \norm{\l_{a}(S)-\l_{b}(S)} = \frac{\delta_{g}}{\delta_{g} + \UB{C}_{g}}\diam(\bbW).
\end{align*}
\end{proof}

\begin{lemma}\label{arxiv2:app:lemma:sufficient.slack}
Let Assumptions \ref{arxiv2:ass:consistent.S} and \ref{arxiv2:ass:Lambda.representation} be satisfied and suppose that $\L_{0,n} \subseteq \Ln$ satisfies slack condition \eqref{arxiv2:eq:slack.condition}. Then containment condition \eqref{arxiv2:eq:target.containment} is satisfied:
\begin{align*}
    \limn \supPn \sup_{\l \in \L_{0,n}}\P[\Pn]{\l \notin \hLn} = 0.
\end{align*}
\end{lemma}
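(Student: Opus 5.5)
The plan is to reduce the claim to a uniform statement about the estimated constraint function. The key point is that on the event $\{\hSn \in \bbS\}$, membership $\l \in \hLn = \L(\hSn)$ holds whenever $\l \in \bbW$ and $g(\l,\hSn) \leq 0$. Every $\l \in \L_{0,n}$ lies in $\Ln \subseteq \bbW$, so only the sign of $g(\l,\hSn)$ needs to be controlled.

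First, for any $\Pn \in \cPn$, any $n$, and any $\l \in \L_{0,n}$, I will use the Lipschitz condition in Assumption \ref{arxiv2:ass:Lambda.representation}(ii). On $\{\hSn \in \bbS\}$, since $\Sn \in \bbS$ by Assumption \ref{arxiv2:ass:consistent.S}, it gives
\begin{align*}
    g(\l,\hSn) \leq g(\l,\Sn) + L_{g}d_{\cS}(\hSn,\Sn) \leq -\nu + L_{g}d_{\cS}(\hSn,\Sn),
\end{align*}
where the second inequality is slack condition \eqref{arxiv2:eq:slack.condition}. Hence on the event $\{\hSn \in \bbS\} \cap \{d_{\cS}(\hSn,\Sn) \leq \nu/L_{g}\}$ we have $g(\l,\hSn) \leq 0$, and therefore $\l \in \hLn$. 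Crucially, this event does not depend on $\l$.

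Second, a union bound gives, uniformly in $\l \in \L_{0,n}$,
\begin{align*}
    \sup_{\l \in \L_{0,n}}\P[\Pn]{\l \notin \hLn} \leq \P[\Pn]{\hSn \notin \bbS} + \P[\Pn]{d_{\cS}(\hSn,\Sn) > \nu/L_{g}}.
\end{align*}
Taking $\supPn$ and letting $n \to \infty$, both terms vanish by Assumption \ref{arxiv2:ass:consistent.S}, applied with $\e = \nu/L_{g}$ in the second term. This gives the displayed conclusion, which is equivalent to containment condition \eqref{arxiv2:eq:target.containment}.

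There is no substantial obstacle here. The one point needing care is that the bound is uniform over $\l$. This comes from $\nu$ being a fixed constant and from the Lipschitz bound being a maximum over all of $\bbW$. Both are built into the hypotheses, so the exceptional event can be chosen independently of $\l$ and $\Pn$.
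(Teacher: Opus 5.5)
Your proposal is correct and follows the same route as the paper. On $\{\hSn \in \bbS\}$, the Lipschitz property together with the slack condition gives $g(\l,\hSn) \leq -\nu + L_{g}d_{\cS}(\hSn,\Sn)$ uniformly over $\l \in \L_{0,n}$, and a union bound with both parts of Assumption~\ref{arxiv2:ass:consistent.S} finishes the argument. The only cosmetic difference is that the paper places $\sup_{\l \in \L_{0,n}} g(\l,\hSn)$ inside the probability, while you bound each $\l$-wise probability by a single $\l$-free event, which slightly sidesteps any measurability question about that supremum.
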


\begin{proof}
Observe that 
\begin{align*}
    \sup_{\l \in \L_{0,n}}\P[\Pn]{\l \notin \hLn} \leq \P[\Pn]{\sup_{\l \in \L_{0,n}}g(\l, \hSn) > 0, \hSn \in \bbS} + \P[\Pn]{\hSn \notin \bbS}.
\end{align*}
Under Assumption \ref{arxiv2:ass:consistent.S}, $\limn \supPn \P[\Pn]{\hSn \notin \bbS} = 0$. Thus, it suffices to show
\begin{align*}
    \limn \supPn \P[\Pn]{\sup_{\l \in \L_{0,n}}g(\l, \hSn) > 0, \hSn \in \bbS} = 0.
\end{align*}
For $\l \in \L_{0,n} \subseteq \Ln$ and on an event where $\hSn \in \bbS$, Assumption \ref{arxiv2:ass:Lambda.representation} implies
\begin{align*}
    g(\l , \hSn) \leq g(\l, \Sn) + L_{g}d_{\cS}(\hSn, \Sn),
\end{align*}
and hence slack condition \eqref{arxiv2:eq:slack.condition} implies
\begin{align*}
    \sup_{\l \in \L_{0,n}}g(\l , \hSn) \leq -\nu + L_{g}d_{\cS}(\hSn, \Sn).
\end{align*}
When $L_{g}d_{\cS}(\hSn, \Sn) \leq \nu$, the above implies $\sup_{\l \in \L_{0,n}}g(\l , \hSn) \leq 0$. By contraposition,
\begin{align*}
    \P[\Pn]{\sup_{\l \in \L_{0,n}}g(\l, \hSn) > 0, \hSn \in \bbS} \leq \P[\Pn]{L_{g}d_{\cS}(\hSn, \Sn) > \nu, \hSn \in \bbS} \leq \P[\Pn]{L_{g}d_{\cS}(\hSn, \Sn) > \nu}.
\end{align*}
But since $\hSn$ is uniformly consistent for $\Sn$ under Assumption \ref{arxiv2:ass:consistent.S}, the above implies
\begin{align*}
    \limn \supPn \P[\Pn]{\sup_{\l \in \L_{0,n}}g(\l, \hSn) > 0, \hSn \in \bbS} = 0.
\end{align*}
Altogether, this yields the desired conclusion.
\end{proof}

\begin{lemma}\label{arxiv2:app:lemma:diverging.folded.normal}
Fix $\alpha \in (0,1)$. Let $A_{n}$, $B_{n}$, and $T_{n}$ be random variables and let $\cE_{n}$ be events such that $B_{n} \geq 0$, $B_{n} \to[p] \infty$, $T_{n} \to[d] G \sim N(0,1)$, $A_{n}/B_{n} \to[p] c \in [-1,1]$, and $\abs{A_{n}} \leq B_{n}$ on $\cE_{n}$. Then
\begin{align*}
    \limsup_{n \to \infty}\P[]{\abs{A_{n} - T_{n}} > \cv{B_{n}},\ \cE_{n}} \leq \alpha.
\end{align*}
\end{lemma}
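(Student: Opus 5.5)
The plan is to reduce the folded-normal critical value to a one-sided normal quantile once $B_{n}$ is large, and then split according to whether the limit $c$ is interior to $[-1,1]$ or at its boundary.

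The key fact is the asymptotic behavior of $\cv{b}$ as $b \to \infty$. Since $|N(b,1)|$ has CDF $\Phi(x-b) - \Phi(-x-b)$, we have
\begin{align*}
    \cv{b} - b \to z_{1-\alpha}, \quad b \to \infty,
\end{align*}
because $\Phi(-x-b) \to 0$ uniformly for $x \geq 0$ as $b \to \infty$. Combined with $B_{n} \to[p] \infty$, this gives $\cv{B_{n}} - B_{n} \to[p] z_{1-\alpha}$. It also gives $\cv{B_{n}} + B_{n} \to[p] \infty$. The event $\{|A_{n} - T_{n}| > \cv{B_{n}}\}$ is the union of
\begin{align*}
    \{A_{n} - T_{n} > \cv{B_{n}}\} \quad \text{and} \quad \{T_{n} - A_{n} > \cv{B_{n}}\},
\end{align*}
and I bound each piece separately on $\cE_{n}$.

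Consider the upper piece. On $\cE_{n}$ we have $A_{n} \leq B_{n}$, so
\begin{align*}
    \{A_{n} - T_{n} > \cv{B_{n}}\} \cap \cE_{n} \subseteq \{-T_{n} > \cv{B_{n}} - B_{n}\}.
\end{align*}
By Slutsky, $\cv{B_{n}} - B_{n} + T_{n} \to[d] z_{1-\alpha} + G$, so the probability of this piece has limsup at most $\Phi(-z_{1-\alpha}) = \alpha$. The lower piece is symmetric, using $-A_{n} \leq B_{n}$, and is also bounded by $\alpha$. A naive union bound would therefore give only $2\alpha$, so the main obstacle is showing that at most one piece survives in the limit.

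This is where $c$ is used, and I split into cases. If $c > -1$, then on the lower piece $T_{n} > A_{n} + \cv{B_{n}}$. Dividing by $B_{n}$, the right side is $(A_{n} + \cv{B_{n}})/B_{n} \to[p] c + 1 > 0$, while $T_{n}/B_{n} \to[p] 0$. Hence the lower piece has probability tending to zero, and the total limsup is at most $\alpha$. If $c = -1$, the same argument applied to the upper piece uses $(\cv{B_{n}} - A_{n})/B_{n} \to[p] 2 > 0$ together with $-T_{n}/B_{n} \to[p] 0$. So the upper piece vanishes, and the lower piece contributes at most $\alpha$.

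In both cases the limsup is at most $\alpha$, which proves the claim. The only genuinely delicate steps are the uniform limit $\cv{b} - b \to z_{1-\alpha}$ and the case split on $c$ that avoids the $2\alpha$ union bound. Both are handled by the elementary arguments above.
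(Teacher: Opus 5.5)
Your proof is correct and follows essentially the paper's route: the limit $\cv{b}-b\to z_{1-\alpha}$, a split of the two-sided event into its two tails, and a case split on $c$. In that split, one tail is bounded by $\alpha$ using $\abs{A_{n}}\leq B_{n}$ on $\cE_{n}$, and the other vanishes after dividing by $B_{n}$. The only difference is cosmetic: the paper treats $\abs{c}<1$ as a separate case and shows the whole probability vanishes, whereas you absorb it into the case $c>-1$, which is slightly more economical.
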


\begin{proof}
By the definition of the folded-normal critical value,
\begin{align*}
    \cv{b} - b \to z_{1-\alpha}, \quad b \to \infty.
\end{align*}
First suppose that $\abs{c} < 1$. On $\cE_{n}$,
\begin{align*}
    \cv{B_{n}} - \abs{A_{n}}
    = \curly{\cv{B_{n}} - B_{n}} + B_{n}\paren{1 - \frac{\abs{A_{n}}}{B_{n}}}
    \to[p] \infty.
\end{align*}
The triangle inequality and $T_{n} = O_{p}(1)$ therefore give
\begin{align*}
    \P[]{\abs{A_{n} - T_{n}} > \cv{B_{n}},\ \cE_{n}}
    \leq \P[]{\abs{T_{n}} > \cv{B_{n}} - \abs{A_{n}},\ \cE_{n}} \to 0.
\end{align*}
Now suppose that $c = 1$. With probability approaching one, $A_{n} \geq 0$ and $A_{n} + \cv{B_{n}} \to[p] \infty$. On $\cE_{n}$, $A_{n} \leq B_{n}$, so
\begin{align*}
    \curly{\abs{A_{n} - T_{n}} > \cv{B_{n}}}
    \subseteq
    \curly{T_{n} < B_{n} - \cv{B_{n}}}
    \cup
    \curly{T_{n} > A_{n} + \cv{B_{n}}}
\end{align*}
up to an event whose probability tends to zero. The preceding critical-value limit and Slutsky's theorem imply
\begin{align*}
    \P[]{T_{n} < B_{n} - \cv{B_{n}}} \to \P[]{G < z_{\alpha}} = \alpha,
\end{align*}
while the probability of the second event tends to zero. Hence the desired bound holds for $c = 1$. Finally, if $c = -1$, then with probability approaching one $A_{n} \leq 0$ and $A_{n} - \cv{B_{n}} \to[p] -\infty$. On $\cE_{n}$, $A_{n} \geq -B_{n}$ and the symmetric argument gives
\begin{align*}
    \limsup_{n \to \infty} \P[]{\abs{A_{n} - T_{n}} > \cv{B_{n}},\ \cE_{n}}
    \leq \lim_{n \to \infty} \P[]{T_{n} > \cv{B_{n}} - B_{n}}
    = \P[]{G > z_{1-\alpha}} = \alpha.
\end{align*}
\end{proof}

\section{Large-K Uniform Asymptotics}\label{arxiv2:app:sec:UK}
The uniform asymptotic results in Section \ref{arxiv2:sec:asymptotic.validity} keep the number of groups fixed. This appendix provides an extension in which $K = K_{n} \geq 2$ may grow with $n$. Under $P_{n} \in \mathcal{P}_{n}$, the researcher observes $\hthetan \in \R^{\Kn}$ and a positive definite covariance estimator $\tSigman \in \R^{\Kn \times \Kn}$. Let
\begin{align*}
    \hZn = \sqrt{n}(\hthetan - \thetan), \quad \thetan = \thetan(\Pn),  \quad \hSigman = n\tSigman, \quad \Wn = \curly{w \in \R^{\Kn}: \1'w = 1}.
\end{align*}
The population covariance of $\hZn$ is the positive definite matrix $\Sigman = \Sigman(\Pn)$. Note that the $n$ subscript in $(\thetan(\cdot), \Sigman(\cdot))$ makes the dimensional dependence on $\Kn$ explicit. For some objects below, I leave this dependence implicit. All stochastic orders below are uniform over $P_{n} \in \mathcal{P}_{n}$. Thus, $a_{n} = o_{p, \cPn}(b_{n})$ means that $a_{n}/b_{n}$ converges uniformly in probability to zero, and $a_{n} \lesssim_{p, \cPn} b_{n}$ means that $a_{n} \leq Cb_{n}$ with probability uniformly approaching one for a constant $C$ that does not depend on $n$ or $P_{n}$.

\subsection{Distributional and Covariance Conditions}

\begin{assumptionUK}[Linear CLT]\label{arxiv2:app:ass:UK.linear.CLT}
For $\cUn = \curly{u \in \R^{\Kn}: \norm{u} = 1}$,
\begin{align*}
    \limn \supPn \sup_{u \in \cUn}\sup_{x \in \R} \abs{\P[\Pn]{\frac{u'\hZn}{\sqrt{u'\Sigman u}} \leq x} - \Phi(x)} = 0, \quad \hZn = \sqrt{n}(\hthetan - \thetan).
\end{align*}
Moreover, there exists a constant $\UB{C}_{\theta} > 0$ such that $\supPn \norm{\thetan} \leq \dsqrt{\Kn}\UB{C}_{\theta}$ for all $n$.
\end{assumptionUK}

For example, when $\hZn$ consists of normalized regression coefficients or sample means admitting an observation- or cluster-level asymptotically linear representation, the linear CLT can be obtained from uniform moment and dependence bounds, a well-conditioned low-leverage design, and restrictions on the growth of $\Kn$ relative to the number of independent observations or clusters; see \citet[Theorem 4.2]{belloni2015some} for illustrative regression conditions.

Define the relative covariance error
\begin{align*}
    \eSigman = \norm{\Sigman^{-1/2} (\hSigman - \Sigman) \Sigman^{-1/2}}.
\end{align*}

\begin{assumptionUK}[Relative Covariance Consistency]\label{arxiv2:app:ass:UK.consistent.covariance}
The relative covariance error satisfies
\begin{align*}
    \eSigman = o_{p, \cPn}(1).
\end{align*}
Moreover, there exist deterministic sequences $\LB{e}_{n}, \UB{e}_{n} > 0$ and a constant $\UB{\rho} > 0$ such that
\begin{align*}
    \LB{e}_{n} \leq \infPn \emin(\Sigman) \leq \supPn \emax(\Sigman) \leq \UB{e}_{n}, \quad \rhon = \frac{\UB{e}_{n}}{\LB{e}_{n}} \leq \UB{\rho}, \quad \forall n.
\end{align*}
\end{assumptionUK}

The common scale of the eigenvalues may therefore vary with $n$; only the condition number is uniformly bounded. This formulation includes, for example, balanced independent group means, for which the eigenvalues of the covariance of $\sqrt{n}(\hthetan - \thetan)$ are typically of order $\Kn$. Uniform relative covariance consistency follows from matrix laws of large numbers for covariance-standardized score outer products. \citet[Theorem 4.6]{belloni2015some} gives illustrative rates when a high-dimensional regression covariance matrix is estimated using fitted residuals.

On the event $\curly{\eSigman < 1}$, the definition of $\eSigman$ gives the scale-free Loewner comparisons
\begin{align}\label{arxiv2:app:eq:UK.loewner}
    (1 - \eSigman)\Sigman \preceq \hSigman \preceq (1 + \eSigman)\Sigman,
\end{align}
and, after inversion,
\begin{align}\label{arxiv2:app:eq:UK.inverse.loewner}
    (1 + \eSigman)^{-1}\Sigman^{-1} \preceq \hSigman^{-1} \preceq (1 - \eSigman)^{-1}\Sigman^{-1}.
\end{align}
These inequalities replace the fixed-$K$ arguments that separately bound every occurrence of $\Sigman$ and $\Sigman^{-1}$ by fixed upper and lower eigenvalue constants. For a positive definite matrix $V$, define
\begin{align*}
    A(V) = I - \frac{V^{-1/2}\1\1'V^{-1/2}}{\1'V^{-1}\1}, \quad Q(V) = V^{-1/2}A(V)V^{-1/2}.
\end{align*}
Write $(\An, \Qn) = (A(\Sigman), Q(\Sigman))$ and $(\hAn, \hQn) = (A(\hSigman), Q(\hSigman))$, and let
\begin{align*}
    \Hn^{2}(a) = a'\Qn a, \quad \hHn^{2}(a) = a'\hQn a, \quad \nu_{n} = \Kn - 1, \quad \eta_{n} = \sqrt{n}\Hn(\thetan), \quad T_{n}^{0} = n\Hn^{2}(\hthetan).
\end{align*}
The population-covariance statistic $T_{n}^{0}$ has the decomposition
\begin{align}\label{arxiv2:app:eq:UK.oracle.decomposition}
    T_{n}^{0} - (\nu_{n} + \eta_{n}^{2}) = 2v_{n}'\xi_{n} + \curly{\xi_{n}'\An\xi_{n} - \nu_{n}}, \quad \xi_{n} = \Sigman^{-1/2}\hZn, \quad v_{n} = \An\Sigman^{-1/2}\sqrt{n}\thetan.
\end{align}
The first term is linear in $\xi_{n}$ while the second term is an increasing-rank quadratic form in $\xi_{n}$. This decomposition identifies the two sources of randomness. Assumption \ref{arxiv2:app:ass:UK.linear.CLT} controls fixed deterministic linear combinations. Assumption \ref{arxiv2:app:ass:UK.quadratic.CLT} below is a separate high-level assumption on the standardized \textit{full oracle statistic}; primitive conditions for Assumption \ref{arxiv2:app:ass:UK.quadratic.CLT} must control the linear and quadratic pieces jointly.

\begin{assumptionUK}[Oracle Quadratic CLT]\label{arxiv2:app:ass:UK.quadratic.CLT}
For every subsequence $\{n_{s}\} \subseteq \{n\}$ and every sequence $P_{n_{s}} \in \mathcal{P}_{n_{s}}$ for which $K_{n_{s}} \to \infty$,
\begin{align*}
    \frac{T_{n_{s}}^{0} - \curly{\nu_{n_{s}} + \eta_{n_{s}}^{2}}}{\sqrt{2\nu_{n_{s}} + 4\eta_{n_{s}}^{2}}} \to[d] N(0, 1).
\end{align*}
\end{assumptionUK}

When $\xi_{n} = \Sigman^{-1/2}\hZn$ admits an observation- or cluster-level asymptotically linear representation, Assumption \ref{arxiv2:app:ass:UK.quadratic.CLT} can be verified from uniform higher-moment and dependence bounds together with two dispersion requirements: the random fluctuation of the quadratic form cannot be driven by a small number of sampling units, and the quadratic form cannot place most of its weight on the squares of only a few covariance-standardized linear combinations of $\hZn$; see \citet{anatolyev2019many} and \citet{kline2020leave}. The second requirement is automatic for the oracle statistic in \eqref{arxiv2:app:eq:UK.oracle.decomposition}, because $\An$ is a nonrandom rank-$\nu_{n}$ projection whose nonzero eigenvalues are all equal to one. This observation does not by itself justify an analogous quadratic CLT for the feasible statistic, which is based on the random matrix $\hAn$. The next result handles that separate covariance-estimation step.

\begin{propositionUK}[Covariance and Quadratic-Statistic Transfer]\label{arxiv2:app:prop:UK.quadratic.transfer}
Under Assumption \ref{arxiv2:app:ass:UK.consistent.covariance}, on the event $\curly{\eSigman \leq 1/2}$,
\begin{align}\label{arxiv2:app:eq:UK.vector.heterogeneity}
    \norm{\hAn\hSigman^{-1/2}\thetan - \An\Sigman^{-1/2}\thetan} &\lesssim \eSigman \norm{\An\Sigman^{-1/2}\thetan}, \\[5pt]
    \abs{\hHn(\thetan) - \Hn(\thetan)} &\lesssim \eSigman \Hn(\thetan). \label{arxiv2:app:eq:UK.relative.heterogeneity}
\end{align}
Let $\hat{\eta}_{0,n} = \sqrt{n}\hHn(\thetan)$ and $\hat{T}_{n} = n\hHn^{2}(\hthetan)$. If, in addition, the following two rate conditions hold directly,
\begin{align}\label{arxiv2:app:eq:UK.transfer.rate}
    \sqrt{\Kn}\eSigman = o_{p, \cPn}(1), \quad \norm{\xi_{n}} = O_{p, \cPn}(\sqrt{\Kn}),
\end{align}
then Assumption \ref{arxiv2:app:ass:UK.quadratic.CLT} implies, along every sequence $P_{n} \in \mathcal{P}_{n}$ with $\Kn \to \infty$,
\begin{align}\label{arxiv2:app:eq:UK.feasible.CLT}
    \frac{\hat{T}_{n} - \curly{\nu_{n} + \hat{\eta}_{0,n}^{2}}}{\sqrt{2\nu_{n} + 4\hat{\eta}_{0,n}^{2}}} \to[d] N(0, 1).
\end{align}
\end{propositionUK}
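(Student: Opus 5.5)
The plan is to work throughout in covariance-whitened coordinates, so that every bound depends only on $\eSigman$ and the condition-number bound $\UB{\rho}$ and never on $\Kn$. Write $a_{n} = \Sigman^{-1/2}\1$ and $M_{n} = \Sigman^{1/2}\hSigman^{-1}\Sigman^{1/2}$. By the inverse Loewner comparison \eqref{arxiv2:app:eq:UK.inverse.loewner}, on $\curly{\eSigman \leq 1/2}$ we have $\norm{M_{n} - I} \lesssim \eSigman$. The key algebraic object is
\begin{align*}
    \Sigman^{1/2}\hQn\Sigman^{1/2} = M_{n} - \frac{M_{n}a_{n}a_{n}'M_{n}}{a_{n}'M_{n}a_{n}} = \An + D_{n}.
\end{align*}
I would show $\norm{D_{n}} \lesssim \eSigman$. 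This follows from the identity $\frac{M a a'M}{a'Ma} - \frac{aa'}{a'a}$, which is a rank-two perturbation whose operator norm is controlled by $\norm{M - I}$ once $a'Ma/a'a \in [1/2, 2]$, with no dimension dependence. I would also record that $D_{n}a_{n} = 0$, because both $\An$ and $\Sigman^{1/2}\hQn\Sigman^{1/2}$ annihilate $a_{n}$. Since $D_{n}$ is symmetric, this gives $D_{n} = \An D_{n}\An$.

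\textbf{Step 1: the bounds \eqref{arxiv2:app:eq:UK.vector.heterogeneity} and \eqref{arxiv2:app:eq:UK.relative.heterogeneity}.} I would follow the proof of Proposition \ref{arxiv2:prop:consistent.heterogeneity}, after first rescaling by $\UB{e}_{n}$. The map $V \mapsto A(V)V^{-1/2}$ is homogeneous of degree $-1/2$, so both sides of \eqref{arxiv2:app:eq:UK.vector.heterogeneity} scale identically. It therefore suffices to treat $\Sigman/\UB{e}_{n}$ and $\hSigman/\UB{e}_{n}$, whose eigenvalues lie in $[1/(2\UB{\rho}), 2]$ on the event; moreover $\norm{\hSigman - \Sigman}/\UB{e}_{n} \leq (1+\UB{\rho})\eSigman$, say. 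On this set, $V \mapsto A(V)V^{-1/2}$ is Lipschitz in operator norm with a constant depending only on $\UB{\rho}$. Each ingredient is dimension-free: the square-root bound $\norm{X^{1/2}-Y^{1/2}} \leq \norm{X-Y}/(\sqrt{\emin(X)}+\sqrt{\emin(Y)})$, the inverse bound, and the bound for the projection onto the orthocomplement of $V^{-1/2}\1$.

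As in the fixed-$K$ proof, I would replace $\thetan$ by $\thetan - \1\gamma_{n}(\thetan)$. This is allowed because both operators kill $\1$. I would then use $\norm{\thetan - \1\gamma_{n}(\thetan)} \leq \UB{e}_{n}^{1/2}\norm{\An\Sigman^{-1/2}\thetan}$. Multiplying the resulting $\UB{e}_{n}^{1/2}$ against the rescaled Lipschitz constant gives \eqref{arxiv2:app:eq:UK.vector.heterogeneity}, and the reverse triangle inequality then gives \eqref{arxiv2:app:eq:UK.relative.heterogeneity}. Alternatively, \eqref{arxiv2:app:eq:UK.relative.heterogeneity} follows directly from $\hHn^{2}(\thetan) = v_{n}'(\An + D_{n})v_{n}$, where $v_{n} = \An\Sigman^{-1/2}\sqrt{n}\thetan$, together with $\abs{v_{n}'D_{n}v_{n}} \leq \norm{D_{n}}\norm{v_{n}}^{2}$.

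\textbf{Step 2: the feasible CLT \eqref{arxiv2:app:eq:UK.feasible.CLT}.} Set $y_{n} = \Sigman^{-1/2}\sqrt{n}\hthetan = \xi_{n} + \Sigman^{-1/2}\sqrt{n}\thetan$. Then $\hat{T}_{n} = y_{n}'(\An + D_{n})y_{n}$. Because $\An + D_{n}$ kills $a_{n}$, $y_{n}$ may be replaced by $\An y_{n} = \An\xi_{n} + v_{n}$. The same computation gives $\hat{\eta}_{0,n}^{2} = v_{n}'(\An + D_{n})v_{n}$ and $T_{n}^{0} = (\An\xi_{n} + v_{n})'\An(\An\xi_{n} + v_{n})$. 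Subtracting yields the exact identity
\begin{align*}
    \curly{\hat{T}_{n} - \nu_{n} - \hat{\eta}_{0,n}^{2}} - \curly{T_{n}^{0} - \nu_{n} - \eta_{n}^{2}} = 2v_{n}'D_{n}\An\xi_{n} + \xi_{n}'\An D_{n}\An\xi_{n}.
\end{align*}
The quadratic remainder is bounded by $\norm{D_{n}}\norm{\xi_{n}}^{2} = O_{p}(\eSigman\Kn) = o_{p}(\sqrt{\Kn})$ by \eqref{arxiv2:app:eq:UK.transfer.rate}. The cross term is bounded by $2\eSigman\norm{v_{n}}\norm{\xi_{n}} = \eta_{n} \cdot O_{p}(\sqrt{\Kn}\eSigman) = o_{p}(\eta_{n})$. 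Since $\sqrt{2\nu_{n} + 4\eta_{n}^{2}} \gtrsim \sqrt{\Kn} + \eta_{n}$ when $\Kn \to \infty$, the difference of numerators is $o_{p}$ of the oracle denominator.

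By Step 1, $\hat{\eta}_{0,n}^{2} = \eta_{n}^{2}(1 + O_{p}(\eSigman))$. Hence the ratio of the feasible denominator $\sqrt{2\nu_{n} + 4\hat{\eta}_{0,n}^{2}}$ to the oracle one tends to one in probability. Assumption \ref{arxiv2:app:ass:UK.quadratic.CLT} combined with Slutsky's theorem then yields \eqref{arxiv2:app:eq:UK.feasible.CLT}. Uniformity is not needed here, because the claim is stated along arbitrary sequences.

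\textbf{Main obstacle.} The step I expect to be delicate is getting perturbation bounds that are dimension-free and relative, rather than inherited from the fixed-$K$ eigenvalue constants. Naive bounds on $\hSigman^{-1/2} - \Sigman^{-1/2}$ pick up the scale $\UB{e}_{n}$, and a naive treatment of the quadratic term $\xi_{n}'(\cdot)\xi_{n}$ would lose a factor of $\Kn$. The whitened decomposition $\An + D_{n}$, with $D_{n}a_{n} = 0$ and $\norm{D_{n}} \lesssim \eSigman$, is what I would rely on to avoid both problems. Verifying the rank-two bound on $D_{n}$ carefully is the part I would check first.
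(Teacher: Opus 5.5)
Your proposal is correct and follows essentially the paper's proof. Like the paper, you residualize using that $\Qn$ and $\hQn$ annihilate $\1$, measure the covariance perturbation through $\eSigman$, bound the change in the centered numerator by a constant times $\eSigman(\norm{\xi_{n}}^{2} + \eta_{n}\norm{\xi_{n}})$, and close with Slutsky via the denominator ratio; your whitened decomposition $\Sigman^{1/2}\hQn\Sigman^{1/2} = \An + D_{n}$, with $D_{n} = \An D_{n}\An$ and $\norm{D_{n}} \lesssim \eSigman$, just makes explicit the exact identity the paper leaves implicit, whereas the paper gets the relative heterogeneity bound directly from $a'Q(V)a = \min_{\gamma}\norm{a - \1\gamma}_{V^{-1}}^{2}$ and the inverse Loewner comparison. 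Two harmless slips: because your $v_{n}$ carries the factor $\sqrt{n}$, the Step 1 identity should read $n\hHn^{2}(\thetan) = v_{n}'(\An + D_{n})v_{n}$, and the cross-term bound should use $\norm{D_{n}} \lesssim \eSigman$ rather than $\eSigman$ itself.
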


\begin{proof}
Since $a'Q(V)a = \min_{\gamma \in \R}\norm{a - \1\gamma}_{V^{-1}}^{2}$, equation \eqref{arxiv2:app:eq:UK.inverse.loewner} yields
\begin{align*}
    (1 + \eSigman)^{-1/2}\Hn(\thetan) \leq \hHn(\thetan) \leq (1 - \eSigman)^{-1/2}\Hn(\thetan),
\end{align*}
which proves \eqref{arxiv2:app:eq:UK.relative.heterogeneity}. For \eqref{arxiv2:app:eq:UK.vector.heterogeneity}, both $A(V)V^{-1/2}$ maps annihilate $\1$, so their difference acts only on population GLS residual $\thetan - \1\gamma_{n}$. Relative covariance perturbation and $\rhon \leq \UB{\rho}$ then give \eqref{arxiv2:app:eq:UK.vector.heterogeneity}.

Because $\Qn$ and $\hQn$ also annihilate $\1$, the same residualization yields
\begin{align*}
    \abs{\curly{\hat{T}_{n} - \hat{\eta}_{0,n}^{2}} - \curly{T_{n}^{0} - \eta_{n}^{2}}} \lesssim \eSigman \curly{\norm{\xi_{n}}^{2} + \eta_{n}\norm{\xi_{n}}}.
\end{align*}
After division by $\sqrt{2\nu_{n} + 4\eta_{n}^{2}}$, the right side is $O_{p, \cPn}(\dsqrt{\Kn}\eSigman) = o_{p, \cPn}(1)$. If $\eta_{n} = 0$, then $\hat{\eta}_{0,n} = 0$ exactly. Otherwise, \eqref{arxiv2:app:eq:UK.relative.heterogeneity} gives
\begin{align*}
    \frac{2\nu_{n} + 4\hat{\eta}_{0,n}^{2}} {2\nu_{n} + 4\eta_{n}^{2}} \to[p, \cPn] 1.
\end{align*}
Slutsky's theorem proves \eqref{arxiv2:app:eq:UK.feasible.CLT}.
\end{proof}

The rate in \eqref{arxiv2:app:eq:UK.transfer.rate} is a convenient worst-case condition; structured covariance estimators can permit weaker rates.

\subsection{Classes, Weights, and Maximum Distance}
Let $\bbWn \subseteq \Wn$ be compact and convex, let $d_{n} = \diam(\bbWn)$, and suppose
\begin{align*}
    \Ln = \curly{\l \in \bbWn: \gn(\l, \Sn) \leq 0}, \quad \hLn = \curly{\l \in \bbWn: \gn(\l, \hSn) \leq 0}
\end{align*}
are nonempty, compact, and convex. Suppose $\gn(\cdot, S)$ is continuous and convex, is $\Lgn$-Lipschitz in $S$ under a metric $d_{\cS,n}$, and admits a Slater point with margin $\deltagn > 0$. The fixed-$K$ error-bound proof applies pointwise for every $n$ and gives
\begin{align}\label{arxiv2:app:eq:UK.hausdorff}
    h_{n} = d_{H}(\hLn, \Ln) \leq \frac{d_{n}\Lgn}{\deltagn} d_{\cS,n}(\hSn, \Sn).
\end{align}
Thus, unlike in the fixed-$K$ analysis, the Lipschitz constants, Slater margins, diameters, and first-stage rates may all depend on $n$. Their product in \eqref{arxiv2:app:eq:UK.hausdorff} is the key.

For generic population and feasible estimated centering weights $\wn, \hwn \in \bbWn$, define
\begin{align*}
    D_{n} = \max_{\l \in \Ln}\norm{\l - \wn}_{\Sigman}, \quad \sigma_{w,n} = \norm{\wn}_{\Sigman}, \quad \hat{D}_{n} = \max_{\l \in \hLn}\norm{\l - \hwn}_{\hSigman}, \quad \hat{\sigma}_{w,n} = \norm{\hwn}_{\hSigman}.
\end{align*}

\begin{propositionUK}[Weight and Criterion Consistency]\label{arxiv2:app:prop:UK.weights}
Under Assumption \ref{arxiv2:app:ass:UK.consistent.covariance}, on the event $\curly{\eSigman \leq 1/2}$,
\begin{align}
    \abs{\hat{D}_{n} - D_{n}} &\lesssim \sqrt{\UB{e}_{n}} \curly{h_{n} + \norm{\hwn - \wn}} + \eSigman D_{n}, \label{arxiv2:app:eq:UK.distance.bound} \\
    \abs{\hat{\sigma}_{w,n} - \sigma_{w,n}} &\lesssim \sqrt{\UB{e}_{n}}\norm{\hwn - \wn} + \eSigman\sigma_{w,n}. \label{arxiv2:app:eq:UK.se.bound}
\end{align}
Consequently,
\begin{align}
    \frac{\abs{\hat{D}_{n} - D_{n}}}{D_{n}} &\lesssim_{p, \cPn} \eSigman + \frac{\sqrt{\UB{e}_{n}} \curly{h_{n} + \norm{\hwn - \wn}}}{D_{n}}, \label{arxiv2:app:eq:UK.relative.distance} \\
    \abs{\frac{\hat{\sigma}_{w,n}}{\sigma_{w,n}} - 1} &\lesssim_{p, \cPn} \eSigman + \frac{\sqrt{\UB{e}_{n}}\norm{\hwn - \wn}}{\sigma_{w,n}}, \label{arxiv2:app:eq:UK.relative.se}
\end{align}
whenever the denominators are positive. Let $\rwn$ minimize $\max_{\l \in \Ln}\norm{\l - w}_{\Sigman}$ over $w \in \Ln$, and let $\rhwn$ be its feasible analogue over $\hLn$ using $\hSigman$. Then
\begin{align}\label{arxiv2:app:eq:UK.minimax.bound}
    \norm{\rhwn - \rwn} \lesssim_{p, \cPn} h_{n} + \sqrt{\rhon d_{n} \curly{h_{n} + \eSigman d_{n}}}.
\end{align}
Hence the right side of \eqref{arxiv2:app:eq:UK.minimax.bound} converging to zero is sufficient for minimax-weight consistency. For the bounded variance class, the minimax center is GLS and the sharper bound
\begin{align}\label{arxiv2:app:eq:UK.GLS.bound}
    \frac{\norm{\rhwn - \rwn}_{\Sigman}}{\norm{\rwn}_{\Sigman}} \lesssim_{p, \cPn} \eSigman
\end{align}
holds directly from the GLS formula.
\end{propositionUK}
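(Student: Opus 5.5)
The plan is to reuse the fixed-$K$ perturbation arguments, replacing every absolute eigenvalue bound by a scale-free Loewner comparison from \eqref{arxiv2:app:eq:UK.loewner}. Throughout we work on $\curly{\eSigman \leq 1/2}$, where $\tfrac12\Sigman \preceq \hSigman \preceq \tfrac32\Sigman$.

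\textbf{Step 1: norm perturbation.} The first ingredient is a relative version of the first claim of Lemma \ref{arxiv2:app:lemma:perturbation.bounds}. For any $x$,
\begin{align*}
    \abs{\norm{x}_{\hSigman}^{2} - \norm{x}_{\Sigman}^{2}} = \abs{(\Sigman^{1/2}x)'\Sigman^{-1/2}(\hSigman - \Sigman)\Sigman^{-1/2}(\Sigman^{1/2}x)} \leq \eSigman\norm{x}_{\Sigman}^{2}.
\end{align*}
Since $\abs{\sqrt{1+u}-1}\le \abs{u}$ for $\abs{u}\le 1$, this gives $\abs{\norm{x}_{\hSigman} - \norm{x}_{\Sigman}} \leq \eSigman\norm{x}_{\Sigman}$.

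\textbf{Step 2: bound \eqref{arxiv2:app:eq:UK.distance.bound}.} Use the same three-term telescoping as in the proof of Proposition \ref{arxiv2:prop:consistent.distance}, but in the order $\Sigman$-norm first:
\begin{align*}
    \hat D_n \;\to\; \max_{\l\in\hLn}\norm{\l-\wn}_{\hSigman} \;\to\; \max_{\l\in\Ln}\norm{\l-\wn}_{\hSigman} \;\to\; D_n .
\end{align*}
\begin{itemize}
    \item The first two differences are handled by the second claim of Lemma \ref{arxiv2:app:lemma:perturbation.bounds} applied with $\hSigman$. Its top eigenvalue is at most $\tfrac32\UB{e}_{n}$, so these differences are at most $\sqrt{3\UB{e}_n/2}\,\norm{\hwn-\wn}$ and $\sqrt{3\UB{e}_n/2}\,h_n$.
    \item The last difference is bounded via Step 1 pointwise in $\l$, then maximized, giving at most $\eSigman D_n$.
\end{itemize}
The bound \eqref{arxiv2:app:eq:UK.se.bound} follows identically with $\L$ replaced by the singleton $\{0\}$, i.e.\ with $\norm{\hwn}_{\hSigman}$ in place of $\hat D_n$. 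Dividing by $D_n$ and $\sigma_{w,n}$, and noting that $\P[\Pn]{\eSigman>1/2}\to0$ uniformly, yields \eqref{arxiv2:app:eq:UK.relative.distance} and \eqref{arxiv2:app:eq:UK.relative.se}.

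\textbf{Step 3: minimax-weight bound \eqref{arxiv2:app:eq:UK.minimax.bound}.} Repeat the proof of Proposition \ref{arxiv2:prop:consistent.minimax.weights} verbatim, now carrying $n$-dependent constants.
\begin{itemize}
    \item Lemma \ref{arxiv2:app:lemma:minimax.quadratic.bounds} gives $\norm{\pi_n(\rhwn)-\rwn}^2 \le (2\sqrt{\UB{e}_n}/\LB{e}_n)\, d_n\, \Delta_n$. Here $\Delta_n$ is the excess criterion of the projected feasible minimizer.
    \item The four perturbation steps of that proof bound $\Delta_n$ by $\lesssim \sqrt{\UB{e}_n}\,h_n + \eSigman\max_{\l,\Bar w\in\Ln}\norm{\l-\Bar w}_{\Sigman}$. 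In each $\hSigman\to\Sigman$ swap, the second term of this bound comes from Step 1.
    \item Bounding $\max_{\l,\Bar w\in\Ln}\norm{\l-\Bar w}_{\Sigman} \le \sqrt{\UB{e}_n}\,d_n$ gives $\Delta_n\lesssim \sqrt{\UB{e}_n}\,(h_n+\eSigman d_n)$.
    \item Multiplying through, $\norm{\pi_n(\rhwn)-\rwn}^2 \lesssim (\UB{e}_n/\LB{e}_n)\, d_n\,(h_n+\eSigman d_n)$, with prefactor $\UB{e}_n/\LB{e}_n=\rhon$.
    \item Adding $\norm{\rhwn-\pi_n(\rhwn)}\le h_n$ gives \eqref{arxiv2:app:eq:UK.minimax.bound}.
\end{itemize}
The main point needing care is that every appearance of $\sqrt{\UB{e}_n}$ pairs with a $1/\LB{e}_n$. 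This must be checked so that only the bounded ratio $\rhon$ survives, rather than a separate scale factor.

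\textbf{Step 4: GLS bound \eqref{arxiv2:app:eq:UK.GLS.bound}.} For the bounded variance class, $\rwn=\Sigman^{-1}\1/(\1'\Sigman^{-1}\1)$ and $\rhwn$ is its $\hSigman$ analogue. Both are feasible and the GLS weights are the $\Sigman$-variance minimizer on $\Wn$, so
\begin{align*}
    \norm{\rhwn-\rwn}_{\Sigman}^2=\norm{\rhwn}_{\Sigman}^2-\norm{\rwn}_{\Sigman}^2 .
\end{align*}
This difference is $O(\eSigman)\norm{\rwn}_{\Sigman}^2$ by Step 1 and the Loewner comparisons. Specifically, $\norm{\rhwn}_{\Sigman}^2\le(1-\eSigman)^{-1}\norm{\rhwn}_{\hSigman}^2\le(1-\eSigman)^{-1}\norm{\rwn}_{\hSigman}^2\le\frac{1+\eSigman}{1-\eSigman}\norm{\rwn}_{\Sigman}^2$.

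Taking square roots of this excess gives only $\sqrt{\eSigman}$, not the claimed $\eSigman$. Hence, as the main obstacle, I would instead work directly with the formula: write
\begin{align*}
    \Sigman^{1/2}(\rhwn-\rwn) = \frac{\Sigman^{1/2}(\hSigman^{-1}-\Sigman^{-1})\1}{\1'\hSigman^{-1}\1} + \Sigman^{1/2}\Sigman^{-1}\1\paren{\frac{1}{\1'\hSigman^{-1}\1}-\frac{1}{\1'\Sigman^{-1}\1}},
\end{align*}
and bound each term using $\norm{\Sigman^{1/2}\hSigman^{-1}\Sigman^{1/2}-I}\le \eSigman/(1-\eSigman)$. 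Both terms then come out as $\lesssim\eSigman\,\norm{\Sigman^{-1/2}\1}/(\1'\Sigman^{-1}\1)=\eSigman\norm{\rwn}_{\Sigman}$.
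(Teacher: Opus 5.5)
Your proposal is correct and follows essentially the same route as the paper. It uses a scale-free relative norm comparison for every $\hat{\Sigma}_n$-to-$\Sigma_n$ swap, Hausdorff and center perturbations at cost $\sqrt{\bar{e}_n}$, the strong-convexity inequality for the minimax center (so $\sqrt{\bar{e}_n}\cdot\sqrt{\bar{e}_n}/\underline{e}_n=\rho_n$ is the only surviving scale factor), and the explicit GLS inverse formula in relative covariance coordinates, which gives the rate $\varepsilon_{\Sigma,n}$ rather than the $\sqrt{\varepsilon_{\Sigma,n}}$ of the variational argument. One cosmetic slip in Step 3: the covariance-swap terms are evaluated at points of $\hat{\Lambda}_n$ such as $\hat{w}^{*}_{n}$ and $\hat{\pi}_n(w^{*}_{n})$, so the supremum should run over $\mathbb{W}_n$ rather than $\Lambda_n$; your bound $\sqrt{\bar{e}_n}\,d_n$, with $d_n$ the diameter of $\mathbb{W}_n$, already covers this.
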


\begin{proof}
By \eqref{arxiv2:app:eq:UK.loewner}, changing $\Sigman$ to $\hSigman$ changes a quadratic norm by at most a constant times $\eSigman$ times its population value. Holding the covariance fixed, Hausdorff matching changes the radius by at most $\dsqrt{\UB{e}_{n}}h_{n}$, while changing the center contributes at most $\dsqrt{\UB{e}_{n}}\norm{\hwn - \wn}$. This proves \eqref{arxiv2:app:eq:UK.distance.bound}. The same argument applied to the distance from the singleton $\curly{0}$ proves \eqref{arxiv2:app:eq:UK.se.bound}.

For the minimax weights, project $\rhwn$ onto $\Ln$ and denote the projection by $\pi_{n}(\rhwn)$. The projection error is at most $h_{n}$. The preceding criterion comparisons and optimality of $\rhwn$ imply
\begin{align*}
    R_{n}(\pi_{n}(\rhwn)) - R_{n}(\rwn) \lesssim_{p, \cPn} \sqrt{\UB{e}_{n}}\curly{h_{n} + \eSigman d_{n}}, \quad R_{n}(w) = \max_{\l \in \Ln}\norm{\l - w}_{\Sigman}.
\end{align*}
The strong-convexity identity used in the fixed-$K$ proof gives
\begin{align*}
    \norm{\pi_{n}(\rhwn) - \rwn}_{\Sigman}^{2} \leq R_{n}^{2}(\pi_{n}(\rhwn)) - R_{n}^{2}(\rwn).
\end{align*}
Both radii are at most $\dsqrt{\UB{e}_{n}}d_{n}$. Using $\norm{x} \leq \norm{x}_{\Sigman}/\dsqrt{\LB{e}_{n}}$ and adding the projection error gives \eqref{arxiv2:app:eq:UK.minimax.bound}. Finally, the inverse identity in relative covariance coordinates gives \eqref{arxiv2:app:eq:UK.GLS.bound}.
\end{proof}

For the benchmark classes, \eqref{arxiv2:app:eq:UK.hausdorff} can be replaced by sharper class-specific identities when available. For example, the truncated simplex satisfies
\begin{align*}
    d_{H}\curly{(1 - \epsilon)w_{1} + \epsilon\W_{+,n}, (1 - \epsilon)w_{2} + \epsilon\W_{+,n}} = (1 - \epsilon)\norm{w_{1} - w_{2}},
\end{align*}
and the bounded variance class is homogeneous in the common covariance scale. The relevant large-$K$ conditions are therefore the rate statements in \eqref{arxiv2:app:eq:UK.relative.distance}-\eqref{arxiv2:app:eq:UK.minimax.bound}, not separate absolute bounds on $\UB{e}_{n}$ and $\LB{e}_{n}$.

\subsection{Asymptotic Minimax-Bias Optimality}
The population optimization comparison is unchanged when $K_{n}$ grows:
\begin{align*}
    \max_{\l \in \Ln} \norm{\l - \rwn}_{\Sigman}
    \leq
    \max_{\l \in \Ln} \norm{\l - \wn}_{\Sigman}
\end{align*}
for every $(n,P_{n})$. To translate this geometric inequality into an expected-bias comparison for feasible estimators, it is enough that, for each center under comparison, the three remainders
\begin{align*}
    \supPn \abs{\E[\Pn]{(\hwn - \wn)'(\hthetan - \thetan)}}, \quad
    \supPn \abs{\E[\Pn]{\wn'(\hthetan - \thetan)}}, \quad
    \supPn \abs{\E[\Pn]{(\hwn - \wn)'\thetan}}
\end{align*}
are $o(1)$. Under these remainder conditions and the same sharpness condition as in the fixed-$K$ analysis, the fixed-$K$ maximum-bias expansion and minimax comparison continue to hold. The only additional issue created by growing dimension is that the displayed remainders require dimension-appropriate moment and rate conditions; because they are not needed for the UCB or robust-CI results below, I do not develop primitive conditions for them here.

\subsection{Heterogeneity UCB}
For each $n$, let $F_{n}(x; \eta)$ denote the CDF of a noncentral chi-squared variable with $\nu_{n} = \Kn - 1$ degrees of freedom and noncentrality parameter $\eta^{2}$. For fixed $x > 0$, $F_{n}(x; \eta)$ is strictly decreasing in $\eta$. Let $F_{n}^{-1}(x; \beta)$ denote its inverse in the second argument whenever $F_{n}(x; 0) > \beta$, so that
\begin{align*}
    F_{n}\paren{x; F_{n}^{-1}(x; \beta)} = \beta.
\end{align*}
For $\hat{T}_{n} = n\hHn^{2}(\hthetan)$, define the large-$K$ heterogeneity UCB by
\begin{align*}
    \hetan = \frac{\tetan}{\sqrt{n}}, \quad 
    \tetan = 
    \begin{cases} 
    \hfil 0, & F_{n}(\hat{T}_{n}; 0) \leq \beta, \\ 
    \hfil F_{n}^{-1}(\hat{T}_{n}; \beta), & F_{n}(\hat{T}_{n}; 0) > \beta. 
    \end{cases}
\end{align*}

\begin{propositionUK}[Large-$K$ Heterogeneity UCB]\label{arxiv2:app:prop:UK.asymptotic.UCB}
Suppose Assumptions \ref{arxiv2:app:ass:UK.linear.CLT}-\ref{arxiv2:app:ass:UK.quadratic.CLT} hold and the transfer conditions in Proposition \ref{arxiv2:app:prop:UK.quadratic.transfer} are satisfied. Then
\begin{align}\label{arxiv2:app:eq:UK.UCB.coverage}
    \limn \supPn \abs{\P[\Pn]{\hHn(\thetan) \leq \hetan} - (1 - \beta)} \I{\Hn(\thetan) > 0} = 0.
\end{align}
In particular,
\begin{align*}
    \liminfn \infPn \P[\Pn]{\hHn(\thetan) \leq \hetan} \geq 1 - \beta.
\end{align*}
\end{propositionUK}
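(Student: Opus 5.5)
The plan mirrors the proof of Proposition \ref{arxiv2:prop:asymptotic.UCB}, arguing by contradiction along subsequences. If \eqref{arxiv2:app:eq:UK.UCB.coverage} fails, there exist $p>0$, a subsequence $\{n_{s}\}$, and $P_{n_{s}}\in\mathcal{P}_{n_{s}}$ with $H_{s}(\theta_{s})>0$ such that
\begin{align*}
    \abs{\P[P_{s}]{F_{n_{s}}(\hat{T}_{s};\hat{\eta}_{0,s})<\beta}-\beta}>p, \quad \forall s.
\end{align*}
This reduction uses strict monotonicity of $\eta\mapsto F_{n}(x;\eta)$ exactly as in Appendix \ref{arxiv2:app:proof:eta.validity}, together with $\hHn(\thetan)>0$, which follows from \eqref{arxiv2:app:eq:UK.relative.heterogeneity} on $\{\eSigman\le 1/2\}$. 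Passing to a further subsequence, either $K_{n_{s}}$ is eventually constant, equal to some $K$, or $K_{n_{s}}\to\infty$. I handle the two cases separately.

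\textbf{Case 1: bounded $K$.} Along this subsequence the dimension is fixed. Rescaling by the common eigenvalue scale $\LB{e}_{n}$ leaves both the statistic and the UCB invariant, because $Q(cV)=c^{-1}Q(V)$ and $\sqrt{n}\thetan$ rescales accordingly. After rescaling, the bounded condition number lets me take $\Sigma_{s}/\LB{e}_{s}$ in a compact set of positive definite matrices. Then Assumptions \ref{arxiv2:app:ass:UK.linear.CLT}–\ref{arxiv2:app:ass:UK.consistent.covariance} reproduce the setting of Assumptions \ref{arxiv2:ass:linear.CLT}–\ref{arxiv2:ass:consistent.covariance} in normalized form. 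The fixed-$K$ argument of Appendix \ref{arxiv2:app:proof:asymptotic.UCB} then applies verbatim. If the normalized heterogeneity stays bounded, the continuous mapping theorem gives a $U(0,1)$ limit. If it diverges, the Seri bound together with the linear-plus-quadratic expansion does the job. Either way the displayed probability converges to $\beta$, which is a contradiction. The one point to check is that the norm bound on $\thetan$ was used in the fixed-$K$ proof only through compactness of directions, so the $\sqrt{K_{n}}$ growth in that bound is irrelevant once $K$ is fixed.

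\textbf{Case 2: $K_{n_{s}}\to\infty$.} Here I use \eqref{arxiv2:app:eq:UK.feasible.CLT} from Proposition \ref{arxiv2:app:prop:UK.quadratic.transfer}, which gives
\begin{align*}
    \hat{R}_{s}=\frac{\hat{T}_{s}-(\nu_{s}+\hat{\eta}_{0,s}^{2})}{\sqrt{2\nu_{s}+4\hat{\eta}_{0,s}^{2}}}\to[d]N(0,1).
\end{align*}
The event $\{F_{n}(\hat{T};\hat\eta_{0})<\beta\}$ equals $\{\hat{R}<R_{n}(\beta;\hat\eta_{0})\}$, where $R_{n}(\beta;\eta)=\bigl(F_{n}^{-1}(\beta;\eta)-\nu_{n}-\eta^{2}\bigr)/\sqrt{2\nu_{n}+4\eta^{2}}$ is the standardized $\beta$-quantile. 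By the Seri bound, the Kolmogorov distance between $\chi^{2}_{\nu}(\eta)$ standardized and $N(0,1)$ is $O\bigl((\nu+\eta^{2})^{-1/2}\bigr)$. This tends to zero uniformly in $\eta$ because $\nu_{s}\to\infty$. The sandwich argument of Appendix \ref{arxiv2:app:proof:asymptotic.UCB} then gives $\sup_{\eta\ge0}|R_{n_{s}}(\beta;\eta)-z_{\beta}|\to0$, hence $R_{n_{s}}(\beta;\hat\eta_{0,s})\to z_{\beta}$ deterministically uniformly, with no need to control $\hat\eta_{0,s}$. Slutsky's theorem and continuity of $\Phi$ give convergence of the probability to $\beta$, again a contradiction. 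The final liminf statement follows as in the fixed-$K$ discussion, since coverage is trivially one when $\Hn(\thetan)=0$.

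The main obstacle is the fixed-$K$ reduction in Case 1. The rescaling invariance must be verified carefully, and the subsequence convergences $\hSigma_{s}\to\Sigma_{\infty}$ have to be derived from relative consistency $\eSigman\to0$ rather than from absolute consistency. If that bookkeeping becomes awkward, an alternative is to apply the Seri-based argument of Case 2 also when $\hat{\eta}_{0,s}\to\infty$ with $K$ fixed, since the Seri bound is uniform once $\nu+\eta^{2}\to\infty$. That would leave only the bounded $(K,\eta)$ regime to be handled by the continuous mapping theorem.
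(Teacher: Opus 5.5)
Your proposal is correct and follows essentially the paper's proof. It uses the same event inversion to the pivot $F_n(\hat T_n;\hat\eta_{0,n})$ and the same bounded-$K$ versus diverging-$K$ subsequence split. The bounded case is reduced to the fixed-$K$ argument after covariance rescaling, and the diverging case is handled by \eqref{arxiv2:app:eq:UK.feasible.CLT} together with the $\eta$-uniform Seri bound. You invert to standardized quantiles where the paper writes $F_n(\hat T_n;\hat\eta_{0,n}) = \Phi(Z_n^Q)+o_p(1)$, but these are equivalent. The obstacle you flag is routine: $\|(\hat\Sigma_n-\Sigma_n)/\underline{e}_n\| \le \bar\rho\,\varepsilon_{\Sigma,n}$ converts relative consistency into absolute consistency of the rescaled matrices.
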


\begin{proof}
The proof follows the same inversion and subsequence structure as the fixed-$K$ UCB proof. For $\Hn(\thetan) > 0$, monotonicity of $\eta \mapsto F_{n}(\hat{T}_{n}; \eta)$ gives the exact event identity
\begin{align*}
    \curly{\hHn(\thetan) \leq \hetan} = \curly{F_{n}(\hat{T}_{n}; \hat{\eta}_{0,n}) \geq \beta}, \quad \hat{\eta}_{0,n} = \sqrt{n}\hHn(\thetan).
\end{align*}
It is therefore enough to show
\begin{align}\label{arxiv2:app:eq:UK.uniform.pivot}
    F_{n}(\hat{T}_{n}; \hat{\eta}_{0,n}) \to[d] U(0, 1)
\end{align}
along every sequence $P_{n} \in \mathcal{P}_{n}$ with nonzero heterogeneity $\Hn(\thetan) > 0$ for all $n$. Below I restrict to such sequences.

Consider an arbitrary subsequence. If $K_{n}$ is bounded, pass to a further subsequence on which it is constant. The fixed-$K$ proof then applies after covariance standardization. Its event inversion and probability-integral-transform argument are unchanged: bounded normalized heterogeneity is handled by the continuous mapping theorem, while diverging normalized heterogeneity is handled by the one-dimensional linear CLT and the normal approximation to a noncentral chi-squared distribution. The covariance-estimation inputs used in those steps are the feasible-population norm equivalence and relative stability of the heterogeneity vector; equations \eqref{arxiv2:app:eq:UK.loewner} and \eqref{arxiv2:app:eq:UK.vector.heterogeneity} provide exactly those inputs even though the common eigenvalue scale may drift.

Now consider a subsequence with $K_{n} \to \infty$. Define the standardized noncentral chi-squared CDF
\begin{align*}
    G_{n}(z; \eta) = F_{n}\paren{\nu_{n} + \eta^{2} + \sqrt{2\nu_{n} + 4\eta^{2}}z; \eta}.
\end{align*}
Uniform normal-approximation bounds for noncentral chi-squared distributions imply
\begin{align}\label{arxiv2:app:eq:UK.chisq.normal}
    \sup_{\eta \geq 0}\sup_{z \in \R} \abs{G_{n}(z; \eta) - \Phi(z)} \to 0
\end{align}
when $\nu_{n} \to \infty$; see \citet{seri2015tight}. By Proposition \ref{arxiv2:app:prop:UK.quadratic.transfer},
\begin{align*}
    Z_{n}^{Q} = \frac{\hat{T}_{n} - (\nu_{n} + \hat{\eta}_{0,n}^{2})}{\sqrt{2\nu_{n} + 4\hat{\eta}_{0,n}^{2}}} \to[d] N(0, 1).
\end{align*}
Because $F_{n}(\hat{T}_{n}; \hat{\eta}_{0,n}) = G_{n}(Z_{n}^{Q}; \hat{\eta}_{0,n})$, equation \eqref{arxiv2:app:eq:UK.chisq.normal} gives
\begin{align*}
    F_{n}(\hat{T}_{n}; \hat{\eta}_{0,n}) = \Phi(Z_{n}^{Q}) + o_{p}(1) \to[d] U(0, 1).
\end{align*}
This proves \eqref{arxiv2:app:eq:UK.uniform.pivot} and hence \eqref{arxiv2:app:eq:UK.UCB.coverage}. 
\end{proof}

Thus, the large-$K$ proof changes only the distributional step of the fixed-$K$ argument. The event inversion and reduction of UCB validity to a probability-integral-transform statement are identical. Along diverging-$K$ subsequences, the increasing degrees of freedom replace the fixed-$K$ proof's separate appeal to diverging noncentrality.

\subsection{Robust CI}
Maintain $\alpha \in (0,1)$. Let $\L_{0,n} \subseteq \Ln$ be a target class. As in the fixed-$K$ analysis, estimation of the working class need not imply containment of its population boundary. I therefore impose the direct target-containment condition
\begin{align}\label{arxiv2:app:eq:UK.target.containment}
    \limn \supPn \sup_{\l \in \L_{0,n}} \P[\Pn]{\l \notin \hLn} = 0.
\end{align}
A sufficient condition is the existence of a possibly shrinking slack $\nu_{0,n} > 0$ such that
\begin{align*}
    \sup_{\l \in \L_{0,n}}\gn(\l, \Sn) \leq - \nu_{0,n}, \quad \Lgn d_{\cS,n}(\hSn, \Sn) = o_{p, \cPn}(\nu_{0,n}).
\end{align*}

For proof purposes, define the population and feasible signal-based normalized biases
\begin{align*}
    b_{n} = \frac{\eta_{n}D_{n}}{\sigma_{w,n}}, \quad d_{\l,n} = \frac{\sqrt{n}(\l - \wn)'\thetan}{\sigma_{w,n}}, \quad \bar{b}_{n} = \frac{\hat{\eta}_{0,n}\hat{D}_{n}}{\hat{\sigma}_{w,n}}, \quad \bar{d}_{\l,n} = \frac{\sqrt{n}(\l - \hwn)'\thetan}{\hat{\sigma}_{w,n}}.
\end{align*}
The population bias bound gives $\abs{d_{\l,n}} \leq b_{n}$ for $\l \in \Ln$, while on $\curly{\l \in \hLn}$ its feasible analogue gives $\abs{\bar{d}_{\l,n}} \leq \bar{b}_{n}$. Assume the feasible score $T_{w,n} = \hwn'\hZn/\hat{\sigma}_{w,n}$ satisfies
\begin{align}\label{arxiv2:app:eq:UK.feasible.score}
    \limn \supPn \sup_{x \in \R} \abs{\P[\Pn]{T_{w,n} \leq x} - \Phi(x)} = 0, \quad T_{w,n} = \frac{\hwn'\hZn}{\hat{\sigma}_{w,n}}.
\end{align}
For a two-sided CI, additionally assume the effective-bias stability conditions
\begin{align}\label{arxiv2:app:eq:UK.bias.stability}
    \frac{\abs{\bar{b}_{n} - b_{n}}}{1 + b_{n}} = o_{p, \cPn}(1), \quad \sup_{\l \in \L_{0,n}} \frac{\abs{\bar{d}_{\l,n} - d_{\l,n}}}{1 + b_{n}} = o_{p, \cPn}(1).
\end{align}

\begin{propositionUK}[Large-$K$ Robust CI]\label{arxiv2:app:prop:UK.asymptotic.robust.CI}
Suppose the conditions of Proposition \ref{arxiv2:app:prop:UK.asymptotic.UCB}, target containment \eqref{arxiv2:app:eq:UK.target.containment}, and the feasible-score approximation \eqref{arxiv2:app:eq:UK.feasible.score} hold. Then every one-sided robust CI constructed from $(\hLn, \hwn, \hSigman, \hetan)$ satisfies
\begin{align}\label{arxiv2:app:eq:UK.robust.coverage.one}
    \liminfn \infPn \inf_{\l \in \L_{0,n}} \P[\Pn]{\tauwn[\l] \in \rCIn} \geq 1 - (\alpha + \beta).
\end{align}
If, in addition, \eqref{arxiv2:app:eq:UK.bias.stability} holds, then the same conclusion holds for the two-sided robust CI.
\end{propositionUK}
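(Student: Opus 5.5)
I will follow the structure of the fixed-$K$ proof of Proposition \ref{arxiv2:prop:asymptotic.robust.CI}: argue by contradiction along a subsequence and split the noncoverage probability into three pieces.

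Suppose \eqref{arxiv2:app:eq:UK.robust.coverage.one} fails. Then there are $p>0$, a subsequence $n_s$, distributions $P_{n_s}$, and targets $\l_s\in\L_{0,n_s}$ whose noncoverage probability exceeds $\alpha+\beta+p$. Define $\cE_s=\{\hHns(\thetans)\hat D_s\le\hBns\}$. Noncoverage is then bounded by
\[
\P[\Pns]{\l_s'\thetans\notin\rCIns,\ \l_s\in\hLns,\ \cE_s}+\P[\Pns]{\l_s\notin\hLns}+\P[\Pns]{\hHns(\thetans)>\hetans}.
\]
Target containment \eqref{arxiv2:app:eq:UK.target.containment} sends the second term to zero. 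Proposition \ref{arxiv2:app:prop:UK.asymptotic.UCB} makes the third term at most $\beta+o(1)$. It therefore suffices to show that the first term has $\limsup$ at most $\alpha$.

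\textbf{One-sided case.} Take the upper interval; the lower one is symmetric. On $\{\l_s\in\hLns\}\cap\cE_s$, Proposition \ref{arxiv2:prop:bound}, applied with $\hSigmans$ in place of $\Sigma$, gives $|(\l_s-\hwns)'\thetans|\le\hHns(\thetans)\hat D_s\le\hBns$. Exactly as in the fixed-$K$ chain, the noncoverage event is then contained in $\{T_{w,n_s}<z_\alpha\}$. The feasible-score condition \eqref{arxiv2:app:eq:UK.feasible.score} gives $\P[\Pns]{T_{w,n_s}<z_\alpha}\to\alpha$ uniformly. No further compactness is needed, which matters because $K$ may diverge and the fixed-$K$ limits $\limSigma$, $\limw$ need not exist.

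\textbf{Two-sided case.} On $\cE_s$, monotonicity of $\cv{\cdot}$ bounds the first term by
\[
\P[\Pns]{|\bar d_{\l_s,n_s}-T_{w,n_s}|>\cv{\bar b_{n_s}},\ \l_s\in\hLns}.
\]
Here $|\bar d|\le\bar b$ on the event, by the feasible bias bound. I then pass to a further subsequence on which $b_{n_s}\to b_\infty\in[0,\infty]$.

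If $b_\infty<\infty$, the stability conditions \eqref{arxiv2:app:eq:UK.bias.stability} give $\bar b-b\to_p0$ and $\bar d-d\to_p0$. Since $|d_{\l_s}|\le b_{n_s}$, I can pass to a subsequence with $d_{\l_s}\to d_\infty$. Continuity of $\cv{\cdot}$, together with $T_w\to_d N(0,1)$, then gives a limit of at most $\P[]{|d_\infty-N(0,1)|>\cv{|d_\infty|}}=\alpha$.

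If $b_\infty=\infty$, I set $A_s=\bar d$, $B_s=\bar b$, and $T_s=T_{w,n_s}$, and apply Lemma \ref{arxiv2:app:lemma:diverging.folded.normal}. Stability gives $\bar b/b\to_p1$, so $B_s\to_p\infty$. Stability also gives $\bar d/b-d/b\to_p0$. Passing to a subsequence with $d_{\l_s}/b_{n_s}\to c\in[-1,1]$ then yields $A_s/B_s\to_p c$. The lemma now gives the bound $\alpha$.

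The main obstacle is the two-sided case. The fixed-$K$ proof derived the ratio convergence $A_s/B_s\to c$ from perturbation bounds that used fixed eigenvalue constants and $D_\infty>0$. Here those must be replaced by the assumed normalized stability \eqref{arxiv2:app:eq:UK.bias.stability}. Some care is needed when $b_n$ is bounded but the $o_p(1+b_n)$ errors are not small relative to $b_n$ itself. The case split above handles this: when $b_\infty<\infty$, the errors are simply $o_p(1)$, and the folded-normal critical value is continuous.
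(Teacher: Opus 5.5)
Your proposal is correct and takes essentially the same route as the paper. The shared steps are: the good-event decomposition using the heterogeneity UCB and target containment; the one-sided reduction to $\{T_{w,n} < z_{\alpha}\}$ via the feasible-score condition; and, for the two-sided case, the bounded versus diverging split on $b_{n}$, with \eqref{arxiv2:app:eq:UK.bias.stability} replacing fixed-$K$ compactness and Lemma \ref{arxiv2:app:lemma:diverging.folded.normal} closing the diverging case. The only difference is cosmetic: you argue by contradiction along a subsequence, whereas the paper works directly with arbitrary sequences $P_{n} \in \mathcal{P}_{n}$ and $\lambda_{n} \in \Lambda_{0,n}$, and the two are equivalent.
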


\begin{proof}
Take arbitrary sequences $P_{n} \in \mathcal{P}_{n}$ and $\l_{n} \in \L_{0,n}$. Let
\begin{align*}
    \cE_{H,n} = \curly{\hHn(\thetan) \leq \hetan}, \quad \cE_{\L,n} = \curly{\l_{n} \in \hLn}.
\end{align*}
Proposition \ref{arxiv2:app:prop:UK.asymptotic.UCB} and \eqref{arxiv2:app:eq:UK.target.containment} imply
\begin{align}\label{arxiv2:app:eq:UK.good.event}
    \liminf_{n \to \infty} \P[\Pn]{\cE_{H,n}\cap\cE_{\L,n}} \geq 1 - \beta.
\end{align}
On this event, the bias UCB used by the CI is at least the true feasible bias:
\begin{align*}
    \sqrt{n}\hetan \frac{\hat{D}_{n}}{\hat{\sigma}_{w,n}} \geq \bar{b}_{n} \geq \abs{\bar{d}_{\l_{n},n}}.
\end{align*}

For an upper one-sided interval, the good-event noncoverage probability satisfies
\begin{align*}
    \P[\Pn]{\tauwn[\l_{n}] \notin \rCIn,\ \cE_{H,n} \cap \cE_{\L,n}} &\leq
    \P[\Pn]{\bar{d}_{\l_{n},n} - T_{w,n} > z_{1-\alpha} + \abs{\bar{d}_{\l_{n},n}},\ \cE_{H,n} \cap \cE_{\L,n}} \\
    &\leq \P[\Pn]{T_{w,n} < z_{\alpha}}.
\end{align*}
Indeed, $\abs{d} - d \geq 0$ for every real $d$, so the event in the second line implies $T_{w,n} < z_{\alpha}$. The last probability converges to $\alpha$ by \eqref{arxiv2:app:eq:UK.feasible.score}; the lower one-sided case is identical. Combining this bound with \eqref{arxiv2:app:eq:UK.good.event} proves \eqref{arxiv2:app:eq:UK.robust.coverage.one} for one-sided intervals.

For the two-sided interval, monotonicity of $\cv{b}$ and the bound in the good event give
\begin{align}\label{arxiv2:app:eq:UK.two.sided.event}
    \P[\Pn]{\tauwn[\l_{n}] \notin \rCIn,\ \cE_{H,n} \cap \cE_{\L,n}}
    \leq
    \P[\Pn]{\abs{\bar{d}_{\l_{n},n} - T_{w,n}} > \cv{\bar{b}_{n}},\ \cE_{\L,n}}.
\end{align}
The fixed-$K$ proof obtains deterministic subsequential limits from compactness of the finite-dimensional primitives. Since that route is unavailable when $K_{n}$ grows, condition \eqref{arxiv2:app:eq:UK.bias.stability} supplies directly the two scalar comparisons that the folded-normal argument needs. Pass to a subsequence along which $b_{n}$ is bounded or diverges. 

If $b_{n}$ is bounded, pass further so that
\begin{align*}
    b_{n} \to b, \quad d_{\l_{n},n} \to d, \quad \abs{d} \leq b.
\end{align*}
Then \eqref{arxiv2:app:eq:UK.bias.stability} gives $\bar{b}_{n} \to[p] b$ and $\bar{d}_{\l_{n},n} \to[p] d$. Together with \eqref{arxiv2:app:eq:UK.feasible.score} and continuity of the folded-normal critical value,
\begin{align*}
    \limsup_{n \to \infty} \P[\Pn]{\abs{\bar{d}_{\l_{n},n} - T_{w,n}} > \cv{\bar{b}_{n}}, \ \cE_{\L,n}} \leq \P[]{\abs{d - G} > \cv{b}} \leq \alpha,
\end{align*}
where $G \sim N(0, 1)$ and the last inequality uses $\abs{d} \leq b$ and monotonicity of folded-normal tail probabilities in the magnitude of the mean.

If $b_{n} \to \infty$, pass further so that $d_{\l_{n},n}/b_{n} \to c \in [-1,1]$. Condition \eqref{arxiv2:app:eq:UK.bias.stability} then gives
\begin{align*}
    \bar{b}_{n} \to[p] \infty, \qquad
    \frac{\bar{d}_{\l_{n},n}}{\bar{b}_{n}} \to[p] c.
\end{align*}
Moreover, $\abs{\bar{d}_{\l_{n},n}} \leq \bar{b}_{n}$ on $\cE_{\L,n}$, and \eqref{arxiv2:app:eq:UK.feasible.score} gives $T_{w,n} \to[d] N(0,1)$. Lemma \ref{arxiv2:app:lemma:diverging.folded.normal}, applied with
\begin{align*}
    A_{n} = \bar{d}_{\l_{n},n}, \quad
    B_{n} = \bar{b}_{n}, \quad
    T_{n} = T_{w,n}, \quad
    \cE_{n} = \cE_{\L,n},
\end{align*}
therefore shows that the right-hand side of \eqref{arxiv2:app:eq:UK.two.sided.event} has limit superior at most $\alpha$. The same conclusion was established above when $b_{n}$ is bounded. Combining the good-event noncoverage bound with \eqref{arxiv2:app:eq:UK.good.event} proves the two-sided result for every $\alpha \in (0,1)$.
\end{proof}

For a prespecified center $\hwn = \wn$, Assumption \ref{arxiv2:app:ass:UK.linear.CLT} gives the oracle-score normal approximation, while Assumption \ref{arxiv2:app:ass:UK.consistent.covariance} gives $\hat{\sigma}_{w,n}/\sigma_{w,n} \to[p,\cPn] 1$; together they imply \eqref{arxiv2:app:eq:UK.feasible.score}. For estimated weights, the same conclusion follows if
\begin{align*}
    \frac{(\hwn-\wn)'\hZn}{\sigma_{w,n}} = o_{p,\cPn}(1),
    \qquad
    \frac{\hat{\sigma}_{w,n}}{\sigma_{w,n}} \to[p,\cPn] 1.
\end{align*}
A convenient sufficient condition for the first display is
\begin{align*}
    \sqrt{\Kn}\frac{\norm{\hwn-\wn}_{\Sigman}}{\sigma_{w,n}} = o_{p,\cPn}(1),
    \quad
    \norm{\Sigman^{-1/2}\hZn} = O_{p,\cPn}(\sqrt{\Kn}).
\end{align*}
For feasible GLS, \eqref{arxiv2:app:eq:UK.GLS.bound} and $\dsqrt{\Kn}\eSigman = o_{p,\cPn}(1)$ verify this route. Formulations based on conditioning or sample splitting can avoid having to appeal to such conditions when using estimated weights; see \citet{coussens2021improving} for an illustration of sample splitting.

For effective-bias stability, relative consistency of $\hat{\eta}_{0,n}$, $\hat{D}_{n}$, and $\hat{\sigma}_{w,n}$ gives the first condition in \eqref{arxiv2:app:eq:UK.bias.stability} when $\eta_{n}$, $D_{n}$, and $\sigma_{w,n}$ are positive. If $\eta_{n} = 0$, then $\hat{\eta}_{0,n} = 0$ exactly. If $D_{n} = 0$, it suffices to verify $\bar{b}_{n} = o_{p,\cPn}(1)$ directly. For the target-bias term, the identity
\begin{align*}
    \Bar{d}_{\l,n} - d_{\l,n}
    = \Bar{d}_{\l,n}\paren{1-\frac{\hat\sigma_{w,n}}{\sigma_{w,n}}}
    - \frac{\sqrt n(\hwn-\wn)'\thetan}{\sigma_{w,n}}
\end{align*}
and $\1'(\hwn-\wn) = 0$ imply
\begin{align*}
    \frac{\sqrt n\abs{(\hwn-\wn)'\thetan}}{\sigma_{w,n}}
    \leq
    \eta_{n}\frac{\norm{\hwn-\wn}_{\Sigman}}{\sigma_{w,n}}.
\end{align*}
Thus, relative standard-error consistency and
\begin{align}\label{arxiv2:app:eq:UK.center.bias.rate}
    \eta_{n}\frac{\norm{\hwn-\wn}_{\Sigman}}{\sigma_{w,n}} = o_{p,\cPn}(1+b_{n})
\end{align}
first imply $\sup_{\l \in \L_{0,n}}\abs{\Bar{d}_{\l,n}} = O_{p,\cPn}(1+b_{n})$ because $\abs{d_{\l,n}} \leq b_{n}$. Substitution into the preceding identity then gives the second condition in \eqref{arxiv2:app:eq:UK.bias.stability}. When $D_{n} > 0$, condition \eqref{arxiv2:app:eq:UK.center.bias.rate} follows from
\begin{align*}
    \frac{\norm{\hwn-\wn}_{\Sigman}}{D_{n}} = o_{p,\cPn}(1),
\end{align*}
since $b_{n} = \eta_{n}D_{n}/\sigma_{w,n}$. The relative distance, standard-error, and heterogeneity bounds derived above supply the remaining relative-consistency components. 

\end{document}